\documentclass[11pt]{article}
\usepackage[T1]{fontenc}
\usepackage[utf8]{inputenc}
\usepackage{amsmath,amsthm}
\IfFileExists{newtxtext.sty}{\usepackage{newtxtext,newtxmath}}{\usepackage{txfonts}}
\usepackage[letterpaper,margin=1in]{geometry}
\usepackage{longtable,arydshln}
\usepackage{tikz}
\usetikzlibrary{arrows.meta,positioning,calc}
\tikzset{
  >={Stealth[length=2.2mm]},
  every node/.style={inner sep=1.5pt},
  lab/.style={font=\normalsize},
  arr/.style={->,line width=0.45pt}
}
\usepackage[round,authoryear]{natbib}
\usepackage{etoolbox}
\usepackage[colorlinks=true,linkcolor=blue!50!black,citecolor=green!40!black,urlcolor=blue!50!black]{hyperref}

\numberwithin{equation}{section}

\renewenvironment{abstract}{\small\begin{center}\bfseries Abstract\end{center}\noindent\ignorespaces}{\par}

\theoremstyle{plain}
\newtheorem{theorem}{Theorem}[section]
\newtheorem{lemma}{Lemma}[section]
\newtheorem{proposition}{Proposition}[section]
\newtheorem{corollary}{Corollary}[section]
\theoremstyle{definition}
\newtheorem{definition}{Definition}[section]
\newtheorem{notation}{Notation}[section]
\newtheorem{example}{Example}[section]
\newtheorem{examples}[example]{Examples}
\newtheorem{remark}{Remark}[section]
\newtheorem{remarks}[remark]{Remarks}
\newtheorem{digression}{Digression}[section]
\newtheorem{discussion}{Discussion}[section]

\DeclareUnicodeCharacter{21D2}{\ensuremath{\Rightarrow}}
\DeclareUnicodeCharacter{21D4}{\ensuremath{\Leftrightarrow}}
\DeclareUnicodeCharacter{22EE}{\ensuremath{\vdots}}
\DeclareUnicodeCharacter{25A1}{\ensuremath{\square}}
\DeclareUnicodeCharacter{25B3}{\ensuremath{\triangle}}
\DeclareUnicodeCharacter{03A3}{\ensuremath{\Sigma}}
\DeclareUnicodeCharacter{2297}{\ensuremath{\otimes}}

\begin{document}
\title{Discrete-time polynomial systems with inputs and outputs:
  structure, reachability, observability, and minimal realizations}
\author{Eduardo D. Sontag\\
  Northeastern University\\
Departments of Electrical and Computer Engineering and Bioengineering}
\date{}
\maketitle

\begin{abstract}

We study the realization problem for discrete-time input/output
polynomial systems. These are formalized using tools from commutative
algebra and algebraic geometry as systems whose state spaces are
algebraic varieties, or more abstractly the set of $k$-points of an
affine $k$-scheme, where $k$ is an arbitrary infinite field.
The input/output behaviors of such systems are described by
``polynomial response maps'' in which outputs are polynomial functions
of past inputs.
The main results show that every polynomial response map admits a
canonical (quasi-reachable and algebraically observable) realization,
which is unique up to isomorphism.
The key to the approach is to linearize dynamics by considering a
``dual'' system in which states are functions defined on states.
Finite dimensionality of the canonical realization, and its
polynomiality, are characterized in terms of the space, algebra, and
field of observables of the map, as well as in terms of algebraic
input/output difference equations, and by a Jacobian rank criterion.
A particular subclass consists of the maps that we call ``bounded,''
defined by the property that their degree in the past inputs is
uniformly bounded.
Bounded maps are shown to be finitely realizable if and only if they
are realizable by finite-dimensional state-affine systems, whose
theory in turn reduces to that of rational formal power series.
We also study the lattice of quasi-reachable realizations of a given
map, including normal realizations.

\end{abstract}

\newpage
\tableofcontents
\section*{Notations}
\addcontentsline{toc}{section}{Notation}
The page numbers refer to the place where the notation is introduced.

\begin{longtable}{@{}p{0.24\textwidth}p{0.6\textwidth}r@{}}
$\mathrm{rad}_{k} A$ & $k$-radical of $A$ & \pageref{op:17}\\
$\iota$ & inclusion of $A$ in $k^{X(A)}$ & \pageref{op:18}\\
$X(A)$ & $k$-homomorphisms $A \rightarrow k$ & \pageref{op:18}\\
$V(S)$ & solution set of $S$ & \pageref{op:24}\\
$V^{B}(S)$ & idem with equations in $B$ & \pageref{op:25}\\
$I(Z)$ & annihilator of $Z$ & \pageref{op:25}\\
$A(X)$ & polynomial functions on $X$ & \pageref{op:29}\\
$A(g)$ & transpose of $g$ & \pageref{op:31}\\
$\operatorname{trdeg}_{B} A$ & transcendence degree over $B$ & \pageref{op:38}\\
$L^{*}$ & finite sequences of elements of $L$ & \pageref{op:42}\\
$|\alpha|$ & length of $\alpha$ & \pageref{op:42}\\
$\|\alpha\|$ & weight of $\alpha$ & \pageref{op:42}\\
$\Delta$ & proper sequences & \pageref{op:42}\\
$\Psi$ & algebra of formal Volterra series & \pageref{op:45}\\
$\operatorname{deg} \psi$ & degree of series & \pageref{op:45}\\
$\operatorname{supp} \psi$ & support of series & \pageref{op:47}\\
$\epsilon_{t}$ & homomorphism $\Psi \rightarrow A\left(U^{t}\right)$ & \pageref{op:49}\\
$\Omega$ & input space & \pageref{op:50}\\
$U$ & input-value set & \pageref{op:50}\\
$U[z]$ & finitely nonzero sequences & \pageref{op:51}\\
$\delta_{s}$ & concatenation & \pageref{op:52}\\
$f^{\Omega}$ & induced map on input space & \pageref{op:53}\\
$\mathbb{V}$ & sequences zero in the past & \pageref{op:54}\\
$\sigma_{V}$ & shift on $\mathbb{V}$ & \pageref{op:54}\\
$\Gamma$ & output space & \pageref{op:54}\\
$\mathbf{f}$ & i/o map for response $f$ & \pageref{op:55}\\
$\Sigma$ & system & \pageref{op:56}\\
$x^{\#}$ & initial state & \pageref{op:56}\\
$h^{w}$ & observable induced by $w$ & \pageref{op:56}\\
$h^{\Gamma}$ & observability map & \pageref{op:56}\\
$P^{(t)}$ & iterate of transition map & \pageref{op:56}\\
$g, g_{t}$ & reachability map, restriction & \pageref{op:56}\\
$X_{t}$ & $t$-step reachable set & \pageref{op:56}\\
$\Sigma_{a c}(f)$ & abstractly canonical realization of $f$ & \pageref{op:59}\\
$\Sigma_{\text {free }}(f)$ & free realization & \pageref{op:61}\\
$\Sigma_{Q}$ & quasi-reachable subsystem & \pageref{op:63}\\
$\mathcal{L}(\Sigma)$ & observation space of $\Sigma$ & \pageref{op:64}\\
$\mathcal{A}(\Sigma)$ & observation algebra of $\Sigma$ & \pageref{op:64}\\
$\mathcal{Q}(\Sigma)$ & observation field of $\Sigma$ & \pageref{op:64}\\
$\Sigma^{\text {obs }}$ & observable ``quotient'' & \pageref{op:65}\\
$\mathcal{L}_{f}, \mathcal{A}_{f}, \mathcal{Q}_{f}$ & observation space, etc., of $f$ & \pageref{op:69}\\
$\mathcal{L}_{f}^{R}, \mathcal{A}_{f}^{R}, \mathcal{Q}_{f}^{0}, \ldots$ & reachability and observability chains & \pageref{op:71}\\
$\operatorname{Obs}(x)$ & observation class & \pageref{op:76}\\
$\mathcal{L}_{f}^{K}, \ldots$ & extended observables & \pageref{op:85}\\
$J_{n}(f)$ & $n$-th Jacobian of $f$ & \pageref{op:89}\\
$\mathcal{B}(\varphi)$ & behavior matrix & \pageref{op:102}\\
$\operatorname{tdeg} f$ & total degree & \pageref{op:109}\\
$\mathrm{QR}(f)$ & quasi-reachable lattice & \pageref{op:113}\\
$\mathrm{QC}(f)$ & quasi-canonical lattice & \pageref{op:121}\\
$\mathrm{AO}(f), \mathrm{RD}(f)$ & subsets of $\mathrm{QR}(f)$ & \pageref{op:123}\\
$\mathrm{NOR}(f)$ & normal realizations & \pageref{op:127}\\
$\bar{\Sigma}$ & integral closure of $\Sigma$ & \pageref{op:128}\\
$r(A)$ & smallest number of generators & \pageref{op:139}\\
$C_{n}, C_{*}, C_{\infty}$ & algebras of input spaces & \pageref{op:146}\\
$\Omega^{\prime}$ & generalized input space & \pageref{op:147}\\
\end{longtable}

\newpage

\section{Introduction}\label{intro}

This paper is a revised version of my PhD thesis, which was completed
at the University of Florida in December 1976, fifty years ago as of
this writing, under the supervision of Rudolf E.~Kalman. The material
in the thesis was never published as a journal paper. Parts of it were
reproduced in typewritten form in Volume~13 of the Springer-Verlag
series \emph{Lecture Notes in Control and Information Sciences}
(1979). The thesis and lecture notes are very hard to read, due to the
old typesetting, hand-drawn diagrams, old-style numbering of
definitions and results, not enough displays for equations, and poor
quality of scans. In addition, they contain several typos and errors in
formulas and bibliographical references.  Perhaps due to these issues,
the work has not been much cited, and the connection to currently
popular ideas such as the ``linearization by duality'' representation
of input/output systems has been missed.

The purpose of this paper, then, is to provide a streamlined
exposition, correcting typos and errors, drawing modern diagrams, and
updating citations.
In a final Section~\ref{sec:addendum}, however, we have included an addendum with some
connections to related work done after 1976, including a few of my own later papers.
This is of course incomplete, and is intended merely as a sample.
It is unfortunate that real algebraic geometry had not yet been developed at the time. Several results here are stated only for algebraically closed fields, or for $k=\mathbb{R}$ with ad hoc arguments (see Theorems~\ref{thm:3.14} and~\ref{thm:4.6}, and the remarks in Section~\ref{sec:30}); with the modern theory, I would have been able to make them considerably more interesting.

As alluded to above, among the reasons for preparing this clean
version is that the ``Koopman'' approach to dynamics, introduced by
Koopman in 1931, has become very popular in control theory. In that
approach, a nonlinear system is linearized by representing its
dynamics through a dual system, whose states are observables, that is,
functions on the original state space. The same approach was taken in
this work, which was in turn strongly influenced by Kalman's ideas,
except that here the approach is algebraic instead of
functional-analytic: the observables are polynomial functions, and
they are organized into vector spaces, algebras, and fields rather
than into spaces of functions with a topology. Instead of states, the
basic objects here are the observation space $\mathcal{L}_{f}$, the
observation algebra $\mathcal{A}_{f}$, and the observation field
$\mathcal{Q}_{f}$ of a response map $f$ (Section~\ref{sec:12}), on
which the transition map acts through its transpose. Realizations are
built from these spaces of observables; in particular, when the
observation space is finite dimensional the dual representation is
linear and it leads to the state-affine realizations studied in Section~\ref{ch:5}.
Section~\ref{sec:addendum} discusses Koopman-based work on systems with inputs and exact finite-dimensional embeddings.

\paragraph{Co-transitions.}
To make the connection explicit, consider a $k$-system $\Sigma$ with state space $X$, input-value space $U=k^{m}$, and transition map $P: X \times U \rightarrow X$ (Definition~\ref{def:8.1}). Every polynomial map $g: X_{1} \rightarrow X_{2}$ has a transpose
\[
A(g): A\left(X_{2}\right) \rightarrow A\left(X_{1}\right), \qquad A(g)(\varphi)=\varphi \circ g
\]
(Definition~\ref{def:3.6}), and polynomial maps correspond bijectively to algebra homomorphisms in the opposite direction (Lemma~\ref{lem:3.7} and Corollary~\ref{cor:3.8}). Since $A(X \times U)=A(X)\left[T_{1}, \ldots, T_{m}\right]$, where $T_{1}, \ldots, T_{m}$ are the coordinate functions of the input, the transpose of the transition map is
\[
A(P): A(X) \rightarrow A(X)\left[T_{1}, \ldots, T_{m}\right], \qquad (A(P) \varphi)(x, u)=\varphi(P(x, u)).
\]
This is the ``co-transition'' map, as it is called in Section~\ref{sec:23}. It is linear in the observable $\varphi$ (indeed, it is an algebra homomorphism), and it depends polynomially on the input. For each fixed input value $u$, the map $\varphi \mapsto \varphi \circ P(\cdot, u)$ is precisely the Koopman operator of the dynamics $x \mapsto P(x, u)$. The components of the output map are the simplest observables, and the observation algebra $\mathcal{A}(\Sigma)$ is the smallest subalgebra of $A(X)$ that contains them and is invariant under the co-transitions, in the sense that
\[
A(P)(\mathcal{A}(\Sigma)) \subseteq \mathcal{A}(\Sigma)\left[T_{1}, \ldots, T_{m}\right]
\]
(see \eqref{eq:10.5} and Lemma~\ref{lem:10.6} for this invariance, and the analogous statement for the observation space $\mathcal{L}(\Sigma)$).

\paragraph{The canonical realization as a dual object.}
The canonical realization of a response map is built by the same mechanism, but starting from the observables instead of from a state space. For the ``free'' realization, whose state space is the input space $\Omega$, the observables are the Volterra series, and the dynamics (concatenation of inputs) is defined as the dual of the homomorphism
\[
\theta_{1}: \Psi \rightarrow \Psi \otimes k\left[S_{1}\right]: \psi \mapsto \psi\left(S_{1}\right)
\]
(see \eqref{eq:6.7} and \eqref{eq:6.8}). The canonical realization $\Sigma_{f}$ is obtained by restricting this co-transition to the observation algebra $\mathcal{A}_{f}$ of $f$, and by taking as state space the set $X\left(\mathcal{A}_{f}\right)$ of $k$-points of $\mathcal{A}_{f}$; its transition map is $X\left(\theta_{1}\right)$, with $\theta_{1}: \mathcal{A}_{f} \rightarrow \mathcal{A}_{f}\left[S_{1}\right]$ (Section~\ref{sec:14}). In this sense, states are recovered from observables, rather than the other way around. When the observation space $\mathcal{L}_{f}$ is finite dimensional, the co-transitions act linearly on it, with a polynomial dependence on the input, and this is what leads to the state-affine realizations of Section~\ref{ch:5}; in that case, the state space of the span-canonical realization is in a natural duality with $\mathcal{L}_{f}$ (Proposition~\ref{prop:20.15}).

AI tools were employed in order to read the typewritten manuscript,
reformat it in a modern, \LaTeX-friendly manner, redraw diagrams,
correct typos, and perform other minor editing.
The material in the addendum was heavily assisted by AI tools.
In the original manuscript, in contrast, a sophisticated HI (human intelligence) tool assisted tremendously in editing the manuscript, both in mathematics content and language: Professor Kalman was tireless in red-lining my manuscript with suggestions (a skill which I inherited and then practiced on my own students); the quality of his work was far superior to the best current AI!

The rest of this introduction reproduces the preface of the Lecture
Notes, which in turn was a minor edit of the preface for the thesis
(University of Florida, 1976):

The past 20 years have witnessed the emergence of a new area of application-oriented mathematics and engineering, that of Mathematical System Theory. This new field has achieved significant advances during its relatively short existence, in particular with respect to the control and observation of finite-dimensional linear dynamical systems, whose theory is by now widely known and applied. Perhaps the central concept in this latter theory is that of \textit{realization}, the basic problem of studying what are the possible internal structures (i.e., sets of evolution equations) giving
 rise to an observed external behavior (i.e., input/output map, impulse response, transfer function, etc.). In one way or another, either implicitly or explicitly, realization theory---together with its associated concepts of reachability and observability (and variations of these like controllability and reconstructibility)---permeate most methods and results in linear system theory. When dealing with \textit{nonlinear} systems, however, the question of realization is only now beginning to be studied. Besides its intrinsic interest, it is reasonable to expect in view of the above remarks that a nonlinear realization theory may eventually derive analogous benefits to the design and analysis of more general systems.

The present work is an attempt to attack the realization problem for a wide class of discrete-time nonlinear behaviors. In choosing an appropriate class of behaviors, one should of course strive for a class which is general enough to accommodate many examples of interest while at the same time having sufficient structure to allow for the application of useful mathematical tools. Thus the extreme (set-theoretic) case of automata theory and "general system theory", although providing much of the intuition and philosophy of the approach, does not by itself constitute the right level of generality from a more applied viewpoint. The most important type of nonlinearity, when no strong threshold effects or other discontinuities are dominant, is given by multiplicative effects. This gives rise to the notion of a \textit{polynomial} input/output map, in which present output values are sums and products of past input values.

The present work is based upon the premise that the natural tools for the study of the structural-algebraic properties (in particular, realization theory) of polynomial input/output maps are provided by \textit{algebraic geometry} and \textit{commutative algebra}, perhaps as much as linear algebra provides the natural tools for studying linear systems. The results obtained until now, and the problems and directions of research suggested, seem to indicate that this premise is indeed correct. Although (or rather, \textit{because}) the theory is clearly far from complete, it seems appropriate to present its main lines in an expository way, with the hope that it will generate additional research. Since algebra-geometric concepts and tools are rather new in the context of system theory, a rather detailed discussion is included of some basic algebraic definitions and results, in a terminology geared towards the intended applications. In this sense, the present work can be seen dually as an essentially self-contained introduction to some areas of basic algebraic geometry, illustrated through system-theoretic applications (Hilbert's basis theorem to finite-time observability, dimension theory to minimal realizations, Zariski's Main Theorem to uniqueness of canonical realizations, etc.) In order to keep the level elementary (in particular, not utilizing sheaf-theoretic concepts) certain ideas like nonaffine varieties are used only implicitly (e.g., quasi-affine as open sets in affine varieties) or in technical parts of a few proofs, and the terminology is similarly simplified (e.g., "polynomial map" instead of "scheme morphism restricted to $k$-points", or "$k$-space" instead of "$k$-points of an affine $k$-scheme"). Hopefully, the reader will be sufficiently motivated by the methods and results to deepen his/her knowledge of algebraic geometry through the study of any of various existing purely mathematical texts.

This work deals only with discrete-time systems, and no attempt is made to treat systems evolving in continuous-time. This reflects a bias of the author, due in part to the influence of the present microprocessor revolution, and the new possibilities that this opens up for digital control. Associated with this, it is at present not uncommon to model physical systems (and even more, economic and biological ones) via difference equations, sometimes as "sampled" continuous-time processes. It is clear, however, that some future applications will depend also on a deeper understanding than is now possible of the interplay between the notions of continuous and discrete-time systems.

Section~\ref{ch:1} summarizes the problems and main results in an intuitive and relatively nontechnical way. After the algebraic preliminaries of Section~\ref{ch:2}, the next two sections develop an abstract realization theory and study various finiteness conditions. Section~\ref{ch:5} treats a class of systems which are suggested naturally by the general framework in the particular case of a certain invariant (the observation space) being finite-dimensional; these systems turn out to include those types for which realization theories had been developed by various authors, and a general realization algorithm is presented, which restricts to the various known procedures. The next section studies the class of realizations of a fixed input/output map, while the last one deals with generalizations, further examples and remarks, and a discussion of open problems. 

This work is largely based on the doctoral dissertation submitted by the author to the University of Florida in 1976, under the supervision of Professor R. E. Kalman. Professor Kalman provided much of the encouragement and arranged for the long-term financial support which made that and other research possible. Furthermore, his early intuition of the system-theoretic relevance of algebraic geometry and rational power series had an obvious influence on this work. The main direct motivation for the research into the topics discussed here was given by joint work with Y. Rouchaleau \citep{sontag1975discrete}. A number of other people had an important influence, either directly or indirectly through the discussion of closely related topics; in particular, S. Eilenberg, M. Fliess, M. Hazewinkel, E. W. Kamen, M. Heymann, and S. Mitter.
This research was supported in part by U.S. Army Grant DAAG29-76-G-0203 and U.S. Air Force Grant AFOSR 76-3034 through the Center for Mathematical System Theory, University of Florida.


\section{The Systems Setup}\label{ch:1}\label{op:1}

In the present work we study the problem of realization of \textit{polynomial input/output maps}. In this introduction we restrict ourselves to \textit{shift-invariant, scalar} input/output maps defined over infinite fields, in order to present the definitions and results in a simple way. The development in the main text proceeds in greater generality.

We choose an infinite field $k$, which will be fixed throughout this section.

Let $\mathbb{S}$ denote the set of all sequences of elements of $k$ indexed by the integers and with support bounded on the left. In other words, $u(\cdot)$ in $\mathbb{S}$ is a function $u(\cdot): \mathbb{Z} \rightarrow k$ for which there exists an integer $t_{0}$ such that $u(t)=0$ for all $t<t_{0}$. A (scalar) \textit{input/ output map} is then a map $f: \mathbb{S} \rightarrow \mathbb{S}: u(\cdot) \mapsto y(\cdot)$; $f$ is (strictly) \textit{causal} when, for all $t$ in $\mathbb{Z}$, the output $y(t)$ depends only on values $u(j)$ of inputs for $j<t$; $f$ is \textit{shift-invariant} if $u(\cdot) \mapsto y(\cdot)$ implies $(\sigma u)(\cdot) \mapsto(\sigma y)(\cdot)$, where $\sigma$ is the shift operator defined by $(\sigma u)(t):=u(t-1)$.

These concepts are standard. One of the contributions of the present work is the introduction of a new notion, that of a "polynomial" input/output map. Informally, this means that $y(t)$ is a polynomial function of the past input values $u(j), j<t$.

In order to rigorously define polynomial input/output maps we need the concept of a \textit{Volterra series} $\psi$ : this is a formal power series in denumerably many variables $\xi_{1}, \xi_{2}, \xi_{3}, \ldots$ such that $\psi$ is of finite degree in each variable separately. A causal, shift-invariant map is uniquely determined by specifying the dependence of $y(0)$ upon values of the input $u(t)$ for $t<0$. So we can now say, more precisely, that a causal, shift-invariant input/output map $f$ is \textit{polynomial} iff there exists a Volterra series $\psi_{f}$ such that the output $y(0)$ due to an input sequence $u(\cdot)$ is obtained by substituting $u(-j)$ for $\xi_{j}$ into $\psi_{f}$ and
 evaluating the expression thus obtained. This evaluation is well-defined because there are only finitely many nonzero $u(t)$ (by definition of $\mathbb{S}$), and because $\psi_{f}$ is a polynomial in each finite subset of variables.

\label{op:2}%
The present formalism is able to represent a wide variety of behaviors.

For example, consider $f_{1}$, defined by $\psi_{f_{1}}=a_{1} \xi_{1}+a_{2} \xi_{2}+\ldots$. Then an input $u(\cdot)$ produces an output
\[
\sum_{j=0}^{\infty} a_{j} u(-j) .
\]
Thus $f_{1}$ corresponds to a linear system with impulse response sequence $a_{1}, a_{2}, \ldots$.

Another example is $\psi_{f_{2}}=\sum \xi_{j_{1}} \xi_{j_{2}} \ldots \xi_{j_{r}}$, the sum running over all $j_{1}<j_{2}<\ldots<j_{r}$ and all $r \geq 0$. Then $f_{2}$ corresponds to adding all possible products of past inputs.

The main problem we are interested in is the following. What are natural internal (i.e., state-space) representations for polynomial input/output maps? We are of course interested in representations with a certain amount of algebraic, geometric, and/or topological structure; otherwise the above question could be trivially answered via the "Nerode realization" method of automata theory. Further, we want to use our results to \textit{infer} possible internal properties of a given "black box"; so the choice of structure should be directly related to properties of \textit{polynomial} input/output maps.

Polynomial systems constitute a class of systems whose defining maps are always polynomial. A \textit{polynomial system} $\Sigma$ (provisional definition) has

(a) $X=k^{n}$ as its state-space ($n=$ integer);

(b) state-transitions given by simultaneous first order difference equations
\[
x(t+1)=P(x(t), u(t)),
\]
where $x(t)=\left(x_{1}(t), \ldots, x_{n}(t)\right)$ and $P=\left(P_{1}, \ldots, P_{n}\right)$ is a polynomial function $k^{n+1} \rightarrow k^{n}$;

(c) an output map
\label{op:3}%
\[
y(t)=h(x(t)),
\]
where $h$ is a polynomial in $n$ variables; and

(d) an initial state $x^{\#}$ which is an equilibrium state for the zero input:
\[
P\left(x^{\#}, 0\right)=x^{\#} .
\]
(The constraint on $x^{\#}$ to be an equilibrium state is dictated by our restriction to shift-invariant input/output maps; the specific choice of 0 as "equilibrium input" is just a matter of choice of coordinates in the input space.) Let us denote by $P$ also the recursive extension of $P$ to sequences of inputs, i.e.
\[
P\left(x, v_{1}, \ldots, v_{n+1}\right):=P\left(P\left(x, v_{1}, \ldots, v_{n}\right), v_{n+1}\right) .
\]
Then $\Sigma$ defines an input/output map $f_{\Sigma}: u(\cdot) \mapsto y(\cdot)$ by the rule
\[
y(t):=h\left(P\left(x^{\#}, u\left(t_{0}\right), u\left(t_{0}+1\right), \ldots, u(t-1)\right)\right),
\]
where $t_{0}<t$ is any integer for which $u(\tau)=0$ if $\tau<t_{0}$. Then $f_{\Sigma}$ is clearly a polynomial input/output map because it is defined as a composition of polynomials. We may in fact exhibit $\psi_{f_{\Sigma}}$ directly by the rule that the coefficient of a monomial $\xi_{i}^{i_{1}} \ldots \xi_{t}^{i_{t}}$ should be equal to the coefficient of the same monomial in the polynomial $h\left(P\left(x^{\#}, \xi_{t}, \ldots, \xi_{1}\right)\right)$.

Thus we have defined a large class of systems whose input/output maps are polynomial. Such systems are appealing from both mathematical and system-theoretic reasons, because they can be realized by finite interconnections of adders, multipliers, amplifiers, and delay lines. In order to get a reasonably complete and general theory, however, it is necessary to go beyond polynomial systems. A larger class of systems, called $k$-\textit{systems}, arises when we study the problem of obtaining "canonical" realizations of input/output maps. The theory to be developed will show that $k$-systems provide the right amount of
\label{op:4}generality for studying realizations of polynomial maps. We now motivate their introduction.

One of our main objectives is to obtain realizations which are "natural" or "canonical" in the sense of not depending on any information not implied by the input/output behavior. The class of candidates to be considered should have some fixed structure (like polynomial systems) so that the canonical system is recoverable just from the knowledge of its external behavior. The approach which has been highly successful with automata and linear systems consists in trying to construct realizations which are as "minimal" or "irredundant" as possible; see for instance \citet[Chapters 7 and 10]{kalman1969topics} and \citet[Chapters 3, 12, and 16]{eilenberg1974automata}. We shall adopt such a viewpoint here, beginning with polynomial systems, and we shall see how we are forced to introduce more general systems.

Let us consider the two-dimensional system
\[
\Sigma_{o}=\left\{\begin{array}{l}
x_{1}(t+1)=x_{1}(t)+u(t), \\
x_{2}(t+1)=x_{1}(t) x_{2}(t)+x_{1}(t)+x_{2}(t), \\
y(t)=x_{2}(t),
\end{array}\right.
\]
with initial state 0. It is easy to see that it is possible to reach from 0 any state $\binom{a}{b}$ in $k^{2}$, using in fact inputs of length not greater than two. There are, however, redundant states which behave identically in the sense that they cannot be distinguished by input/ output experiments. They are of the form $\binom{a}{-1}$. Any other states can be pairwise distinguished from the data ($x_{2}, x_{1} x_{2}+x_{1}+x_{2}$) resulting of the observation of the output at two consecutive instants. In order to obtain a system with no unobservable states, we must identify the states $\binom{a}{-1}$ for all $a$ in $k$ and we must then define appropriate "polynomial" transitions, compatible with the original $P$, on the quotient set thus obtained. To have a well-defined notion of "polynomial map" we must first endow our quotient set with a suitable notion of "coordinate system", i.e. we need to define in it a geometric structure.

\label{op:5}%
But this structure may not correspond to a polynomial system.

It turns out that the input/output map $f_{\Sigma_{0}}$ of the above system $\Sigma_{0}$ admits no observable polynomial realization. This remains true for the weaker question of existence of polynomial realizations for which we only require the property of distinguishable reachable states. In other words, it is in general impossible to embed the "Nerode realization" of an input/output map in a polynomial system, even if $f$ is the input/ output map of a polynomial system.

The natural algebraic-geometric way to proceed consists in introducing the notion of ($k$-points of) an \textit{affine} $k$-\textit{scheme}, or, as we shall say for short, a $k$-\textit{space}. Such a space consists of a topological space $X$ together with an algebra of \textit{polynomial functions} on $X$ (a distinguished family of continuous functions on $X$ subject to appropriate axioms). In particular, the spaces $k^{n}$ become $k$-spaces when endowed with the "Zariski topology", whose closed sets correspond to subsets of $k^{n}$ defined by polynomial equations; the polynomial functions on the $k$-space $k^{n}$ are the usual polynomial functions in $n$ variables. Thus our previous choice of state-spaces furnishes an (easy) example of $k$-spaces. Given two $k$-spaces $X_{1}, X_{2}$ there is a well-defined concept of \textit{polynomial map} $P: X_{1} \rightarrow X_{2}$; these are precisely those maps which when composed with the polynomial functions on $X_{2}$ give polynomial functions on $X_{1}$.

A $k$-\textit{system} $\Sigma$ is then defined by letting the state set $X_{\Sigma}$ be an arbitrary $k$-space and letting the transition map $X_{\Sigma} \times k \rightarrow X_{\Sigma}$ and the output function $X_{\Sigma} \rightarrow k$ be polynomial maps. The fundamental observation is that \textit{the input/output maps of} $k$-\textit{systems are polynomial}.

Conversely, \textit{each polynomial input/output map can be realized by some} $k$-\textit{system}.
 This fact follows rather trivially once that $k$-spaces have been recognized as the proper state spaces. The proof relies on turning the space of input sequences into a $k$-space $\Omega$ in such a way that the notion of (polynomial) input/output map becomes precisely that of a polynomial map between $k$-spaces.

\label{op:6}%
Having established $k$-systems as the class of systems to be considered, we return to the problem that motivated the introduction of $k$-spaces in the first place, namely, the existence of "observable" realizations.

We shall prove that in the new class of systems it is always possible to "reduce" a given system to one all of whose states can be distinguished by input/output experiments. Nevertheless, this does not settle the question of observability. It was noticed already in \citet{sontag1975discrete} that there exist input/output maps having realizations $\Sigma_{1}, \Sigma_{2}$ both of which are reachable and observable but such that $\Sigma_{1}$ and $\Sigma_{2}$ are nonisomorphic (as $k$-spaces). In fact, the example given in the above reference has $X_{1}=k$ while $X_{2}$ is a curve with a singularity, a very different kind of $k$-space. The difficulty lies in the concept of observability itself. This notion is usually defined by the intuitive requirement that different states be "distinguishable by processing the input/output data". The precise notion in this context is that different states should be distinguishable by an \textit{algebraic} processing of the input/ output data. This point of view leads to the definition of \textit{algebraic observability}, which turns out to be the proper notion in our context.

The next step in our program for obtaining a "canonical" realization of a given input/output map is to construct an observable realization all of whose states are reachable from the initial state. Here we run into a new problem: the reachable set of an arbitrary system is not necessarily a $k$-space. For instance, let us consider a two-dimensional system with transitions defined by
\[
\begin{aligned}
& x_{1}(t+1)=u(t), \\
& x_{2}(t+1)=x_{2}(t) u(t)+x_{2}(t)+u(t),
\end{aligned}
\]
and zero initial state. The reachable set fails to contain the points $\binom{-1}{b}, \quad b \neq-1$.

This difficulty can be easily eliminated. Our ultimate goal is not to obtain reachable and observable realizations but rather to construct
\label{op:7}"natural" realizations. It is therefore enough to observe that (for continuity reasons), the dynamical properties of the reachable part of a system $\Sigma$ uniquely determine the dynamical properties of the closure (in the topology of the $k$-space $X_{\Sigma}$) of the set of reachable states. (In the above example the closure corresponds to the whole plane.) We shall say that $\Sigma$ is \textit{quasi-reachable} if the closure of the reachable states is $X_{\Sigma}$. The closure of the subset of reachable states is always a $k$-space invariant under the action of inputs. So a quasi-reachable realization can always be obtained from an arbitrary realization. If we begin with a polynomial system, the closure of the reachable set is a very special type of $k$-space namely, an \textit{algebraic variety}. It is natural therefore to generalize our preliminary definition of polynomial systems to include the case in which $X_{\Sigma}$ is a variety (not necessarily $k^{n}$). In other words, a \textit{polynomial system is given by a finite set of simultaneous polynomial difference equations together with a set of polynomial constraints on the state variables}.

We shall say that a $k$-system is \textit{canonical} if it is quasi-reachable and algebraically observable. One of the main results of this work is then: \textit{Every input/output map} $f$ \textit{admits a canonical realization} $\Sigma_{f}$ \textit{and any other canonical realization of} $f$ \textit{is isomorphic to} $\Sigma_{f}$. We have thus attained our goal of determining a natural class of state representations for polynomial input/output maps.

The result on existence and uniqueness of canonical realizations must be complemented by a discussion of finiteness conditions. In principle, there is of course no guarantee that the state-space $X_{f}$ of $\Sigma_{f}$ is in any sense "finite dimensional".

We have chosen the "transcendence degree" notion of dimension out of the many possible definitions of dimension of $k$-spaces. The \textit{dimension} of a system $\Sigma$ is then the dimension of $X_{\Sigma}$. Informally, the dimension of $\Sigma$ counts the "degrees of freedom" in the state space. In the particular case of polynomial systems the dimension is what one would intuitively expect. For instance, if $X_{\Sigma}$ is the "cusp", given by $\left\{\left(x_{1}, x_{2}\right) \in k^{2} \mid x_{1}^{3}=x_{2}^{2}\right\}$, then $\operatorname{dim} \Sigma=1$.

\label{op:8}%
We shall say that a given system $\Sigma$ is \textit{almost polynomial} when $X_{\Sigma}$ can be obtained as a "quotient" of some space $k^{n}$ (the terminology "quotient" is not quite precise here, since we shall have to admit in general the existence of some points besides those representing the equivalence classes of points of $k^{n}$). The name "almost polynomial" is due to the fact that in this case $X_{\Sigma}$ can be expressed as a union of a variety and a lower-dimensional subset.

A central result in this context is: \textit{The input/output map} $f$ \textit{has a finite-dimensional realization if and only if} $\Sigma_{f}$ \textit{is an almost-polynomial system if and only if} $f$ \textit{satisfies an algebraic difference equation}, i.e., if and only if there exists an integer $s$ and a polynomial $E$ in $2 s+1$ variables such that
\[
E(y(t), y(t-1), \ldots, y(t-s), u(t-1), \ldots, u(t-s))=0,
\]
for \textit{all} input/output pairs $u(\cdot), y(\cdot)$. Up to constant multiples there is a unique \textit{irreducible} equation $E=0$ of minimal order $s$ satisfied by $f$. We shall also prove that, if $f$ satisfies some algebraic difference equation, then $f$ satisfies as well an equation $\hat{E}=0$ \textit{linear in} $y(t)$.

As a simple illustration of the above results, let $f:=f_{\Sigma_{0}}$, where $\Sigma_{0}$ is the system, introduced before,
\[
\Sigma_{o}=\left\{\begin{array}{l}
x_{1}(t+1)=x_{1}(t)+u(t), \\
x_{2}(t+1)=x_{1}(t) x_{2}(t)+x_{1}(t)+x_{2}(t), \\
y(t)=x_{2}(t),
\end{array}\right.
\]
with zero initial state. The canonical state-space $X_{f}$ is, as a set, the union of the singleton $\{*\}$ and the subset $U:=\left\{x_{2} \neq-1\right\}$ of $k^{2}$. The transition and output maps of $\Sigma_{f}$ are those induced by the projection $T: k^{2} \rightarrow X_{f}: x \mapsto x$ if $x$ is in $U, x \mapsto *$ if $x_{2}=-1$. (These are polynomial maps for a suitable $k$-space structure on $X_{f}$.) The irreducible equation of minimal order satisfied by $f$ is
\label{op:9}%
\[
\begin{aligned}
& {[y(t-2)+1][y(t)+1]-[y(t-1)+1]^{2}-} \\
& -[y(t-1)+1][y(t-2)+1] u(t-2)=0
\end{aligned}
\]
In the "classical" case of linear systems it is well known that a system is canonical iff it is a minimal-dimensional realization of its input/output map. This result does not generalize directly to the present situation. A counterexample is given by the system (with $X=k^{1}$)
\[
\Sigma=\left\{\begin{array}{l}
x(t+1)=u(t), \quad x^{\#}=0, \\
y(t)=x^{2}(t) .
\end{array}\right.
\]
Clearly, $\Sigma$ is not canonical, because all pairs of states $\{a,-a\}$ are indistinguishable. However $\Sigma$ is minimal, since it has dimension 1.

The proper treatment of the above minimality question is through the concept of weakly canonical realizations. We shall say that $\Sigma$ is \textit{weakly canonical} when it is quasi-reachable and (in a sense to be made precise) "almost all" states are indistinguishable of only \textit{finitely many} other states. The example in the previous paragraph is therefore weakly canonical, since in fact each state is indistinguishable of only \textit{one} other state. Let $k$ be either the field of real numbers or an algebraically closed field. We prove that a \textit{realization} $\Sigma$ \textit{of a polynomial input/output map} $f$ \textit{is of minimal dimension among all realizations of} $f$ \textit{if and only if} $\Sigma$ \textit{is weakly canonical}. Over any field $k$, canonical realizations are minimal.

The question of deciding when $\Sigma_{f}$ is in fact a \textit{polynomial} system (i.e. $X_{f}$ is a special kind of $k$-space: a variety) can be answered theoretically via the introduction of the \textit{observation algebra} $A_{f}$ of the input/output map. This is a $k$-algebra which is canonically associated to any given $f$. We prove that $\Sigma_{f}$ is a \textit{polynomial system precisely when} $A_{f}$ \textit{is finitely generated as a} $k$-\textit{algebra}
. Further, the smallest $n$ for which $\Sigma_{f}$ can be embedded in a system of $n$ simultaneous polynomial difference equations is equal to the minimal
\label{op:10}possible cardinality of sets of generators of $A_{f}$. Unless $A_{f}$ is isomorphic to a polynomial ring, $n$ is not equal to the dimension of $\Sigma_{f}$. We shall also prove that $\Sigma_{f}$ \textit{is a polynomial system when} $f$ \textit{satisfies an input/output equation of the type}
\[
a(u(t-1), \ldots, u(t-s)) y(t)^{r}+\hat{E}=0,
\]
where $y(t)$ appears in $\hat{E}$ with degree less than $r$. Thus if, for instance, $f$ is known to satisfy a regression equation, the realization theory of $f$ can in principle be carried out without introducing the concept of $k$-spaces. Even in this special case, however, the general theory is needed in order to understand the meaning of the special hypothesis.

One of the main results is valid for input/output maps defined over fields $k$ which contain the rational numbers. The result states that $f$ \textit{has a finite realization if and only if the Jacobian matrices in a certain sequence} $J_{1}(f), J_{2}(f), \ldots$ \textit{have a uniformly bounded rank}. For a trivial example, we point out that when $f$ is linear the matrix $J_{n}(f)$ is precisely the $n$-th principal minor of the behavior (Hankel) matrix of $f$.

All the results presented up to this point are proved later for \textit{multivariable} polynomial input/output maps, for which both the inputs and outputs are vector-valued.

Proofs of the preceding results use tools of algebraic geometry. In other words, we use the "theory of polynomials" in the study of arbitrary polynomial input/output maps.

The second part of this work deals with a broad class of \textit{bounded} (polynomial) input/output maps, whose study can be "linearized". This linearization permits us to obtain sharper statements. Furthermore, $\Sigma_{f}$ will again be a \textit{polynomial} (not arbitrary $k$-) system.

Bounded maps $f$ are defined as follows. Recall that $\psi_{f}$ has a finite degree $d_{j}$ in each variable $\xi_{j}$. We say that $f$ is \textit{bounded} when the degrees $d_{j}$ are bounded independently of $j$. In other words, there
\label{op:11}exists an integer $d$ such that no input is raised to a power higher than $d$. There are no restrictions on products between inputs at different instants and/or different channels. It is at first surprising that the concept of bounded map includes as particular cases all those families of maps for which a satisfactory realization theory has been developed in the past. For instance, \textit{linear} systems, \textit{internally-bilinear} systems (\citealp{brockett1972algebraic}, \citealp{isidori1973realization}, \citealp{isidori1973direct,isidori1974new}, \citealp{fliess1973sur,fliess1975outil}, \citealp{dalessandro1974realization}, and others) give rise to bounded maps. (Internally-bilinear systems are those whose internal map is bilinear in the state and input and whose output map is linear. No products of inputs at same instants are performed by such systems, so $d:=1$ bounds all $d_{j}$.) \textit{Multilinear input/output maps} (\citealp{kalman1968lectures,kalman1979realization}) are also included. (Such maps allow products of inputs only in different channels, so that $d:=1$ is again a bound.)

We prove that \textit{if a bounded input/output map is at all realizable by a finite dimensional $k$-system, then it is also realizable by an (observable) state-affine system}. The latter are (polynomial) systems with $X_{\Sigma}=k^{n}$ whose defining equations take the special form
\[
\begin{aligned}
& x(t+1)=F(u(t)) x(t)+G(u(t)), \quad x^{\#}=0, \\
& y(t)=H x(t) .
\end{aligned}
\]
where $F(\cdot)$ and $G(\cdot)$ are polynomial matrices and $H$ is a linear map. The characteristic feature of state-affine systems is the linear occurrence of the state variable.

The above realizability result establishes state-affine systems as a very useful and natural class of systems with respect to bounded maps. Input/output maps realizable by state-affine systems (equivalently, finitely realizable bounded maps) are precisely those whose "observation space" (a \textit{linear} space directly associated to the map) is finite-dimensional. These
 and other results indicate that state-affine systems play an "approximation" role in the discrete theory similar to the role
\label{op:12}of internally-bilinear systems in the continuous-time context (see for instance, \citealp{fliess1974matrices,fliess1975outil} and \citealp{sussmann1975semigroup}).

We then restrict our attention to realizations by state-affine systems. Canonical realizations can now be obtained (for bounded maps) without recourse to $k$-spaces. In fact, it is now natural to define \textit{span-canonical} state-affine systems as observable systems such that the \textit{linear span} of the reachable states is the full state-space $k^{n}$. We then prove that \textit{span-canonical realizations of a given bounded finitely realizable} $f$ \textit{always exist}. Further, \textit{any two such realizations can be related by a linear change of coordinates in the state-space}. Finally, \textit{a realization is span canonical if and only if its dimension} $n$ \textit{is smallest possible among all state-affine realizations of the same input/ output map}.

The above-mentioned results are proved by first associating to the bounded map $f$ the \textit{exponent} formal power series $\varphi_{f}$ obtained directly from the Volterra series $\psi_{f}$. As opposed to $\psi_{f}$, the exponent series is a power series in \textit{noncommutative} variables. The transformation $\psi_{f} \mapsto \varphi_{f}$ permits the explicit consideration of dynamics. We then remark that state-linear realizations are in a one-to-one correspondence with \textit{representations} of $\varphi_{f}$. (The concept of representation of a noncommutative power series was introduced by \citealp{schutzenberger1961definition} as a generalization of automata-theoretic ideas, and has been rediscovered since by many authors, notably in the context of stochastic automata. Representations have been called \textit{sequential systems} by \citealp{carlyle1971realizations} and \textit{linear-space automata} by \citealp{turakainen1972minimization}. A fairly complete account of representations, also called "automata with multiplicities", may be found in \citealp{eilenberg1974automata}. The notion of representation which we use is in fact a minor variation of that in the literature.) The idea of associating representations to systems is not totally original, since an analogous method was used by \citet{fliess1973sur} to study the special case of internally-bilinear systems. We give a brief but self-contained exposition of those results on representations which are relevant to our work.

\label{op:13}%
An interesting observation is that, under the above one-to-one correspondence, span-reachability and observability for state-affine systems corresponds precisely to (automata-theoretic) reachability and observability for representations. Realizability of $f$ can then be studied via the \textit{behavior (Hankel) matrix} $\mathcal{B}(f)$ of $\varphi_{f}$, and minimal state-affine realizations can be obtained by operating on $\mathcal{B}(f)$, using the methods developed for representations by \citet{fliess1972sur,fliess1975outil}. We give such a realization procedure, which generalizes and unifies known algorithms for linear and for the various kinds of bilinear systems. An interpretation of $\mathcal{B}(f)$ follows from the remark that the observation space of $f$ is isomorphic to the row space of $B(f)$.

We sharpen the result on algebraic difference equations by proving that a \textit{bounded map is finitely realizable if and only if it satisfies an input/output difference equation which is linear in the output}. This is a new result even in the (very special) cases of internally-bilinear systems and multilinear input/output maps.

\citet{schutzenberger1961definition} gave a generalization to power series of Kleene's theorem: A \textit{language} $L$ \textit{is recognizable by a finite automaton if and only if} $L$ \textit{can be described by a regular expression}. This generalization can be applied to $\varphi_{f}$ via the above correspondence between state-affine systems and representations. The conclusion is that $f$ \textit{has a state-affine realization if and only if} $\varphi_{f}$ \textit{is rational} i.e., if and only if $\varphi_{f}$ can be obtained from polynomials by a finite number of additions, multiplications, and inversions. As a consequence, it becomes possible to apply the standard calculus of interconnections of automata (see, for instance, \citealp{eilenberg1974automata}) to find $\varphi_{f}$ (and therefore $\psi_{f}$) from any state-affine realization of $f$, and, viceversa, to construct realizations given rational expressions for $\varphi_{f}$.

We also define the subclass of \textit{finite} maps $f$, corresponding to the restriction that the \textit{total degree} of $\psi_{f}$ should be finite. We show that the span canonical realization of such maps can be decomposed as a cascade of linear systems and memory-free nonlinearities. The existence of such decompositions characterizes finite maps.

\label{op:14}%
Returning to the case of general polynomial response maps, we study the class $\mathrm{QR}(f)$ of quasi-reachable realizations of a fixed $f$. Under the natural ordering induced by simulation $\left(\Sigma_{1}\right.$ \textit{simulates} $\Sigma_{2}$ when there is a dominating $k$-system morphism from $\Sigma_{1}$ into $\Sigma_{2}$), $\mathrm{QR}(f)$ turns out to be a complete lattice. The minimal element of $\mathrm{QR}(f)$ is $\Sigma_{f}$, and the largest element is the realization having the input space $\Omega$ as its state-space.
 The join in $\mathrm{QR}(f)$ of $\Sigma_{1}$ and $\Sigma_{2}$ is a subsystem of a parallel connection of $\Sigma_{1}$ and $\Sigma_{2}$ (a fibre product). The lattice operations permit constructing (sometimes simpler) realizations from given ones. The relevance of $\mathrm{QR}(f)$ lies mainly in the understanding of the relationships that hold among different realizations, and also in the development of alternative realization theories. For example, some authors use a different definition of "canonical", as "initial (not necessarily \textit{algebraically}) observable realization". A theory using this alternative definition will be easily derived from the consideration of the order properties of the subset of observable realizations. Other subsets (in fact, sublattices) of interest are also studied. Using $\operatorname{AR}(f)$ permits obtaining further insight also into the role of arbitrary $k$-systems as a "completion" of the subset of polynomial systems. Moreover, it also allows the construction of counterexamples to the existence of polynomial canonical realizations even if "canonical" is interpreted differently than quasi-reachable and algebraically observable (for example, the above alternative, or as "final quasi-reachable").

Another application of $\mathrm{QR}(f)$ will be in the study of \textit{normal} realizations of $f$. (The notion of normality is closely tied in algebraic geometry with that of nonsingularity; in fact, both coincide in dimension one.) We shall construct a complete lattice of normal realizations of $f$, and shall obtain a normalization of any element of $\mathrm{QR}(f)$. (For example, a system whose state-space is a cusp will have as its normalization a system whose state-space is a line.) Normality permits proving a strong version of the uniqueness theorem for canonical realizations: Two \textit{abstractly canonical} (i.e., reachable and [not necessarily algebraically] observable) \textit{normal} polynomial realizations are necessarily \textit{isomorphic}. In particular, returning to the "naive" definition of polynomial system with $X=k^{n}$,
\label{op:15}these are always normal, so any two such canonical realizations of a given $f$ must be equal up to a polynomial change of coordinates. The proof uses in part some well-known but nontrivial algebraic-geometric facts.

A number of results provide necessary and/or sufficient conditions for $\Sigma_{f}$ being a polynomial system (among them: finitely generated observation algebra, existence of integral or recursive difference equations, $f$ bounded). In many applications these conditions hold directly; for example, it is usual to define i/o maps via "autoregressive" (i.e., recursive) equations, while other problems give rise to bounded maps: internally-bilinear $f$ with nuclear reactor and population models (see e.g. \citealp{mohler1973bilinear}), multilinear $f$ in image processing (e.g., \citealp{kamen1979relationship}), finite $f$ in some stochastic filtering contexts (e.g., \citealp{marcus1979discrete}). In other applications, an approximation of the original problem may result in these conditions being true (for example, disregarding higher-order harmonics corresponding to a periodic input, for systems with "mild" nonlinearities).

In the above context, $k$-systems may be seen just as a technical tool which facilitates the study of polynomial systems, which can be implemented in turn by sets of simultaneous polynomial difference equations. There are cases, however, in which $\Sigma_{f}$ may not be polynomial, even if it admits a polynomial (noncanonical) realization. In fact, this was the original motivation for introducing more general systems. In those cases, it becomes of interest to find a way of somehow "programming" explicitly the resulting $k$-system. This will be accomplished in the last section, resulting in a description for $\Sigma_{f}$ in terms of locally rational transition and output maps in finitely many variables. Some remarks are also included there on the topic of determining a bound for the number of equations needed to represent $\Sigma_{f}$ when this is polynomial.

Also in the last section, we shall briefly discuss generalizations to arbitrary $k$-spaces of input and output values, and to nonequilibrium initial states. The first generalization allows the inclusion of algebraic constraints, for example, for $k=$ reals, the restriction to inputs of a fixed magnitude. The second allows treating i/o maps for which the dependence itself of present outputs on past inputs is allowed to change in time. The work closes with some remarks on other results and open problems and suggestions for further research.

\label{op:16}%
\section{Algebraic Preliminaries}\label{ch:2}

In this section we shall briefly discuss some basic notions of algebraic geometry which are used in the sequel. The main object to be introduced is \textit{the set of} $k$-\textit{points of an affine} $k$-\textit{scheme} ($k=$ field); we shall simply call this object a "$k$-space".

The study of $k$-spaces is \textit{per se} not included in standard texts in algebraic geometry; usually one studies instead the set of all points of a scheme and then tries to deduce special properties of the $k$-points. For instance, the study of finitely generated reduced schemes over the reals $\mathbb{R}$, i.e. the study of solutions of polynomial equations with real coefficients
\begin{equation*}
P_{i}\left(x_{1}, \ldots, x_{n}\right)=0, i=1, \ldots, r, \tag{*}
\end{equation*}
focuses on the \textit{complex} solutions $\left(x_{1}
, \ldots, x_{n}\right)$ in $\mathbb{C}^{n}$ of (*) rather than on the real solutions. This approach has proved highly appealing, since statements concerning the set of complex solutions do not have to elucidate certain exceptional or degenerate cases. In fact, it is customary to proceed a step further and embed the corresponding problem in \textit{projective} space.

To infer the nature of the set of $k$-points from the properties of the entire scheme is not always a straightforward matter; it may involve, in fact, nonalgebraic (e.g. differential-geometric) arguments.

For purposes of this exposition we have adopted the procedure of defining $k$-spaces directly. We shall give here the definitions and the main results needed later. With the exception of some trivial statements, no proofs will be given for those facts for which a precise reference is available (and given). There is unfortunately no single source for the results quoted. We rely mainly on \citet{bourbaki1972commutative} and \citet{dieudonne1974cours}. Except for some matters of style and emphasis, no original contributions appear in this section.

\label{op:17}%
\subsection{$k$-Reduced Algebras}\label{sec:1}

Let $k$ denote an arbitrary but infinite field, to be fixed throughout the discussion. Recall that a (\textit{commutative) $k$-algebra} (or, simply, an \textit{algebra}) is a pair ($A, \varphi$), where $A$ is a commutative ring with identity and $\varphi: k \rightarrow A$ is a ring homomorphism with $\varphi(1)=1$. We shall denote such algebras by the corresponding ring $A$ and identify $k$ with its (isomorphic) image $\varphi(k)$. Thus the scalar product r.a, $r \in k$, $a \in A$, is the multiplication in $A$. The field $k$ may always be viewed as a $k$-algebra, with $\varphi=$ identity.

A \textit{homomorphism of} $k$-\textit{algebras} $\mu: A \rightarrow B$ will mean a homomorphism whose restriction to $\varphi(k)=k$ is the identity.

We adopt the following notation conventions:

(i) the first few upper-case Latin letters $A, B, C, \ldots$ denote $k$-algebras;

(ii) $A \otimes_{k} B$ or simply $A \otimes B$ is the tensor product algebra;

(iii) if $A$ is an integral domain, then $Q(A)$ denotes the quotient field of $A$;

(iv) $k\left[T_{1}, \ldots, T_{r}\right]$ denotes the ring of polynomials in $r$ variables over $k$; when $r=1$ we write simply $k[T]$;

(v) "homomorphism" will \textit{always} mean $k$-algebra homomorphism;

(vi) Hom (A, B) denotes the set of all homomorphisms $A \rightarrow B$.

\begin{definition}\label{def:1.1}
A k-\textit{ideal M of a $k$-algebra A is the kernel of a homomorphism} $A \rightarrow k$. \textit{The} $k$-\textit{radical} $\operatorname{rad}_{k} A$ \textit{of} $A$ \textit{is the intersection of all} $k$-\textit{ideals of} $A$.
\end{definition}

Let $M$ be a $k$-ideal of $A$. Since $A$ is a $k$-algebra, $A / M \simeq k$ is a field, so $M$ is maximal, but \textit{not every maximal ideal of} $A$ \textit{is a} $k$-\textit{ideal}. For instance, let $k=\mathbb{R}$ and $A=\mathbb{R}[T]$. Then the ideal $M$ generated by $x^{2}+1$ is maximal (because $x^{2}+1$ is an irreducible polynomial) but $A / M \cong \mathbb{C} \neq \mathbb{R}$. In the particular case in which $A$ is
\label{op:18}finitely generated and k is algebraically closed, all maximal ideals are $k$-ideals; this is a consequence of Hilbert's Nullstellensatz; see \citet[V.3.3, Proposition 2]{bourbaki1972commutative}.

\textit{There is a bijective correspondence between} $k$-\textit{ideals and homomorphisms} $A \rightarrow k$. Indeed, let $\mu, v: A \rightarrow k$ and suppose that ker $\mu=$ ker $\nu$. Take any $x$ in $A$. Since $\mu(x)$ is in $k \subset A$ and $\mu$ is the identity on $k, x-\mu(x)$ belongs to ker $\mu=$ ker $v$. Thus $0=\nu(x-\mu(x))=\nu(x)-\mu(x)$. So the maps $\mu$ and $\nu$ are equal.

Let $X$ be a set. The set $k^{X}$ of all functions $X \rightarrow k$ is a $k$-algebra under the pointwise operations, the constant functions constituting the subring isomorphic to $k$. A subalgebra of $k^{X}$ is called an \textit{algebra of functions} on $X$.

\begin{lemma}\label{lem:1.2}
The following statements are equivalent:
\begin{enumerate}
\item[(a)] $\operatorname{rad}_{k} A=\{0\}$.
\item[(b)] A is isomorphic to an algebra of functions.
\end{enumerate}
\end{lemma}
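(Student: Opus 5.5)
The plan is to prove the two implications separately, using the evaluation map into $k^{X(A)}$ where $X(A)$ is the set of homomorphisms $A\to k$. This is the map $\iota$ announced in the notation table.

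\emph{(a) implies (b).} Let $X = X(A) = \operatorname{Hom}(A,k)$ and define
\[
\iota: A \rightarrow k^{X}, \qquad \iota(a)(\mu)=\mu(a).
\]
The pointwise operations on $k^{X}$ make $\iota$ a $k$-algebra homomorphism: each $\mu$ is a homomorphism fixing $k$, so constants go to constant functions. Its kernel is the set of $a$ with $\mu(a)=0$ for all $\mu$. This is $\bigcap_{\mu}\ker\mu$, which by Definition~\ref{def:1.1} and the bijection between $k$-ideals and homomorphisms $A\to k$ (noted just before the lemma) equals $\operatorname{rad}_k A$. Under (a) this kernel is zero, so $\iota$ is injective. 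Hence $A \cong \iota(A)$, which is a subalgebra of $k^{X}$ and therefore an algebra of functions.

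\emph{(b) implies (a).} First, the property $\operatorname{rad}_k A=\{0\}$ is invariant under isomorphism, since an isomorphism carries $k$-ideals bijectively to $k$-ideals. So we may assume $A$ is a subalgebra of $k^{Y}$ for some set $Y$. For each $y\in Y$, evaluation $e_y: A\to k$, $a\mapsto a(y)$, is a $k$-algebra homomorphism, because the operations are pointwise and constants map to themselves. Thus each $\ker e_y$ is a $k$-ideal. If $a\in\operatorname{rad}_k A$, then $a\in\ker e_y$ for every $y$, so $a(y)=0$ for all $y$ and $a=0$ as a function.

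There is no genuine obstacle here. The only point needing care is the edge case $Y=\emptyset$, where $k^{Y}$ is the zero ring. This cannot occur, because a $k$-algebra contains $k\neq 0$ with $1\neq 0$, which forces $Y\neq\emptyset$. The argument above does not need this, but it confirms the setting is consistent.
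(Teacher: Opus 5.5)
Your proof is correct and follows the paper's argument: the paper also gets (b)$\Rightarrow$(a) from the fact that each evaluation $e_x$ restricted to $A$ has a $k$-ideal as kernel, so $\operatorname{rad}_k A\subseteq\bigcap_x\ker e_x=\{0\}$, and gets (a)$\Rightarrow$(b) from the evaluation map $\iota: A\to k^{X(A)}$, whose kernel is exactly $\operatorname{rad}_k A$. Your remarks on isomorphism invariance and the empty-set edge case are harmless additions that the paper leaves implicit.
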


\begin{proof}
(b) implies (a). Let $A$ be identified with a subalgebra of $k^{X}$. For each $x$ in $X$, the evaluation map
\[
e_{x}: k^{X} \rightarrow k: \varphi \mapsto \varphi(x),
\]
restricted to $A$ is a homomorphism, hence ker $e_{X} \mid A$ is a $k$-ideal. Clearly then $\operatorname{rad}_{k} A \subseteq \cap\left\{\right.$ ker $e_{x}, x$ in $\left.X\right\}=\{0\}$.

(a) implies (b). Let $X(A)$ denote the set of all homomorphisms $\mu: A \rightarrow k$. Define $\iota: A \rightarrow k^{X(A)}$ by evaluation:
\[
\iota(a)(\mu):=\mu(a)
\]
Then $\iota$ is a homomorphism; moreover, it is in fact one-to-one. To prove this, assume that $\iota(a)=0$, i.e. $\iota(a)(\mu)=\mu(a)=0$ for some $a$ in $A$ and for all $\mu$ in $X(A)$. Then $a$ is in the kernel of every $\mu: A \rightarrow k$, i.e. in the $k$-radical of $A$, which is 0 by hypothesis. So $a=0$. Therefore $A \simeq \iota(A)$.

\label{op:19}%
Let us observe the duality implicit in the preceding arguments. Homomorphisms $A \rightarrow k$ may be viewed as \textit{points} on which elements of $A$ act by evaluation.

\textit{This duality is fundamental in algebraic geometry}.
\end{proof}

\begin{definition}\label{def:1.3}
\textit{An algebra satisfying the conditions in} Lemma~\ref{lem:1.2} \textit{is called a} $k$-\textit{reduced algebra}.
\end{definition}

For every $k$-algebra $A$, the quotient ring $A / \operatorname{rad}_{k} A$ is $k$-reduced.

An important fact related to Definition~\ref{def:1.3} is Hilbert's Nullstellensatz, which can be phrased as follows: A \textit{finitely generated algebra A over an algebraically closed field} $k$ \textit{is} $k$-\textit{reduced if and only if} $A$ \textit{has no nonzero nilpotents}.

A recent generalization of this celebrated result (\citealp{dubois1967nullstellensatz}, \citealp{dubois1970algebraic}) is the following: A \textit{finitely generated algebra} $A$ \textit{over a maximally ordered field} $k$ (e.g. $k=\mathbb{R}$) is $k$-\textit{reduced if and only if for any} $x_{i}$ \textit{in} $A$ \textit{the relation} $\sum_{i=1}^{n} x_{i}^{2}=0$ \textit{implies} $x_{i}=0$ for all $i$.

The main idea in what follows is to view the elements of a $k$-reduced algebra $A$ as functions on $X(A)$. Take $x$ in $X(A)$, $a$ in $A$. Viewed as a function $\iota(a)$ on $X(A)$, $a$ has the value $\iota(a)(x)=x(a)$ at $x$. Except when discussing certain delicate points, we shall therefore identify $A$ with its image under $\iota$ and so we shall write $\iota(a)$ as $a$ and $\iota(a)(x)$ as $a(x)$. The fact that $x$ is a homomorphism means that the algebra operations in $A$ are now
 represented as pointwise operations: $(a b)(x)=a(x) b(x)$.

It is worth keeping in mind the following:

\begin{example}\label{ex:1.4}
Let $A:=k\left[T_{1}, \ldots, T_{n}\right]$. Since $k$ is infinite, a polynomial $a\left(T_{1}, \ldots, T_{n}\right)$ can be identified with the polynomial function $k^{n} \rightarrow k:\left(x_{1}, \ldots, x_{n}\right) \mapsto a\left(x_{1}, \ldots, x_{n}\right)$. Thus, by Lemma~\ref{lem:1.2}, A is $k$-reduced. A homomorphism $\mu: A \rightarrow k$ is completely determined by giving values of
\label{op:20}$\mu\left(T_{1}\right), \ldots, \mu\left(T_{n}\right)$ in $k$. Conversely, for any choice of $\left(x_{1}, \ldots, x_{n}\right)$ in $k^{n}$ there is a homomorphism $\mu: A \rightarrow k$ defined by $\mu\left(T_{i}\right)=x_{i}$. Thus one identifies $X(A)$ with $k^{n}$ and $\iota: A \rightarrow k^{X(A)}$ with the assignment: polynomial $\mapsto$ polynomial function.
\end{example}

\begin{lemma}[Canonical factorizations]\label{lem:1.5}
Let $\tau: A \rightarrow B$, where $B$ is $k$-reduced.
\begin{enumerate}
\item[(i)] Then there exists a factorization
\[
\begin{tikzpicture}[x=1cm,y=1cm]
\node (A) at (0,0) {$A$};
\node (B) at (4,0) {$B$};
\node (C) at (2,-1.7) {$C$};
\draw[arr] (A) -- node[above,lab] {$\tau$} (B);
\draw[arr] (A) -- node[left,lab] {$\mu$} (C);
\draw[arr] (C) -- node[right,lab] {$\nu$} (B);
\end{tikzpicture}
\]
where $\mu$ is onto, $\nu$ is one-to-one, and $C$ is $k$-reduced.
\item[(ii)] Further, let two factorizations $A \xrightarrow{\mu_{1}} C_{1} \xrightarrow{\nu_{1}} B$ and $A \xrightarrow{\mu_{2}} C_{2} \xrightarrow{\nu_{2}} B$ of $\tau$ be given, with $\mu_{1}$ onto and $\nu_{2}$ one-to-one. Then there exists a unique $\eta: C_{1} \rightarrow C_{2}$ such that the following diagram commutes:
\[
\begin{tikzpicture}[x=1cm,y=1cm]
\node (A) at (0,0) {$A$};
\node (C1) at (2.5,1.4) {$C_1$};
\node (C2) at (2.5,-1.4) {$C_2$};
\node (B) at (5,0) {$B$};
\draw[arr] (A) -- node[above left,lab] {$\mu_1$} (C1);
\draw[arr] (A) -- node[below left,lab] {$\mu_2$} (C2);
\draw[arr] (C1) -- node[right,lab] {$\eta$} (C2);
\draw[arr] (C1) -- node[above right,lab] {$\nu_1$} (B);
\draw[arr] (C2) -- node[below right,lab] {$\nu_2$} (B);
\end{tikzpicture}
\]
\end{enumerate}
\end{lemma}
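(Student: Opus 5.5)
For part (i), take $C:=\tau(A)$, the image of $\tau$, which is a subalgebra of $B$. Let $\mu: A \rightarrow C$ be $\tau$ with its codomain restricted, so $\mu$ is onto, and let $\nu: C \rightarrow B$ be the inclusion, so $\nu$ is one-to-one and $\nu \circ \mu=\tau$. It remains to show that $C$ is $k$-reduced. By Lemma~\ref{lem:1.2}, $B$ is isomorphic to a subalgebra of some $k^{X}$. Then $C$, being a subalgebra of $B$, is isomorphic to a subalgebra of $k^{X}$, that is, to an algebra of functions, and Lemma~\ref{lem:1.2} again shows that $C$ is $k$-reduced. Alternatively, one can argue directly: for each homomorphism $\lambda: B \rightarrow k$, the restriction $\lambda|_{C}$ is a homomorphism $C \rightarrow k$. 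Hence $\operatorname{rad}_{k} C \subseteq C \cap \operatorname{rad}_{k} B=\{0\}$.

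For part (ii), the plan is to define $\eta$ on $C_{1}$ through the surjection $\mu_{1}$. Given $c \in C_{1}$, choose $a \in A$ with $\mu_{1}(a)=c$ and set $\eta(c):=\mu_{2}(a)$. To see that this is well defined, suppose $\mu_{1}(a)=\mu_{1}(a')$. Then
\[
\nu_{2}\mu_{2}(a)=\tau(a)=\nu_{1}\mu_{1}(a)=\nu_{1}\mu_{1}(a')=\tau(a')=\nu_{2}\mu_{2}(a'),
\]
and since $\nu_{2}$ is one-to-one, $\mu_{2}(a)=\mu_{2}(a')$. The map $\eta$ is a $k$-algebra homomorphism because $\mu_{1}$ and $\mu_{2}$ are, and every element of $C_{1}$ lifts to $A$. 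By construction $\eta\mu_{1}=\mu_{2}$. To check $\nu_{2}\eta=\nu_{1}$, precompose with the surjection $\mu_{1}$: we have $\nu_{2}\eta\mu_{1}=\nu_{2}\mu_{2}=\tau=\nu_{1}\mu_{1}$, and surjectivity of $\mu_{1}$ lets us cancel it.

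Uniqueness follows from either triangle. If $\eta'$ also satisfies $\eta'\mu_{1}=\mu_{2}$, then $\eta'\mu_{1}=\eta\mu_{1}$, so $\eta'=\eta$ because $\mu_{1}$ is onto. Equally, $\nu_{2}\eta'=\nu_{1}=\nu_{2}\eta$ together with injectivity of $\nu_{2}$ gives the same conclusion.

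There is no real obstacle in this lemma. The only point needing care is the $k$-reducedness of $C$ in (i), which is the reason the hypothesis that $B$ is $k$-reduced is required; the rest is the standard epi–mono factorization argument.
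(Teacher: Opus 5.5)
Your proof is correct and takes the same route as the paper. The paper likewise treats the existence of the factorization and of $\eta$ as elementary, and proves that $C$ is $k$-reduced exactly as you do, by noting that $C \simeq \nu(C) \subseteq B$ is a subalgebra of an algebra of functions; you have simply written out the standard diagram chase for (ii) that the paper leaves to the reader.
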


\begin{proof}
The existence of factorizations ($\mu, C, \nu$) and of the map $\eta$ are elementary algebraic facts. We must only prove that any such $C$ is $k$-reduced. But $C \simeq \nu(C) \subseteq B$, and $B$ is an algebra of functions. So $C$ is also an algebra of functions.
\end{proof}

The above lemma plays an important role in characterizing canonical realizations, analogous to the role of its linear variant, Zeiger's lemma (\citealp[Chapter 10, Lemma 6.2]{kalman1969topics}), in linear system theory.

\begin{remark}\label{rem:1.6}
\textit{Products of} $k$-\textit{reduced algebras are} $k$-\textit{reduced}. Indeed, assume that $A_{j}$ is a subalgebra of $k^{X_{j}}$ for each $j$ in $J$. Then $\prod_{J} A_{j}$ is an algebra of functions on the disjoint union of the $X_{j}$.
\end{remark}

\label{op:21}%
It follows from the definition of the tensor product ⊗ that for any two homomorphisms $\mu: A \rightarrow C$ and $v: B \rightarrow C$ there exists a unique homomorphism
\begin{equation}\label{eq:1.7}
\mu \otimes v: A \otimes B \rightarrow C: \sum a_{i} \otimes b_{i} \mapsto \sum \mu\left(a_{i}\right) v\left(b_{i}\right) .
\end{equation}

\begin{lemma}\label{lem:1.8}
Let A, B, C be $k$-algebras. Then:
\begin{enumerate}
\item[(a)] The assignment $(\mu, \nu) \rightarrow \mu \otimes v$ establishes a bijection between Hom ($A, C$) $\times \operatorname{Hom}(B, C)$ and $\operatorname{Hom}(A \otimes B, C)$.
\item[(b)] $X(A \otimes B)$ is naturally identified with $X(A) \times X(B)$.
\item[(c)] If $A, B$ are $k$-reduced, so is $A \otimes B$.
\end{enumerate}
\end{lemma}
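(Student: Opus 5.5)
Part (a) is the universal property of the tensor product, specialized to commutative $k$-algebras. Given $\mu$ and $\nu$, the map $\mu\otimes\nu$ of \eqref{eq:1.7} is a homomorphism. To go back, I would use the canonical homomorphisms $i_A: A\to A\otimes B$, $a\mapsto a\otimes 1$, and $i_B: B\to A\otimes B$, $b\mapsto 1\otimes b$, and send $\lambda\in\operatorname{Hom}(A\otimes B,C)$ to the pair $(\lambda\circ i_A,\lambda\circ i_B)$. Checking that the two assignments are mutually inverse goes as follows.
\begin{itemize}
\item[(1)] $(\mu\otimes\nu)(a\otimes 1)=\mu(a)$, and similarly for $B$.
\item[(2)] Every element of $A\otimes B$ is a finite sum $\sum a_i\otimes b_i$. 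Since $a\otimes b=(a\otimes1)(1\otimes b)$ and $\lambda$ is multiplicative, $\lambda$ is determined by its restrictions to $i_A(A)$ and $i_B(B)$.
\end{itemize}
Commutativity of $C$ is what makes $\mu\otimes\nu$ multiplicative: we need $\mu(a)\nu(b)\mu(a')\nu(b')=\mu(aa')\nu(bb')$, which holds because $C$ is commutative.

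Part (b) is the special case $C=k$ of (a). Then $X(A\otimes B)=\operatorname{Hom}(A\otimes B,k)$ is in bijection with $\operatorname{Hom}(A,k)\times\operatorname{Hom}(B,k)=X(A)\times X(B)$. Under this identification the element $a\otimes b$, viewed as a function, becomes $(x,y)\mapsto x(a)\,y(b)$. "Naturally" here means compatibility with maps $A\to A'$ and $B\to B'$, which is immediate from the formulas.

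Part (c) is the substantive part, and its key step, the one I expect to be the main obstacle, is showing that a nonzero element of $A\otimes B$ is not killed by all points. By Lemma~\ref{lem:1.2}, it suffices to show that $\operatorname{rad}_k(A\otimes B)=0$. In other words, if $z=\sum_{i=1}^n a_i\otimes b_i$ satisfies $(x\otimes y)(z)=0$ for all $x\in X(A)$ and $y\in X(B)$, then $z=0$. I would write $z$ with the $b_i$ linearly independent over $k$; this is possible since, for vector spaces, $A\otimes B$ has this normal form. Then fix $x\in X(A)$. The element $\sum_i x(a_i)b_i\in B$ is killed by every $y\in X(B)$, so it vanishes, because $B$ is $k$-reduced. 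Linear independence of the $b_i$ then gives $x(a_i)=0$ for every $i$. Since this holds for every $x$ and $A$ is $k$-reduced, each $a_i=0$, hence $z=0$. The delicate point is exactly this use of a basis-type representation: the argument relies on $A\otimes_k B$ being the vector-space tensor product, so that independence of the $b_i$ forces the coefficients to vanish. Alternatively, one can embed $A\subseteq k^{X(A)}$ and $B\subseteq k^{X(B)}$ and argue via the injective map $A\otimes B\to k^{X(A)\times X(B)}$, but proving injectivity of that map is the same computation.
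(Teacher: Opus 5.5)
Your proposal is correct and follows the paper's proof essentially step for step: (a) via the inverse $\lambda\mapsto(\lambda\circ i_A,\lambda\circ i_B)$, (b) as the special case $C=k$, and (c) by writing the element as $\sum a_i\otimes b_i$ with the $b_i$ linearly independent, using $k$-reducedness of $B$ to kill $\sum x(a_i)b_i$ and then $k$-reducedness of $A$ to kill each $a_i$. The extra checks you include (multiplicativity of $\mu\otimes\nu$ via commutativity of $C$, and naturality in (b)) are details the paper leaves implicit.
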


\begin{proof}
(a) Write $j_{1}: A \rightarrow A \otimes B: a \mapsto a \otimes 1$ and $j_{2}: B \rightarrow A \otimes B: b \mapsto 1 \otimes b$. We then define the inverse of the assignment $(\mu, \nu) \mapsto \mu \otimes \nu$ as $\operatorname{Hom}(A \otimes B, C) \rightarrow \operatorname{Hom}(A, C) \times \operatorname{Hom}(B, C): \gamma \mapsto\left(\gamma \circ j_{1}, \gamma \circ j_{2}\right)$.

(b) Apply (a) to $C:=k$.

(c) Since $X(A \otimes B)$ has been identified to $X(A) \times X(B)$, it is enough to prove the following: if $(\mu, \nu)(c)=0$ for some $c$ in $A \otimes B$ and for all ($\mu, v$) in $X(A) \times X(B)$, then $c=0$. Assume that $c \neq 0$, and express $c$ as a finite \textit{sum} $\sum a_{i} \otimes b_{i}$ with the $b_{i}$ linearly independent over $k$. For any $\mu$ in $X(A)$, consider the element $b:=\sum \mu\left(a_{i}\right) b_{i}$ of $B$. For any $y$ in $X(B)$, $\nu(b)=\sum \mu\left(a_{i}\right) \nu\left(b_{i}\right)=(\mu, \nu)(c)=0$. Since $B$ is $k$-reduced, the function $b=0$. The $b_{i}$ being linearly independent, it follows that $\mu\left(a_{i}\right)=0$ for all $i$. Since $A$ is $k$-reduced and $\mu$ was arbitrary, $a_{i}=0$ for all $i$. So $c=0$.
\end{proof}

\begin{remark}\label{rem:1.9}
Recall that, in any category, the \textit{coproduct} of a family $\left\{T_{\lambda}, \lambda \in \Lambda\right\}$ of objects is defined as an object $T=\frac{\pi}{\Lambda} T$ and morphisms $v_{\lambda}: T_{\lambda} \rightarrow T$ which satisfy: for any object $\Theta$ and morphisms $\theta_{\lambda}: T_{\lambda} \rightarrow \Theta$ there exists a unique morphism $v: T \rightarrow \Theta$ such that $v \circ v_{\lambda}=\theta_{\lambda}$ for all $\lambda$. An equivalent way of expressing the properties Lemma~\ref{lem:1.8}(a) and Lemma~\ref{lem:1.8}(c) is to say that $A \otimes B$ (together with the inclusions
\label{op:22}$a \mapsto a \otimes 1, b \mapsto 1 \otimes b)$ is the coproduct of $A$ and $B$ in the category $\mathbf{Alg}_{k}^{\text {red }}$ of $k$-reduced algebras, having ($k$-algebra) homomorphisms as its morphisms.
\end{remark}

By induction, the coproduct of any \textit{finite} family $A_{1}, \ldots, A_{n}$ in $\mathbf{Alg}_{k}^{\text {red }}$ is $A_{1} \otimes \ldots \otimes A_{n}$. \textit{In the case of the category} $\mathbf{Alg}_{k}$ \textit{of all} $k$-\textit{algebras} (as in all \textit{algebraic} categories), \textit{an arbitrary} (not finite) \textit{coproduct exists}. In fact, this coproduct can be obtained by the following direct-limit construction. Consider first the disjoint union of all $\left\{\otimes_{\Lambda_{0}} A_{\lambda}, \Lambda \supseteq \Lambda_{0}=\right.$ finite $\}$. Then, for each pair $\Lambda_{0} \subseteq \Lambda_{0}^{\prime}$, identify $\otimes_{\Lambda_{0}} A_{\lambda}$ with the image of the inclusion morphism $\otimes_{\Lambda_{0}} A_{\lambda} \rightarrow \otimes_{\Lambda_{0}^{\prime}} A_{\lambda}$ obtained by adding coordinates equal to 1 in the positions $\lambda$ in $\Lambda_{0}^{\prime}-\Lambda_{0}$. The algebra $A$ obtained from this construction is the coproduct of the $A_{\lambda}$ in $\mathbf{Alg}_{k}$. \textit{If all} $A_{\lambda}$ \textit{are} $k$-\textit{reduced, so is} $A$. Indeed, by the construction just sketched, an element $x$ of $A$ is in some (finite) tensor product $\otimes_{\Lambda_{0}} A_{\lambda}$, and every homomorphism $\otimes_{\Lambda_{0}} A_{\lambda} \rightarrow k$ extends to a homomorphism $A \rightarrow k$ (just define $A_{\lambda} \rightarrow k$ arbitrarily if $\lambda$ is not in $\Lambda_{0}$). Assume now that $x$ is in the kernel of all homomorphisms $A \rightarrow k$. Then $x$ is in the kernel of all homomorphisms $\otimes_{\Lambda_{0}} A_{\lambda} \rightarrow k$, so by Remark~\ref{rem:2.8}(c), $x=0$. Therefore $A$ is $k$-reduced, and \textit{a fortiori} $A$ is the coproduct of the $A_{\lambda}$ in $\mathbf{Alg}_{k}^{\text {red }}$. Thus \textit{arbitrary coproducts exist in the category of} $k$-\textit{reduced algebras}. Moreover, the construction shows that $A=\coprod A_{\lambda}$ includes all the $A_{\lambda}$, and that $A$ is \textit{generated by the} $k$-\textit{algebras} $A_{\lambda}$. Finally, observe that the categorical definition of coproduct given above, applied to $\Theta:=k$, shows that the set $X(A)$ of morphisms $A \rightarrow k$ is identified, through composition with the inclusion homomorphisms $A_{\lambda} \rightarrow A$, with the set $\Pi_{\Lambda} X\left(A_{\lambda}\right)$ of families of homomorphisms $A_{\lambda} \rightarrow k$.

\begin{definition}\label{def:1.10}
\textit{Let} $A$ \textit{be an algebra of functions on a set} $X$. \textit{Take} $x, y$ \textit{in} $X$. \textit{Then} $A$ \textit{separates} $x$ \textit{and} $y$ \textit{iff there exists an} $a$ \textit{in A such that} $a(x) \neq a(y)$.
\end{definition}

\begin{definition}\label{def:1.11}
\textit{Let} $A$ \textit{be a subalgebra of the} $k$-\textit{reduced algebra} $B$. \textit{Then} $A$ is \textit{maximally separating with respect to} $B$ \textit{iff there is no} $C$ \textit{such that} $A \varsubsetneqq C \subseteq B$, \textit{and} $C$ \textit{separates the same points of} $X(B)$ \textit{as} $A$.
\end{definition}

\label{op:23}%
We shall later make use of the following

\begin{example}\label{ex:1.12}
Let $A:=k\left[T_{1} T_{2}^{n}, n \geq 0\right]$ be the subalgebra of $k\left[T_{1}, T_{2}\right]$ generated by all the monomials $T_{1} T_{2}^{n}, n \geq 0$. Then $A$ \textit{is maximally separating with respect to} $B:=k\left[T_{1}, T_{2}\right]$. Note that $A$ identifies all points of the form $\left(0, x_{2}\right)$ and separates any other pair of points in $X(B)=k^{2}$. Assume that there is a $C$ as in Definition~\ref{def:1.11}. Take $P\left(T_{1}, T_{2}\right)=\sum a_{i j} T_{1}^{i} T_{2}^{j}$ in $C$. Since $C$ separates no more points than $A, P\left(0, x_{2}\right)=P(0,0)$ for all $x_{2}$. Therefore $\sum a_{0 j} T_{2}^{j}$ is a constant polynomial. Since $k$ is infinite, $P$ has no terms in $T_{2}^{j}$, $j>0$. So $P=a_{00}+\sum_{i>0} a_{i j} T_{1}^{i-1}\left(T_{1} T_{2}^{j}\right)$ is in $A$.
\end{example}

Recall (\citealp[Chapter V]{bourbaki1972commutative}) that the algebra $A$ is \textit{integral} over the subalgebra $B$ iff every $a$ in $A$ satisfies an equation
\[
a^{n}+b_{1} a^{n-1}+\ldots+b_{n}=0,
\]
for some $b_{j}, j=1, \ldots, n$ in $B$ and for some $n \geq 0$.

A $k$-algebra $A$ is \textit{finitely generated} iff there exists a finite subset $\left\{a_{1}, \ldots, a_{s}\right\}$ of $A$ such that each element of $A$ can be expressed as a finite combination of $a_{1}, \ldots, a_{s}$ using sums, products, and multiplication by elements of $k$.

\begin{lemma}\label{lem:1.13}
Let the $k$-algebra $A$ be an integral domain, and assume that $A$ is integral over a subalgebra $B$.
\begin{enumerate}
\item[(a)] If $B$ is a finitely generated $k$-algebra and $Q(A)$ is a finitely generated field extension of $Q(B)$, then $A$ is a finite B-module and so also finitely generated as a $k$-algebra.
\item[(b)] If $A$ is a finitely generated $k$-algebra then $B$ is also finitely generated.
\end{enumerate}
\end{lemma}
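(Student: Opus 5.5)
Both parts are standard results from commutative algebra, and I would prove them with two classical tools: finiteness of integral closure for (a), and the Artin--Tate lemma for (b).

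\emph{Part (a).} First I would reduce to the situation where $A$ is algebraic over $B$ with a finitely generated fraction field. Since $A$ is integral over $B$, every element of $Q(A)$ is algebraic over $Q(B)$. Because $Q(A)$ is finitely generated over $Q(B)$ by hypothesis, $L:=Q(A)$ is a finite field extension of $K:=Q(B)$. Next, $A$ lies inside the integral closure $B'$ of $B$ in $L$. Here $B$ is a finitely generated $k$-algebra that is a domain, being a subring of the domain $A$. The classical finiteness theorem for integral closures (Noether; see \citet[Chapter V]{bourbaki1972commutative}) says that $B'$ is a finite $B$-module whenever $B$ is a finitely generated domain over a field and $L/K$ is finite. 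Since $B$ is noetherian by Hilbert's basis theorem, the submodule $A\subseteq B'$ is also a finite $B$-module. Adjoining finitely many module generators to finitely many algebra generators of $B$ then shows that $A$ is finitely generated as a $k$-algebra.

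\emph{Part (b).} Here I would use the Artin--Tate lemma, and the domain hypothesis is not even needed. Let $a_1,\ldots,a_s$ generate $A$ over $k$. Each $a_i$ satisfies a monic equation with coefficients in $B$. Let $B_0\subseteq B$ be the $k$-subalgebra generated by all these finitely many coefficients. Then $A=B_0[a_1,\ldots,a_s]$ is integral over $B_0$ and generated by finitely many integral elements, so it is a finite $B_0$-module. The ring $B_0$ is noetherian by Hilbert's basis theorem, so its submodule $B$ is a finite $B_0$-module. Hence $B$ is generated as a $k$-algebra by the generators of $B_0$ together with finitely many module generators.

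The main obstacle is the finiteness of integral closure in part (a). When $L/K$ is separable, the trace-form argument gives it directly: by Noether normalization, reduce to $B=k[T_1,\ldots,T_d]$, which is integrally closed. Then choose a basis of $L/K$ lying in $B'$ and its trace-dual basis, which shows that $B'$ is contained in a finite free $B$-module. In characteristic $p$ with $L/K$ inseparable, one first passes to a purely inseparable extension $k'(T_1^{1/q},\ldots,T_d^{1/q})$ and then applies the separable case. I would cite \citet{bourbaki1972commutative} for this step rather than reprove it, consistent with the section's policy of quoting standard facts.
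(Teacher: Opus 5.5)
Your proof is correct, but for (b) it takes a different route from the paper.

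\emph{Part (a).} You follow the paper. It simply cites the finiteness of integral closure (Bourbaki V.3.2, Theorem 2). You supply the details it leaves implicit: $Q(A)$ is finite over $Q(B)$, $A$ lies in the integral closure $B'$, and $A$ is a finite $B$-module because $B$ is noetherian.

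\emph{Part (b).} The paper forms the same subalgebra $C\subseteq B$ generated by the coefficients of the integral equations. It observes that $B$ is integral over $C$ and then applies part (a) to the pair $C\subseteq B$. So it relies on the finiteness of integral closure a second time. It also implicitly uses that $B$ is a domain and that $Q(B)$ is finitely generated over $Q(C)$. The latter is true, since $Q(B)\subseteq Q(A)$ and $Q(A)$ is finite over $Q(C)$, but the paper does not say so.

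Your Artin--Tate argument bypasses (a) entirely. $A=C[a_1,\ldots,a_s]$ is a finite $C$-module because it is generated by finitely many integral elements, and then $B$ is finite over the noetherian ring $C$. This uses only Hilbert's basis theorem, needs neither the domain hypothesis nor any field-theoretic check, and makes (b) independent of the deep theorem behind (a). The paper's version is shorter on the page only because (a) is already in hand.

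One small imprecision in your optional sketch of the finiteness theorem: after Noether normalization, the trace-form argument needs $L$ to be separable over $k(T_1,\ldots,T_d)$, not merely over $K=Q(B)$. The case split should therefore be made on $L/k(T_1,\ldots,T_d)$. Since you cite Bourbaki for this step anyway, this does not affect the proof.
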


\begin{proof}
(a) This is an easy consequence of \citet[V.3.2, Theorem 2]{bourbaki1972commutative}.

\label{op:24}%
(b) Let $a_{1}, \ldots, a_{n}$ generate $A$. Each $a_{i}$ satisfies an integral equation with coefficients $b_{i j}$ in $B$; call $C \subseteq B$ the $k$-algebra generated by all the $b_{i j}$. Then $A$ is integral over $C$, and so $B$ is integral over $C$. By ($a$), $B$ is finitely generated.
\end{proof}

\subsection{The Zariski Topology}\label{sec:2}

For the rest of this section, unless the contrary is explicitly stated, $A, B, C, \ldots$ will always denote $k$-\textit{reduced} algebras.

Recall that $X(A)$ denotes the set of all homomorphisms $\mu: A \rightarrow k$.

We now introduce an operator $V$ which assigns a subset $V(S)$ of $X(A)$ to each subset $S$ of $A$. It is defined as
\begin{equation}\label{eq:2.1}
V(S):=\{x \text{ in } X(A) \mid a(x)=0 \text{ for all } a \text{ in } S\} .
\end{equation}

Thus $V(S)$ is the set of solutions of the simultaneous equations $a(x)=0$, $a$ in $S$. Since, by convention $a(x)$ means $x(a)$, where $x: A \rightarrow k$ is a homomorphism, we can give the equivalent definition
\begin{equation}\label{eq:2.2}
V(S):=\{x: A \rightarrow k \mid \text{ ker } x \supseteq S\} .
\end{equation}

\begin{proposition}\label{prop:2.3}
The operator $V$ satisfies the following properties:
\begin{enumerate}
\item[(a)] $V(S)=X(A)$ if and only if $S=\{0\}$.
\item[(b)] $V(A)=\varnothing$.
\item[(c)] $S \subseteq T$ implies $V(T) \subseteq V(S)$.
\item[(d)] Let $\langle S\rangle$ denote the ideal of $A$ generated by $S$. Then $V(S)=V(\langle S\rangle)$.
\item[(e)] $V\left(\bigcup\left\{S_{\lambda}, \lambda\right.\right.$ in $\left.\left.\Lambda\right\}\right)=\bigcap\left\{V\left(S_{\lambda}\right), \lambda\right.$ in $\left.\Lambda\right\}$.
\item[(f)] Let $I . J$ denote the product of the ideals $I, J$ of $A$, i.e. the ideal generated by $\{a b$, $a$ in $I$, $b$ in $J\}$. Then
\[
V(I) \cup V(J)=V(I \cap J)=V(I . J) .
\]
\end{enumerate}
\end{proposition}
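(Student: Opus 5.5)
All six properties follow directly from the definition \eqref{eq:2.1}, equivalently \eqref{eq:2.2}. The only ingredient beyond unwinding definitions is that $A$ is $k$-reduced, i.e.\ an algebra of functions on $X(A)$ (Lemma~\ref{lem:1.2}). I would treat the items in the order (c), (e), (d), (b), (a), (f), since the later ones reuse the earlier.

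For (c), if $x$ kills every element of $T$ it kills every element of $S\subseteq T$. Item (e) is the same observation applied to a union: $x$ vanishes on $\bigcup S_\lambda$ iff it vanishes on each $S_\lambda$. For (d), the inclusion $V(\langle S\rangle)\subseteq V(S)$ is (c). Conversely, $\ker x$ is an ideal, so $\ker x\supseteq S$ forces $\ker x\supseteq\langle S\rangle$. For (b), every $x\in X(A)$ sends $1\mapsto 1\neq 0$, so no $x$ lies in $V(A)$.

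Item (a) is where $k$-reducedness is used. The direction $V(\{0\})=X(A)$ is trivial. Conversely, suppose $V(S)=X(A)$. Then every $a\in S$ satisfies $a(x)=0$ for all $x\in X(A)$, so $a\in\operatorname{rad}_k A=\{0\}$. Hence $S\subseteq\{0\}$. I would note that $S=\varnothing$ also gives $X(A)$, so the statement should be read with $S$ ranging over subsets containing $0$, or with $S=\varnothing$ identified with $\{0\}$ via (d), since $\langle\varnothing\rangle=\{0\}$.

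For (f), we have $I\cdot J\subseteq I\cap J\subseteq I$ and $\subseteq J$, so (c) gives
\[
V(I)\cup V(J)\subseteq V(I\cap J)\subseteq V(I\cdot J).
\]
It remains to show $V(I\cdot J)\subseteq V(I)\cup V(J)$, and this is the one step with real content. Take $x\in V(I\cdot J)$ and suppose $x\notin V(I)$. Then there is $a\in I$ with $a(x)\neq 0$. For every $b\in J$ we have $ab\in I\cdot J$, so $0=(ab)(x)=a(x)b(x)$. Since $k$ is a field, $b(x)=0$, and therefore $x\in V(J)$. Equivalently, $\ker x$ is a prime (indeed maximal) ideal, so it contains $I$ or $J$. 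This primality of $\ker x$ is the main point of the proposition; everything else is bookkeeping.
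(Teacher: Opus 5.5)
Your proof is correct and follows the paper's route. Items (a)--(e) come from unwinding the definition, with $k$-reducedness ($\operatorname{rad}_k A=\{0\}$) supplying the nontrivial direction of (a). Item (f) comes from the inclusions given by (c) together with the fact that $a(x)b(x)\neq 0$ puts $x$ outside $V(I\cdot J)$; your argument is just the contrapositive phrasing of the paper's. Your caveat about $S=\varnothing$ is also right: strictly, $V(S)=X(A)$ holds iff $S\subseteq\{0\}$.
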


\label{op:25}%

\begin{proof}
(a), (b), (c), (d), and (e) are easy consequences of the definition of $V$.

We now prove (f). By (c), $I \cap J \subseteq I$ and $I \cap J \subseteq J$ imply $V(I) \cup V(J) \subseteq V(I \cap J)$. Similarly, I.J $\subseteq I \cap J$ implies $V(I \cap J) \subseteq V(I . J)$. Consequently, it will be sufficient to show that if $x$ is not in $V(I) \cup V(J)$ then $x$ is not in $V(I . J)$. But if $x$ belongs to neither $V(I)$ nor $V(J)$, there are $a$ in $I$, $b$ in $J$ such that $a(x) \neq 0, b(x) \neq 0$. So $(a . b)(x)=a(x) b(x) \neq 0$. Since $a b \in I . J, x$ is not in $V(I . J)$.

It follows from (a), (b), (e), and (f) above that $\{V(S), S \subseteq A\}$ \textit{is the family of closed sets for a topology on} $X(A)$, called the \textit{Zariski topology}. Therefore we shall henceforth refer to sets of the type $V(S), S \subseteq A$, as \textit{closed sets}.

Occasionally it is convenient to define closed subsets of $X(A)$ in an indirect manner. Let $A, B$ be any $k$-reduced algebras, and consider the tensor product $A \otimes B$. Let $S$ be any subset of $A \otimes B$. For any $s=\sum a_{i} \otimes b_{i}$ in $S$ and any $x$ in $X(A)$ let $s(x)$ be the element $\sum a_{i}(x) b_{i}$ of $B$.
\end{proof}

\begin{lemma}\label{lem:2.4}
$V^{B}(S):=\{x \in X(A) \mid s(x)=0$ for all $s$ in $S\}$ is a closed subset of $X(A)$.
\end{lemma}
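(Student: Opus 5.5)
The plan is to reduce the condition $s(x)=0$ in $B$ to a family of scalar equations on $X(A)$, by expanding each $s$ in a fixed basis of $B$. Then $V^{B}(S)$ will be of the form $V(T)$ for an explicit subset $T\subseteq A$, and so it is closed by definition.

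First, fix $s\in S$. Choose a representation
\[
s=\sum_{i=1}^{n} a_{i}\otimes b_{i}
\]
with $b_{1},\ldots,b_{n}$ linearly independent over $k$. This is always possible: starting from any representation, express the $b$'s in terms of a basis of their finite-dimensional span, and collect terms using bilinearity of $\otimes$. Before using this representation, one checks that $s(x)=\sum a_{i}(x)b_{i}$ does not depend on the chosen representation. The map $A\otimes B\to B$, $a\otimes b\mapsto x(a)b$, is $k$-bilinear in $(a,b)$, so it is well defined on the tensor product by the universal property. Equivalently, this map is $x\otimes\mathrm{id}_{B}$ followed by the identification $k\otimes B\cong B$.

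Next, by linear independence of the $b_{i}$, $s(x)=\sum a_{i}(x)b_{i}=0$ holds if and only if $a_{i}(x)=0$ for every $i$, since the $a_{i}(x)$ are scalars in $k$. So the points annihilating $s$ are exactly
\[
\{x\mid s(x)=0\}=V(\{a_{1},\ldots,a_{n}\}).
\]
Let $T_{s}$ denote the set of $a_{i}$ from this chosen representation.

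Finally, put $T=\bigcup_{s\in S}T_{s}$. By Proposition~\ref{prop:2.3}(e),
\[
V^{B}(S)=\bigcap_{s\in S}V(T_{s})=V(T),
\]
which is closed. Nothing here is genuinely hard. The only point requiring care is the well-definedness of $s(x)$ together with the existence of a representation with independent $b_{i}$, both of which are standard. Notably, $k$-reducedness of $B$ is not even needed.
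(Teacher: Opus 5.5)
Your proof is correct and is essentially the paper's argument. The paper fixes one global $k$-basis $\{b_\lambda\}$ of $B$ and reads off $V^{B}(S)=V(\{a_{\lambda,s}\})$ from linear independence, whereas you choose linearly independent $b_i$ separately for each $s$ and add the well-definedness check via $x\otimes\mathrm{id}_B$, which the paper leaves implicit.
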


\begin{proof}
Let $\left\{b_{\lambda}, \lambda\right.$ in $\left.\Lambda\right\}$ be a basis for $B$ as a $k$-vector space. Each $s$ can be expressed in the form $\sum a_{\lambda, s} \otimes b_{\lambda}$ (finite sum), so $s(x)=\sum a_{\lambda, s}(x) b_{\lambda}$. Since $\left\{b_{\lambda}\right\}$ is linearly independent,
\[
V^{B}(S)=V\left(\left\{a_{\lambda, s} \text { in } A, s \text { in } S, \lambda \text { in } \Lambda\right\}\right) . \qedhere
\]
\end{proof}

We now define an operator $I$ which associates a subset $I(Z)$ (in fact, an ideal) of $A$ to each subset $Z$ of $X(A)$. It is defined as
\begin{equation}\label{eq:2.5}
I(Z):=\{a \text{ in } A \mid a(x)=0 \text{ for all } x \text{ in } Z\} .
\end{equation}

Thus $I(Z)$ is the annihilator of $Z$; it is evidently an ideal of $A$. We can define $I(Z)$ equivalently via homomorphisms, which gives
\label{op:26}%
\begin{equation}\label{eq:2.6}
I(Z):=\bigcap\{\operatorname{ker} x, x \text{ in } Z\} .
\end{equation}

\begin{proposition}\label{prop:2.7}
The operator I has the following properties:
\begin{enumerate}
\item[(a)] $I(X(A))=\{0\}$.
\item[(b)] $Z_{1} \subseteq Z_{2} \Rightarrow I\left(Z_{2}\right) \subseteq I\left(Z_{1}\right)$.
\item[(c)] $Z \subseteq V(I(Z)), \quad S \subseteq I(V(S))$.
\item[(d)] $I\left(\bigcup\left\{Z_{\lambda}, \lambda\right.\right.$ in $\left.\left.\Lambda\right\}\right)=\bigcap\left\{I\left(Z_{\lambda}\right), \lambda\right.$ in $\left.\Lambda\right\}$.
\item[(e)] $I(V(S))$ is the intersection of all the $k$-ideals containing $S$.
\item[(f)] $V(I(Z))=\left\{x \text { in } X(A) \mid \operatorname{ker} x \supseteq \bigcap_{z \in Z} \operatorname{ker} z\right\}$.
\end{enumerate}
\end{proposition}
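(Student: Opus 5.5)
The plan is to verify each item directly from the definition \eqref{eq:2.5} (equivalently \eqref{eq:2.6}) of $I$, together with the definition \eqref{eq:2.1}--\eqref{eq:2.2} of $V$ and the standing hypothesis that $A$ is $k$-reduced. Most items are immediate.

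For (a), an element $a$ with $a(x)=0$ for all $x\in X(A)$ lies in every $k$-ideal, so $a\in\operatorname{rad}_k A=\{0\}$; this is exactly Lemma~\ref{lem:1.2} (equivalently, injectivity of $\iota$). Item (b) follows because a condition imposed on more points is stronger. For (c), each $z\in Z$ annihilates every element of $I(Z)$, so $z\in V(I(Z))$; symmetrically, each $s\in S$ vanishes on all of $V(S)$, so $s\in I(V(S))$. Item (d) is the statement that vanishing on a union means vanishing on each piece, or equivalently that an intersection of intersections of kernels is again an intersection.

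For (e), I will use the bijection between $k$-ideals and points $x\in X(A)$ noted after Definition~\ref{def:1.1}, via $x\mapsto\ker x$. By \eqref{eq:2.2}, $V(S)$ is precisely the set of $x$ with $\ker x\supseteq S$, that is, the set of $k$-ideals containing $S$. Then \eqref{eq:2.6} gives
\[
I(V(S))=\bigcap\{\ker x : \ker x\supseteq S\},
\]
which is the claimed intersection. The only care needed is the convention for $S$ contained in no $k$-ideal: there $V(S)=\varnothing$, so $I(\varnothing)=A$, and this agrees with the empty intersection being $A$.

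For (f), I apply (e) with $S=I(Z)$. This shows that $x\in V(I(Z))$ if and only if $\ker x\supseteq I(Z)$, and by \eqref{eq:2.6} we have $I(Z)=\bigcap_{z\in Z}\ker z$, which gives the formula. In fact (f) is already immediate from \eqref{eq:2.2} applied to $S=I(Z)$. No step is a real obstacle; the only substantive input is (a), which needs $k$-reducedness through Lemma~\ref{lem:1.2}.
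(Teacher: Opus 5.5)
Your proposal is correct and matches the paper's argument. Items (a)--(d) come directly from \eqref{eq:2.5}--\eqref{eq:2.6}, with (a) using the standing $k$-reducedness hypothesis as you note, and (e) comes from combining \eqref{eq:2.6} with the characterization \eqref{eq:2.2} of $V(S)$. For (f), the criterion $x\in V(I(Z))\iff \ker x\supseteq I(Z)$ is \eqref{eq:2.2} applied to $S=I(Z)$, not a consequence of (e); your closing remark already says this, and it is exactly the paper's ``similar to (e)'' argument.
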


\begin{proof}
(a), (b), (c), and (d) are easy consequences from \eqref{eq:2.5}, \eqref{eq:2.6}.

(e) Observe that, by \eqref{eq:2.6},
\[
I(V(S))=\bigcap\{\operatorname{ker} x, \quad x \text { in } V(S)\},
\]
and by \eqref{eq:2.2},
\[
x \text { is in } V(S) \text { iff } S \subseteq \operatorname{ker} x .
\]

(f) Similar to (e).
\end{proof}

\begin{remark}\label{rem:2.8}
Using Proposition~\ref{prop:2.3} and Proposition~\ref{prop:2.7} it is easy to verify the following facts:
\begin{enumerate}
\item[(i)] $I(V(I(Z)))=I(Z)$,
\item[(ii)] $V(I(V(Z)))=V(Z)$,
\item[(iii)] the Zariski closure of any $Z \subseteq X(A)$ is $V(I(Z))$, and
\item[(iv)] an ideal $I$ is of the form $I(Z)$ if and only if $I=I(V(I))$.
\end{enumerate}
Ideals as in (iv) are called \textit{closed}; this terminology comes from regarding $S \mapsto I(V(S))$ as an \textit{algebraic (not topological) closure operator}.
\label{op:27}%
One could have deduced (i)-(iv) from the fact that, by Proposition~\ref{prop:2.3}(c), Proposition~\ref{prop:2.7}(b) and Proposition~\ref{prop:2.7}(c), the pair $\{V, I\}$ constitutes a \textit{duality} or \textit{Galois connection} (\citealp[§ 51]{kurosh1963general}). Finally, it also follows that $\{V, I\}$ \textit{establish an inclusion-reversing bijection between closed ideals of} $A$ \textit{and closed subsets of} $X(A)$.
\end{remark}

\begin{lemma}\label{lem:2.9}
For any ideal $I$ of $A, X(A / I)$ can be naturally identified with $V(I)$; therefore the canonical map $\pi: A \rightarrow A / I$ can be naturally identified with the restriction map $a \mapsto a \mid V(I)$, where $a$ is viewed as a function on $X(A)$. Furthermore, $I$ is closed if and only if $A / I$ is $k$-reduced.
\end{lemma}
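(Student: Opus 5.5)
The plan is to exploit the correspondence between homomorphisms out of a quotient and homomorphisms killing the ideal. First, a $k$-algebra homomorphism $\bar{x}: A/I \rightarrow k$ composes with $\pi$ to give $x=\bar{x}\circ\pi: A \rightarrow k$ with $\ker x \supseteq I$. Conversely, by the universal property of the quotient, every $x: A\rightarrow k$ with $\ker x \supseteq I$ factors uniquely as $\bar{x}\circ \pi$. Since $\pi$ is onto, the map $\bar{x}\mapsto \bar{x}\circ\pi$ is injective. By \eqref{eq:2.2} its image is exactly $V(I)$, so $X(A/I)$ is identified with $V(I)$.

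Under this identification, for $a \in A$ and $x \in V(I)$ we have $\pi(a)(\bar{x})=\bar{x}(\pi(a))=x(a)=a(x)$. Thus $\pi(a)$, viewed as a function on $X(A/I)=V(I)$, is just $a|V(I)$, which gives the second assertion. One caveat applies: $A/I$ need not be $k$-reduced, so "viewing $\pi(a)$ as a function" means taking its image under $\iota: A/I \rightarrow k^{X(A/I)}$. The identification of $\pi$ with restriction therefore holds at the level of $\iota\circ\pi$.

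For the last statement, I would compute the kernel of $a \mapsto a|V(I)$. By the previous paragraph this kernel is $\{a : a(x)=0 \text{ for all } x\in V(I)\}=I(V(I))$. On the other hand, $\iota\circ\pi$ has kernel $\pi^{-1}(\operatorname{rad}_k(A/I))$, since $\ker\iota=\operatorname{rad}_k(A/I)$ (the proof of Lemma~\ref{lem:1.2}). Hence $\pi^{-1}(\operatorname{rad}_k(A/I)) = I(V(I))$, that is, $\operatorname{rad}_k(A/I)=I(V(I))/I$. Then $A/I$ is $k$-reduced iff $\operatorname{rad}_k(A/I)=0$, iff $I(V(I))=I$, iff $I$ is closed by Remark~\ref{rem:2.8}(iv).

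The only delicate point is this distinction between the abstract element $\pi(a)$ and the function it induces when $A/I$ is not yet known to be reduced; keeping $\iota$ explicit handles it. The rest is routine.
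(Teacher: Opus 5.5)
Your proof is correct and follows the paper's. The first part is the same factorization-through-the-quotient argument. For the second part, the paper obtains $\operatorname{rad}_k(A/I)=I(V(I))/I$ by matching $k$-ideals of $A/I$ with $k$-ideals of $A$ containing $I$ and invoking Proposition~\ref{prop:2.7}(e), which is the same computation as your identification of $\ker(\iota\circ\pi)$ (your explicit handling of $\iota$ when $A/I$ is not yet known to be $k$-reduced is a precision the paper leaves implicit).
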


\begin{proof}
From elementary algebraic considerations, the injection $X(A / I) \rightarrow X(A): y \mapsto y \circ \pi$ shows that there is a bijection between functions $y$ in $X(A / I)$ and those functions $x=y \circ \pi$ in $X(A)$ for which $x \mid I=0$, that is, $I \subseteq$ ker $x$. These are precisely the $x$ in $V(I)$.

To prove the second part, note that $k$-ideals of $A / I$ correspond via $\pi$ to those $k$-ideals of $A$ which contain $I$. Thus $\mathrm{rad}_{k} A / I=0$ precisely when $I$ is the intersection of all the $k$-ideals containing it. Applying Proposition~\ref{prop:2.7}(e) to Remark~\ref{rem:2.8}(iv) gives the proof.
\end{proof}

We need the following elementary topological

\begin{definition}\label{def:2.10}
\textit{A topological space} $Z$ \textit{is irreducible iff} $Z$ \textit{is not the union of two proper closed subsets, in other words, iff}
\[
Z=Z_{1} \cup Z_{2}, \quad Z_{i} \text { closed, implies } Z=Z_{1} \text { or } Z=Z_{2} .
\]
Clearly, $Z$ is irreducible iff any two nonempty open sets of $Z$ have a nonempty intersection, i.e. iff \textit{any open subset of} $Z$ \textit{is dense}. Therefore, \textit{irreducibility permits the use of local intuition and methods} (= arguments about neighborhoods) \textit{in the proof of global statements}.
\end{definition}

To apply the above concepts in our context, we must study more closely the Zariski topology in the spaces $X(A)$. This topology will
\label{op:28}in general not be Hausdorff ($=$ different points having disjoint neighborhoods), but it is true that \textit{each point of} $X(A)$ \textit{is a closed set}. Indeed, by $Proposition~\ref{prop:2.7}(f), \overline{\{x\}}=\{z \mid \operatorname{ker} z \supseteq \operatorname{ker} x\}$; since ker $x$ is a maximal ideal, it follows that $\overline{\{x\}}=\{x\}$ as wanted.

A set $Z \subseteq X(A)$ can be given the subspace topology; thus irreducibility of $Z$ is well defined. In particular, it is easy to see that a closed subset $V$ of $X(A)$ is irreducible iff
\begin{equation}\label{eq:2.11}
V \subseteq V_{1} \cup V_{2}, \ V_{i} \text{ both closed in } X(A), \text{ implies } V \subseteq V_{1} \text{ or } V \subseteq V_{2} .
\end{equation}

Recall that an ideal $P$ of $A, P \neq A$, is \textit{prime} iff, for any ideals $J_{1}, J_{2}, J_{1} \cdot J_{2} \subseteq P$ implies $J_{1} \subseteq P$ or $J_{2} \subseteq P$. Equivalently, $P$ is prime iff, for any $a, b$ in $A$, $a b$ in $P$ implies either $a$ is in $P$ or $b$ is in $P$; see \citet[II.1.1]{bourbaki1972commutative}. We then have:

\begin{lemma}\label{lem:2.12}
Let $V$ be a closed subset of $X(A)$. Then $V$ is irreducible if and only if $I(V)$ is prime.
\end{lemma}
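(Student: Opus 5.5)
We prove both implications directly, using the Galois connection between $V$ and $I$ from Remark~\ref{rem:2.8} and the product formula of Proposition~\ref{prop:2.3}(f). Throughout, $V$ is closed, so $V=V(I(V))$ by Remark~\ref{rem:2.8}(ii)--(iii).

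First, suppose $V$ is irreducible; we show $I(V)$ is prime. Properness should be noted first: if $V$ is nonempty, then $1\notin I(V)$. (The empty set must be excluded by the usual convention that irreducible spaces are nonempty, matching $P\neq A$ in the definition of prime.) Let $a,b\in A$ with $ab\in I(V)$. Then for every $x\in V$ we have $a(x)b(x)=(ab)(x)=0$, so $x\in V(\{a\})\cup V(\{b\})$. Hence
\[
V=(V\cap V(\{a\}))\cup (V\cap V(\{b\})),
\]
a union of two closed subsets of $V$. By irreducibility, in the form \eqref{eq:2.11}, $V\subseteq V(\{a\})$ or $V\subseteq V(\{b\})$. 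In the first case $a$ vanishes on $V$, so $a\in I(V)$; in the second, $b\in I(V)$. Thus $I(V)$ is prime.

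Conversely, suppose $I(V)$ is prime, and let $V\subseteq V_1\cup V_2$ with $V_i$ closed in $X(A)$. Write $J_i=I(V_i)$. Since each $V_i$ is closed, $V_i=V(J_i)$, and Proposition~\ref{prop:2.3}(f) gives $V_1\cup V_2=V(J_1\cdot J_2)$. Every element of $J_1\cdot J_2$ therefore vanishes on $V_1\cup V_2\supseteq V$, so $J_1\cdot J_2\subseteq I(V)$. By primality, $J_1\subseteq I(V)$ or $J_2\subseteq I(V)$, say $J_1\subseteq I(V)$. Applying $V$ reverses the inclusion (Proposition~\ref{prop:2.3}(c)), giving
\[
V=V(I(V))\subseteq V(J_1)=V_1 .
\]
So \eqref{eq:2.11} holds. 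Moreover $V\neq\varnothing$, because $I(\varnothing)=A$ is not prime. Hence $V$ is irreducible.

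The only point needing care is the bookkeeping of closures: we must use closedness of $V$ and of the $V_i$, so that $V=V(I(V))$ and $V_i=V(I(V_i))$, in order to pass from inclusions of ideals back to inclusions of sets. The rest is routine.
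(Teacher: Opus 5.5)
Your proof is correct and is essentially the paper's argument. Both directions rest on $V=V(I(V))$ for closed $V$, on Proposition~\ref{prop:2.3}(f), and on the inclusion-reversing behaviour of $V$ and $I$. There are two small differences:
\begin{itemize}
\item In the ``only if'' direction you use the elementwise form of primality, whereas the paper works with a product of ideals $J_1\cdot J_2\subseteq I(V)$.
\item You treat the empty set explicitly, which the paper passes over. Under the literal Definition~\ref{def:2.10} the empty set is irreducible, while $I$ of the empty set is all of $A$ and so is not prime. Your nonemptiness convention is therefore genuinely needed for the statement to hold.
\end{itemize}
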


\begin{proof}
["only if"] Assume $J_{1} \cdot J_{2} \subseteq I(V)$. Then, by Proposition~\ref{prop:2.3}(c), $V=V(I(V)) \subseteq V\left(J_{1}\right) \cup V\left(J_{2}\right)$. Since $V$ is irreducible, $V \subseteq V\left(J_{1}\right)$ or $V \subseteq V\left(J_{2}\right)$. Thus by \eqref{eq:2.6}(c,b) $J_{1} \subseteq I\left(V\left(J_{1}\right)\right) \subseteq I(V)$ or $J_{2} \subseteq I\left(V\left(J_{2}\right)\right) \subseteq I(V)$.

["if"] Assume $V=V_{1} \cup V_{2}, V_{i}$ closed. Then by Proposition~\ref{prop:2.7}(d) $I\left(V_{1}\right) \cdot I\left(V_{2}\right) \subseteq I\left(V_{1}\right) \cap I\left(V_{2}\right)=I\left(V_{1} \cup V_{2}\right)=I(V)$. This and $I(V)=$ prime imply $I\left(V_{1}\right) \subseteq I(V)$ or $I\left(V_{2}\right) \subseteq I(V)$. Thus $V=V I(V) \subseteq V I\left(V_{1}\right)=V_{1}$ or $V \subseteq V_{2}$.

In particular, $X(A)$ is irreducible iff $A$ is an integral domain.

It follows from Lemma~\ref{lem:2.12} that $V \mapsto I(V)$ and $I \mapsto V(I)$ establish an inclusion-reversing bijection between irreducible subsets of $X(A)$ and prime ideals of $A$.
\end{proof}

\label{op:29}%
\subsection{$k$-Spaces}\label{sec:3}

A $k$-space is the abstract version of a space of the type $X(A)$ :

\begin{definition}\label{def:3.1}
A $k$-\textit{space} $(X, A(X))$, \textit{or simply} $X$, \textit{is a set} $X$ and a $k$-\textit{algebra} $A(X)$ \textit{of functions} $X \rightarrow k$ \textit{such that the map} $X \rightarrow X(A(X)): x \mapsto e_{x} \quad(=$ \textit{evaluation at} $x)$ \textit{is bijective}. \textit{The elements of} $A(X)$ \textit{are the polynomial functions on} $X$.
\end{definition}

The terminology "polynomial functions" is motivated by Example \ref{ex:1.4}.

We always consider a $k$-space $X$ as a topological space, with topology induced from the Zariski topology on $X(A)$ by the bijection $X \simeq X(A(X))$. We shall see presently that all $k$-spaces are essentially of the type $(X(A), \iota(A))$, where $A$ is $k$-reduced and $\iota$ is the map introduced in Lemma~\ref{lem:1.2}.

\begin{definition}\label{def:3.2}
\textit{Let} $X, Y$ \textit{be} $k$-\textit{spaces. A map} $f: X \rightarrow Y$ \textit{is polynomial iff for each polynomial function} $b: Y \rightarrow k$ \textit{in} $A(Y)$ \textit{the composition} $b \circ f: X \rightarrow k$ \textit{is a polynomial function in} $A(X)$.
\end{definition}

We shall use $X, Y, Z$ to indicate $k$-spaces; $f: X \rightarrow Y$ will always mean a polynomial map.

It can be trivially verified that $k$-spaces as objects, together with polynomial maps as morphisms, constitute a category. In this category we have the following

\begin{lemma}\label{lem:3.3}
Every $k$-space is isomorphic to a $k$-space of the type $(X(A), \iota(A))$, where $A$ is $k$-reduced and $\iota$ is the map introduced in Lemma~\ref{lem:1.2}.
\end{lemma}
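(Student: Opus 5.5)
Given a $k$-space $(X, A(X))$, we put $A := A(X)$. Since $A(X)$ is by definition an algebra of functions on $X$, Lemma~\ref{lem:1.2} shows that $A$ is $k$-reduced, so the pair $(X(A), \iota(A))$ is defined. The candidate isomorphism is the evaluation map
\[
e: X \rightarrow X(A): x \mapsto e_{x},
\]
which is bijective by Definition~\ref{def:3.1}. It remains to check that $e$ and $e^{-1}$ are polynomial maps in the sense of Definition~\ref{def:3.2}. Before that, we note that $(X(A), \iota(A))$ is itself a $k$-space. The map $\iota$ is injective by Lemma~\ref{lem:1.2}, so $\iota(A)$ is an algebra of functions on $X(A)$. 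We also need $X(A) \rightarrow X(\iota(A))$, $\mu \mapsto e_{\mu}$, to be bijective. This holds because $\iota: A \rightarrow \iota(A)$ is an algebra isomorphism, and $e_{\mu} \circ \iota = \mu$, so that $\mu \mapsto e_{\mu}$ is just the bijection $X(A) \cong X(\iota(A))$ induced by this isomorphism.

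Next we check polynomiality of $e$. A polynomial function on $X(A)$ has the form $\iota(a)$ with $a \in A$. Then
\[
(\iota(a) \circ e)(x) = e_{x}(a) = a(x),
\]
so $\iota(a) \circ e = a$, which lies in $A(X)$. For the inverse, take $a \in A(X)$ and $\mu = e_{x} \in X(A)$. Then $(a \circ e^{-1})(\mu) = a(x) = \mu(a) = \iota(a)(\mu)$, so $a \circ e^{-1} = \iota(a)$, which lies in $\iota(A)$. Hence $e$ is an isomorphism in the category of $k$-spaces. As a byproduct, the Zariski topology on $X$ is, by definition, the one transported by $e$, so $e$ is automatically a homeomorphism as well.

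The argument is essentially bookkeeping, and the only place requiring care is the verification that $(X(A), \iota(A))$ satisfies Definition~\ref{def:3.1}. There one must distinguish between homomorphisms on $A$ and homomorphisms on $\iota(A)$, and invoke the injectivity of $\iota$ from Lemma~\ref{lem:1.2}.
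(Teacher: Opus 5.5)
Your proof is correct and follows the paper's own argument. The paper's sketch takes $A:=A(X)$ and asserts that the evaluation map $x\mapsto e_{x}$ is the required isomorphism; you carry out the ``easy check'' it leaves implicit, namely that $(X(A),\iota(A))$ is a $k$-space and that both $e$ and $e^{-1}$ are polynomial, via the identities $\iota(a)\circ e=a$ and $a\circ e^{-1}=\iota(a)$.
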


\begin{proof}[Sketch of proof]
Given a $k$-space $(X, A(X))$, let $A:=A(X)$. An easy check shows that $X \rightarrow X(A): x \mapsto e_{x}$ is the required isomorphism of $k$-spaces.
\end{proof}

Since all results will be stated up to isomorphism, we are justified by Lemma~\ref{lem:3.3} when carrying out our proofs about $k$-spaces only for those of type $(X(A), \iota(A))$. We continue to use our convention of identifying $\iota(A)$ with $A$ (so $A(X(A))=A$) when there is no danger of confusion.

\label{op:30}%
A major rôle is played by the $k$-spaces introduced by the following

\begin{definition}\label{def:3.4}
A $k$-\textit{space} $X$ \textit{is a variety iff} $A(X)$ \textit{is a finitely generated $k$-algebra}.
\end{definition}

Let $a_{1}, \ldots, a_{n}$ generate the $k$-algebra $A$. Generation is equivalent to ontoness of the $k$-algebra homomorphism $\mu: k\left[T_{1}, \ldots, T_{n}\right] \rightarrow A$ defined by the assignment $T_{i} \mapsto a_{i}$. By Example~\ref{ex:1.4} and Lemma~\ref{lem:2.9} we see that $(X(A), \iota(A))$ is isomorphic to $(V, B)$, where $V$ is the set of points $x$ in $k^{n}$ which satisfy all equations $f(x)=0, f$ in ker $\mu$, and where the functions in $B$ are the restrictions to $V$ of the polynomial functions $k^{n} \rightarrow k$. There are many possible representations ($V, B$) for each $A$, depending upon the choice of generators $a_{1}, \ldots, a_{n}$. Thus, (affine) \textit{varieties are the coordinate-free versions of those subsets of} $k^{n}$ \textit{defined by polynomial equations}. When studying varieties, it is usually simpler to deal with representations of the type ($V, B$). Note that if $\left(V_{1}, B_{1}\right)$ and $\left(V_{2}, B_{2}\right)$ are (concrete) varieties, where $V_{1} \subseteq k^{n}$ and $V_{2} \subseteq k^{m}$, a map $f: V_{1} \rightarrow V_{2}$ is polynomial precisely when $f$ is defined by an m-vector of n-variable polynomials.

\begin{digression}\label{dig:3.5}
Let $k=\mathbb{R}$, the reals. Then varieties have a natural topology induced from their embedding in Euclidean space with the usual topology. This topology is finer than the Zariski topology (for instance the only proper Zariski-closed subsets of $\mathbb{R}^{1}$ are finite sets). More generally, given any normed field $k$, we may define a \textit{strong topology} on $k$-spaces by choosing as a basis of open sets all finite intersections of sets of the type $f^{-1}(N)$, for all polynomial functions $f$ and all open sets $N$ in $k$, i.e. the coarsest topology in $X$ for which all $f$ in $A(X)$ become continuous for the \textit{normed} topology of $k$. (See for instance \citealp[Chapter 7]{shafarevich1975basic} for the case of varieties over $\mathbb{C}$.) For the purpose of this work, realization theory, we are mainly interested in (global) questions of structure; thus we shall use only the Zariski topology, even in the cases $k=\mathbb{R}$ or $\mathbb{C}$, except for the proof of some technical facts on varieties over $\mathbb{R}$ in Section~\ref{sec:4}.
\end{digression}

\label{op:31}%

\begin{definition}\label{def:3.6}
\textit{Let} $g: X_{1} \rightarrow X_{2}$. \textit{The transpose of} $g$ \textit{is the homomorphism} $A(g): A\left(X_{2}\right) \rightarrow A\left(X_{1}\right)$ \textit{defined by}
\[
A(g)(f):=f \circ g .
\]
\end{definition}

\begin{lemma}\label{lem:3.7}
Fix two $k$-spaces $X_{1}, X_{2}$. Then the assignment $A: g \mapsto A(g)$ establishes a bijection between polynomial maps $X_{1} \rightarrow X_{2}$ and $k$-algebra homomorphisms $A\left(X_{2}\right) \rightarrow A\left(X_{1}\right)$.
\end{lemma}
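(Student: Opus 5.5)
By Lemma~\ref{lem:3.3} we may assume $X_{i}=(X(A_{i}),A_{i})$ with $A_{i}=A(X_{i})$ $k$-reduced, so points of $X_{i}$ are homomorphisms $A_{i}\rightarrow k$ and a function $a\in A_{i}$ takes the value $x(a)$ at $x$. The plan is to write down an explicit inverse to $g\mapsto A(g)$, using the evaluation duality of Lemma~\ref{lem:1.2}, and then to verify the two compositions.

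First, $A(g)$ is indeed a homomorphism $A_{2}\rightarrow A_{1}$. It lands in $A_{1}$ by Definition~\ref{def:3.2}, since $f\circ g\in A(X_{1})$ whenever $f\in A(X_{2})$. It is a $k$-algebra homomorphism because the operations in algebras of functions are pointwise and constants compose to constants.

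For the inverse, given a homomorphism $\tau: A_{2}\rightarrow A_{1}$, define
\[
X(\tau): X_{1}\rightarrow X_{2}, \qquad x\mapsto x\circ\tau .
\]
This is well defined because $x\circ\tau: A_{2}\rightarrow k$ is a $k$-algebra homomorphism, hence a point of $X(A_{2})=X_{2}$. The map $X(\tau)$ is polynomial: for $b\in A_{2}$ we have $b(X(\tau)(x))=(x\circ\tau)(b)=\tau(b)(x)$, so $b\circ X(\tau)=\tau(b)\in A_{1}$. The same identity gives $A(X(\tau))=\tau$, which shows the assignment $A$ is onto.

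For injectivity, suppose $A(g)=A(g')$. Then $b(g(x))=b(g'(x))$ for all $b\in A_{2}$ and all $x\in X_{1}$. Since points of $X_{2}$ are homomorphisms determined by their values on $A_{2}$ (the evaluation map in Definition~\ref{def:3.1} is a bijection), this forces $g(x)=g'(x)$. Equivalently, one checks $X(A(g))=g$, since $e_{x}\circ A(g)=e_{g(x)}$.

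There is no real obstacle here. The only point needing care is the bookkeeping of the identifications $X\simeq X(A(X))$ and $A\simeq\iota(A)$, in particular for injectivity, where the bijectivity of $x\mapsto e_{x}$ for $X_{2}$ is exactly what is needed.
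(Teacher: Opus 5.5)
Your proof is correct and follows the paper's argument. The paper likewise defines the inverse by $X(\alpha)(x)=e_{x}\circ\alpha\circ\iota_{2}$, which is your $x\circ\tau$ once $\iota$ is suppressed, and it checks $X(A(g))=g$ and $A(X(\alpha))=\alpha$; you spell out the verifications the paper calls ``easy to verify,'' including where the bijectivity of $x\mapsto e_{x}$ is used.
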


We shall write $X(\mu): X(B) \rightarrow X(A)$ for the polynomial map corresponding to the homomorphism $\mu: A \rightarrow B$.

\begin{proof}
The problem is to define an inverse $X$ of the transpose functor $A$. Let the $k$-spaces $X_{j}$ be $\left(X\left(A_{j}\right), \iota\left(A_{j}\right)\right), j=1,2$ and $\iota_{j}: A_{j} \rightarrow k^{X_{j}}$ the canonical maps. Let $\alpha: \iota_{2}\left(A_{2}\right) \rightarrow \iota_{1}\left(A_{1}\right)$ and take $x$ in $X$. Since $e_{x}: \iota_{1}\left(A_{1}\right) \rightarrow k$ is a homomorphism, $e_{x} \circ \alpha \circ \iota_{2}: A_{2} \rightarrow k$ is also a homomorphism. Now define
\[
X(\alpha): X\left(A_{1}\right) \rightarrow X\left(A_{2}\right): x \mapsto e_{x} \circ \alpha \circ \iota_{2} .
\]
It is easy to verify that $X(A(g))=g$ and $A(X(\alpha))=\alpha$ for all $g: X\left(A_{1}\right) \rightarrow X\left(A_{2}\right)$ and all $\alpha: \iota_{2}\left(A_{2}\right) \rightarrow \iota_{1}\left(A_{1}\right)$.
\end{proof}

\begin{corollary}\label{cor:3.8}
The category of $k$-spaces is dual (arrows reversed) to the category of $k$-reduced $k$-algebras.
\end{corollary}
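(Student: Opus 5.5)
The plan is to make Lemma~\ref{lem:3.7} into a statement about categories. On objects, a $k$-space $X$ is sent to $A(X)$, which is $k$-reduced by Lemma~\ref{lem:1.2}(b), since it is an algebra of functions. On morphisms, a polynomial map $g$ is sent to its transpose $A(g)$. In the other direction, a $k$-reduced algebra $A$ is sent to the $k$-space $(X(A),\iota(A))$, and a homomorphism $\mu$ is sent to $X(\mu)$.

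The first step is contravariant functoriality of $A$. We have $A(\mathrm{id}_X)=\mathrm{id}_{A(X)}$, and
\[
A(h\circ g)(f)=f\circ h\circ g=A(g)\bigl(A(h)(f)\bigr).
\]
Both are immediate from Definition~\ref{def:3.6}.

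The second step is to check that $(X(A),\iota(A))$ really is a $k$-space. Here $\iota$ is injective because $A$ is $k$-reduced (Lemma~\ref{lem:1.2}). We must also check that $x\mapsto e_x$ is a bijection from $X(A)$ onto $X(\iota(A))$. Since $\iota:A\to\iota(A)$ is an isomorphism, homomorphisms $\iota(A)\to k$ correspond exactly to homomorphisms $A\to k$, that is, to points of $X(A)$. Under this correspondence, $e_x$ corresponds to $x$ itself, because $e_x(\iota(a))=x(a)$.

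The third step is to check that the two functors are quasi-inverse.
\begin{enumerate}
\item[(i)] For a $k$-space $X$, Lemma~\ref{lem:3.3} gives the isomorphism $X\to X(A(X))$, $x\mapsto e_x$.
\item[(ii)] For a $k$-reduced algebra $A$, the map $\iota:A\to A(X(A))=\iota(A)$ is an isomorphism.
\end{enumerate}
Naturality of (i) in $X$ amounts to the identity $e_{g(x)}=e_x\circ A(g)$, which holds because both send $f$ to $f(g(x))$. Naturality of (ii) in $A$ amounts to $\iota_B(\mu(a))=A(X(\mu))(\iota_A(a))$, which again holds by unwinding the definitions. Finally, Lemma~\ref{lem:3.7} says that $A$ is bijective on hom-sets, i.e.\ full and faithful. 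Together with essential surjectivity, which is given by (ii), this already yields an equivalence between the category of $k$-spaces and the opposite of the category of $k$-reduced algebras.

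I expect no real obstacle, since this is essentially bookkeeping. The only step needing care is the second one: one must distinguish the abstract algebra $A$ from its image $\iota(A)$ when verifying the bijectivity axiom of Definition~\ref{def:3.1}, because the paper elsewhere identifies the two by convention.
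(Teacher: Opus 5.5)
Your proposal is correct and is the argument the paper intends: the corollary is stated without a separate proof, as an immediate consequence of Lemma~\ref{lem:3.7} (the transpose functor is bijective on hom-sets), with Lemmas~\ref{lem:1.2} and~\ref{lem:3.3} supplying that $A(X)$ is $k$-reduced and that every $k$-reduced $A$ is, via $\iota$, isomorphic to the algebra of the $k$-space $(X(A),\iota(A))$. Your explicit checks of functoriality, of the bijectivity axiom for $(X(A),\iota(A))$, and of naturality are routine details the paper leaves implicit; the naturality checks are redundant once you invoke ``fully faithful plus essentially surjective,'' but they are harmless.
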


The above duality allows the translation of constructions and statements about algebras into (dual) statements about $k$-spaces, and vice versa. For instance, Lemma~\ref{lem:1.8} says that the categorical product $X \times Y$ of two $k$-spaces $X, Y$ is the $k$-space $X(A(X) \otimes A(Y))$ and that the underlying set of this $k$-space is the cartesian product $X \times Y$. By induction, $X\left(A_{1}\right) \times X\left(A_{2}\right) \times \ldots \times X\left(A_{n}\right)=X\left(A_{1} \otimes \ldots \otimes A_{n}\right)$. And, in particular, $(X(k[T]))^{n}$ ($n$-th fold power) coincides with $X\left(k\left[T_{1}, \ldots, T_{n}\right]\right)=k^{n}$; see Example \ref{ex:1.4}. This also shows that the notation $k^{n}$ is consistent with products in the category of $k$-spaces. (Note that, in particular, $k^{0}=X(k)=$ one point, say $\{0\}$).

\label{op:32}%
As an example of the transpose construction, consider a function $f: X \rightarrow k$. Since the transpose $A(f): k[T] \rightarrow A(X)$ is a $k$-algebra homomorphism, $A(f)$ is determined by $A(f)(T)=T \circ f$. Since $T$ is the identity map on $k$,
\begin{equation}
A(f)(T)=f . \label{eq:3.9}
\end{equation}
So the transpose of $f$ is the map $P(T) \mapsto P(f)$.

We now relate various properties of polynomial maps to properties of their transposes.

\begin{definition}\label{def:3.10}
A \textit{polynomial map} $f: X \rightarrow Y$ \textit{is dominating iff} $\overline{f(X)}=Y ; f$ \textit{is a closed immersion iff} $f$ \textit{can be factored as} $g_{1} \circ g_{2}$, \textit{where} $g_{2}$ \textit{is an isomorphism} $X \cong V$ \textit{and} $g_{1}$ \textit{is the inclusion map} $V \rightarrow Y$, \textit{for some closed subset} $V$ of $Y$.
\end{definition}

\begin{lemma}\label{lem:3.11}
Let $\alpha: A \rightarrow B$ and denote $f:=X(\alpha): X(B) \rightarrow X(A)$. Then
\begin{enumerate}
\item[(a)] $f^{-1}(V(S))=V(\alpha(S))$ for any $S \subseteq A$.
\item[(b)] $f$ is continuous.
\item[(c)] $\overline{f(V(I))}=V\left(\alpha^{-1}(I)\right)$ for any closed ideal $I$ of $B$.
\item[(d)] $f$ is dominating if and only if $\alpha$ is one-to-one.
\item[(e)] $f$ is a closed immersion if and only if $\alpha$ is onto.
\end{enumerate}
\end{lemma}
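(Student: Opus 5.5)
We prove Lemma~\ref{lem:3.11} by unwinding the definition of $f=X(\alpha)$. For a point $y\in X(B)$ (a homomorphism $B\to k$), $f(y)=y\circ\alpha$, so for $a\in A$ we have
\[
a(f(y))=y(\alpha(a))=\alpha(a)(y).
\]
Most parts follow from this identity together with the closure formulas of Remark~\ref{rem:2.8}.

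For (a), $y\in f^{-1}(V(S))$ iff $a(f(y))=0$ for all $a\in S$, iff $\alpha(a)(y)=0$ for all $a\in S$, iff $y\in V(\alpha(S))$. Part (b) is immediate from (a), since preimages of closed sets are closed.

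For (c), by Remark~\ref{rem:2.8}(iii) the closure of $f(V(I))$ is $V(I(f(V(I))))$. So it suffices to compute $I(f(V(I)))$. An element $a\in A$ vanishes on $f(V(I))$ iff $\alpha(a)$ vanishes on $V(I)$, i.e.\ iff $\alpha(a)\in I(V(I))$. Since $I$ is closed, $I(V(I))=I$, so $I(f(V(I)))=\alpha^{-1}(I)$, which gives (c).

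Part (d) is the special case $I=\{0\}$ of (c), which is closed because $B$ is $k$-reduced. It gives $\overline{f(X(B))}=V(\ker\alpha)$. By Proposition~\ref{prop:2.3}(a), this equals $X(A)$ iff $\ker\alpha=\{0\}$.

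Part (e) is the main step. For the "if" direction, suppose $\alpha$ is onto, and let $J=\ker\alpha$. Then $B\cong A/J$, and since $B$ is $k$-reduced, Lemma~\ref{lem:2.9} shows $J$ is closed and identifies $X(B)=X(A/J)$ with $V(J)$ via $y\mapsto y\circ\pi$, which is exactly $f$. The polynomial functions on $V(J)$ are the restrictions of elements of $A$, i.e.\ $A/J\cong B$. So $f$ factors as an isomorphism onto $V(J)$ followed by the inclusion, and $f$ is a closed immersion.

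For the "only if" direction, suppose $f=g_1\circ g_2$ with $g_2:X(B)\cong V$ and $g_1:V\to X(A)$ the inclusion, where $V=V(J)$ with $J=I(V)$ closed. By Lemma~\ref{lem:2.9} the $k$-space structure on $V$ is $A/J$, and the inclusion corresponds to the projection $\pi:A\to A/J$. Here I am taking the structure on a closed subset to be the restriction algebra; this is the one point of interpretation to fix. By Lemma~\ref{lem:3.7}, transposes compose contravariantly, so $\alpha=A(g_2)\circ\pi$. Here $A(g_2)$ is an isomorphism because $g_2$ is, and $\pi$ is onto, so $\alpha$ is onto.
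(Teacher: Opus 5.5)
Your proof is correct and follows the same route as the paper: parts (a)--(d) use the same chains of equivalences and the specialization $I=\{0\}$ of (c). In (e) you spell out, via Lemma~\ref{lem:2.9} and the contravariance in Lemma~\ref{lem:3.7}, the paper's one-line ``dualize Definition~\ref{def:3.10}'' argument that closed immersions correspond to factorizations of $\alpha$ as a quotient map $A\to A/J$ followed by an isomorphism, and your assumption that the closed subset carries the restriction algebra $A/I(V)$ is exactly the identification the paper uses implicitly.
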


\begin{proof}
(a) $x$ is in $f^{-1}(V(S))$
\[
\begin{aligned}
& \text { iff } f(x) \text { is in } V(S) \text {, } \\
& \text { iff } a(f(x))=0 \text { for all a in } S \text {, } \\
& \text { iff } \alpha(a)(x)=0 \text { for all a in } S \text {, } \\
& \text { iff } x \text { is in } V(\alpha(S)) \text {. }
\end{aligned}
\]
(b) All closed sets in $X(A)$ are by definition of the form $V(S)$. By (a), pre-images of closed sets are closed.

(c) We first prove that $\alpha^{-1}(I)=I(f(V(I)))$; in fact the following statements are equivalent:
\label{op:33}%
\[
\begin{aligned}
& \text { a is in } I(f(V(I))), \\
& a(f(x))=0 \text { for all } x \text { in } V(I), \\
& \alpha(a) \text { is in } I(V(I))=I \quad(I=\text { closed! }), \\
& a \text { belongs to } \alpha^{-1}(I) .
\end{aligned}
\]
Therefore $V\left(\alpha^{-1}(I)\right)=V(I(f(V(I))))=\overline{f(V(I))}$, as required.

(d) Applying (c) to $I=\{0\}, V\left(\alpha^{-1}(I)\right)=\overline{f(X)}$. So $f$ is dominating iff $V(\operatorname{ker} \alpha)=Y$, which by Definition~\ref{def:1.11}(a) is equivalent to ker $\alpha=\{0\}$.

(e) Dualizing Definition~\ref{def:3.10}, $f$ is a closed embedding iff the transpose homomorphism $\alpha$ factors as $\beta_{2} \circ \beta_{1}$, where $\beta_{1}$ is a homomorphism $B \rightarrow B / I$ for some ideal $I$ and $\beta_{2}$ is an isomorphism. Such factorizations exist precisely when $\alpha$ is onto.
\end{proof}

Both dominating maps and closed immersions will play important roles in our treatment of realization theory. We emphasize some intuitive aspects of these concepts through the following

\begin{discussion}\label{disc:3.12}
It follows from (e) above that $\alpha$ onto implies that $f$ is one-to-one. The converse is false. For an easy example, consider $X=Y=k:=\mathbb{R}, f_{1}(x):=x^{3}$. Then, $f_{1}$ is one-to-one, but $A\left(f_{1}\right): k[T] \rightarrow k[T]: T \mapsto T^{3}$ is not onto ($T$ is not in the image). The problem does not lie in the fact that $\mathbb{R}$ is not algebraically closed: for any field $k$ we may consider $f_{2}: k \rightarrow k^{2}: x \mapsto\left(x^{2}, x^{3}\right)$; $f_{2}$ is one-to-one but $A\left(f_{2}\right): k\left[T_{1}, T_{2}\right] \rightarrow k[T]: T_{1} \mapsto T^{2}, T_{2} \mapsto T^{3}$ has image $k\left[T^{2}, T^{3}\right] \neq k[T]$.

A variation of the last example provides a \textit{bijective map} $f_{3}$ \textit{which is not an isomorphism}. Indeed, consider the "cusp"
\[
Y:=\left\{(x, y) \text { in } k^{2} \mid x^{3}=y^{2}\right\} .
\]
Then
\begin{equation}\label{eq:3.13}
f_{3}: k \rightarrow Y: x \mapsto\left(x^{2}, x^{3}\right) ,
\end{equation}
\label{op:34}is bijective. But $f$ is not an isomorphism, because, by the equivalence of categories \eqref{eq:1.7}, $f_{3}$ is an isomorphism iff $A\left(f_{3}\right)$ is an isomorphism. But $A(Y) \simeq k\left[T^{2}, T^{3}\right] \neq k[T]$; see \citet[1.1, Example 5]{dieudonne1974cours}. Intuitively, we cannot expect to have any isomorphism between $k$ and $Y$ because the curve $Y$ has a singularity (at the origin) while the line $k$ has none.

It is not difficult to prove that, in the category of $k$-spaces, monomorphism $=$ one-to-one and epimorphism $=$ dominating. A dominating map is in general not onto. This is illustrated over $k=\mathbb{R}$ by $X=Y:=\mathbb{R}, f(x):=x^{2}$; this is a dominating map because the smallest Zariski closed set containing the nonnegative reals is all of $\mathbb{R}$. In the particular case when $k$ is algebraically closed and $X$ is an irreducible variety, a dominating $f$ becomes almost onto, in the sense that $f(X)$ contains a Zariski open subset of $Y$; see Theorem~\ref{thm:3.14} below. Thus in this particular case $f(X)$ is all of $Y$ except at most for a subset of "lower dimension" (to be made precise later). Moreover, it can be proved that, when $k=\mathbb{C}$ and $Y$ is given the strong topology Definition~\ref{def:3.4}, $f(X)$ has a nowhere dense complement.
\end{discussion}

We remarked above that the image of a polynomial map is in general not a closed set. When $f: X \rightarrow Y$ is a polynomial map between two varieties, one can sometimes characterize $f(X)$ as a set defined by polynomial equalities \textit{and} inequalities. A \textit{constructible} subset $C$ of a variety $X$ is a finite union of sets of the type $U \cap F$, where $U$ is open and $F$ is closed. In other words, $C$ is in the Boolean algebra generated by the Zariski topology of $X$. When $k$ is a real-closed field (\citealp[VI.2]{jacobson1964lectures}), like $k=\mathbb{R}$, we define a \textit{real-constructible} set $C$ as a finite union of sets $U \cap F$, where $F$ is closed and $U$ is of the type $\{x$ in $X \mid f(x)<0\}$ for some polynomial function $f$.

\begin{theorem}\label{thm:3.14}
Let $X, Y$ be varieties and let $f: X \rightarrow Y$. Then
\begin{enumerate}
\item[(a)] If $k$ is algebraically closed and $C$ is a constructible subset of $X$ (e.g. $C=X$), then $f(C)$ is a constructible subset of $Y$.
\label{op:35}%
\item[(b)] If $k$ is a real-closed field (e.g. $k=\mathbb{R}$) and $C$ is a real-constructible subset of $X$ (e.g. $C=X$), then $f(C)$ is a real-constructible subset of $Y$.
\item[(c)] If $f$ is dominating, $Y$ is irreducible, and $k$ is algebraically closed, $f(X)$ contains a (Zariski) open subset of $Y$.
\item[(d)] If $f$ is dominating, $Y$ is irreducible and $k=\mathbb{R}$, then $f(X)$ contains a set open in the strong topology of $Y$.
\end{enumerate}
\end{theorem}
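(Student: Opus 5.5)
This theorem collects classical facts: Chevalley's constructibility theorem, the Tarski--Seidenberg theorem, and their consequences for dominating maps. In keeping with Section~\ref{ch:2}, I would mostly reduce each part to a standard statement and cite it, translating between $k$-spaces and concrete varieties. By Lemma~\ref{lem:3.3} and the discussion after Definition~\ref{def:3.4}, we may take $X\subseteq k^{n}$ and $Y\subseteq k^{m}$ as solution sets of finitely many polynomial equations, with $f$ given by an $m$-vector of polynomials. Then
\[
f(C)=\{y\in k^{m}\mid \exists x\in k^{n}:\ x\in C,\ y=f(x)\},
\]
so in every part $f(C)$ is the projection of a set defined by a Boolean combination of polynomial conditions.

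For (a), I would first note that the graph condition $\{(x,y)\mid x\in C,\ y=f(x)\}$ is a constructible subset of $k^{n+m}$. The question then reduces to showing that the projection $k^{n+m}\to k^{m}$ maps constructible sets to constructible sets. This is Chevalley's theorem for $k$ algebraically closed, equivalently quantifier elimination for algebraically closed fields. I would cite the scheme-theoretic version (Dieudonn\'e, or Bourbaki's treatment of constructible sets) and pass to $k$-points using the Nullstellensatz, which identifies the $k$-points with the maximal spectrum. Part (b) runs the same way with the Tarski--Seidenberg theorem. The graph of $f$ restricted to $C$ is semialgebraic, that is, defined by equations and strict inequalities, and its projection is again semialgebraic. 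A semialgebraic subset of $Y$ is a finite union of sets of the form $F\cap\{g_{1}>0,\dots,g_{r}>0\}$. To match the definition in the paper, which allows a single strict inequality $f<0$ per piece, I would either read $U$ as a finite intersection of such sets or use the standard normal form; this bookkeeping is the one point where the paper's definition has to be reconciled with the literature.

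For (c), $f(X)$ is constructible by (a). Write it as $\bigcup_{i}(U_{i}\cap F_{i})$. Taking closures gives $Y=\overline{f(X)}\subseteq\bigcup_{i}F_{i}$, so by irreducibility (in the form \eqref{eq:2.11}) some $F_{i}\supseteq Y$, i.e.\ $F_{i}=Y$, and that piece is $U_{i}\cap Y$. This set must be nonempty: otherwise the union could be rewritten without it, and repeating the argument would eventually leave all pieces empty, contradicting $\overline{f(X)}=Y\neq\varnothing$. So $f(X)$ contains the nonempty open set $U_{i}\cap Y$.

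For (d), the same decomposition via (b) yields a piece $F\cap\{g_{1}>0,\dots\}$ with $F=Y$. Such a piece is strongly open in $Y$, but the argument requires it to be nonempty. This is the main obstacle: nonemptiness does not follow from Zariski density alone, because a strict-inequality piece could a priori be empty while lower-dimensional pieces are Zariski dense. I would handle it with a dimension argument. Semialgebraic sets have a well-defined dimension, and a semialgebraic set with empty interior in the irreducible $Y$ has dimension less than $\dim Y$, so its Zariski closure is a proper closed subset. Hence if $f(X)$ contained no strongly open set, $\overline{f(X)}$ would be proper, contradicting dominance. If one prefers to avoid the semialgebraic dimension theory, an alternative is the rank theorem. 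Pick a smooth point of $X$ where the Jacobian of $f$ has rank $\dim Y$; such a point exists, since otherwise generic rank deficiency would force $\operatorname{trdeg}$ of the image to drop, contradicting dominance via Lemma~\ref{lem:3.11}(d). At such a point $f$ is a submersion in the Euclidean topology, so its image contains a neighbourhood of a smooth point of $Y$, and that neighbourhood is open in the strong topology.
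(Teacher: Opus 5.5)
\textbf{Part (d).} Parts (a)--(c) follow the paper: cite Chevalley and Tarski--Seidenberg, then apply irreducibility of $Y$ to a finite decomposition $f(X)=\bigcup_i(U_i\cap F_i)$. For (d), however, the obstacle you describe does not exist, and the paper proves (c) and (d) with the same two lines. Zariski closure commutes with finite unions, so $Y=\overline{f(X)}=\bigcup_i\overline{U_i\cap F_i}$. Irreducibility then gives an $i$ with $\overline{U_i\cap F_i}=Y$. For that $i$ we get $F_i\supseteq Y$, the piece equals $Y\cap U_i$, and it is nonempty because its closure is $Y$. The pieces with $F_i\not\supseteq Y$ all lie in the proper closed set $\bigcup_{F_i\not\supseteq Y}(F_i\cap Y)$. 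So they can never be Zariski dense, whatever the shape of $U_i$. Your device from (c) of discarding empty pieces therefore works verbatim in (d).

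Your two substitutes are sound in outline, but they import real algebraic geometry that the paper avoids. The dimension argument needs that a full-dimensional semialgebraic subset of an irreducible real algebraic set has nonempty interior, which rests on density of nonsingular points. The rank argument needs three things: the Jacobian criterion for transcendence degree, a dominating irreducible component of $X$, and a point $x_0$ with $f(x_0)$ nonsingular in $Y$. The last is essential, because full rank of $df$ at a point mapping to a singular point of $Y$ does not give a neighbourhood in $Y$. The paper in fact runs the rank argument the other way in Theorem~\ref{thm:4.6}(e), deducing full rank from (d) via Sard.

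\textbf{Part (b).} Your caveat is more than bookkeeping, and only your first option is viable. With one inequality per piece, (b) is false. The map $(x,y)\mapsto(x^2,y^2)$ sends $\mathbb{R}^2$ onto the closed quadrant $\{x\ge 0,\ y\ge 0\}$, and this set is not a finite union of sets $F\cap\{g<0\}$. To see this, suppose $\{g<0\}$ lies in the quadrant and accumulates at $0$. A parity argument on the lowest-degree form $\ell$ of $-g$ at $0$ gives $\ell\le 0$ on the whole circle. So near $0$ the set $\{g<0\}$ is confined to thin horns around the finitely many zero directions of $\ell$. A proper algebraic subset of the plane is confined near $0$ in the same way, so finitely many pieces miss a small open sector at $0$. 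Hence no normal form can reduce each piece to one inequality. The set $U$ must be read as a basic set $\{g_1<0,\dots,g_r<0\}$, which is what Tarski--Seidenberg actually delivers. Such a $U$ is still strongly open, so (d) is unaffected. The paper's bare citation of Tarski--Seidenberg passes over this mismatch, so your remark is more careful than the original.
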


\begin{proof}
(a) This is the well-known Chevalley's theorem; see for instance \citet[Chapter 4, Corollary to Proposition 14]{dieudonne1974cours}.

(b) This statement is essentially the \textit{generalized Sturm's Theorem} due to Tarski and Seidenberg; see \citet[VI.10]{jacobson1964lectures}. (The usual statement of the Tarski-Seidenberg result requires that the coefficients of $f$ be rational, so that $f(C)$ can be constructed algorithmically. However, the proof itself does not depend upon this requirement; see \citealp[footnote in page 366]{seidenberg1954new}.)

To prove (c) and (d), write $f(X)=\bigcup_{\text {finite }} U_{i} \cap F_{i}$ as in the definition of constructibles and real-constructibles, with the $F_{i}$ closed and the $U_{i}$ Zariski-open or, when $k=\mathbb{R}$, of the type $\{f(x)<0\}$, so open in the strong topology. If $F_{i} \neq Y$ for all $i$, then, by irreducibility of $Y$, $\overline{f(X)}=\bigcup F_{i} \neq Y$, contradicting domination of $f$. So some $F_{i}=Y$, and $f(X)$ contains $U_{i}$.
\end{proof}

An important type of dominating map arises in the following

\begin{definition}\label{def:3.15}
\textit{The principal open set defined by} $a \in A$ \textit{is}
\[
D(a):=\{x \text { in } X(A) \mid a(x) \neq 0\} .
\]
\textit{The principal open sets constitute a basis for the Zariski topology}. Indeed, the complement of any closed set $V(S)$ is the union of the $D(a)$, $a$ in $S$. For simplicity, let $A$ be an integral domain. Denote by $a^{-1} A$ the algebra $A \subseteq a^{-1} A \subseteq Q(A)$ consisting of all $b / a^{n}, b$ in $A, n \geq 0$. Take any $\alpha: a^{-1} A \rightarrow k$. Then $\beta:=\alpha \mid A: A \rightarrow k$ satisfies $\beta(a) \neq 0$. Conversely, if $\beta: A \rightarrow k$ and $\beta(a) \neq 0$ then the rule
\label{op:36}$\alpha\left(b / a^{n}\right):=\beta(b) / \beta(a)^{n}$ defines the (unique) $\alpha$ extending $\beta$. Therefore $D(a)$ is the image of the map
\begin{equation}\label{eq:3.16}
X\left(a^{-1} A\right) \rightarrow X(A) ,
\end{equation}
dual to the inclusion $A \subseteq a^{-1} A$. This map is both one-to-one and dominating and establishes a homeomorphism between $X\left(a^{-1} A\right)$ and $D(a)$; see \citet[II.4.3, Corollary to Proposition 13]{bourbaki1972commutative}. Local arguments about $k$-spaces are often simplified by restricting attention to principal open sets.
\end{definition}

We shall be especially interested in "quotients" of varieties:

\begin{definition}\label{def:3.17}
\textit{A} $k$-\textit{space} $X$ \textit{is an almost-variety iff there exists a variety} $\hat{X}$ \textit{and a dominating polynomial map} $f: \hat{X} \rightarrow X$.
\end{definition}

By definition of "dominating", this means that $X$ has a dense subset $f(\hat{X})$ consisting of equivalence classes of elements of $X$. Let $f: \hat{X} \rightarrow X$ as above. By Hilbert's basis theorem we may write $\hat{X}=\hat{X}_{1} \cup \ldots \cup \hat{X}_{r}$, where the $\hat{X}_{i}$ are irreducible closed sets; see \citet[III.2.10, Corollary 3 of Theorem 3]{bourbaki1972commutative}. From the definition of irreducibility, it is easy to prove that continuous images \textit{and} closure of irreducible sets are irreducible. Hence $X_{i}:=\overline{f\left(\hat{X}_{i}\right)}$ is an irreducible closed subset for each $i$. Since $f \mid \hat{X}_{i}: \hat{X}_{i} \rightarrow X_{i}$ is dominating and each $\hat{X}_{i}$ is a variety, $X$ can be written as a finite union of irreducible almost-varieties.

The next lemma shows that every irreducible almost-variety has an open (hence dense) subset which is a variety, justifying the terminology "almost variety".

\begin{lemma}\label{lem:3.18}
Let $X$ be an irreducible almost-variety. Then there exists a principal open set $D(a)$ of $X$ which is a variety, i.e. such that $a^{-1} A(X)$ is finitely generated.
\end{lemma}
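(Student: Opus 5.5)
Dualize the hypothesis into algebra and then apply the generic freeness/finiteness argument familiar from Noether normalization.

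Let $A:=A(X)$. Since $X$ is irreducible, $A$ is an integral domain by Lemma~\ref{lem:2.12}. Let $f:\hat X\to X$ be the dominating map with $\hat X$ a variety. By Lemma~\ref{lem:3.11}(d) the transpose $\alpha=A(f):A\to B:=A(\hat X)$ is one-to-one. Here $B$ is finitely generated but need not be a domain.

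First I reduce to the case where $B$ is a domain. Write $\hat X=\hat X_1\cup\dots\cup\hat X_r$ as a union of irreducible components. Since $X$ is irreducible and $X=\bigcup\overline{f(\hat X_i)}$, some $\overline{f(\hat X_i)}=X$. Replacing $\hat X$ by that $\hat X_i$, I may assume $B$ is a finitely generated domain containing $A$ as a subalgebra. Then $Q(A)\subseteq Q(B)$, and $Q(B)$ is finitely generated over $k$. Hence $Q(A)$ is a finitely generated field extension of $k$, since intermediate fields of finitely generated extensions are finitely generated. Choose $a_1,\dots,a_n\in A$ with $Q(A)=k(a_1,\dots,a_n)$, and set $A_0:=k[a_1,\dots,a_n]\subseteq A$.

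The key step is to find $0\ne a\in A$ with $a^{-1}A=a^{-1}A_0[\text{finitely many elements}]$. I would apply generic freeness or generic finiteness to the extension $A_0\subseteq B$: $B$ is a finitely generated $A_0$-algebra, and $Q(B)$ has transcendence degree $d$ over $Q(A_0)=Q(A)$. By Noether normalization over the field $Q(A_0)$, there exist $0\ne s\in A_0$ and elements $z_1,\dots,z_d\in B$, algebraically independent over $Q(A)$, such that $s^{-1}B$ is a finite module over $s^{-1}A_0[z_1,\dots,z_d]$. Shrinking further if necessary, I can arrange that $s^{-1}B$ is free over $s^{-1}A_0[z]$ with a basis containing $1$.

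From this I need to deduce that $s^{-1}A$ is finitely generated. The point is that $s^{-1}A$ is sandwiched between $s^{-1}A_0$ and $s^{-1}B$, with $Q(A)=Q(A_0)$. An element of $A$ lies in $Q(A_0)$, and its expression in the free basis of $s^{-1}B$ over $s^{-1}A_0[z]$ has its coordinate on $1$ in $s^{-1}A_0$. Using algebraic independence of the $z_j$ over $Q(A)$, this forces the element to lie in $s^{-1}A_0$. So $s^{-1}A=s^{-1}A_0$, which is finitely generated by $a_1,\dots,a_n,1/s$. Taking $a:=s$ gives the principal open set $D(a)$. By Definition~\ref{def:3.15} it is homeomorphic to $X(a^{-1}A)$, which is a variety.

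The main obstacle is the generic-freeness step. In particular, I must check that $Q(A)$ and the chosen transcendence basis $z_1,\dots,z_d$ fit together, so that after localization the coordinate functional onto the base ring is well defined and $A_0$-linear. If this route fails, I would fall back on Lemma~\ref{lem:1.13}(a). After a finite localization, $s^{-1}A$ would be integral over $s^{-1}A_0$ (taking the $a_i$ as a transcendence basis plus primitive element), and the fraction fields agree. That lemma then gives finite generation, provided integrality can be secured by further inverting the leading coefficients of the minimal equations.
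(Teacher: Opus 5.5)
There is a genuine gap, at the step you flag yourself. Once $s^{-1}B$ is finite over $s^{-1}A_0[z_1,\dots,z_d]$, any further ``shrinking'' may only invert elements of the base; otherwise you no longer land on a principal open set of $X$. With that restriction you cannot in general make $s^{-1}B$ \emph{free over the polynomial ring} $s^{-1}A_0[z]$. Generic freeness gives freeness over $s^{-1}A_0$, not over $s^{-1}A_0[z]$.

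To see this, suppose $s^{-1}B$ were finite free over $s^{-1}A_0[z]$. Then $B\otimes_{A_0}Q(A_0)$ would be finite free over $Q(A_0)[z]$, hence Cohen--Macaulay. Now take $A=A_0=k[x]$ and $B=k[x,u^4,u^3v,uv^3,v^4]\subseteq k[x,u,v]$. The generic fibre $Q(A_0)[u^4,u^3v,uv^3,v^4]$ is not Cohen--Macaulay, so no choice of $s$ and $z$ works, even though the lemma is trivially true there.

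Your fallback has a gap of a different kind. Elements of $A$ lie in $Q(A_0)$, so their minimal equations over $Q(A_0)$ are linear, of the form $qT-p$. ``Inverting the leading coefficients'' therefore means inverting infinitely many denominators $q$. Nothing in the fallback produces a single $s$, because it never uses that $A$ sits inside the finitely generated algebra $B$, and without that hypothesis the conclusion can fail.

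The repair uses only ingredients you already have. For $c\in A$, take an integral equation $c^n+p_1c^{n-1}+\dots+p_n=0$ with $p_i\in s^{-1}A_0[z]$. Since $c\in Q(A_0)$ and the $z_j$ are algebraically independent over $Q(A_0)$, this is an identity in the polynomial ring $Q(A_0)[z]$. Setting $z=0$ then gives an integral equation for $c$ over $s^{-1}A_0$. So $s^{-1}A$ is integral over the finitely generated algebra $s^{-1}A_0$, with the same fraction field. Lemma~\ref{lem:1.13}(a) then makes $s^{-1}A$ a finite $s^{-1}A_0$-module, hence finitely generated.

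Your coordinate-on-$1$ argument can also be saved. Use Grothendieck's generic freeness of $s^{-1}B$ as an $s^{-1}A_0$-module, then exchange $1$ into the basis after inverting the numerator of one nonzero coordinate of $1$. That route is heavier than needed.

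The paper's proof avoids $A_0$ altogether. It applies the generic normalization lemma (Bourbaki, V.3.1, Corollary 1) directly to $A\subseteq B$. This gives $a\in A$ with $a^{-1}B$ integral over the polynomial ring $a^{-1}A[T_1,\dots,T_r]$. Lemma~\ref{lem:1.13}(b) then shows $a^{-1}A[T_1,\dots,T_r]$ is finitely generated, and specializing the $T_i$ to $0$ shows $a^{-1}A$ is.
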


\begin{proof}
Let $f: \hat{X} \rightarrow X$ be dominating with $\hat{X}$ an irreducible variety. Then $A(f): A(X) \rightarrow A(\hat{X})$ embeds $A(X)$ in $B:=A(\hat{X})$; we
\label{op:37}identify $A(X)$ with its image $A \subseteq B$. By \citet[V.3.1, Corollary 1]{bourbaki1972commutative} there is an $a$ in $A$ and $T_{1}, \ldots, T_{r}$ in $B$ such that the finitely generated algebra $a^{-1} B$ is integral over the polynomial ring $a^{-1} A\left[T_{1}, \ldots, T_{r}\right]$. By Lemma~\ref{lem:1.13} $a^{-1} A\left[T_{1}, \ldots, T_{r}\right]$ is a finitely generated algebra. Let $P_{1}\left(T_{1}, \ldots, T_{r}\right), \ldots, P_{m}\left(T_{1}, \ldots, T_{r}\right)$ generate $a^{-1} A\left[T_{1}, \ldots, T_{r}\right]$. Then $P_{1}(0, \ldots, 0), \ldots, P_{m}(0, \ldots, 0)$ generate $a^{-1} A$. Thus $a^{-1} A$ is finitely generated
\end{proof}

We shall find in Section~\ref{ch:4} that the canonical state-spaces are in general almost-varieties. In particular we shall give an input/output map whose canonical state-space is that given by the following

\begin{example}\label{ex:3.19}
Let $A:=k\left[T_{1} T_{2}^{n}, n \geq 0\right]$. Since $A \subseteq k\left[T_{1}, T_{2}\right]$, $X(A)$ is an almost-variety. Take $a:=T_{1}$. Then $a^{-1} A=k\left[T_{1}, T_{2}, T_{1}^{-1}\right]$ is a finitely generated algebra. As a variety, $X\left(a^{-1} A\right)$ can be represented by the set of solutions $(x, y, z)$ in $k^{3}$ of the equation $x z-1=0$. Note that $X\left(a^{-1} A\right)$ can be also naturally viewed as the principal open set $x_{1} \neq 0$ in $k^{2}$. A point $x$ of $X(A)$ not in $D(a)$ satisfies $x\left(T_{1}\right)=T_{1}(x)=0$ and (as we now show) it also satisfies $x\left(T_{1} T_{2}^{n}\right)=0$ for all $n$. This is not trivial, since the "proof" $x\left(T_{1} T_{2}\right)=x\left(T_{1}\right) \cdot x\left(T_{2}\right)=0 \cdot x\left(T_{2}\right)=0$ is fallacious: $x\left(T_{2}\right)$ is not defined, because $T_{2}$ is not in $A$. One way to prove the statement is by means of the theorem of extension of places (see \citealp[VI.2.4, Proposition 13]{bourbaki1972commutative}). This theorem implies that, if there is no extension of $x$ such that $x\left(T_{2}\right)$ is defined, then there does exist an overring of $A$ containing $T_{2}^{-1}$ and such that $x\left(T_{2}^{-1}\right)=0$ (the values of the extension of $x$, though, are not necessarily in $k$). But if this is the case, then $x\left(T_{1} T_{2}^{n}\right)=x\left(T_{1} T_{2}^{n+1}\right) x\left(T_{2}^{-1}\right)=0$ for all $n$, as wanted. Therefore the complement of $D(a)$ consists of just one point and $X(A)$ is the disjoint union of the variety $X\left(a^{-1} A\right)$ and this one extra point.
\end{example}

\subsection{Dimension}\label{sec:4}

There are various possible notions of dimension for a $k$-space $X$. All these notions coincide if $X$ is a variety and $k$ is algebraically
\label{op:38}closed, but this is not true for more general $X$ or $k$. Thus our choice will be to some extent arbitrary, to be justified by the results.

Let $A$ be an overring of $B$ having no zero divisors. Recall that elements $\ell_{1}, \ldots, \ell_{n}$ of $A$ are \textit{algebraically dependent} over $B$ iff there exists a nonzero polynomial $P$ in $B\left[T_{1}, \ldots, T_{n}\right]$ such that $P\left(\ell_{1}, \ldots, \ell_{n}\right)=0$. When no such $P$ exists, $\ell_{1}, \ldots, \ell_{n}$ are algebraically \textit{independent}. We review some elementary properties of algebraic dependence. They can be found, for instance, in \citet[II.12]{zariski1958commutative}.

An arbitrary subset $L$ of $A$ is \textit{algebraically independent} iff every finite subset of $L$ is. A \textit{transcendence basis} for $A$ (over $B$) is a maximal algebraically independent set $L \subseteq A$; in other words, if $s$ is not in $L$, then $L \cup\{s\}$ is algebraically dependent. All transcendence bases have the same cardinality, the \textit{transcendence degree} $\operatorname{trdeg}_{B} A$ of $A$ over $B$. When $B=k$, we denote $\operatorname{trdeg}_{k} A$ just by $\operatorname{trdeg} A$. When $\operatorname{trdeg}_{B} A=0, A$ is \textit{algebraic} over $B$.

The notion of transcendence degree, which is based on "dependence", is analogous to that of dimension of vector spaces.

A transcendence basis for $A$ over $B$ can be extracted out of any system of generators of $A$ over $B$. In particular, when $A$ is a finitely generated $B$-algebra, $\operatorname{trdeg}_{B} A$ is finite. If both $A, B$ are $k$-algebras, then
\begin{equation}\label{eq:4.1}
\operatorname{trdeg} A=\operatorname{trdeg} B+\operatorname{trdeg}_{B} A .
\end{equation}

\begin{lemma}[{\citealp[II.2, Theorems 28 and 29]{zariski1958commutative}}]\label{lem:4.2}
Let $A, B$ be integral domains, and let $\varphi: A \rightarrow B$ be onto. Then $\operatorname{trdeg} B \leq \operatorname{trdeg} A$; if both are finite then equality can only hold when $\varphi$ is an isomorphism.
\end{lemma}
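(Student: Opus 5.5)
The plan is to prove both assertions by pulling elements back along the surjection, and then to use Lemma~\ref{lem:1.13}(b)-style counting against a quotient by a nonzero prime. Since $\varphi$ is onto, $B\simeq A/P$ with $P=\ker\varphi$. Because $B$ is an integral domain, $P$ is a prime ideal.

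\emph{Inequality.} I take algebraically independent elements $b_1,\dots,b_n$ of $B$ over $k$ and choose preimages $a_i\in A$ with $\varphi(a_i)=b_i$. Suppose $Q(a_1,\dots,a_n)=0$ for a nonzero $Q\in k[T_1,\dots,T_n]$. Applying $\varphi$, which is the identity on $k$, gives $Q(b_1,\dots,b_n)=0$, a contradiction. So the $a_i$ are algebraically independent, and every independent set in $B$ lifts to one of the same cardinality in $A$. Taking a transcendence basis of $B$ (or arbitrarily large finite independent sets, if $\operatorname{trdeg}B$ is infinite) yields $\operatorname{trdeg}B\le\operatorname{trdeg}A$.

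\emph{Equality case.} Assume both degrees are finite and equal to $n$, and suppose $P\neq 0$; I aim for a contradiction. Choose $b_1,\dots,b_n$ a transcendence basis of $B$ and lifts $a_1,\dots,a_n$ as above. By the first part these are algebraically independent in $A$. Since $\operatorname{trdeg}A=n$, they form a transcendence basis of $A$. Set $R:=k[a_1,\dots,a_n]\subseteq A$, a polynomial ring. The independence of the $b_i$ shows $P\cap R=0$, because $\varphi$ restricted to $R$ is injective onto $k[b_1,\dots,b_n]$.

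Now pick $0\neq p\in P$. Since $A$ is algebraic over $R$ (the $a_i$ form a transcendence basis), $p$ satisfies a nontrivial polynomial equation over $R$. Because $A$ is a domain, I can take this equation of minimal degree, $c_mp^m+\dots+c_1p+c_0=0$ with $c_i\in R$, and cancel powers of $p$ so that $c_0\neq 0$. Then $c_0=-p(c_mp^{m-1}+\dots+c_1)\in P\cap R=0$, which is a contradiction. Hence $P=0$ and $\varphi$ is an isomorphism.

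The main obstacle is the constant-term step: it needs $A$ to be a domain so that one can cancel $p$ and secure $c_0\neq0$. It also needs algebraicity of $A$ over $R$, which relies on finiteness of $\operatorname{trdeg}A$ so that the lifted basis is maximal. Alternatively, one could cite Zariski–Samuel directly, as the paper does.
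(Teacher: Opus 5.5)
Your proof is correct. Lifting an algebraically independent set along the $k$-algebra surjection gives the inequality. In the equality case, the lifted transcendence basis generates a polynomial ring $R$ with $\ker\varphi\cap R=0$, so the nonzero constant term of a minimal equation over $R$ for a nonzero $p\in\ker\varphi$ would lie in $\ker\varphi\cap R$, which is a contradiction.

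The paper gives no argument of its own; it simply cites Zariski--Samuel, and the classical proof there is essentially this one. The mention of Lemma~\ref{lem:1.13}(b) in your opening sentence plays no role and can be dropped.
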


When the $k$-algebra $A$ is not an integral domain, one can still define trdeg $A$, making use of the integral domains $\{A / P$, $P$ prime ideal of $A\}$; just let
\label{op:39}%
\begin{equation}\label{eq:4.3}
\operatorname{trdeg} A:=\sup \{\operatorname{trdeg} A / P, \quad P \text{ prime ideal of } A\} .
\end{equation}

By Lemma~\ref{lem:4.2}, this definition is consistent with the basic definition for integral domains.

We define a notion of \textit{dimension} for a $k$-space by setting
\begin{equation}\label{eq:4.4}
\operatorname{dim} X:=\operatorname{trdeg} A(X) .
\end{equation}

Observe that an \textit{almost-variety} $X$ \textit{is always finite-dimensional}, since then $A(X)$ is included in a finitely-generated algebra.

A combinatorial consequence of the definition of dimension is:

\begin{lemma}\label{lem:4.5}
Let $\operatorname{dim} X=n$ be finite. Let
\[
V_{0} \subsetneqq V_{1} \subsetneqq \cdots \subsetneqq V_{s},
\]
be a chain of irreducible closed subsets of $X$. Then $s \leq n$.
\end{lemma}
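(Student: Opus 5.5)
The plan is to translate the chain of closed sets into a chain of ideals and then use Lemma~\ref{lem:4.2} to make transcendence degree drop strictly at each step. Write $A:=A(X)$. By Lemma~\ref{lem:2.12} and Remark~\ref{rem:2.8}, the irreducible closed sets $V_j$ correspond, in an inclusion-reversing way, to closed prime ideals $P_j:=I(V_j)$. The given chain therefore becomes a strictly decreasing chain of primes
\[
P_0 \supsetneqq P_1 \supsetneqq \cdots \supsetneqq P_s .
\]
Strictness holds because $V_j=V(I(V_j))$ for closed $V_j$, so $V_j\neq V_{j+1}$ forces $P_j\neq P_{j+1}$.

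Set $d_j:=\operatorname{trdeg} A/P_j$. Each $A/P_j$ is an integral domain. By definition \eqref{eq:4.3}, $d_j\le \operatorname{trdeg} A=n$ for every $j$, so all the $d_j$ are finite. For each $j<s$, the natural map $A/P_{j+1}\rightarrow A/P_j$ is onto, and it is not an isomorphism because its kernel $P_j/P_{j+1}$ is nonzero. Lemma~\ref{lem:4.2} then gives $d_j<d_{j+1}$, since both degrees are finite.

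It follows that
\[
0\le d_0<d_1<\cdots<d_s\le n ,
\]
so $d_s\ge s$, and hence $s\le n$.

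Nearly all of the work is carried by Lemma~\ref{lem:4.2}. The one point that needs care is the finiteness hypothesis in the strict-inequality clause of that lemma. It is supplied by the definition \eqref{eq:4.3} of $\operatorname{trdeg} A$ as a supremum over primes, which bounds every $d_j$ by $n$.
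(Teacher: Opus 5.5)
Your proof is correct and is the argument the paper intends. The paper's proof reads only ``Clear by Lemma~\ref{lem:2.12} and induction on Lemma~\ref{lem:4.2},'' and you fill in exactly those steps: you pass to the strictly decreasing chain of primes $I(V_j)$, use \eqref{eq:4.3} to get the finiteness the strict inequality needs, and apply Lemma~\ref{lem:4.2} to each surjection $A/P_{j+1}\rightarrow A/P_j$.
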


\begin{proof}
Clear by Lemma~\ref{lem:2.12} and induction on Lemma~\ref{lem:4.2}.
\end{proof}

Let $f: X \rightarrow Y$ be a polynomial map. Since $f$ is continuous and points of $Y$ are closed, the \textit{fibers} $f^{-1}(y), y$ in $Y$, are closed subsets of $X$. Therefore each fiber is a $k$-space and as such has a well-defined dimension.

We shall see below how to generalize the dimension formulas for linear maps ($\operatorname{dim} X-\operatorname{dim}$ ker $f=\operatorname{dim} f(X)$) to the polynomial context. This will follow from a general result on the structure of polynomial maps.

The next theorem summarizes a number of relevant results in a form convenient for our purposes.

\begin{theorem}\label{thm:4.6}
Let $X, Y$ be two irreducible almost-varieties, with $\operatorname{dim} X=n, \operatorname{dim} Y=m$. Let $f: X \rightarrow Y$ be a dominating polynomial map. Then there exists an integer $s \geq 0$, irreducible varieties $X_{1}, Y_{1}$ and polynomial maps $j_{X}: X_{1} \rightarrow X, j_{Y}: Y_{1} \rightarrow Y, f_{1}: X_{1} \rightarrow Y_{1}$ and $g: X_{1} \rightarrow Y_{1} \times k^{n-m}$ such that
\begin{enumerate}
\item[(a)] \label{op:40}$j_{X}$ [respectively $j_{Y}$] identifies $X_{1}$ [respectively $Y_{1}$] with a principal open set of $X$ [respectively $Y$] as in \eqref{eq:3.16}.
\item[(b)] The following diagram commutes:
\[
\begin{tikzpicture}[x=1cm,y=1cm]
\node (X) at (0,1.5) {$X$};
\node (Y) at (0,-1.2) {$Y$};
\node (X1) at (3,1.5) {$X_1$};
\node (Y1) at (3,-1.2) {$Y_1$};
\node (P) at (6.6,0.15) {$Y_1\times k^{n-m}$};
\draw[arr] (X1) -- node[above,lab] {$j_X$} (X);
\draw[arr] (X) -- node[left,lab] {$f$} (Y);
\draw[arr] (X1) -- node[left,lab] {$f_1$} (Y1);
\draw[arr] (X1) -- node[above right,lab] {$g$} (P);
\draw[arr] (P) -- node[below right,lab] {$\mathrm{pr}_1$} (Y1);
\draw[arr] (Y1) -- node[below,lab] {$j_Y$} (Y);
\end{tikzpicture}
\]
where $\mathrm{pr}_{1}$ denotes the projection in the first factor.
\item[(c)] $g$ is dominating and the sets $g^{-1}(z)$ have cardinality at most $s$ for each $z$ in $Y_{1} \times k^{n-m}$.
\item[(d)] If $k$ is algebraically closed, then $g$ is onto and each $g^{-1}(z)$ has exactly $s$ elements.
\item[(e)] If $k=\mathbb{R}$ then $X_{1}, Y_{1}$ are differentiable manifolds and the differentials $d g$ and $d f_{1}$ have full rank at every point.
\item[(f)] If either $k=\mathbb{R}$ or $k$ is algebraically closed,
\[
\operatorname{dim} f_{1}^{-1}\left(f_{1}(x)\right)=n-m,
\]
for each $x$ in $X_{1}$.
\end{enumerate}
\end{theorem}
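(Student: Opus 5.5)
The plan is to reduce everything to a statement about finitely generated integral domains and then invoke Noether normalization over a localized base. First, by Lemma~\ref{lem:3.18}, I would pass to principal open sets on which $X$ and $Y$ are varieties. Put $A:=A(Y)$ and $B:=A(X)$; since $f$ is dominating, Lemma~\ref{lem:3.11}(d) lets me regard $A\subseteq B$. Both are integral domains because $X,Y$ are irreducible (Lemma~\ref{lem:2.12}). After shrinking by one element $a\in A$, I may take $a^{-1}A$ and $a^{-1}B$ finitely generated. The shrinking must be compatible on both sides: a principal open set $D(a)$ of $Y$ pulls back to the principal open set $D(A(f)a)$ of $X$, and localizations at a common element of $A$ can be multiplied together. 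This produces $Y_1=X(a^{-1}A)$, $X_1=X(a^{-1}B)$ and $f_1$ dual to the inclusion, so the square in (b) commutes and (a) holds by \eqref{eq:3.16}.

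Next I would apply the relative normalization lemma already used in Lemma~\ref{lem:3.18} (Bourbaki V.3.1, Corollary~1). After a further localization at some $a'\in A$, there are $T_1,\dots,T_r\in B$, algebraically independent over $A$, with $B$ integral (hence finite) over $A[T_1,\dots,T_r]$. By \eqref{eq:4.1}, $r=\operatorname{trdeg}_A B=n-m$. The inclusion $A[T]\subseteq B$ dualizes to $g:X_1\to Y_1\times k^{n-m}$ with $\mathrm{pr}_1\circ g=f_1$, and $g$ is dominating because the inclusion is injective (Lemma~\ref{lem:3.11}(d)).

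For the fibre bound in (c), I would use the primitive element theorem in characteristic $0$; in general characteristic I would instead use the degree of the finite extension $Q(B)/Q(A[T])$. Localizing once more at a suitable discriminant or leading coefficient, which again may be taken in $A[T]$ and then, if needed, absorbed by a norm into $A$, I can arrange that $B$ is generated over $A[T]$ by elements satisfying monic equations of bounded degree. Then each fibre $g^{-1}(z)$ is the set of $k$-points of $B\otimes_{A[T]}k(z)$, an algebra of dimension at most $s=[Q(B):Q(A[T])]$ (or the product of the degrees of the generators), so it has at most $s$ points. For (d), with $k$ algebraically closed, finite integral extensions are surjective on maximal ideals (lying over), so $g$ is onto. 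Localizing so that $B$ is étale (separable and unramified) over $A[T]$ gives exactly $s$ points in each fibre.

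For (e), I would localize further to remove the singular loci of $X_1$ and $Y_1$ together with the ramification locus of $g$; each of these is a proper closed subset, so its complement contains a principal open set. Over $\mathbb{R}$ this makes $X_1,Y_1$ manifolds and $g$ a local diffeomorphism, and then $df_1=d\,\mathrm{pr}_1\circ dg$ has full rank. Part (f) then follows: fibres of $f_1$ are fibres of $\mathrm{pr}_1\circ g$, which are finite-to-one covers of (open pieces of) $k^{n-m}$, so their dimension is $n-m$; over $\mathbb{R}$ this uses the submersion property.

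The main obstacle is the bookkeeping. Every localization must be by an element of $A$, so that the principal open set in $X$ is the preimage of one in $Y$, while unramifiedness and smoothness are naturally conditions given by elements of $B$. To handle this I would first localize in $B$ and then replace such an element by its norm down to $A[T]$, and ultimately choose a nonzero element of $A$ using that the bad locus does not dominate $Y$. I would also keep in mind that over $\mathbb{R}$ the real points of the localized variety may be empty or smaller, which is why (c) only claims "at most".
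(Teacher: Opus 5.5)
Your skeleton (relative Noether normalization over a localized base as in Lemma~\ref{lem:3.18}, $r=n-m$ by \eqref{eq:4.1}, and $g$ dual to $A[T]\subseteq B$) is the paper's. The genuine gap is the rule you impose throughout: that every localization be by an element of $A$, so that $X_1=f^{-1}(Y_1)$. The theorem does not ask for this. Parts (a) and (b) only need principal open sets with $f(X_1)\subseteq Y_1$. The paper takes $Y_1=X(a^{-1}A)$ and $X_1=X((ab)^{-1}B)=D(ab)\subseteq f^{-1}(D(a))$, with $b\in B$ supplied by Lemma~\ref{lem:3.18} applied to $X$, and it later shrinks $X_1$ by further elements of $B$.

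Your rule also cannot be met in general, and it already breaks your first step. Take $Y$ a point and $A(X)=k[T_1T_2^n,\ n\ge 0]$ as in Example~\ref{ex:3.19}; then $A=k$, so $a^{-1}B=B$ is not finitely generated for any $a$. Your repair at the end fails as well. $A[T_1,\dots,T_r]$ is not finite over $A$, so there is no norm from it down to $A$, and the bad locus can dominate $Y$. For instance, over $\mathbb{R}$ with $Y$ a point and $X$ the curve $y^2=x^3-x$, every polynomial $g$ has critical points on the compact oval. So (e) is impossible with $X_1=X$, and you are forced to delete points using an element of $B$.

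The same obstruction defeats your route to (d). Inverting a discriminant $\Delta\in A[T]$ replaces the target by $D(\Delta)\subsetneq Y_1\times k^{n-m}$, so $g$ is no longer onto, and $\Delta$ cannot be traded for an element of $A$. In fact, for $n>m$ one cannot in general make $g$ onto with all fibres of the same size. For $Y$ a point and $X$ the complex curve $y^2=x^3-x$, such a $g$ would exhibit an open subset of $X$ as a finite \'etale cover of the affine line, which Riemann--Hurwitz rules out. What the paper's argument actually uses is an open set $V$ over which fibres have exactly $s$ points, followed by shrinking $X_1$ and $Y_1$.

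Finally, in (c) the bound $s=[Q(B):Q(A[T])]$ requires $A[T]$ to be integrally closed. The paper arranges this by making $a^{-1}A$ integrally closed, and it uses the separable degree. Without normality a fibre can exceed the degree, as for the normalization of a node. Your alternative, taking $s$ to be the number of $A[T]$-module generators of $B$, does give (c) with no extra localization.
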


\begin{proof}
Let $A:=A(Y)$ be identified through $A(f)$ to a subalgebra of $B:=A(X)$. By Lemma~\ref{lem:3.18}, there exist $a$ in $A$ and $b$ in $B$ such that $a^{-1} A$ and $b^{-1} B$ are finitely generated. Then $a^{-1} A \subseteq a^{-1} B \subseteq a^{-1} b^{-1} B$. By \citet[V.1.5]{bourbaki1972commutative}, there is some $s$ in $a^{-1} A$ such that $s^{-1}\left(a^{-1} A\right)$ is integrally closed in its quotient field. Replacing if necessary $a$ by $s a$, we may assume that $a^{-1} A$ is integrally closed. By \citet[V.3.1, Corollary 1]{bourbaki1972commutative}, there exist $T_{1}, \ldots, T_{n-m}$ algebraically independent elements of $(a b)^{-1} B$ such that $(a b)^{-1} B$ is integral over $A^{\prime}:=a^{-1} A\left[T_{1}, \ldots, T_{n-m}\right]$. Since $a^{-1} A$ is integrally closed, $A^{\prime}$ is also integrally closed; see \citet[V.1.3, Corollary 2]{bourbaki1972commutative}.

\label{op:41}%
Let $s$ be the separable degree of the algebraic field extension $Q(B): Q(A^{\prime})$. By \citet[V.2.3, Remark 3]{bourbaki1972commutative}, the fibers of the canonical map $X\left((a b)^{-1} B\right) \rightarrow X\left(A_{1}\right)$ have at most $s$ elements. Let $X_{1}:=X\left((a b)^{-1} B\right), Y_{1}:=X\left(a^{-1} A\right)$ and let $j_{X}, j_{Y}, f_{1}, g$ be respectively the maps dual to the inclusions $B \subseteq(a b)^{-1} B, A \subseteq a^{-1} A, a^{-1} A \subseteq(a b)^{-1} B$, $A^{\prime} \subseteq(a b)^{-1} B$. Then (a), (b), and (c) hold by construction.

If $k$ is algebraically closed, there exists an open set $V$ in $Y_{1} \times k^{n-m}$ such that $g^{-1}(z)$ has precisely $s$ elements for each $z$ in $V$; see \citet[Chapter 5, Proposition 6]{dieudonne1974cours}. Let $U \subseteq g^{-1}(V)$ be a principal open set. By Theorem~\ref{thm:3.14}, $f_{1}(U)$ contains a principal open set $W$. Let $\hat{U} \subseteq f_{1}^{-1}(W) \cap U$ be a principal open set. Then $g(\hat{U}) \subseteq V$ implies that, for any $z$ in $Y_{1} \times k^{n-m}$, either $g^{-1}(z) \cap \hat{U}$ is empty or it has exactly $s$ elements. Replacing $X_{1}$ by $\hat{U}$ and $Y_{1}$ by $W$ gives (d); $g$ is surjective by \citet[IV.4, Corollary 1]{dieudonne1974cours}.

Let $k=\mathbb{R}$. We replace $X_{1}, Y_{1}$ by nonsingular principal open subsets; see, for example, \citet[Theorem 12.12]{brocker1975differentiable}. We may further replace $X_{1}$ by a principal open set in which $d g$ has maximal rank $p$. By Theorem~\ref{thm:3.14}, $g\left(X_{1}\right)$ contains a strong open set of $Y_{1} \times k^{n-m}$, so by Sard's theorem (see \citealp[Theorem 2.11]{brocker1975differentiable}), $p=n$. Similarly with $f_{1}$. Thus (e) follows.

(f) For $k$ algebraically closed, see \citet[Chapter 4, Theorem 2]{dieudonne1974cours}; for $k=\mathbb{R}$ this follows from (e) and \citet[Theorem 1.9]{brocker1975differentiable}.
\end{proof}

\label{op:42}%
\section{Realization Theory}\label{ch:3}

We investigate in this section the general realization theory of an m-input, p-output polynomial response. Before doing this, we develop the formalism of Volterra series and prove some simple facts to be used later in the study of finiteness conditions.

Throughout this work, $k$ denotes an \textit{infinite} field; all parameters belong to $k$. The assumption that $k$ is infinite is merely a technical convenience, permitting the identification of polynomials and polynomial functions.

Both $m$ and $p$ (number of input and output channels, respectively) are positive integers, arbitrary but fixed throughout the discussion.

\subsection{Volterra Series}\label{sec:5}

We shall use the following notations:

\begin{quote}
$\mathbb{N}:=$ set of nonnegative integers.\\
$\mathbb{N}^{m}:=$ set of column $m$-vectors over $\mathbb{N}$.\\
$\mathbb{N}^{m \times t}:=$ set of $m \times t$ matrices over $\mathbb{N}$.\\
$\left(\mathbb{N}^{m}\right)^{*}:=$ set of all finite sequences of elements of $\mathbb{N}^{m}$,\\
\hspace*{3em}including the empty sequence denoted by $\Lambda$.
\end{quote}

If $\alpha$ is in $\left(\mathbb{N}^{m}\right)^{*}, \alpha=\alpha_{1} \ldots \alpha_{t} \neq \Lambda$ then $|\alpha|:=$ \textit{length} of $\alpha:=t ; \quad\|\alpha\|:=$ \textit{weight} of $\alpha:=\alpha_{1}+\ldots+\alpha_{t} ; \quad|\Lambda|=\|\Lambda\|:=0 ; \alpha_{i j}:=$ element in $i$-th row of column vector $\alpha_{j}, j=1, \ldots, t$, $i=1, \ldots, m$.

$\Delta:=$ set of \textit{proper} sequences: $\alpha$ in $\left(\mathbb{N}^{m}\right)^{*}$ belongs to $\Delta$ iff $\alpha=\Lambda$ or $\alpha=\alpha_{1} \cdots \alpha_{t}$ with $\alpha_{t} \neq 0$.

If $\alpha=\alpha_{1} \ldots \alpha_{t}$ and $\beta=\beta_{1} \ldots \beta_{s}$ are in $\left(\mathbb{N}^{m}\right)^{*}, \alpha \beta:=$ \textit{concatenation} of $\alpha$ and $\beta:=\alpha_{1} \ldots \alpha_{t} \beta_{1} \ldots \beta_{s}$.

If, say, $t \leq s$, then $\alpha+\beta:=$ \textit{sum} of $\alpha$ and $\beta:=\gamma_{1} \ldots \gamma_{s}$, where $\gamma_{i}:=\alpha_{i}+\beta_{i}$ for $i=1, \ldots, t$ (addition of $\alpha_{i}, \beta_{i}$ is rowwise in $\mathbb{N}^{m}$) and $\gamma_{i}:=\beta_{i}$ if $i=t+1, \ldots$, s. Similarly if $t>s$.

\label{op:43}%
Since an $\alpha=\alpha_{1} \ldots \alpha_{t}$ in $\left(\mathbb{N}^{m}\right)^{*}$ is a sequence of columns, we may regard $\alpha$ as an $m \times t$ matrix. Thus we may, and shall, make the following identification:
\[
\left(\mathbb{N}^{m}\right)^{*}=\bigcup_{t \geq 0} \mathbb{N}^{m \times t} .
\]
Under this identification, concatenation of $\alpha$ and $\beta$ is the same thing as formation of the block matrix $[\alpha \vdots \beta]$; addition is simply addition of matrices (augmented by zeroes to the right if necessary). Observe that the notation $\alpha_{i j}$ is consistent with the matrix interpretation.

Each of the operations, concatenation and addition, make $\left(\mathbb{N}^{m}\right)^{*}$ into a \textit{monoid}; in both cases $\Lambda$ is the identity. In both cases $\Delta$ is a \textit{submonoid}, i.e. if $\alpha$ and $\beta$ are both in $\Delta$, then both $\alpha \beta$ and $\alpha+\beta$ are in $\Delta$. We shall denote by $(\Delta, \cdot)$ and $(\Delta,+)$ the two monoids thus obtained. Both monoids will play a central role in our theory. The monoid ($\Delta,+$) is used in defining "polynomial" and ($\Delta, \cdot$) is used in obtaining finiteness conditions.

Let $\xi_{i j}, i=1, \ldots, m, j=1,2,3, \ldots$ denote denumerably many (distinct) indeterminates, and for each $j$ let $\xi_{j}$ denote the subset $\xi_{1 j}, \ldots, \xi_{m j}$. Let $\alpha=\alpha_{1} \ldots \alpha_{t}$ be in $\left(\mathbb{N}^{m}\right)^{*}$ and define
\[
\xi^{\alpha}:=\xi_{1}^{\alpha_{1}} \ldots \xi_{t}^{\alpha_{t}}
\]
where each $\xi_{j}^{\alpha_{j}}$ is itself a monomial
\[
\xi_{j}^{\alpha_{j}}:=\xi_{1 j}^{\alpha_{1 j}} \ldots \xi_{m j}^{\alpha_{m j}},
\]
with $\xi_{i j}^{0}=1$ for any $i, j$. We interpret $\xi^{\Lambda}$ as 1 . A formal power series $\psi$ is an infinite formal combination of the monomials $\xi^{\alpha}$ with coefficients in $k$. Since $\xi^{\alpha}=\xi^{\alpha 0}=\xi^{\alpha 00}=\ldots$, care must be taken not to count each $\xi^{\alpha}$ more than once. This is the reason for introducing $\Delta$. Thus, a \textit{formal power series in the} $\xi_{i j}$ \textit{with coefficients in} $k$ is a map
\label{op:44}%
\[
\psi: \Delta \rightarrow k: \alpha \mapsto \psi_{\alpha},
\]
denoted also as
\begin{equation}
\psi\left(\xi_{1}, \xi_{2}, \ldots\right)=\sum_{\alpha \text { in } \triangle} \psi_{\alpha} \xi^{\alpha}, \psi_{\alpha} \text { in } k . \label{eq:5.1}
\end{equation}
The set of formal power series can be made into a $k$-vector space by defining term-wise the addition of two power series $\psi, \hat{\psi}$ and the multiplication by scalars $r$ in $k$ :
\[
(\psi+r \hat{\psi})_{\alpha}:=\psi_{\alpha}+r \hat{\psi}_{\alpha} \text { for all } \alpha .
\]
Using the structure of the monoid ($\Delta,+$), we may also induce a \textit{convolution product} among series, which extends the multiplication of monomials defined as $\xi^{\alpha} \cdot \xi^{\beta}:=\xi^{\alpha+\beta}$. This extension is well-defined because ($\Delta,+$) is \textit{locally finite}, i.e. each $\alpha$ in $\Delta$ can be split in only finitely many ways as $\alpha_{1}+\alpha_{2}, \alpha_{i}$ in $\Delta$. The convolution of the power series $\psi_{1}, \psi_{2}$ is then defined globally by the formula:
\begin{equation}
(\psi \hat{\psi})_{\alpha}:=\sum_{\beta+\gamma=\alpha} \psi_{\beta} \hat{\psi}_{\gamma} \text { for all } \alpha \text { in } \Delta . \label{eq:5.2}
\end{equation}
The set of all formal power series forms a $k$-algebra when endowed with the operations of scalar product, sum, and convolution product. In fact, it is easy to prove that this algebra has no zero divisors.

We intend to derive response maps by evaluating power series for particular values of the $\xi_{i j}$. Thus we want to restrict our attention to a suitable class of series so that evaluation at arbitrary input values is welldefined. In accordance with related investigations in the literature, we shall call these \textit{Volterra} series.

Let $\psi$ be a formal power series. For each $\xi_{i j}$, $\psi$ may be rearranged into a power series in $\xi_{i j}$ whose coefficients are power series in the other variables. We call $\psi$ a (\textit{formal}) \textit{Volterra series} (\textit{over} $k$) iff, after each such rearrangement, $\psi$ becomes a \textit{polynomial} in $\xi_{i j}$.

\label{op:45}%
In other words, $\psi$ is a Volterra series precisely when there exist integers $d_{i j}$ such that any $\xi_{i j}$ appearing in $\psi$ has exponent $\leq d_{i j}$. The smallest bound $d_{i j}$ for the exponents of $\xi_{i j}$ is the \textit{degree} $\operatorname{deg}_{i j} \psi$ of $\psi$ in $\xi_{i j}$ (if $\xi_{i j}$ does not appear in $\psi, \operatorname{deg}_{i j} \psi:=-\infty$). Thus $\psi$ \textit{is a Volterra series if and only if} $\operatorname{deg}_{i j} \psi<\infty$ \textit{for all} $i$, $j$.

For example,
\[
\psi_{1}:=\xi_{11}^{2}+\xi_{11}^{2} \xi_{12}^{4}+\xi_{11}^{2} \xi_{12}^{4} \xi_{13}^{8}+\ldots+\xi_{11}^{2} \ldots \xi_{1 n}^{2^{n}}+\ldots
\]
is a Volterra series, and $\operatorname{deg}_{1 j} \psi_{1}=2^{j}$ (if $m>1$ then also $\operatorname{deg}_{i j} \psi_{1}=-\infty$ for $i=2, \ldots, m$);
\[
\psi_{2}:=\xi_{11}+\xi_{12}+\xi_{12}^{2}+\ldots+\xi_{1 n}+\ldots+\xi_{1 n}^{n}+\ldots
\]
is also a Volterra series, with $\operatorname{deg}_{1 j} \psi_{2}=j ;$ but
\[
\psi_{3}:=\xi_{11}+\xi_{11}^{2}+\xi_{11}^{3}+\ldots+\xi_{11}^{n}+\ldots
\]
is \textit{not} a Volterra series.

Since finiteness of the $\operatorname{deg}_{i j}$ is preserved under the algebra operations, the set of \textit{all} Volterra series is a $k$-algebra. This $k$-algebra is an integral domain, since the algebra of all power series does not have any zero divisors.

\begin{notation}\label{nota:5.3}
$\Psi_{k}$, or just $\Psi$, is the $k$-algebra of (formal) Volterra series over $k$. (For each $m$, a different $\Psi$.)
\end{notation}

\begin{definition}\label{def:5.4}
\textit{The degree of a Volterra series} $\psi$ \textit{is}
\[
\operatorname{deg} \psi:=\sup _{i, j}\left\{\operatorname{deg}_{i j} \psi\right\} \leq \infty .
\]
Thus $\operatorname{deg} \psi_{1}=\operatorname{deg} \psi_{2}=\infty$ for the above examples while, on the other hand,
\[
\operatorname{deg}\left(a_{1} \xi_{11}+a_{2} \xi_{12}+\ldots+a_{n} \xi_{1 n}+\ldots\right)=1 .
\]
\label{op:46}%
A column $p$-vector of Volterra series can be obviously regarded as a power series in the $\xi_{i j}$ with coefficients in $k^{p}$, via the identification
\[
\psi=\left(\begin{array}{c}
\psi^{(1)} \\
\vdots \\
\psi^{(p)}
\end{array}\right)=\sum_{\alpha \text { in } \Delta}\left(\begin{array}{c}
\psi_{\alpha}^{(1)} \\
\vdots \\
\psi_{\alpha}^{(p)}
\end{array}\right) \xi^{\alpha} .
\]
The definition of degree can be obviously generalized to the vector case via
\[
\operatorname{deg}_{i j} \psi:=\max \left\{\operatorname{deg}_{i j} \psi^{(1)}, \ldots, \operatorname{deg}_{i j} \psi^{(p)}\right\} .
\]
We let $\Psi^{p}$ denote the set of all vector Volterra series.
\end{definition}

Volterra series with $\operatorname{deg} \psi<\infty$ will be studied in detail in Section~\ref{ch:5}. An important tool in that study will be the concept of \textit{exponent series}, which we now introduce.

The concept of time-shift is incorporated into the context of Volterra series through a product of Volterra series which is based upon the monoid ($\Delta, \cdot$). We denote by $\psi \cdot \hat{\psi}$ the Volterra series defined by
\begin{equation}
(\psi \cdot \hat{\psi})_{\alpha}:=\sum_{\beta \gamma=\alpha} \psi_{\beta} \hat{\psi}_{\gamma} \text { for all } \alpha \text { in } \Delta . \label{eq:5.5}
\end{equation}
Note that $\psi \cdot \hat{\psi}$ is \textit{not} the same as the convolution product $\psi \hat{\psi}$ defined by \eqref{eq:5.2}. A change of notation is useful at this stage. Instead of writing $\psi=\sum \psi_{\alpha} \xi^{\alpha}$, we shall use the notation
\begin{equation}
\sum \psi_{\alpha} \alpha, \label{eq:5.6}
\end{equation}
and call the expression \eqref{eq:5.6} the \textit{exponent series} $\varphi$ \textit{associated to} $\psi$. Thus $\varphi$ is just a different notation for the same mathematical object $\psi$. If $\varphi, \hat{\varphi}$ are associated to $\psi, \hat{\psi}$, we denote $\varphi \hat{\varphi}:=\psi \cdot \hat{\psi}$. With these notations, the product \eqref{eq:5.5} can now be expressed simply as a linear extension of the multiplication among indeterminates:
\begin{equation}
\varphi \hat{\varphi}=\left(\sum \varphi_{\alpha} \alpha\right)\left(\sum \hat{\varphi}_{\beta} \beta\right)=\sum_{\alpha} \sum_{\beta} \varphi_{\alpha} \hat{\varphi}_{\beta} \alpha \beta . \label{eq:5.7}
\end{equation}

\label{op:47}%
Exponent series provide a new way of expressing the condition "deg $\psi<\infty$ ". For this, let
\begin{equation}\label{eq:5.8}
\operatorname{supp} \psi=\operatorname{supp} \varphi:=\left\{\alpha \text{ in } \Delta \mid \psi_{\alpha} \neq 0\right\},
\end{equation}
be the \textit{support} of $\psi$ (or of its associated exponent series).

\begin{definition}\label{def:5.9}
\textit{The support of} $\psi$ \textit{is finitely generated iff there exists a finite subset} $J=J_{\psi}$ \textit{of} $\mathbb{N}^{m}$ \textit{such that}
\begin{equation}\label{eq:5.10}
\operatorname{supp} \psi \subseteq \Delta_{J}:=J^{*} \cap \Delta ,
\end{equation}
i.e., \textit{each column} $\alpha_{j}$ \textit{of} $\alpha=\alpha_{1} \ldots \alpha_{t}$ \textit{is in} $J$, \textit{for each} $\alpha$ \textit{in} $\operatorname{supp} \psi$.
\end{definition}

In terms of the exponent series $\varphi$ associated to $\psi$, Definition~\ref{def:5.9} \textit{means that} $\varphi$ \textit{is a power series in the finitely many (noncommuting) variables in} $J$. Since for any integer $d$ there are only finitely many vectors in $\mathbb{N}^{m}$ with all entries $\leq d$, we have the following trivial

\begin{lemma}\label{lem:5.11}
deg $\psi<\infty$ if and only if supp $\psi$ is finitely generated.
\end{lemma}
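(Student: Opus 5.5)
The plan is to prove the two implications directly from the definitions of $\operatorname{deg}\psi$ (Definition~\ref{def:5.4}) and of finitely generated support (Definition~\ref{def:5.9}), using only the fact that a box of bounded entries in $\mathbb{N}^{m}$ is finite. Throughout I treat the vector case $\psi\in\Psi^{p}$ and the scalar case together, since in both $\operatorname{deg}_{ij}\psi$ is the largest exponent of $\xi_{ij}$ over all $\alpha$ in $\operatorname{supp}\psi$.

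\emph{($\Rightarrow$)} Suppose $\operatorname{deg}\psi=d<\infty$. For every $\alpha=\alpha_{1}\ldots\alpha_{t}$ in $\operatorname{supp}\psi$, the exponent $\alpha_{ij}$ of $\xi_{ij}$ in $\xi^{\alpha}$ satisfies $\alpha_{ij}\le\operatorname{deg}_{ij}\psi\le d$. So every column $\alpha_{j}$ lies in
\[
J:=\{\beta\in\mathbb{N}^{m}\mid \beta_{i}\le d,\ i=1,\ldots,m\},
\]
which has $(d+1)^{m}$ elements. Since $\alpha$ is also proper, $\alpha\in J^{*}\cap\Delta=\Delta_{J}$, and \eqref{eq:5.10} holds with this finite $J$. (If $\psi=0$, take $J=\varnothing$.)

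\emph{($\Leftarrow$)} Suppose $\operatorname{supp}\psi\subseteq\Delta_{J}$ with $J$ finite, and set $d:=\max\{\beta_{i}\mid\beta\in J,\ 1\le i\le m\}$, with $d:=0$ if $J=\varnothing$. Each monomial $\xi^{\alpha}$ with $\psi_{\alpha}\ne0$ has all its columns in $J$, so every exponent $\alpha_{ij}$ is at most $d$. Hence $\operatorname{deg}_{ij}\psi\le d$ for all $i,j$, and so $\operatorname{deg}\psi\le d<\infty$.

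There is no real obstacle. The only point needing care is the identification of $\operatorname{deg}_{ij}\psi$ with $\sup\{\alpha_{ij}\mid\alpha\in\operatorname{supp}\psi\}$. This is immediate from writing $\psi$ in the form \eqref{eq:5.1}, because distinct proper $\alpha$ give distinct monomials, so no cancellation or recounting can occur.
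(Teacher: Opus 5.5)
Your proof is correct and is the paper's own argument. The paper calls the lemma trivial because only finitely many vectors in $\mathbb{N}^{m}$ have all entries $\le d$; you have simply written out both directions of that observation. One cosmetic point: your special case should be ``$\operatorname{deg}\psi=-\infty$'' rather than ``$\psi=0$'', since a nonzero constant series also involves no variables. Taking $J=\varnothing$ still works there, because then $\operatorname{supp}\psi\subseteq\{\Lambda\}=\Delta_{\varnothing}$.
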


The fact that $\operatorname{deg} \psi<\infty$ is equivalent to $\varphi$ being a series in finitely many variables will be exploited in Section~\ref{ch:5}.

We now return to our study of arbitrary Volterra series. Their introduction was motivated by the need of evaluating series at arbitrary input values. We now study these evaluations. In fact, we study a more general type of operation on Volterra series.

Let $K$ be an overring of $k$, and suppose given an infinite family $r=\left\{r_{i j}, i=1, \ldots, m, j=1,2, \ldots\right\}$ of elements of $K$. As before, we introduce the shorthand notation $r_{j}=r_{1 j}, \ldots, r_{m j}$ for each $j$, and, for each $\alpha=\alpha_{1} \ldots \alpha_{t}$ in $\Delta$,
\[
r^{\alpha}:=r_{1}^{\alpha_{1}} \ldots r_{t}^{\alpha_{t}}:=r_{11}^{\alpha_{11}} \ldots r_{m t}^{\alpha_{m t}}
\]
(products in the ring $K$).

\label{op:48}%
The definition of a Volterra series $\psi$ as a power series with all $\operatorname{deg}_{i j} \psi<\infty$ is clearly equivalent to the requirement that $\psi$ be a polynomial when expressed as a series in each finite subset of variables $\xi_{1}, \ldots, \xi_{t}$, for any fixed $t$. Thus we may write
\begin{equation}\label{eq:5.12}
\psi\left(\xi_{1}, \xi_{2}, \ldots\right)=\sum_{\substack{\alpha \text { in } \Delta \\|\alpha| \leq t}} \zeta_{\alpha}\left(\xi_{t+1}, \ldots\right) \xi^{\alpha} ,
\end{equation}
and $\zeta_{\alpha}=0$ for all but finitely many $\alpha$. Let us make the substitutions $\xi_{i j} \mapsto r_{i j}$ in \eqref{eq:5.12}, for $i=1, \ldots, m$ and $j=1, \ldots, t$, performing the resulting products between the $r_{i j}$. We obtain
\begin{equation}\label{eq:5.13}
\psi\left(r_{1}, \ldots, r_{t}, \xi_{t+1}, \ldots\right)=\sum_{\substack{\alpha \text { in } \Delta \\|\alpha| \leq t}} \zeta_{\alpha}\left(\xi_{t+1}, \ldots\right) r^{\alpha} .
\end{equation}

Since, as we already remarked, the sum in \eqref{eq:5.13} is finite, a further evaluation $\xi_{i j} \mapsto 0$, $i=1, \ldots, m$, $j>t$, results in a finite linear combination
\begin{equation}\label{eq:5.14}
\psi\left(r_{1}, \ldots, r_{t}, 0, \ldots\right)=\sum \zeta_{\alpha}(0, \ldots) r^{\alpha}, \zeta_{\alpha}(0, \ldots) \text{ in } k .
\end{equation}

When $K=k$, \eqref{eq:5.14} is then in $k$, the result of substituting a finite "input" sequence into the Volterra series. In general we obtain an element of $K$, and the assignment
\begin{equation}
\Psi \rightarrow K: \psi \mapsto \psi\left(r_{1}, \ldots, r_{t}, 0, \ldots\right), \label{eq:5.15}
\end{equation}
is clearly a $k$-algebra homomorphism.

We may instead apply to \eqref{eq:5.13} the further substitutions $\xi_{i j} \mapsto \xi_{i, j-t}, i=1, \ldots, m, j>t$, to obtain
\begin{equation}
\psi\left(r_{1}, \ldots, r_{t}, \xi_{1}, \xi_{2}, \ldots\right)=\sum \zeta_{\alpha}\left(\xi_{1}, \ldots\right) r^{\alpha} . \label{eq:5.16}
\end{equation}
Since \eqref{eq:5.16} is a finite $K$-combination of Volterra series over $k \subseteq K$, we may regard \eqref{eq:5.16} as a Volterra series with coefficients in $K$; this justifies the notation $\psi\left(r_{1}, \ldots, r_{t}\right)\left(\xi_{1}, \xi_{2}, \ldots\right)$ or just
\label{op:49}%
\begin{equation}\label{eq:5.17}
\psi\left(r_{1}, \ldots, r_{t}\right) ,
\end{equation}
instead of $\psi\left(r_{1}, \ldots, r_{t}, \xi_{1}, \ldots\right)$. An alternative is to view $\psi\left(r_{1}, \ldots, r_{t}\right)$ as an element of $\Psi \otimes_{k} K$. The assignment
\begin{equation}\label{eq:5.18}
\Psi \rightarrow \Psi \otimes_{k} K: \psi \mapsto \psi\left(r_{1}, \ldots, r_{t}\right) ,
\end{equation}
is clearly a $k$-algebra homomorphism. Moreover, \eqref{eq:5.18} is an isomorphism when the $r_{i j}$ are algebraically independent over $\Psi$, since it just amounts to a relabeling of variables.

In order to state a technical lemma to be used later, we need the following

\begin{notation}\label{nota:5.19}
Let $t \geq 0$ be an integer. Then
\[
\epsilon_{t}: \Psi \rightarrow k\left[\xi_{1}, \ldots, \xi_{t}\right],
\]
is the homomorphism $\psi \mapsto \psi\left(\xi_{1}, \ldots, \xi_{t}, 0, \ldots\right)$.
\end{notation}

Since $k\left[\xi_{1}, \ldots, \xi_{t}\right]$ is an integral domain, ker $\epsilon_{t}$ is a prime ideal. Also,
\begin{equation}\label{eq:5.20}
\bigcap_{t \geq 0} \text{ ker } \epsilon_{t}=\{0\} .
\end{equation}

Let $s \leq t$. Then there exists an onto homomorphism
\[
\epsilon_{s, t}: k\left[\xi_{1}, \ldots, \xi_{t}\right] \rightarrow k\left[\xi_{1}, \ldots, \xi_{s}\right],
\]
obtained by setting $\xi_{i j}=0$ for $i=1, \ldots, m$ and $j=s+1, \ldots, t$. By definition of $\epsilon_{t}$,
\begin{equation}\label{eq:5.21}
\epsilon_{s, t} \circ \epsilon_{t}=\epsilon_{s} .
\end{equation}

Let $B$ be a $k$-subalgebra of $\Psi$ and write $B_{t}:=\epsilon_{t}(B)$. Then $R_{t}:=$ ker $\epsilon_{t} \mid B=B \cap$ ker $\epsilon_{t}$ is a prime ideal of $B$ and $\cap_{t} R_{t} \subseteq \bigcap_{t}$ ker $\epsilon_{t}=\{0\}$. We prove a technical

\begin{lemma}\label{lem:5.22}
trdeg $B=\sup _{t \geq 0}\left\{\operatorname{trdeg} B_{t}\right\}$.
\end{lemma}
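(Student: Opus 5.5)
We prove the two inequalities separately. Since $B \subseteq \Psi$ and $\Psi$ has no zero divisors, $B$ is an integral domain, so $\operatorname{trdeg} B$ is the ordinary transcendence degree. Each $B_{t}=\epsilon_{t}(B)\subseteq k[\xi_1,\ldots,\xi_t]$ is also an integral domain.

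For the easy inequality, $\epsilon_{t}\mid B: B\rightarrow B_{t}$ is an onto homomorphism of integral domains. Lemma~\ref{lem:4.2} then gives $\operatorname{trdeg} B_{t}\leq \operatorname{trdeg} B$ for every $t$, and hence $\sup_{t}\operatorname{trdeg} B_{t}\leq \operatorname{trdeg} B$. Lemma~\ref{lem:4.2} is stated without a finiteness hypothesis for the inequality; if one prefers to avoid that, one can argue directly. Any algebraically independent family $\epsilon_t(b_1),\ldots,\epsilon_t(b_r)$ lifts to $b_1,\ldots,b_r$, and these are algebraically independent, because a relation $P(b_1,\ldots,b_r)=0$ maps under $\epsilon_t$ to $P(\epsilon_t(b_1),\ldots)=0$.

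For the reverse inequality, it suffices to show the following. Whenever $b_1,\ldots,b_r\in B$ are algebraically independent over $k$, there is a $t$ with $\epsilon_t(b_1),\ldots,\epsilon_t(b_r)$ algebraically independent. This gives $\operatorname{trdeg} B_t\geq r$, and taking the sup over all finite independent families handles both the finite and the infinite case. Let $C:=k[b_1,\ldots,b_r]\subseteq B$; it is a polynomial ring in $r$ variables. The ideals $P_t:=C\cap\ker\epsilon_t$ are prime. By \eqref{eq:5.21}, $\ker\epsilon_t\subseteq\ker\epsilon_s$ for $s\leq t$, so the $P_t$ form a decreasing chain. By \eqref{eq:5.20}, $\bigcap_t P_t=\{0\}$.

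The key step, and the main obstacle, is to conclude that $P_t=0$ for some $t$. A decreasing chain of ideals in a Noetherian ring need not stabilize, so Noetherianity alone is not enough. Primality is what makes it work, and I would argue via dimension. The transcendence degree $\operatorname{trdeg} C/P_t=\operatorname{trdeg}\epsilon_t(C)$ is nondecreasing in $t$ and bounded by $r$, by the first part. Hence it is eventually constant, equal to some $d$, for all $t\geq t_0$. For $t\geq t_0$ the natural surjections $C/P_t\rightarrow C/P_{t_0}$ are maps between finitely generated domains of equal finite transcendence degree. By Lemma~\ref{lem:4.2} they are isomorphisms, so $P_t=P_{t_0}$ for all $t\geq t_0$. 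Therefore $P_{t_0}=\bigcap_t P_t=0$, so $\epsilon_{t_0}\mid C$ is injective and $\epsilon_{t_0}(b_i)$ are algebraically independent. This shows $\operatorname{trdeg} B_{t_0}\geq r$ and completes the proof.
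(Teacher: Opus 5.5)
Your proof is correct and is essentially the paper's argument. The paper runs the same stabilization directly on $B$, assuming $\sup_t \operatorname{trdeg} B_t<\infty$: the transcendence degrees of the images are nondecreasing and bounded, so the equality case of Lemma~\ref{lem:4.2} makes the primes $R_t=B\cap\ker\epsilon_t$ stabilize, \eqref{eq:5.20} then forces them to vanish, and so $B\simeq B_t$ for all large $t$. You instead apply this to the finitely generated polynomial subring $C=k[b_1,\ldots,b_r]$, which keeps Lemma~\ref{lem:4.2} in the finitely generated setting and handles the infinite case uniformly, though it gives injectivity of $\epsilon_t$ only on each such $C$ rather than on all of $B$.
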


\label{op:50}%

\begin{proof}
By Lemma~\ref{lem:4.2}, trdeg $B \geq \sup \left\{\operatorname{trdeg} B_{t}\right\}$. Thus it will be enough to prove that $\sup \left\{\right.$ trdeg $\left.B_{t}\right\}=n<\infty$ implies $B \simeq B_{t}$ for all large $t$. Since each $\epsilon_{s, t} \mid B_{t}: B_{t} \rightarrow B_{s}$ is onto, the integers trdeg $B_{t}$ form an ascending chain, bounded above by $n$. So trdeg $B_{r}=\operatorname{trdeg} B_{r+1}=\ldots$ for some $r$. By Lemma~\ref{lem:4.2}, $\epsilon_{r, t} \mid B_{t}$ is an isomorphism for each $t \geq r$. So by \eqref{eq:5.21}, $R_{t}=R_{r}$. Therefore $R_{r}=\cap R_{t}=\{0\}$ and $\epsilon_{t} \mid B: B \simeq B_{t}$ for all $t \geq r$.
\end{proof}

\subsection{Construction of $\Omega$ and $\Gamma$}\label{sec:6}

We now begin to define response maps.

\begin{definition}\label{def:6.1}
\textit{The input space} $\Omega$ \textit{is the} $k$-\textit{space} $X(\Psi)$.
\end{definition}

To verify that $\Omega$ is well-defined according to the setup developed in Section~\ref{ch:2}, we remark that $\Psi$ is $k$-reduced. Indeed, ker $\epsilon_{t}$ is a closed ideal for each $t$, since $k\left[\xi_{1}, \ldots, \xi_{t}\right]$ is $k$-reduced. So by \eqref{eq:5.20} the ideal $\{0\}$ is closed, i.e. $\Psi$ is $k$-reduced.

\begin{definition}\label{def:6.2}
\textit{The space of input values is}
\[
U:=k^{m} .
\]
The algebra of polynomial functions on the $k$-space $U$ is $k\left[T_{1}, \ldots, T_{m}\right]$. Therefore the algebra $A\left(U^{t}\right)$ of polynomial functions on the $t$-fold product of $k$-spaces $U^{t}=U \times \ldots \times U$ is the ring of polynomials in $m t$ variables. So we may denote $A\left(U^{t}\right)$ by $k\left[\xi_{1}, \ldots, \xi_{t}\right]$. We adopt the notational convention of writing the sequences ($u_{1}, u_{2}, \ldots, u_{t}$) in $U^{t}$ in an inverted order ($u_{t}, \ldots, u_{1}$). Thus, the coordinate function $\xi_{i j}$ acts on elements of $U^{t}$ by $\xi_{i j}\left(u_{s}, \ldots, u_{1}\right):=u_{i j}=i$-th entry of $j$-th vector counted from the right (for instance, if $m=1$, then $\xi_{12}(0,1,0,0,0)=0$, $\xi_{14}(0,1,0,0,0)=1$). This notation will be consistent with the following interpretation: ($u_{t}, \ldots, u_{1}$) represents an input sequence such that $u_{j}$ is the input at time $1-j$.
\end{definition}

\label{op:51}%
Using this notation, $\epsilon_{t}$ gives rise to a closed immersion $i_{t}:=X\left(\epsilon_{t}\right): U^{t} \rightarrow \Omega$. Similarly, each $i_{s, t}:=X\left(\epsilon_{s, t}\right): U^{s} \rightarrow U^{t}$ is a closed embedding, mapping sequences $\left(u_{s}, \ldots, u_{1}\right)$ of $U^{s}$ into $\left(0, \ldots, 0, u_{s}, \ldots, u_{2}, u_{1}\right)$ in $U^{t}$. By \eqref{eq:5.21} the following diagram commutes:

\begin{equation}
\begin{tikzpicture}[x=1cm,y=1cm]
\node (Us) at (0,0) {$U^s$};
\node (Ut) at (4,0) {$U^t$};
\node (Om) at (2,-2) {$\Omega$};
\draw[arr] (Us) -- node[above,lab] {$i_{s,t}$} (Ut);
\draw[arr] (Us) -- node[above left,lab] {$i_s$} (Om);
\draw[arr] (Ut) -- node[above right,lab] {$i_t$} (Om);
\end{tikzpicture} \label{eq:6.3}
\end{equation}
Therefore the sets $i_{t}\left(U^{t}\right), t \geq 0$, form an ascending chain in $\Omega$ whose union (limit) may be identified with the set of all infinite sequences ($\ldots, u_{n}, \ldots, u_{2}, u_{1}$) with finitely many nonzero entries; the rule
\begin{equation}\label{eq:6.4}
\left(\ldots, 0, u_{t}, \ldots, u_{1}\right) \sim u_{t} z^{t-1}+\ldots+u_{1} ,
\end{equation}
permits identifying this union with the set of polynomials over some symbol, say $z$, with coefficients in $U=k^{m}$. Thus we have the

\begin{notation}\label{nota:6.5}
$U[z]:=\bigcup_{t \geq 0} i_{t}\left(U^{t}\right)$.
\end{notation}

We shall denote by (0) the sequence with $u_{t}=0$ for all $t$.

\begin{lemma}\label{lem:6.6}
$\overline{U[z]}=\Omega$.
\end{lemma}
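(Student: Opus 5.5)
The plan is to use Remark~\ref{rem:2.8}(iii): the Zariski closure of $U[z]$ in $\Omega=X(\Psi)$ is $V(I(U[z]))$. It therefore suffices to show that the annihilator ideal $I(U[z])$ is $\{0\}$. By Proposition~\ref{prop:2.3}(a), or directly since $V(\{0\})=X(\Psi)$, this gives $\overline{U[z]}=\Omega$.

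To compute $I(U[z])$, I will use the union $U[z]=\bigcup_t i_t(U^t)$ together with Proposition~\ref{prop:2.7}(d). This gives $I(U[z])=\bigcap_t I(i_t(U^t))$. Next I identify each $I(i_t(U^t))$ with $\ker\epsilon_t$. A function $\psi\in\Psi$ vanishes on $i_t(U^t)$ iff $\psi\circ i_t=0$ on $U^t$, i.e. iff $A(i_t)(\psi)=\epsilon_t(\psi)$ is the zero function on $U^t$. Since $k$ is infinite, a polynomial in $k[\xi_1,\ldots,\xi_t]$ vanishing on all of $U^t=k^{mt}$ is the zero polynomial (Example~\ref{ex:1.4}). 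Hence $\epsilon_t(\psi)=0$, and so $I(i_t(U^t))=\ker\epsilon_t$. Equivalently, Lemma~\ref{lem:3.11}(c) applied with $I=\{0\}$ to $i_t=X(\epsilon_t)$ gives $\overline{i_t(U^t)}=V(\ker\epsilon_t)$, which yields the same identification.

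Then \eqref{eq:5.20} gives $\bigcap_t\ker\epsilon_t=\{0\}$, so $I(U[z])=\{0\}$ and the closure is $V(\{0\})=\Omega$.

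The only point needing care is the identification $A(i_t)=\epsilon_t$, i.e. that evaluating $\psi$ at the point $i_t(u_t,\ldots,u_1)$ of $\Omega$ really equals $\epsilon_t(\psi)(u_t,\ldots,u_1)$. This holds by the definition $i_t=X(\epsilon_t)$ in Lemma~\ref{lem:3.7}, since the point $i_t(u)$ is the homomorphism $e_u\circ\epsilon_t$. Beyond that, the argument is routine, and the substantive input is \eqref{eq:5.20}: a Volterra series all of whose finite truncations vanish is zero, because every monomial $\xi^\alpha$ involves only finitely many of the variables $\xi_j$.
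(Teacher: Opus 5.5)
Your argument is correct and is essentially the paper's proof. The paper's proof is the one-liner ``clear by Proposition~\ref{prop:2.7}(f) and \eqref{eq:5.20}'': the closure of $U[z]$ consists of all $x$ with $\ker x\supseteq\bigcap_{z\in U[z]}\ker z$, and this intersection is $\bigcap_t\ker\epsilon_t=\{0\}$. You make explicit the identification $I(i_t(U^t))=\ker\epsilon_t$, which uses that $k$ is infinite and which the paper leaves implicit.
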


\begin{proof}
Clear by Proposition~\ref{prop:2.7}(f) and \eqref{eq:5.20}. \hfill $\square$

\textit{We may regard} $\Omega$ \textit{as a "completion" of} $U[z]$. \textit{This will allow us to endow} $U[z]$ \textit{with the geometric structure carried by a subset of the $k$-space} $\Omega$.

There will be no danger in identifying $U^{t}$ with $i_{t}\left(U^{t}\right)$, so that we may think of $U^{t}$ as the closed subset of $U[z]$ corresponding to the sequences $\left(\ldots, 0, u_{t}, \ldots, u_{1}\right)$.

Now let $K$ be the field obtained from $k$ by adjunction of denumerably many new indeterminates $S_{i j}, i=1, \ldots, m, j=1,2, \ldots$. Applying \eqref{eq:5.18} with $r_{i j}:=S_{i j}$, we define
\label{op:52}%
\begin{equation}\label{eq:6.7}
\theta_{t}: \Psi \rightarrow \Psi \otimes k\left[S_{1}, \ldots, S_{t}\right]: \psi \mapsto \psi\left(S_{1}, \ldots, S_{t}\right) .
\end{equation}

By duality, there is a polynomial map
\begin{equation}
\delta_{t}:=X\left(\theta_{t}\right): \Omega \times U^{t} \rightarrow \Omega . \label{eq:6.8}
\end{equation}
Thus $\delta_{t}$ is the map whose transpose $A\left(\delta_{t}\right)=\theta_{t}$. We claim that the polynomial action of $U^{t}$ upon $\Omega$ given by $\delta_{t}$ is the natural extension to $\Omega$ of the "concatenation" maps
\begin{align}
\delta_{s, t}: U^{s} \times U^{t} \rightarrow U^{s+t}:\left(\left(u_{s}, \ldots, u_{1}\right),\right. & \left.\left(\hat{u}_{t}, \ldots, \hat{u}_{1}\right)\right) \mapsto  \label{eq:6.9}\\
& \mapsto\left(u_{s}, \ldots, u_{1}, \hat{u}_{t}, \ldots, \hat{u}_{1}\right) . \notag\end{align}
In other words, we have:
\end{proof}

\begin{lemma}\label{lem:6.10}
The following diagram commutes for each $s$, $t$:

\[
\begin{tikzpicture}[x=1cm,y=1cm]
\node (Ust) at (0,0) {$U^s\times U^t$};
\node (Usp) at (5,0) {$U^{s+t}$};
\node (OmUt) at (0,-2.5) {$\Omega\times U^t$};
\node (Om) at (5,-2.5) {$\Omega$};
\draw[arr] (Ust) -- node[above,lab] {$\delta_{s,t}$} (Usp);
\draw[arr] (Ust) -- node[left,lab] {$i_s\times 1$} (OmUt);
\draw[arr] (OmUt) -- node[below,lab] {$\delta_t$} (Om);
\draw[arr] (Usp) -- node[right,lab] {$i_{s+t}$} (Om);
\end{tikzpicture}
\]
\end{lemma}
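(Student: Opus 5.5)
By the duality of Corollary~\ref{cor:3.8} (Lemma~\ref{lem:3.7}), two polynomial maps $U^{s}\times U^{t}\rightarrow\Omega$ coincide exactly when their transposes $\Psi\rightarrow k[\xi_1,\ldots,\xi_s]\otimes k[S_1,\ldots,S_t]$ coincide. So the plan is to compute both transposes and compare them on an arbitrary Volterra series $\psi$. Since $k$ is infinite, an equivalent and more concrete route is to show that the two polynomial functions agree at every point $((u_s,\ldots,u_1),(\hat u_t,\ldots,\hat u_1))$. Either version reduces to a bookkeeping identity about substitution into Volterra series.

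\textbf{Upper-right route.} By Lemma~\ref{lem:1.8}(b), $A(U^s\times U^t)=k[\xi_1,\ldots,\xi_s]\otimes k[S_1,\ldots,S_t]$. Under the inverted-order convention of Definition~\ref{def:6.2}, the concatenated sequence $(u_s,\ldots,u_1,\hat u_t,\ldots,\hat u_1)$ has $\hat u_j$ in position $j$ for $j\le t$ and $u_j$ in position $t+j$ for $j\le s$. Hence $A(\delta_{s,t})$ sends the coordinate $\xi_j$ of $U^{s+t}$ to $S_j$ for $j\le t$, and sends $\xi_{t+j}$ to $\xi_j$ for $j\le s$. Composing with $\epsilon_{s+t}$, the transpose of $i_{s+t}\circ\delta_{s,t}$ is
\[
\psi\mapsto\psi(S_1,\ldots,S_t,\xi_1,\ldots,\xi_s,0,0,\ldots).
\]

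\textbf{Lower-left route.} The transpose of $\delta_t\circ(i_s\times 1)$ is $(\epsilon_s\otimes 1)\circ\theta_t$. By \eqref{eq:6.7} and \eqref{eq:5.16}, we have $\theta_t(\psi)=\sum_\alpha\zeta_\alpha(\xi_1,\ldots)S^\alpha$, a finite sum over $|\alpha|\le t$. The shift $\xi_{j}\mapsto\xi_{j-t}$ has already been applied here. Applying $\epsilon_s$ to each $\zeta_\alpha$ yields $\sum_\alpha\zeta_\alpha(\xi_1,\ldots,\xi_s,0,\ldots)S^\alpha$. By \eqref{eq:5.12}, this is exactly $\psi$ with $S_1,\ldots,S_t$ substituted into the first $t$ variable blocks, $\xi_1,\ldots,\xi_s$ into the next $s$ blocks, and $0$ into all later blocks. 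This matches the upper-right route, so the two transposes agree and the diagram commutes.

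\textbf{Main obstacle.} The only delicate point is justifying that substitution into a Volterra series can be carried out in stages: first $S_1,\ldots,S_t$ into the first $t$ blocks, then $\xi_1,\ldots,\xi_s,0,\ldots$ into the rest. This holds because, by \eqref{eq:5.12} applied with $s+t$, $\psi$ is a polynomial in $\xi_1,\ldots,\xi_{s+t}$ whose coefficients are series in the later variables. Each of the maps involved is a $k$-algebra homomorphism acting on the finitely many relevant monomials, so the staged substitution and the one-step substitution give the same result. I would make this explicit by writing $\psi=\sum_{|\alpha|\le s+t}\zeta_\alpha(\xi_{s+t+1},\ldots)\xi^\alpha$ and checking that both composites send this to $\sum_\alpha\zeta_\alpha(0,\ldots)\,(S,\xi)^{\alpha}$, where $(S,\xi)$ denotes the sequence $S_1,\ldots,S_t,\xi_1,\ldots,\xi_s$. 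Beyond this, the work is a matter of tracking the inverted indexing convention correctly.
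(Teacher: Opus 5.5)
Your proposal is correct and follows the paper's own proof: dualize, identify $A(\delta_{s,t})$ as $\xi_{ij}\mapsto S_{ij}$ for $j\le t$ and $\xi_{ij}\mapsto \xi_{i,j-t}$ for $j>t$, and check that $(\epsilon_s\otimes 1)\circ\theta_t = A(\delta_{s,t})\circ\epsilon_{s+t}$ on every $\psi\in\Psi$. Your staged-substitution argument via \eqref{eq:5.12} and \eqref{eq:5.16} spells out the step the paper dismisses as holding ``by definition of $\epsilon_s,\epsilon_{s+t},\theta_t$.''
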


\begin{proof}
By duality, it is necessary and sufficient to verify that the following diagram commutes:

\[
\begin{tikzpicture}[x=1cm,y=1cm]
\node (Psi) at (0,0) {$\Psi$};
\node (Psist) at (7,0) {$\Psi[S_1,\ldots,S_t]$};
\node (K1) at (0,-3) {$k[\xi_1,\ldots,\xi_{s+t}]$};
\node (K2) at (7,-3) {$k[\xi_1,\ldots,\xi_s][S_1,\ldots,S_t]$};
\draw[arr] (Psi) -- node[above,lab] {$A(\delta_t)$} (Psist);
\draw[arr] (Psi) -- node[left,lab] {$\epsilon_{s+t}$} (K1);
\draw[arr] (Psist) -- node[right,lab] {$\epsilon_s[S_1,\ldots,S_t]$} (K2);
\draw[arr] (K1) -- node[above,lab] {$A(\delta_{s,t})$} (K2);
\end{tikzpicture}
\]
Note that, in the coordinates displayed, $A\left(\delta_{s, t}\right)$ is given by $\xi_{i j} \mapsto S_{i j}$ if $j=1, \ldots, t$, and by $\xi_{i j} \mapsto \xi_{i, j-t}$ if $j>t$. Thus the diagram commutes by definition of $\epsilon_{s}, \epsilon_{s+t}, \theta_{t}$.
\end{proof}

In view of Lemma~\ref{lem:6.10}, there will be no danger in denoting the operations $\delta_{s, t}$, as well as $\delta_{t}$, for all $s$ and $t$, simply by concatenation
\label{op:53}%
\begin{equation}\label{eq:6.11}
\omega v:=\delta_{t}(\omega, v) \text{ for } \omega \text{ in } \Omega, v \text{ in } U^{t} .
\end{equation}

Let $f$ be a polynomial function $\Omega \rightarrow k$. Since $A(\Omega)=\iota(\Psi) \simeq \Psi$, clearly $f$ can be identified with a Volterra series $\psi=\psi_{f}$. So, by Corollary~\ref{cor:3.8}, $\psi \mid U^{t}=\psi \circ i_{t}$,
\[
A\left(\psi \circ i_{t}\right)(T)=\left(A\left(i_{t}\right) \circ A(\psi)\right)(T)=A\left(i_{t}\right)(\psi)=\epsilon_{t}(\psi) ;
\]
thus (with the notation in \eqref{eq:5.14})
\begin{equation}\label{eq:6.12}
\psi \mid U^{t}:\left(u_{t}, \ldots, u_{1}\right) \mapsto \psi\left(u_{1}, \ldots, u_{t}, 0, \ldots\right) .
\end{equation}

Since $U[z]$ is the increasing union of the $U^{t}$, a map $U[z] \rightarrow X$ is specified by its restrictions to the $U^{t}$. In particular, take $\psi$ in $\Psi$. Then by \eqref{eq:6.12} the value of $\psi \mid U[z]$ at $\left(\ldots, u_{n}, \ldots, u_{1}\right)$ is obtained by evaluating the power series $\psi$ at $\xi_{i j}:=u_{i j}=i$-th row of $u_{j}$. This evaluation is well defined because almost all $u_{j}$ are zero. A continuous function is already determined by its values in a dense subset, so by Lemma~\ref{lem:6.6} the assignment $\psi \mapsto \psi \mid U[z]$ is one-to-one.

Thus the following mild abuse of terminology is justified:

\begin{definition}\label{def:6.13}
A \textit{polynomial map} $\ell: U[z] \rightarrow X$, \textit{where} $X$ \textit{is a} $k$-\textit{space, is the restriction of a polynomial map} $\ell^{\Omega}: \Omega \rightarrow X$.
\end{definition}

The gist of the introduction of $\Omega$ is that the abstract set of input sequences $U[z]$ can now be exhibited as a dense subset of the $k$-space $\Omega$ and thus $U[z]$ is itself endowed (by restriction) with coordinates and polynomial functions.

Thus the polynomial functions $f: U[z] \rightarrow k$ are in a bijective correspondence with Volterra series $\psi$, via evaluation of $\psi$ at $\xi_{i j}:=u_{i j}$. More generally, a polynomial map $f: U[z] \rightarrow k^{p}$ is uniquely determined by the functions $\pi_{j} \circ f: U[z] \rightarrow k$, where $\pi_{j}, j=1, \ldots, p$ are the natural projections $k^{p} \rightarrow k$. So polynomial maps $f: U[z] \rightarrow k^{p}$ are in a bijective correspondence with $\Psi^{p}$, the ordered $p$-tuples of Volterra series.

\label{op:54}%

\begin{definition}\label{def:6.14}
\textit{The space of output values is} $Y:=k^{p}$.
\end{definition}

Thus,
\begin{equation}
\Psi^{p} \approx \text { polynomial maps } U[z] \rightarrow Y . \label{eq:6.15}
\end{equation}
Finally, we define
\[
U^{*}:=\bigcup_{t \geq 0} U^{t} \quad (\textit{disjoint} \text { union). }
\]
This set should not be confused with $U[z]$, the set of finitely nonzero sequences, which was obtained as a quotient set of $U^{*}$, via the identifications $\left(u_{t}, \ldots, u_{1}\right) \sim\left(0, \ldots, 0, u_{t}, \ldots, u_{1}\right)$. The unique element of $U^{0}$ is denoted $(\emptyset)$.

\begin{definition}\label{def:6.16}
\textit{The output space} $\Gamma$ \textit{is the set} $Y^{U^{*}}$ \textit{of all maps} $U^{*} \rightarrow Y$.
\end{definition}

Thus an element of $\Gamma$ is an $U^{*}$-indexed sequence of elements of $Y$. By Remark~\ref{rem:1.9} and Corollary~\ref{cor:3.8}, $\Gamma$ is a $k$-space, the product of the $k$-space $Y$ with itself, $U^{*}$ times.

\subsection{Abstract Response Maps and Systems}\label{sec:7}

For any vector space $V$ over $k$ we consider the set of sequences $\mathbb{Z} \rightarrow V$ with support bounded on the left:
\[
\mathbb{V}:=\left\{u: \mathbb{Z} \rightarrow V \mid\left(\exists \tau_{u}\right)\left(u(t)=0 \text { if } t<\tau_{u}\right)\right\} .
\]
The \textit{shift} operator $\sigma=\sigma_{V}: \mathbb{V} \rightarrow \mathbb{V}$ is defined by
\[
(\sigma u)(t):=u(t+1) \quad \text { for all } t \text { in } \mathbb{Z} .
\]
In particular we shall call
\begin{align*}
&\mathbb{U}:=\text { set of input sequences; } \\
&\mathbb{Y}:=\text { set of output sequences. }
\end{align*}

\label{op:55}%

\begin{definition}\label{def:7.1}
\textit{An input/output map} $\mathbf{f}$ \textit{is a map} $\mathbf{f}: \mathbb{U} \rightarrow \mathbb{Y}$; $\mathbf{f}$ \textit{is called}
\begin{enumerate}
\item[(a)] (\textit{strictly}) \textit{causal iff, for all} $u, \hat{u}$ \textit{in} $\mathbb{U}$, \textit{and for all} $\tau$ in $\mathbb{Z}$, $u(t)=\hat{u}(t)$ \textit{for} $t<\tau$ \textit{implies} $\mathbf{f}(u)(t)=\mathbf{f}(\hat{u})(t)$ \textit{for} $t \leq \tau$;
\item[(b)] \textit{constant structure (or shift-invariant) iff} $\mathbf{f}(\sigma u)=\sigma \mathbf{f}(u)$ \textit{for all} $u$ \textit{in} $\mathbb{U}$.
\end{enumerate}
\textit{An input/output pair of} $\mathbf{f}$ \textit{is a pair} $(\mathbf{f}(u), u), u$ \textit{in} $\mathbb{U}$.
\end{definition}

An input/output map can be regarded as an abstract description of the external behavior of a "black box" (physical device, computer, etc.) operating at discrete instants $\ldots,-2,-1,0,1,2, \ldots$ of time. Constant structure means that this behavior is invariant under time shifts.

Rather than working with the input/output map, it is technically more convenient to work with the response map of the same "black box", i.e., with the description of outputs resulting immediately after the application of finite sequences of inputs.

\begin{definition}\label{def:7.2}
A \textit{response map} $f$ \textit{is a map}
\[
f: U[z] \rightarrow Y .
\]
Let $\eta: \mathbb{Y} \rightarrow Y: y \mapsto y(1)$; then \textit{the assignments}
\[
\mathbf{f} \mapsto f:=\eta \circ \mathbf{f} \mid U[z],
\]
\textit{and}
\[
f \mapsto \mathbf{f}(u)(t):=f\left(\sigma^{t} u\right),
\]
\textit{establish a bijection between causal constant-structure input/output maps and response maps}.
\end{definition}

In the above formulas we are implicitly identifying $U[z]$ with a subset of $\mathbb{U}$, via the rule
\label{op:56}%
\[
u_{\tau} z^{\tau-1}+\ldots+u_{1} \mapsto u,
\]
where $u(-t):=u_{t-1}$ if $0 \leq t<\tau$ and $u(t)=0$ otherwise.

In view of the above bijection, we shall state most of our definitions and results in terms of response maps. Only when dealing with input/output equations shall we refer again directly to input/ output maps.

The following abstract definitions and notations are mostly well-known. They belong to "general system theory"; no structure is imposed on systems or preserved by response maps. Later we shall refine these definitions by adding suitable algebraic structure.

\begin{definition}\label{def:7.3}
\textit{An abstract (constant structure) system} $\Sigma$ \textit{is an object} $\left(X, P, h, x^{\#}\right)$, \textit{where}
\begin{enumerate}
\item[(a)] $X$ \textit{is the state set};
\item[(b)] $P: X \times U \rightarrow X$ \textit{is the transition map};
\item[(c)] $h: X \rightarrow Y$ \textit{is the output map; and}
\item[(d)] $x^{\#}$ in $X$ \textit{is the initial state, and satisfies} $P\left(x^{\#}, 0\right)=x^{\#}$.
\end{enumerate}
\textit{The} $t$-\textit{th iterate of} $P$ \textit{is the map} $P^{(t)}: X \times U^{t} \rightarrow X$ \textit{defined recursively by}
\[
\begin{aligned}
& P^{(0)}:=1_{X}, P^{(1)}:=P \\
& P^{(t+1)}\left(x,\left(u_{t+1}, \ldots, u_{1}\right)\right):=P\left(P^{(t)}\left(x,\left(u_{t+1}, \ldots, u_{2}\right)\right), u_{1}\right) .
\end{aligned}
\]
\textit{The reachability map of} $\Sigma$ \textit{is}
\[
g: U[z] \rightarrow X: u_{t} z^{t-1}+\ldots+u_{1} \mapsto P^{(t)}\left(x^{\#},\left(u_{t}, \ldots, u_{1}\right)\right) ;
\]
\textit{the} $t$-\textit{step reachability map is} $g_{t}:=g \mid U^{t}$; \textit{the} $t$-\textit{step reachable set is} $X_{t}:=g_{t}\left(U^{t}\right)$.
\textit{For each} $w$ \textit{in} $U^{t}, t \geq 0$, \textit{the observable map induced by} $w$ \textit{is}
\[
h^{w}: X \rightarrow Y: x \mapsto h \circ P^{(t)}(x, w) .
\]
\label{op:57}%
\textit{The basic observables of} $\Sigma$ \textit{are the functions} $h_{j}^{w}:=\pi_{j} \circ h^{w}$, $j=1, \ldots, p$, $w$ \textit{in} $U^{*}$. \textit{The observability map of} $\Sigma$ \textit{is}
\[
h^{\Gamma}: X \rightarrow \Gamma: x \mapsto\left\{h^{w}(x), w \text { in } U^{*}\right\} .
\]
\begin{quote}
$\Sigma$ \textit{is called}
\textit{reachable iff} $g$ \textit{is onto};
\textit{observable iff} $h^{\Gamma}$ \textit{is one-to-one};
\textit{abstractly canonical iff both reachable and observable}.
\end{quote}
\textit{The response of} $\Sigma$ \textit{is} $f_{\Sigma}:=h \circ g$. \textit{Given a response} $f, \Sigma$ \textit{realizes} $f$ \textit{iff} $f=f_{\Sigma}$.
\end{definition}

\begin{definition}\label{def:7.4}
\textit{Consider abstract systems} $\Sigma, \hat{\Sigma}$. \textit{An abstract system morphism} $T: \Sigma \rightarrow \hat{\Sigma}$ \textit{is given by a map} $T: X \rightarrow \hat{X}$ \textit{such that}:
\begin{enumerate}
\item[(i)] $T\left(x^{\#}\right)=\hat{x}^{\#}$, and
\item[(ii)] \textit{the diagram}
\[
\begin{tikzpicture}[x=1cm,y=1cm]
\node (XU) at (0,0) {$X\times U$};
\node (X) at (4,0) {$X$};
\node (Y) at (8,0) {$Y$};
\node (XUh) at (0,-2.2) {$\hat X\times U$};
\node (Xh) at (4,-2.2) {$\hat X$};
\draw[arr] (XU) -- node[above,lab] {$P$} (X);
\draw[arr] (XU) -- node[left,lab] {$T\times 1$} (XUh);
\draw[arr] (XUh) -- node[below,lab] {$\hat P$} (Xh);
\draw[arr] (X) -- node[right,lab] {$T$} (Xh);
\draw[arr] (X) -- node[above right,lab] {$h$} (Y);
\draw[arr] (Xh) -- node[below right,lab] {$\hat h$} (Y);
\end{tikzpicture}
\]
\textit{commutes}.
\end{enumerate}
\end{definition}

Note that the existence of a system morphism $T: \Sigma \rightarrow \hat{\Sigma}$ implies that $f_{\Sigma}=f_{\hat{\Sigma}}$; this is proved by a trivial induction.

We shall call $\mathbf{Syst}_{\mathrm{abs}}$ the category consisting of abstract systems as objects and morphisms defined as in Definition~\ref{def:7.4}.

Analogously with concepts related to $\Sigma$, the same concepts can be defined directly in terms of the response map:

\begin{definition}\label{def:7.5}
\textit{Let} $f$ \textit{be a response map. The observable map of} $f$ \textit{induced by} $w$, \textit{where} $w$ \textit{is in} $U^{*}$, \textit{is}
\label{op:58}%
\[
f^{w}: U[z] \rightarrow Y: v \mapsto f(v w) .
\]
\textit{The basic observables of} $f$ \textit{are the functions} $f_{j}^{w}:=\pi_{j} \circ f^{w}$, $j=1, \ldots, p$, $w$ \textit{in} $U^{*}$. \textit{The observability map of} $f$ \textit{is}
\[
f^{\Gamma}: U[z] \rightarrow \Gamma: v \mapsto\left\{f^{w}(v), \quad w \text { in } U^{*}\right\} .
\]
Thus the \textit{value} of $f^{w}$ is the result of an input/output experiment in which the output is observed at the end of the application of the concatenated input $v w$, $v=$ given, $w=$ chosen. The collection of results of \textit{all} such experiments is $f^{\Gamma}$.
\end{definition}

Since $f^{(\emptyset)}=f$, clearly
\[
f=\hat{f} \text { iff } f^{\Gamma}=\hat{f}^{\Gamma} .
\]
In terms of the new notations, the definition of realization may be restated as follows. For any system $\Sigma$ and any $w$ in $U^{t}$, $v$ in $U^{s}$,
\[
\begin{aligned}
f_{\Sigma}^{v}(w)=f_{\Sigma}(w v)=h \circ g(w v) & =h \circ P^{(s+t)}\left(x^{\#}, w v\right)= \\
& =h \circ P^{(s)}(g(w), v)=h^{v} \circ g(w),
\end{aligned}
\]
so $f_{\Sigma}^{\Gamma}=h^{\Gamma} \circ g$. Thus,
\begin{equation}
\Sigma \text { realizes } f \text { iff } f^{\Gamma}=h^{\Gamma} \circ g . \label{eq:7.6}
\end{equation}
We now restate in our formalism largely well known facts (see, for instance, \citealp[Chapter XII]{eilenberg1974automata}).

\begin{lemma}\label{lem:7.7}
Let $\Sigma=\left(X, P, h, x^{\#}\right)$ and $\hat{\Sigma}=\left(\hat{X}, \hat{P}, \hat{h}, \hat{x}^{\#}\right)$ have the same response map. Assume that $\Sigma$ is reachable and that $\hat{\Sigma}$ is observable. Then there exists a unique map $T: X \rightarrow \hat{X}$ such that the following diagram commutes:

\[
\begin{tikzpicture}[x=1cm,y=1cm]
\node (U) at (0,0) {$U[z]$};
\node (X) at (3,1.2) {$X$};
\node (Xh) at (3,-1.2) {$\hat X$};
\node (G) at (6,0) {$\Gamma$};
\draw[arr] (U) -- node[above left,lab] {$g$} (X);
\draw[arr] (U) -- node[below left,lab] {$\hat g$} (Xh);
\draw[arr] (X) -- node[right,lab] {$T$} (Xh);
\draw[arr] (X) -- node[above right,lab] {$h^\Gamma$} (G);
\draw[arr] (Xh) -- node[below right,lab] {$\hat h^\Gamma$} (G);
\end{tikzpicture}
\]

\label{op:59}%
This unique $T$ induces an abstract system morphism $\Sigma \rightarrow \hat{\Sigma}$.
\end{lemma}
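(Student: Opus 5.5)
The plan is to define $T$ through the reachability map of $\Sigma$ and then use observability of $\hat\Sigma$ both to show that it is well defined and to show that the diagram commutes. Since $g$ is onto, every $x$ in $X$ equals $g(v)$ for some $v$ in $U[z]$, and we set
\[
T(g(v)) := \hat g(v).
\]
Uniqueness of $T$ is then immediate, because any $T$ making the left triangle commute must satisfy $T\circ g=\hat g$, and $g$ is surjective.

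\textbf{Well-definedness.} This is where observability enters. Suppose $g(v)=g(v')$. Since both systems realize the same $f$, \eqref{eq:7.6} gives
\[
\hat h^{\Gamma}\circ\hat g = f^{\Gamma} = h^{\Gamma}\circ g .
\]
Hence
\[
\hat h^{\Gamma}(\hat g(v)) = h^{\Gamma}(g(v)) = h^{\Gamma}(g(v')) = \hat h^{\Gamma}(\hat g(v')),
\]
and injectivity of $\hat h^{\Gamma}$ forces $\hat g(v)=\hat g(v')$. The same identity yields the right triangle: $\hat h^{\Gamma}\circ T\circ g=\hat h^{\Gamma}\circ\hat g=h^{\Gamma}\circ g$, and surjectivity of $g$ gives $\hat h^{\Gamma}\circ T=h^{\Gamma}$.

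\textbf{$T$ is an abstract system morphism.}
\begin{itemize}
\item[(i)] Initial state: $x^{\#}=g((0))$, since the empty input, or any zero input thanks to the equilibrium condition $P(x^{\#},0)=x^{\#}$, leaves the initial state fixed. The same holds for $\hat\Sigma$, so $T(x^{\#})=\hat g((0))=\hat x^{\#}$.
\item[(ii)] Output: taking the component $w=(\emptyset)$ of $\hat h^{\Gamma}\circ T=h^{\Gamma}$ gives $\hat h\circ T=h$.
\item[(iii)] Transitions: let $x=g(v)$ and $u$ in $U$. By the recursive definition of $P^{(t)}$, $P(g(v),u)=g(vu)$, where $vu$ denotes concatenation; similarly for $\hat P$. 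Then
\[
T(P(x,u)) = T(g(vu)) = \hat g(vu) = \hat P(\hat g(v),u) = \hat P(T(x),u).
\]
\end{itemize}

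\textbf{Main obstacle.} The only delicate point is bookkeeping: checking that $g$ is well defined on $U[z]$ (padding with leading zeros does not change the reached state, by the equilibrium property of $x^{\#}$) and that $g(vu)=P(g(v),u)$ holds with the paper's reversed-order indexing of sequences. I will verify this by a short induction on the length of $v$ using the recursive definition of $P^{(t+1)}$. Everything else follows formally from \eqref{eq:7.6}.
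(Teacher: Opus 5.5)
Your proof is correct and follows essentially the same route as the paper: you define $T(g(v)):=\hat g(v)$, get well-definedness from injectivity of $\hat h^{\Gamma}$ together with $h^{\Gamma}\circ g=f^{\Gamma}=\hat h^{\Gamma}\circ\hat g$, and verify the morphism conditions via $g(vu)=P(g(v),u)$. The differences are cosmetic: the paper derives uniqueness from the right triangle plus observability rather than from the left triangle plus surjectivity of $g$, and it leaves the output condition $\hat h\circ T=h$ implicit, whereas you check it explicitly.
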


\begin{proof}
We first prove uniqueness. If such a $T$ exists, then $\hat{h}^{\Gamma}(T(g(w)))=\hat{h}^{\Gamma}(\hat{g}(w))$ for all $w$ in $U[z]$, so $T(g(w))=\hat{g}(w)$ by observability of $\hat{\Sigma}$. Since $\Sigma$ is reachable, every state in $X$ is of the form $g(w)$. Thus there is a unique choice of $T$,
\begin{equation*}
T: g(w) \mapsto \hat{g}(w) . \tag{*}
\end{equation*}
To prove existence, we need to see that $T$ as in (*) is a well-defined map. In other words, if $g(w)=g(v)$, it should follow that $\hat{g}(w)=\hat{g}(v)$. But $\hat{h}^{\Gamma}(\hat{g}(w))=h^{\Gamma}(g(w))=h^{\Gamma}(g(v))=\hat{h}^{\Gamma}(\hat{g}(v))$ implies $\hat{g}(w)=\hat{g}(v)$ by observability of $\hat{\Sigma}$.

We now prove that $T$ is an abstract system morphism. Since $T\left(x^{\#}\right)=T(g(0))=\hat{g}(0)=\hat{x}^{\#}$, we are only left to prove that $T(P(g(w), u))=\hat{P}(T(g(w)), u)$ for every $w$ in $U[z]$ and every $u$ in $U$. But $T(P(g(w), u))=T(g(w u))=\hat{g}(w u)=\hat{P}(\hat{g}(w), u)$, as required.
\end{proof}

\begin{theorem}\label{thm:7.8}
Let $f$ be a response map. Then $f$ has an abstractly canonical realization $\Sigma_{a c}$. If $\Sigma$ is any other abstractly canonical realization of $f$ then there exists a unique isomorphism $T: \Sigma_{a c} \rightarrow \Sigma$.
\end{theorem}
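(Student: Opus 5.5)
The plan is the classical Nerode construction: build $\Sigma_{ac}$ directly from $f$, with the image of the observability map $f^{\Gamma}$ as state set. Uniqueness will then follow from Lemma~\ref{lem:7.7} applied in both directions.

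\emph{Existence.} Let $X_{ac}:=f^{\Gamma}(U[z])\subseteq\Gamma$, and set $x^{\#}:=f^{\Gamma}((0))$. For $x=f^{\Gamma}(v)$ and $u\in U$, define
\[
P(x,u):=f^{\Gamma}(vu),\qquad h(x):=f(v).
\]
The first task is to check that these are well defined. Suppose $f^{\Gamma}(v)=f^{\Gamma}(v')$, i.e.\ $f(vw)=f(v'w)$ for every $w\in U^{*}$. Taking $w=(\emptyset)$ gives $h$ well defined. Taking $w=uw'$ gives $f((vu)w')=f((v'u)w')$ for all $w'$, so $f^{\Gamma}(vu)=f^{\Gamma}(v'u)$ and $P$ is well defined. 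Here I use associativity of concatenation on $U[z]$, which is just the identification \eqref{eq:6.4}. The equilibrium condition $P(x^{\#},0)=x^{\#}$ holds because $(0)0=(0)$ in $U[z]$; leading zeros are identified.

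Next comes a short induction on $t$. It shows $P^{(t)}(f^{\Gamma}(v),w)=f^{\Gamma}(vw)$ for $w\in U^{t}$, and hence that the reachability map is $g=f^{\Gamma}$. So $\Sigma_{ac}$ is reachable, and its response is $h\circ g=f$. The same identity gives $h^{w}(f^{\Gamma}(v))=f(vw)=f^{w}(v)$, so $h^{\Gamma}$ is the inclusion $X_{ac}\hookrightarrow\Gamma$. In particular $h^{\Gamma}$ is one-to-one and $\Sigma_{ac}$ is observable.

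\emph{Uniqueness.} Let $\Sigma$ be another abstractly canonical realization. Lemma~\ref{lem:7.7} gives unique morphisms $T:\Sigma_{ac}\to\Sigma$ and $T':\Sigma\to\Sigma_{ac}$ compatible with the reachability and observability maps. The composites $T'T$ and $TT'$ satisfy the same commuting diagram as the identity, and the uniqueness clause of Lemma~\ref{lem:7.7} applies to a reachable source and an observable target. So both composites are identities, and $T$ is an isomorphism. Uniqueness of $T$ is immediate from Lemma~\ref{lem:7.7}, since any system morphism $\Sigma_{ac}\to\Sigma$ between realizations automatically makes the diagram there commute: it sends $g$ to $\hat g$ by induction.

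The main point requiring care is the well-definedness of $P$, specifically the bookkeeping of the reversed-order concatenation on $U[z]$ and the identification of sequences with leading zeros. All of this is routine once the convention $\omega v=\delta_t(\omega,v)$ of \eqref{eq:6.11} is fixed.
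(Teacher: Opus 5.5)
Your proposal is correct and follows the paper's proof essentially verbatim: the same construction $X_{ac}=f^{\Gamma}(U[z])$ with $P_{ac}(f^{\Gamma}(w),u)=f^{\Gamma}(wu)$ and output given by the $(\emptyset)$-th coordinate, and uniqueness by applying Lemma~\ref{lem:7.7} in both directions and then its uniqueness clause to the composites. Your explicit checks fill in details the paper leaves implicit: the equilibrium condition, $g=f^{\Gamma}$, and $h^{\Gamma}$ being the inclusion.
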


\begin{proof}
Abstractly canonical realizations $\Sigma$ correspond to factorizations $f^{\Gamma}=g_{\Sigma} \circ h_{\Sigma}^{\Gamma}$, where $g_{\Sigma}$ is onto and $h_{\Sigma}^{\Gamma}$ is one-to-one. We first prove the uniqueness part. Given any two abstractly canonical systems $\Sigma$ and $\hat{\Sigma}$, there exist by Lemma~\ref{lem:7.7} $T_{1}: \Sigma \rightarrow \hat{\Sigma}, T_{2}: \hat{\Sigma} \rightarrow \Sigma$. Since $T_{1} \circ T_{2}$ is an abstract system morphism, it follows by the uniqueness part of Lemma~\ref{lem:7.7} that $T_{1} \circ T_{2}=$ identity abstract system morphism $\hat{\Sigma} \rightarrow \hat{\Sigma}$. Similarly, $T_{2} \circ T_{1}$ is the identity $\Sigma \rightarrow \Sigma$. Thus $T_{1}, T_{2}$ are inverse abstract system isomorphisms.

For the existence part, simply take $X_{\mathrm{ac}}:=f^{\Gamma}(U[z]) \subseteq \Gamma$, $x_{a c}^{\#}:=f^{\Gamma}(0), h_{a c}:=$ projection onto ($\emptyset$)-th factor, and $P_{a c}\left(f^{\Gamma}(w), u\right):=f^{\Gamma}(w u)$. Note that $P_{a c}$ is well defined, since $f^{\Gamma}(w)=f^{\Gamma}(\hat{w})$ implies that for any $u, v$, $f^{v}(w u)=f(w u v)=f^{u v}(w)=f^{u v}(\hat{w})=f^{v}(\hat{w} u)$.
\end{proof}

\label{op:60}%
\subsection{Polynomial Response Maps and $k$-Systems}\label{sec:8}

We begin now to study response maps which are polynomial (Definition \ref{def:6.13}). The input/output map associated to a polynomial response map will be called a \textit{polynomial input/output map}.

The notion of polynomial response maps is new, as is the following

\begin{definition}\label{def:8.1}
\textit{The abstract constant structure system} $\Sigma=(X, P$, $h, x^{\#}$) \textit{is a} $k$-\textit{system iff} $X$ \textit{is a} $k$-\textit{space and both} $P$ \textit{and} $h$ \textit{are polynomial maps}.
\end{definition}

The terminology " $k$-system" is just a convenient short name for our systems. In no way is it intended to imply that Definition~\ref{def:8.1} constitutes the most general class of systems definable in terms of the field structure of $k$.

\begin{lemma}\label{lem:8.2}
The reachability map $g: U[z] \rightarrow X$ of a $k$-system $\Sigma$ is a polynomial map.
\end{lemma}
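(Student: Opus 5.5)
By Definition~\ref{def:6.13}, we must show that $g$ is the restriction to $U[z]$ of a polynomial map $g^{\Omega}:\Omega\to X$. Equivalently, by Lemma~\ref{lem:3.7}, we must find a $k$-algebra homomorphism $\gamma: A(X)\to\Psi$ such that $X(\gamma)\circ i_t = g_t$ for every $t$. The plan is to build $\gamma$ one observable at a time: for each $a\in A(X)$, assemble the polynomials $a\circ g_t$ into a single Volterra series.

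\textbf{Step 1.} First, each $g_t: U^t\to X$ is polynomial. Indeed, $g_t(w)=P^{(t)}(x^{\#},w)$. The map $P^{(t)}$ is polynomial by induction on $t$, since it is a composition of the polynomial map $P$ with products of polynomial maps (products are polynomial by Lemma~\ref{lem:1.8}). Restricting to the slice $\{x^{\#}\}\times U^t$ is also polynomial, because $k^0\to X$, $0\mapsto x^{\#}$, is a polynomial map. Hence for each $a\in A(X)$ the function $a\circ g_t$ lies in $k[\xi_1,\ldots,\xi_t]$.

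\textbf{Step 2.} Next, these polynomials are compatible under the maps $\epsilon_{s,t}$. Because $P(x^{\#},0)=x^{\#}$, padding an input with leading zeros does not change the reached state: $g_t\circ i_{s,t}=g_s$. Dually this reads $\epsilon_{s,t}(a\circ g_t)=a\circ g_s$. So for fixed $a$ the family $(a\circ g_t)_t$ is an element of the inverse limit of the rings $k[\xi_1,\ldots,\xi_t]$ under the $\epsilon_{s,t}$.

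\textbf{Step 3 (main point).} The key step is to identify $\Psi$ with this inverse limit, via $\psi\mapsto(\epsilon_t(\psi))_t$.
\begin{itemize}
\item Injectivity is \eqref{eq:5.20}.
\item For surjectivity, take a compatible family $(p_t)$. Compatibility means that the coefficient of $\xi^\alpha$, for $\alpha\in\Delta$ with $|\alpha|\le t$, is the same in every $p_t$ with $t\ge|\alpha|$. Define $\psi_\alpha$ to be that common coefficient.
\item The resulting series is Volterra: its degree in $\xi_{ij}$ is at most $\deg_{\xi_{ij}} p_j$, since any monomial of $\psi$ containing $\xi_{ij}$ has $|\alpha|\ge j$ and so already appears in $p_{|\alpha|}$, which projects onto $p_j$ by setting $\xi_{j+1},\ldots$ to zero. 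The degree in $\xi_{ij}$ is preserved under that projection.
\item By construction $\epsilon_t(\psi)=p_t$.
\end{itemize}
Then define $\gamma(a)$ to be the series corresponding to $(a\circ g_t)_t$. Since each $\epsilon_t\circ\gamma = A(g_t)$ is a homomorphism and the $\epsilon_t$ jointly separate elements of $\Psi$ by \eqref{eq:5.20}, $\gamma$ is itself a $k$-algebra homomorphism.

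\textbf{Step 4.} Finally, set $g^{\Omega}:=X(\gamma)$. Then $A(g^{\Omega}\circ i_t)=\epsilon_t\circ\gamma=A(g_t)$, so by Lemma~\ref{lem:3.7} we get $g^{\Omega}\circ i_t=g_t$. Hence $g^{\Omega}$ restricts to $g$ on $U[z]=\bigcup_t U^t$, and $g$ is polynomial.

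I expect the only real obstacle to be the bookkeeping in Step 3, namely checking that the coefficientwise limit has finite degree in each variable. This is exactly where the equilibrium condition $P(x^{\#},0)=x^{\#}$ enters, through Step 2.
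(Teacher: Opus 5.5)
There is a genuine gap in Step~3. The map $\psi\mapsto(\epsilon_t\psi)_t$ from $\Psi$ to the inverse limit of the rings $k[\xi_1,\ldots,\xi_t]$ is injective, but it is not surjective, and your degree argument is exactly where the proof breaks. Setting $\xi_{j+1},\ldots,\xi_{|\alpha|}$ to zero does not preserve the degree in $\xi_{ij}$. It annihilates every monomial $\xi^\alpha$ with $|\alpha|>j$, because $\alpha\in\Delta$ has a nonzero last column. So $\deg_{\xi_{ij}}p_j$ gives no bound on the $\xi_{ij}$-degree of $p_t$ for $t>j$.

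Concretely, take $m=1$ and $p_t:=\sum_{n=2}^{t}\xi_{11}^{n}\xi_{1n}$. This is a compatible family of polynomials under the $\epsilon_{s,t}$, with $p_1=0$. Yet its coefficientwise limit $\sum_{n\ge 2}\xi_{11}^{n}\xi_{1n}$ has infinite degree in $\xi_{11}$, so it is not a Volterra series. Compatibility, and hence the equilibrium condition, only makes the coefficients well defined; it says nothing about finiteness of degrees. So your diagnosis that the equilibrium condition handles the degree bookkeeping is mistaken.

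The missing idea is a bound on $\deg_{\xi_{ij}}(a\circ g_t)$ that is uniform over all $t\ge j$. This must come from the dynamics, not from compatibility, and it is the heart of the paper's proof. From $P^{(t)}=P^{(j)}\circ(P^{(t-j)}\times 1_{U^j})$ you get $g_t=P^{(j)}\circ(g_{t-j}\times 1_{U^j})$, where $g_{t-j}$ acts only on the oldest inputs $u_t,\ldots,u_{j+1}$. Write $a\circ P^{(j)}=\sum_\beta c_\beta\,\xi^\beta$ in $A(X)[\xi_1,\ldots,\xi_j]$ with $c_\beta\in A(X)$. Then $a\circ g_t=\sum_\beta (c_\beta\circ g_{t-j})\,\xi^\beta$, and the factors $c_\beta\circ g_{t-j}$ involve only $\xi_{j+1},\ldots,\xi_t$ after relabeling. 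Hence $\deg_{\xi_{ij}}(a\circ g_t)\le\deg_{\xi_{ij}}(a\circ P^{(j)})$ for every $t\ge j$, a bound independent of $t$. So the series you assemble coefficientwise (the paper's $\psi^a$) is a Volterra series.

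Replace your inverse-limit identification with this statement: a compatible family whose degree in each $\xi_{ij}$ is bounded independently of $t$ comes from a unique element of $\Psi$. With that change, the rest of your argument goes through:
\begin{itemize}
\item Steps~1, 2 and~4 are fine.
\item The homomorphism property via the joint injectivity in \eqref{eq:5.20} is fine.
\item Your Step~2 makes explicit the compatibility $\epsilon_{s,t}(a\circ g_t)=a\circ g_s$, which the paper uses only tacitly when it asserts $\epsilon_t\psi^a=A(g_t)(a)$.
\end{itemize}
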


\begin{proof}
By definition~\ref{def:6.13}, we need to find a polynomial extension $g^{\Omega}: \Omega \rightarrow X$ of $g$. Dually, we construct a homomorphism $A\left(g^{\Omega}\right): A(X) \rightarrow \Psi$ such that $\epsilon_{t} \circ A\left(g^{\Omega}\right)=A\left(g_{t}\right)$ for each $t \geq 0$.

Pick any $a$ in $A(X)$. Then $A\left(g_{t}\right)(a)=a \circ g_{t}$ belongs to $k\left[\xi_{1}, \ldots, \xi_{t}\right]$ for each $t \geq 0$. We define a power series $\psi^{a}$ in the variables $\xi_{i j}$ by
\[
\psi_{\alpha}^{a}:=\text { coefficient of } \xi^{\alpha} \text { in } A\left(g_{|\alpha|}\right)(a) \text {, }
\]
for each $\alpha$ in $\Delta$.

Now we claim that $\operatorname{deg}_{i j} \psi^{a}<\infty$ for each $i, j$, i.e., $\psi^{a}$ is a Volterra series. Take any $i, j$. Let $t \geq j$. From the definition of $P^{(t)}$ it is clear that
\[
P^{(t)}=P^{(j)} \circ\left(P^{(t-j)} \times 1_{U^{j}}\right) .
\]
As $g_{t}=P^{(t)}\left(x^{\#}, \cdot\right)$,
\label{op:61}%
\[
a \circ g_{t}=a \circ P^{(j)} \circ\left(P^{(t-j)}\left(x^{\#}, \cdot\right) \times 1_{U^{j}}\right) .
\]
Thus the degree to which $\xi_{i j}$ appears in $a \circ g_{t}$, for any $t \geq j$, is at most the degree of the polynomial $a \circ P^{(j)}$ in the variable $\xi_{i j}$. So $\psi^{a}$ is a Volterra series.

Let $A\left(g^{\Omega}\right):=a \mapsto \psi^{a}$. By construction, $\epsilon_{t} \psi^{a}=A\left(g_{t}\right)(a)$ for each $a$ in $A(X)$ and $t \geq 0$. Since each $A\left(g_{t}\right)$ is a homomorphism, $A\left(g^{\Omega}\right)$ is also a homomorphism.
\end{proof}

The following result, the main of this section, suggests that $k$-\textit{systems are the natural realizations of polynomial response maps}. This will be confirmed later by the result on canonical realizations. We shall reserve the name "polynomial systems" for a special type of $k$-system in which a strong finiteness condition holds, which will allow $P$ and $h$ to be represented by actual polynomials.

\begin{theorem}\label{thm:8.3}
The response map $f$ is polynomial if and only if $f$ is realized by some $k$-system.
\end{theorem}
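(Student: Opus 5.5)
The two implications are handled separately. The "if" direction is a composition argument; the "only if" direction constructs a realization whose state space is the input space $\Omega$ itself.

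\emph{If $f$ is realized by a $k$-system, then $f$ is polynomial.} Let $\Sigma=(X,P,h,x^{\#})$ be a $k$-system with $f=f_\Sigma=h\circ g$.
By Lemma~\ref{lem:8.2}, the reachability map $g$ is polynomial in the sense of Definition~\ref{def:6.13}, so it has a polynomial extension $g^{\Omega}:\Omega\rightarrow X$. Then $h\circ g^{\Omega}:\Omega\rightarrow Y$ is a composition of polynomial maps between $k$-spaces, hence polynomial. Its restriction to $U[z]$ is $h\circ g=f$, so $f$ is polynomial.

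\emph{If $f$ is polynomial, then it is realized by a $k$-system.} Let $f^{\Omega}:\Omega\rightarrow Y$ be the polynomial extension of $f$. I would take the ``free'' realization
\[
\Sigma=(\Omega,\ \delta_1,\ f^{\Omega},\ (0)),
\]
with transition map $\delta_1:\Omega\times U\rightarrow\Omega$ from \eqref{eq:6.8}. This map is polynomial because it equals $X(\theta_1)$. The following checks are then needed.

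\begin{enumerate}
\item[(i)] The initial state is an equilibrium: $\delta_1((0),0)=(0)$. By Lemma~\ref{lem:6.10}, on $U[z]$ the map $\delta_1$ is concatenation, and appending a zero input to the zero sequence gives the zero sequence in $U[z]$.
\item[(ii)] The reachability map is the inclusion: for $w\in U^t$, $g(w)=P^{(t)}((0),w)=w$, viewed as an element of $U[z]\subseteq\Omega$. This follows by induction on $t$, again using Lemma~\ref{lem:6.10} to identify $\delta_1(v,u)$ with the concatenation $vu$. The only care needed is with the index convention of Definition~\ref{def:7.3}, where the most recent input is $u_1$, and with the identification of $U^s$ with its image $i_s(U^s)$ via \eqref{eq:6.3}.
\item[(iii)] Given (ii), the response is $f_\Sigma=f^{\Omega}\circ g=f^{\Omega}|_{U[z]}=f$.
\end{enumerate}

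I expect the only real obstacle to be step (ii). One must check that the polynomial action $\delta_1$ restricted to points of $U[z]$ really is concatenation in the right order, consistent with how $P^{(t)}$ is defined. Lemma~\ref{lem:6.10} with $t=1$ supplies exactly this, so the difficulty is bookkeeping rather than substance.
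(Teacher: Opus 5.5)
Your proposal is correct and matches the paper's proof: the ``only if'' direction uses the free realization $\Sigma_{\text{free}}(f)=(\Omega,\delta_1,f^{\Omega},(0))$, and the ``if'' direction sets $f^{\Omega}:=h\circ g^{\Omega}$ via Lemma~\ref{lem:8.2}. Your checks (i)--(iii), based on Lemma~\ref{lem:6.10} with $t=1$, spell out the verification that the paper leaves to the reader.
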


\begin{proof}
["only if"] The \textit{free realization} $\Sigma_{\text {free }}(f):= \left(\Omega, \delta_{1}, f^{\Omega},(0)\right)$ is a $k$-system realizing $f$.

["if"]. Let $f=f_{\Sigma}$ for some $k$-system $\Sigma$. Define
\[
f^{\Omega}:=h \circ g^{\Omega} .
\]
Then $f^{\Omega}\left|U[z]=h \circ g^{\Omega}\right| U[z]=h \circ g=f$. So $f$ is polynomial.
\end{proof}

Properties of $X$ serve to classify $k$-systems.

We shall say that $\Sigma$ is a \textit{polynomial system} [respectively \textit{almost polynomial}] iff $X_{\Sigma}$ is a variety [respectively an almost variety].

An \textit{irreducible} $\Sigma$ is one for which $X_{\Sigma}$ is irreducible.

Similarly, we define (recall Section~\ref{sec:4})
\[
\operatorname{dim} \Sigma:=\operatorname{dim} X_{\Sigma} .
\]
\label{op:62}%
The polynomial functions $X_{\Sigma} \rightarrow k$ are the \textit{costates} of $\Sigma$.

The notion of abstract system morphism is too weak to serve for comparing $k$-systems. A suitable category $\mathbf{Syst}_{k}$ of $k$-systems is obtained with morphisms as in the following

\begin{definition}\label{def:8.4}
\textit{An abstract system morphism} $T: \Sigma \rightarrow \hat{\Sigma}$ \textit{between two} $k$-\textit{systems is a} $k$-\textit{system morphism iff} $T: X \rightarrow \hat{X}$ \textit{is a polynomial map}. $T: \Sigma \rightarrow \hat{\Sigma}$ \textit{is dominating, a closed immersion, etc., iff} $T: X \rightarrow \hat{X}$ \textit{has the corresponding property}; $\Sigma$ \textit{dominates} $\hat{\Sigma}$ \textit{iff there exists a dominating} $T: \Sigma \rightarrow \hat{\Sigma} ; \Sigma$ \textit{is a closed subsystem of} $\hat{\Sigma}$ \textit{iff there exists a closed embedding} $T: \Sigma \rightarrow \hat{\Sigma}$.
\end{definition}

It is easy to see that $k$-systems form a category with the above notion of morphism. Note that a $k$-system isomorphism $T: \Sigma \simeq \hat{\Sigma}$ is the same as a polynomial change of coordinates in the state space.

\subsection{Quasi-Reachability}\label{sec:9}

\textit{From here until the end of Section~\ref{ch:4}}, $f$ \textit{is an arbitrary response map and} $\Sigma=\left(X, P, h, x^{\#}\right)$ \textit{is an arbitrary} $k$-\textit{system}.

\textit{The reachable set} of $\Sigma$ is
\[
X_{R}:=g(U[z])=\bigcup_{t \geq 0} g\left(U^{t}\right)=\bigcup_{t \geq 0} X_{t} .
\]

\begin{definition}\label{def:9.1}
$\Sigma$ \textit{is quasi-reachable iff} $\overline{X}_{R}=X$.
\end{definition}

By Lemma~\ref{lem:6.6} and Lemma~\ref{lem:3.11} we have the following

\begin{lemma}\label{lem:9.2}
The following statements are equivalent:
\begin{enumerate}
\item[(a)] $\Sigma$ is quasi-reachable.
\item[(b)] $g^{\Omega}$ is dominating.
\item[(c)] $A\left(g^{\Omega}\right)$ is one-to-one.
\end{enumerate}
\end{lemma}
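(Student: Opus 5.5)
The plan is to chain (a) to (b) through a density argument, and (b) to (c) by Lemma~\ref{lem:3.11}(d). Here $g^{\Omega}:\Omega\rightarrow X$ is the polynomial extension of the reachability map $g$ constructed in Lemma~\ref{lem:8.2}. Its transpose is $A(g^{\Omega}):A(X)\rightarrow\Psi$, where $A(\Omega)=\Psi$.

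For (b)$\Leftrightarrow$(c), I would simply apply Lemma~\ref{lem:3.11}(d) to $\alpha:=A(g^{\Omega})$. By Lemma~\ref{lem:3.7}, $X(\alpha)=g^{\Omega}$, so $g^{\Omega}$ is dominating if and only if $\alpha$ is one-to-one.

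For (a)$\Leftrightarrow$(b), it suffices to show that $\overline{X_R}=\overline{g^{\Omega}(\Omega)}$. Since $X_R=g(U[z])=g^{\Omega}(U[z])\subseteq g^{\Omega}(\Omega)$, one inclusion is immediate. For the other, I use continuity of $g^{\Omega}$ (Lemma~\ref{lem:3.11}(b)) together with the density of $U[z]$ in $\Omega$ (Lemma~\ref{lem:6.6}):
\[
g^{\Omega}(\Omega)=g^{\Omega}\bigl(\overline{U[z]}\bigr)\subseteq\overline{g^{\Omega}(U[z])}=\overline{X_R}.
\]
Taking closures gives equality. Hence $\overline{X_R}=X$ if and only if $\overline{g^{\Omega}(\Omega)}=X$, that is, if and only if $g^{\Omega}$ is dominating.

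Alternatively, the equivalence (a)$\Leftrightarrow$(c) can be checked directly using the $V,I$ calculus. An element $a\in A(X)$ lies in $\ker A(g^{\Omega})$ if and only if the series $a\circ g^{\Omega}$ is zero. By Lemma~\ref{lem:6.6} and \eqref{eq:5.20}, a series vanishing on $U[z]$ is zero, so this holds if and only if $a$ vanishes on $X_R$, that is, $a\in I(X_R)$. By Remark~\ref{rem:2.8}(iii) and Proposition~\ref{prop:2.3}(a), $\overline{X_R}=V(I(X_R))=X$ exactly when $I(X_R)=\{0\}$.

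No real obstacle is expected. The only point needing care is the identification $g^{\Omega}|U[z]=g$, which holds by the construction in Lemma~\ref{lem:8.2}, since $\epsilon_t\circ A(g^{\Omega})=A(g_t)$.
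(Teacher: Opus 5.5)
Your proposal is correct and follows the paper's own justification, which simply cites Lemma~\ref{lem:6.6} for density of $U[z]$ in $\Omega$ and Lemma~\ref{lem:3.11} for continuity and the dominating/injective correspondence. Your additional direct check of (a)$\Leftrightarrow$(c) via $\ker A(g^{\Omega})=I(X_R)$ is also valid.
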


\begin{lemma}[\citealp{sontag1975discrete}]\label{lem:9.3}
$\bar{X}_{t}=\bar{X}_{t+1}$ for some $t \geq 0$ implies $\bar{X}_{t}=\bar{X}_{R}$.
\end{lemma}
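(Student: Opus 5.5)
The plan is to prove by induction on $s\geq t$ that $\bar X_{s}=\bar X_{t}$. Then $\bar X_R$, the closure of $\bigcup_s X_s$, equals $\bar X_t$. The key structural fact is
\[
X_{s+1}=P(X_{s}\times U),
\]
which follows from the recursive definition of $P^{(s+1)}$ and $g_{s+1}$. We also have $X_s\subseteq X_{s+1}$, since $(u_s,\ldots,u_1)\sim(0,u_s,\ldots,u_1)$ and $P(x^{\#},0)=x^{\#}$. So the sets $\bar X_s$ form an ascending chain, and it suffices to show that $\bar X_{s}=\bar X_{s+1}$ implies $\bar X_{s+1}=\bar X_{s+2}$.

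For this I will use continuity of $P$ in the Zariski topology (Lemma~\ref{lem:3.11}(b)) together with the product structure. Set $F:=\bar X_{s}=\bar X_{s+1}$. I would first show
\[
P(\bar Z\times U)\subseteq \overline{P(Z\times U)}\quad\text{for any } Z\subseteq X.
\]
To see this, fix $u\in U$. The map $x\mapsto P(x,u)$ is polynomial, being the composite of $X\to X\times U$, $x\mapsto(x,u)$, with $P$, and is therefore continuous. Hence it maps $\bar Z$ into $\overline{P(Z\times\{u\})}\subseteq\overline{P(Z\times U)}$. Applying this with $Z=X_{s+1}$ gives
\[
X_{s+2}=P(X_{s+1}\times U)\subseteq P(F\times U)\subseteq\overline{P(X_s\times U)}=\bar X_{s+1}.
\]
Hence $\bar X_{s+2}\subseteq\bar X_{s+1}$, and with the reverse inclusion from monotonicity we get equality. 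Induction then gives $\bar X_{s}=\bar X_{t}$ for all $s\ge t$. Since $X_R=\bigcup_s X_s\subseteq\bar X_t$ (the sets below $t$ are contained in $X_t$ by monotonicity), we obtain $\bar X_R\subseteq\bar X_t\subseteq\bar X_R$.

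The main subtlety is that the topology on $X\times U$ is \emph{not} the product topology, so one cannot argue that $\overline{Z\times U}=\bar Z\times U$ directly. The device above avoids this: it uses only continuity of each section $P(\cdot,u)$, which is a genuine polynomial map $X\to X$. I should double-check that $x\mapsto(x,u)$ is polynomial, i.e.\ that its transpose $A(X)\otimes k[T_1,\ldots,T_m]\to A(X)$, $a\otimes q\mapsto q(u)a$, is a homomorphism. This is immediate from Lemma~\ref{lem:1.8}(a).
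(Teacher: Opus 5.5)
Your proof is correct and follows the paper's argument: the same induction via $X_{s+1}=P(X_s\times U)$, continuity, and monotonicity of the $\bar X_s$. Your section-by-section use of $P(\cdot,u)$ makes rigorous the step $P(\bar X_t\times U)\subseteq\overline{P(X_t\times U)}$, which the paper takes directly from continuity of $P$ and which tacitly needs $\bar Z\times U\subseteq\overline{Z\times U}$ in the non-product Zariski topology; one small slip is that in your displayed chain you are applying that inclusion with $Z=X_s$, not $Z=X_{s+1}$.
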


\label{op:63}%

\begin{proof}
Since $P$ is polynomial, it is continuous; thus
\[
X_{t+2}=P\left(X_{t+1} \times U\right) \subseteq P\left(\overline{X}_{t+1} \times U\right)=P\left(\overline{X}_{t} \times U\right) \subseteq \overline{P\left(X_{t} \times U\right)}=\overline{X}_{t+1},
\]
Since clearly $X_{t+1} \subseteq X_{t+2}$, it follows that $\overline{X}_{t+1} \subseteq \overline{X}_{t+2} \subseteq \overline{\overline{X}}_{t+1}=\overline{X}_{t+1}$, so $\bar{X}_{t+1}=\bar{X}_{t+2}$ and the result follows by induction.
\end{proof}

\begin{corollary}\label{cor:9.4}
If $\operatorname{dim} \Sigma=n<\infty$ then $\bar{X}_{n}=\bar{X}_{R}$.
\end{corollary}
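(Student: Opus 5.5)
The plan is to combine Lemma~\ref{lem:9.3} with the chain bound of Lemma~\ref{lem:4.5}. First we record that the closures $\bar{X}_{0} \subseteq \bar{X}_{1} \subseteq \bar{X}_{2} \subseteq \cdots$ form an ascending chain. This holds because $x^{\#}$ is an equilibrium for the zero input, so $g_{t+1}(0,w)=g_{t}(w)$ and hence $X_{t}\subseteq X_{t+1}$. Next we show that each $\bar{X}_{t}$ is irreducible. Indeed, $X_{t}=g_{t}(U^{t})$ is the continuous image (Lemma~\ref{lem:8.2} and Lemma~\ref{lem:3.11}(b)) of $U^{t}=k^{mt}$. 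The space $k^{mt}$ is irreducible since $k[\xi_{1},\ldots,\xi_{t}]$ is a domain (Lemma~\ref{lem:2.12}). As noted after Definition~\ref{def:3.17}, continuous images and closures of irreducible sets are irreducible, so each $\bar{X}_{t}$ is an irreducible closed subset of $X$.

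Suppose now that $\bar{X}_{n}\neq\bar{X}_{R}$. By Lemma~\ref{lem:9.3}, whenever two consecutive closures agree, all later ones agree and equal $\bar{X}_{R}$. So $\bar{X}_{n}\neq\bar{X}_{R}$ forces $\bar{X}_{t}\subsetneqq\bar{X}_{t+1}$ for every $t=0,\ldots,n$. This gives a strictly increasing chain
\[
\bar{X}_{0}\subsetneqq\bar{X}_{1}\subsetneqq\cdots\subsetneqq\bar{X}_{n+1}
\]
of irreducible closed subsets of length $n+1$. That contradicts Lemma~\ref{lem:4.5}, which bounds such chains by $n=\dim X$. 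Hence $\bar{X}_{n}=\bar{X}_{n+1}$, and Lemma~\ref{lem:9.3} gives $\bar{X}_{n}=\bar{X}_{R}$.

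The main point needing care is the irreducibility of the $\bar{X}_{t}$, since Lemma~\ref{lem:4.5} concerns irreducible chains. A subsidiary check is that the closure of $X_{t}$ taken inside $X$ is the closed set whose ideal $I(X_{t})$ is prime. This follows from Lemma~\ref{lem:2.12}: $I(X_{t})$ is the kernel of $A(g_{t})\colon A(X)\to k[\xi_{1},\ldots,\xi_{t}]$, and a kernel into a domain is prime. Applying Lemma~\ref{lem:4.2} to the successive quotients $A(X)/I(\bar{X}_{t})$ then gives the strictly decreasing transcendence degrees that Lemma~\ref{lem:4.5} encodes.
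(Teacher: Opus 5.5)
Your proof is correct and follows the paper's own argument. Each $\bar{X}_t$ is irreducible as the closure of a continuous image of the irreducible $U^t$, Lemma~\ref{lem:4.5} bounds the length of the chain, and Lemma~\ref{lem:9.3} converts the resulting stabilization into $\bar{X}_n=\bar{X}_R$; you also spell out details the paper leaves implicit, namely the inclusion $X_t\subseteq X_{t+1}$ from the equilibrium condition and the primality of $\ker A(g_t)$.
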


\begin{proof}
Since $g^{\Omega}$ is continuous and $U^{t}$ is irreducible for each $t \geq 0, \bar{X}_{t}$ is irreducible. By Lemma~\ref{lem:4.5}, the chain $\left\{\bar{X}_{t}\right\}$ cannot have length greater than $n$. So Lemma~\ref{lem:9.3} gives the desired result.
\end{proof}

Thus \textit{in the finite-dimensional case a quasi-reachable system is quasi reachable in bounded time}. The analogous statement for reachability is false, as illustrated by the following example. Take $k:=\mathbb{R}, m=p:=1, X:=\mathbb{R}, x^{\#}:=0, P(x, u):=x+u^{2}-2 u$, and $h$ arbitrary. Then $X_{t}=\{x$ in $\mathbb{R} \mid x \geq-t\} \neq X_{R}=X$ for all $t \geq 0$.

\begin{lemma}\label{lem:9.5}
$\Sigma$ has a quasi-reachable closed subsystem $\Sigma_{Q}$.
\end{lemma}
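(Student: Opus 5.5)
The plan is to take as state space of $\Sigma_Q$ the Zariski closure $X_Q:=\overline{X}_R$ of the reachable set in $X$. By Remark~\ref{rem:2.8}(iii), $X_Q=V(I(X_R))$, and $I:=I(X_R)$ is a closed ideal. Lemma~\ref{lem:2.9} then says that $A(X)/I$ is $k$-reduced and that $X(A(X)/I)$ is naturally identified with $V(I)=X_Q$. So $X_Q$ is a $k$-space, and the inclusion $j:X_Q\rightarrow X$ is the map dual to the projection $A(X)\rightarrow A(X)/I$. Since that projection is onto, Lemma~\ref{lem:3.11}(e) shows that $j$ is a closed immersion. The initial state is $x^{\#}=g(0)\in X_R\subseteq X_Q$, and the output map is $h_Q:=h\circ j$, which is polynomial as a composition.

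The main step is invariance: we need $P(X_Q\times U)\subseteq X_Q$. We have $P(X_R\times U)\subseteq X_R$, because $P(g(w),u)=g(wu)$. Since $P$ is continuous (Lemma~\ref{lem:3.11}(b)), it maps the closure of $X_R\times U$ into $\overline{P(X_R\times U)}\subseteq X_Q$. It therefore remains to check that $X_R\times U$ is dense in $X_Q\times U$ for the Zariski topology of the product $k$-space. This is where some care is needed, since the product topology is not the Zariski topology of the product.

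I would argue this algebraically through Lemma~\ref{lem:1.8} and Lemma~\ref{lem:2.4}. Take $c=\sum a_i\otimes b_i\in A(X)\otimes k[T_1,\dots,T_m]$ with the $b_i$ linearly independent, and suppose $c$ vanishes on $X_R\times U$. For each fixed $x\in X_R$, the polynomial $\sum a_i(x)b_i$ vanishes on all of $U$. Because $k$ is infinite, this forces $a_i(x)=0$ for every $i$. Hence each $a_i$ lies in $I$, and so $c$ vanishes on $X_Q\times U$. Equivalently, $c(u)$ viewed as in Lemma~\ref{lem:2.4} defines a closed subset containing $X_R$. This gives the density and hence the invariance.

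With invariance in hand, $P$ restricts to a map $P_Q:X_Q\times U\rightarrow X_Q$. To see that $P_Q$ is polynomial, dualize. The transpose $A(P)$ carries $I$ into the kernel of $A(X)\otimes k[T]\rightarrow (A(X)/I)\otimes k[T]$, because for $a\in I$ the function $a\circ P$ vanishes on $X_Q\times U$. Using the linear-independence argument of Lemma~\ref{lem:1.8}(c), that kernel is exactly $I\otimes k[T]$. So $A(P)$ induces a homomorphism $A(X)/I\rightarrow (A(X)/I)\otimes k[T]$, and this homomorphism is the transpose of $P_Q$.

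Consequently $\Sigma_Q:=(X_Q,P_Q,h_Q,x^{\#})$ is a $k$-system, and $j$ is a closed-immersion $k$-system morphism $\Sigma_Q\rightarrow\Sigma$. Its reachable set is again $X_R$, since $g_Q=g$ as a set map. The closure of $X_R$ inside $X_Q$ equals $X_Q\cap\overline{X}_R=X_Q$, because $X_Q$ carries the subspace topology by Lemma~\ref{lem:2.9}. Thus $\Sigma_Q$ is quasi-reachable, as Lemma~\ref{lem:9.5} asserts.
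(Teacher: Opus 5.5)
Your proposal is correct and follows the paper's proof: take $X_Q:=\overline{X}_R$, use continuity of $P$ to get $P(X_Q\times U)\subseteq X_Q$, restrict $P$, $h$ and $x^{\#}$, and let the inclusion $X_Q\subseteq X$ serve as the closed immersion. You also verify two points the paper's one-line proof leaves implicit: that $X_Q\times U\subseteq\overline{X_R\times U}$ in the Zariski topology of $X\times U$, which your linear-independence argument handles correctly, and that the restricted transition map is again polynomial.
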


\begin{proof}
Let $X_{Q}:=\bar{X}_{R}$. Since $P$ is continuous, $P\left(X_{Q} \times U\right) \subseteq X_{Q}$. We may therefore define $\Sigma_{Q}:=\left(X_{Q}, P\left|X_{Q} \times U, h\right| X_{Q}, x^{\#}\right)$. The inclusion of $X_{Q} \subseteq X$ exhibits $\Sigma_{Q}$ as a closed subsystem of $\Sigma$.
\end{proof}

\subsection{Algebraic Observability}\label{sec:10}

As discussed in intuitive terms in \citet[Chapter 10]{kalman1968lectures}, observability of $\Sigma$ means the existence of a procedure for determining the state $x$ of $\Sigma$ from data obtained by experiments of the type: "apply an input sequence to $\Sigma$ beginning in state $x$ and observe the corresponding output sequence". In terms of the basic observables $\left\{h_{j}^{w}\right\}$ introduced in Definition~\ref{def:7.3}, this informal description of observability can be made precise by requiring the existence of a set of arbitrary functions of experiments
\[
\eta_{\lambda}\left(h_{j_{1}}^{w_{1}, \lambda}, \ldots, h_{j_{r}(\lambda)}^{w_{r}(\lambda), \lambda}\right),
\]
\label{op:64}%
with $\Lambda=\{\lambda\}$ some arbitrary indexing set, such that each state $x$ is uniquely determined by the data $\left\{\eta_{\lambda}(x)\right\}_{\lambda \in \Lambda}$.

When this procedure is interpreted in the weakest possible, nonconstructive sense, the functions $\eta_{\lambda}$ are completely arbitrary and "observability" reduces to the abstract definition~\ref{def:7.3}. In the case of (finite-dimensional) \textit{linear} systems over a field this abstract definition turns out to be equivalent to the existence of linear combinations $\eta_{\lambda}$ which give every coordinate of the state; see \citet[Chapter 10]{kalman1968lectures}. For linear systems over a \textit{commutative ring}, however, the abstract notion of observability is no longer equivalent to the existence of a linear procedure; some of the resulting problems are studied in \citet{sontag1976linear,sontag1978split}. In general, observability should be formalized with reference to the particular category over which the system in question is defined. Thus, in the context of $k$-systems observability is defined by requiring that each coordinate of the state (i.e., every costate of the system) be a \textit{polynomial} in the basic observables. This is the definition given below, which is a direct generalization of that given in \citet{sontag1975discrete} for polynomial systems. A direct study of bilinear response maps, \citet{kalman1979realization} suggests the same conclusion.

\begin{definition}\label{def:10.1}
\textit{The observation space} $\mathcal{L}(\Sigma)$ [\textit{respectively the observation algebra} $\mathcal{A}(\Sigma)$] \textit{is the linear subspace [respectively the subalgebra] of} $A(X)$ \textit{generated by the basic observables. An observable is a costate in} $\mathcal{A}(\Sigma)$. $\Sigma$ \textit{is algebraically observable iff} $\mathcal{A}(\Sigma)=A(X)$. \textit{When} $X$ \textit{is irreducible, the observation field} $\mathcal{Q}(\Sigma)$ \textit{is the quotient field of} $\mathcal{A}(\Sigma)$.
\end{definition}

Consider the observability map $h^{\Gamma}: X \rightarrow \Gamma=Y^{U^{*}}$ introduced in Definition~\ref{def:7.3}. Since each $h^{w}$ is a polynomial map, $h^{\Gamma}$ is also a polynomial map. By Remark~\ref{rem:1.9}, $A(\Gamma)$ is generated by the algebras $A(Y)$ appearing in the coproduct. So the image $A\left(h^{\Gamma}\right)(A(\Gamma))$ coincides with the algebra generated by the $A\left(h^{W}\right)(A(Y))$, each of which is itself generated by $h_{1}^{W}, \ldots, h_{p}^{W}$. We conclude that $\mathcal{A}(\Sigma)$ is the image of $A(\Gamma)$. The dual of this fact is:

\label{op:65}%

\begin{lemma}\label{lem:10.2}
$\Sigma$ is algebraically observable iff $h^{\Gamma}$ is a closed immersion.
\end{lemma}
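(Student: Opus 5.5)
The plan is to dualize and reduce everything to Lemma~\ref{lem:3.11}(e). By Lemma~\ref{lem:3.7} and Corollary~\ref{cor:3.8}, the polynomial map $h^{\Gamma}: X \rightarrow \Gamma$ corresponds to its transpose
\[
A(h^{\Gamma}): A(\Gamma) \rightarrow A(X).
\]
Lemma~\ref{lem:3.11}(e) says that $h^{\Gamma}$ is a closed immersion if and only if $A(h^{\Gamma})$ is onto. (That lemma is stated for $k$-spaces of the form $X(A)$, which suffices by Lemma~\ref{lem:3.3}.) So it is enough to show that the image of $A(h^{\Gamma})$ is exactly $\mathcal{A}(\Sigma)$.

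That identification is the content of the paragraph preceding the statement, and I would make it precise in two steps.

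First, $\Gamma = Y^{U^{*}}$ is the categorical product of copies of $Y = k^{p}$ indexed by $U^{*}$. By Remark~\ref{rem:1.9} and its dual, $A(\Gamma)$ is the coproduct of the algebras $A(Y) = k[T_{1},\ldots,T_{p}]$, and it is generated by their images under the canonical inclusions. The $w$-th inclusion is the transpose of the projection $\pi_w: \Gamma \rightarrow Y$. Hence $A(\Gamma)$ is generated as a $k$-algebra by the coordinate functions $T_j \circ \pi_w$, for $j=1,\ldots,p$ and $w \in U^{*}$.

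Second, since $\pi_w \circ h^{\Gamma} = h^{w}$, the transpose sends each generator to
\[
A(h^{\Gamma})(T_j\circ\pi_w) = T_j\circ h^{w} = \pi_j \circ h^{w} = h_j^{w},
\]
which is a basic observable. Because $A(h^{\Gamma})$ is a $k$-algebra homomorphism, its image is the subalgebra of $A(X)$ generated by the images of generators. That subalgebra is generated by the basic observables, so it equals $\mathcal{A}(\Sigma)$ by Definition~\ref{def:10.1}.

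Combining these, algebraic observability ($\mathcal{A}(\Sigma)=A(X)$) is equivalent to $A(h^{\Gamma})$ being onto, hence to $h^{\Gamma}$ being a closed immersion.

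The only delicate point is the first step: checking that $h^{\Gamma}$ is polynomial and that the coproduct for the infinite index set $U^{*}$ really is generated by the factor algebras. Both facts are given by the direct-limit construction following Remark~\ref{rem:1.9}, since each element lies in a finite tensor product. Polynomiality of $h^{\Gamma}$ then follows from the universal property of the product, given that each $h^{w}$ is polynomial as a composite of polynomial maps.
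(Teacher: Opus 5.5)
Your proposal is correct and follows the paper's own argument. The paper likewise uses that $A(\Gamma)$ is generated by the copies of $A(Y)$ in the coproduct (Remark~\ref{rem:1.9}) to identify the image of $A(h^{\Gamma})$ with $\mathcal{A}(\Sigma)$, and then obtains the lemma as the dual statement via Lemma~\ref{lem:3.11}(e); your explicit computation $A(h^{\Gamma})(T_j\circ\pi_w)=h_j^{w}$ and your remark on the infinite coproduct only spell out details the paper leaves implicit.
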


\begin{corollary}\label{cor:10.3}
If $\Sigma$ is algebraically observable then $\Sigma$ is (abstractly) observable.

Algebraic observability is a stronger requirement than abstract observability. This is clear from the counterexamples given in Discussion~\ref{disc:3.12}.
\end{corollary}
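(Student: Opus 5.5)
The plan is to read the corollary off Lemma~\ref{lem:10.2} together with the fact, stated in Discussion~\ref{disc:3.12} and derived from Lemma~\ref{lem:3.11}(e), that a closed immersion is one-to-one. Suppose $\Sigma$ is algebraically observable. By Lemma~\ref{lem:10.2}, the observability map $h^{\Gamma}: X \rightarrow \Gamma$ is a closed immersion. By Definition~\ref{def:3.10}, it therefore factors as an isomorphism $X \cong V$ followed by the inclusion $V \subseteq \Gamma$ of a closed subset. An isomorphism of $k$-spaces is in particular a bijection of the underlying sets, and an inclusion is injective, so $h^{\Gamma}$ is one-to-one. This is precisely abstract observability as defined in Definition~\ref{def:7.3}.

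Alternatively, I can argue directly from the definitions without going through Lemma~\ref{lem:10.2}. Take $x \neq y$ in $X$. Since $X \simeq X(A(X))$ bijectively (Definition~\ref{def:3.1}), the evaluation maps $e_x$ and $e_y$ are distinct homomorphisms. Hence some $a \in A(X)$ satisfies $a(x) \neq a(y)$. Algebraic observability says $A(X) = \mathcal{A}(\Sigma)$, so $a$ is a polynomial in finitely many basic observables $h_j^{w}$. If every $h_j^{w}$ took the same value at $x$ and at $y$, then so would $a$, a contradiction. Therefore some $h^{w}(x) \neq h^{w}(y)$, which gives $h^{\Gamma}(x) \neq h^{\Gamma}(y)$.

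For the strictness remark, I would take the injective but non-closed-immersion map $f_2: k \to k^2$, $x \mapsto (x^2, x^3)$, from Discussion~\ref{disc:3.12} and realize it as an observability map. Concretely, the system $x(t+1) = u(t)$ with output $y = (x^2, x^3)$ (with $p = 2$) has observation algebra $k[T^2, T^3] \neq k[T]$ but is abstractly observable.

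None of these steps is a real obstacle. The only point requiring care is that distinct points of a $k$-space are separated by some polynomial function, and this follows from the bijectivity built into Definition~\ref{def:3.1}.
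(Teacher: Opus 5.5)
Your proposal is correct and follows the paper's own route: the corollary is read off Lemma~\ref{lem:10.2}, which says algebraic observability means $h^{\Gamma}$ is a closed immersion, combined with the observation in Discussion~\ref{disc:3.12} that closed immersions are one-to-one. Your direct separation argument is a valid unpacking of the same fact. Your strictness example $x(t+1)=u(t)$, $y=(x^{2},x^{3})$ is sound in every characteristic; it is essentially the normalized system $\bar{\Sigma}$ of Example~\ref{ex:26.10} with the two output components swapped.
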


We remarked in Lemma~\ref{lem:9.5} that every $k$-system has a closed quasi-reachable subsystem. It is less trivial to prove the corresponding statement for algebraic observability:

\begin{proposition}\label{prop:10.4}
$\Sigma$ dominates an algebraically observable system $\Sigma^{\text {obs }}$.
\end{proposition}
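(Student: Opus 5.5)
The natural candidate is to take as state space the $k$-space dual to the observation algebra: set $B:=\mathcal{A}(\Sigma)\subseteq A(X)$. This is a subalgebra of a $k$-reduced algebra, hence $k$-reduced by Lemma~\ref{lem:1.2}. Define $X^{\mathrm{obs}}:=X(B)$, and let $T:=X(\iota_B):X\to X^{\mathrm{obs}}$ be the polynomial map dual to the inclusion $\iota_B:B\hookrightarrow A(X)$. Since the inclusion is one-to-one, Lemma~\ref{lem:3.11}(d) shows that $T$ is dominating. The plan is to endow $X^{\mathrm{obs}}$ with transition and output maps making $T$ a $k$-system morphism, and then to check that the resulting system is algebraically observable.

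The output map is straightforward. The components $h_j=h_j^{(\emptyset)}$ lie in $B$, so $A(h):k[T_1,\dots,T_p]\to A(X)$ factors through $B$. This gives $h^{\mathrm{obs}}:X^{\mathrm{obs}}\to Y$ with $h^{\mathrm{obs}}\circ T=h$. For the transition, the key step is the invariance
\[
A(P)(B)\subseteq B\otimes k[T_1,\dots,T_m]\subseteq A(X)\otimes A(U)=A(X\times U).
\]
Because $A(P)$ is a homomorphism and $B$ is generated as an algebra by the $h_j^w$, it suffices to check generators. For these, $h_j^w\circ P$ evaluated at $(x,u)$ equals $h_j^{uw}(x)$. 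Expanding in $u$, each coefficient of a monomial in $u$ is a $k$-linear combination of the functions $x\mapsto h_j^{u_\ell w}(x)$ for finitely many fixed values $u_\ell$; this follows by interpolation, using that $k$ is infinite and that the expression has bounded degree in $u$. Hence these coefficients lie in $\mathcal{L}(\Sigma)\subseteq B$.

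This invariance defines a homomorphism $\theta:B\to B\otimes A(U)$ compatible with $A(P)$ through the inclusion $\iota_B\otimes 1$. Lemma~\ref{lem:1.8} identifies $X(B\otimes A(U))$ with $X^{\mathrm{obs}}\times U$, so $P^{\mathrm{obs}}:=X(\theta)$ is a polynomial map satisfying $P^{\mathrm{obs}}\circ(T\times 1)=T\circ P$. Setting $x^{\#,\mathrm{obs}}:=T(x^{\#})$, the equilibrium property transfers from $x^{\#}$. Thus $T$ is a dominating $k$-system morphism.

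It remains to show algebraic observability, which I expect to be the main subtle point: the basic observables of $\Sigma^{\mathrm{obs}}$, pulled back via $A(T)$, must be exactly the $h_j^w$. By the commuting square, $(h^{\mathrm{obs}})^w\circ T=h^w$ for all $w$, so $A(T)$ sends the basic observables of $\Sigma^{\mathrm{obs}}$ onto the generators of $B$. Since $A(T)$ is the inclusion and is injective, this gives $\mathcal{A}(\Sigma^{\mathrm{obs}})=B=A(X^{\mathrm{obs}})$. The care needed is in the identification of $X(B\otimes A(U))$ and in the interpolation argument for the invariance above.
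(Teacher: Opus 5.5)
Your proposal is correct and follows the paper's proof: the paper also takes $X^{\mathrm{obs}}=X(\mathcal{A}(\Sigma))$ with the map dual to the inclusion, factors $h$ through it, and reduces everything to the invariance $A(P)(\mathcal{A}(\Sigma))\subseteq\mathcal{A}(\Sigma)[T_1,\dots,T_m]$. It proves that invariance generator-by-generator (Lemma~\ref{lem:10.6}) using the same interpolation argument you give, packaged as the Main Lemma (Lemma~\ref{lem:10.7}); your explicit check that the basic observables of $\Sigma^{\mathrm{obs}}$ pull back to the $h_j^w$, so that $\mathcal{A}(\Sigma^{\mathrm{obs}})=A(X^{\mathrm{obs}})$, fills in a step the paper leaves implicit.
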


\begin{proof}
Let $i: \mathcal{A}(\Sigma) \rightarrow A(X)$ be the inclusion map. Let $X^{\text {obs }}:=X(\mathcal{A}(\Sigma)), x^{\text {\#obs }}:=X(i)\left(x^{\#}\right)$. Since $A(h)(A(Y)) \subseteq A(X)$, we may factor $h: X \rightarrow Y$ as

\[
\begin{tikzpicture}[x=1cm,y=1cm]
\node (X) at (0,0) {$X$};
\node (Y) at (4,0) {$Y$};
\node (Xobs) at (2,-1.6) {$X^{\mathrm{obs}}$,};
\draw[arr] (X) -- node[above,lab] {$h$} (Y);
\draw[arr] (X) -- node[below left,lab] {$X(i)$} (Xobs);
\draw[arr] (Xobs) -- node[right,lab] {$h^{\mathrm{obs}}$} (Y);
\end{tikzpicture}
\]
for some $h^{\text {obs }}$.

Thus the proof will be complete if we can prove that $P$ induces through $X(i)$ a $k$-system morphism $P^{\text {obs }}: X^{\text {obs }} \times U \rightarrow X^{\text {obs }}$; then $X(i)$ is the required dominating $k$-system morphism $\Sigma \rightarrow \Sigma^{\text {obs }}$. Therefore we must show that
\begin{equation}\label{eq:10.5}
A(P)(\mathcal{A}(\Sigma)) \subseteq \mathcal{A}(\Sigma)\left[T_{1}, \ldots, T_{m}\right] .
\end{equation}
i.e., we must prove that when an element $q$ of $A(P)(\mathcal{A}(\Sigma))$ is expressed as a polynomial in the variables $T_{1}, \ldots, T_{m}$, the coefficients of such a polynomial $q$ are again in $\mathcal{A}(\Sigma)$. Since the algebra $\mathcal{A}(\Sigma)$ is generated by the space $\mathcal{L}(\Sigma)$, statement \eqref{eq:10.5} follows from the following

\label{op:66}%
\end{proof}

\begin{lemma}\label{lem:10.6}
Let $\hat{\mathcal{L}}_{t}$ be the subspace of $A(X)$ generated by all those $h_{j}^{w}$ with $w$ in $U^{t}$. Then
\[
A(P)\left(\hat{\mathcal{L}}_{t}\right) \subseteq \hat{\mathcal{L}}_{t+1}\left[T_{1}, \ldots, T_{m}\right] .
\]
\end{lemma}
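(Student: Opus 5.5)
It suffices to treat a single generator $h_j^w$ with $w=(w_t,\ldots,w_1)$ in $U^t$. I would compute $A(P)(h_j^w)$ as the function on $X\times U$ given by
\[
(x,u)\mapsto h_j^w(P(x,u))=\pi_j\circ h\circ P^{(t)}(P(x,u),w)=\pi_j\circ h\circ P^{(t+1)}(x,(u,w_t,\ldots,w_1)),
\]
using the recursive definition of the iterates: $P^{(t+1)}(x,(u,w))=P^{(t)}(P(x,u),w)$. This follows by induction from Definition~\ref{def:7.3}, since the composition in either order is just ``apply $u$ first, then $w_t,\ldots,w_1$''. So $A(P)(h_j^w)(x,u)=h_j^{uw}(x)$, where $uw\in U^{t+1}$ is the concatenation.

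Next I would show that this function of $(x,u)$ lies in $\hat{\mathcal{L}}_{t+1}[T_1,\ldots,T_m]$. Since $A(X\times U)=A(X)[T_1,\ldots,T_m]$ by Lemma~\ref{lem:1.8} and Example~\ref{ex:1.4}, we can write
\[
A(P)(h_j^w)=\sum_{\beta\in\mathbb{N}^m,\ |\beta|\le d} c_\beta\, T^\beta, \qquad c_\beta\in A(X),
\]
with finitely many terms. The claim is then that each coefficient $c_\beta$ lies in $\hat{\mathcal{L}}_{t+1}$. For each fixed $u\in U=k^m$, the function $\sum_\beta c_\beta u^\beta = h_j^{uw}$ lies in $\hat{\mathcal{L}}_{t+1}$.

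The key step, and the only nontrivial one, is to recover the $c_\beta$ from finitely many such evaluations. I would use that $k$ is infinite. The monomials $T^\beta$ with $|\beta|\le d$ are linearly independent as functions on $k^m$, so there exist points $u^{(1)},\ldots,u^{(N)}$ in $k^m$, with $N$ the number of such monomials, for which the matrix $\big((u^{(i)})^\beta\big)_{i,\beta}$ is invertible. Such points can be found by a standard inductive choice, or by noting that the evaluation map from the polynomial space to $k^{k^m}$ is injective, so some $N$ coordinates already form an injective map. Inverting this matrix expresses each $c_\beta$ as a $k$-linear combination of the functions $h_j^{u^{(i)}w}\in\hat{\mathcal{L}}_{t+1}$, which proves that $c_\beta\in\hat{\mathcal{L}}_{t+1}$.

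Linearity of $A(P)$ then extends the result from generators to all of $\hat{\mathcal{L}}_t$.
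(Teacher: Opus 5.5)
Your proposal is correct and follows essentially the same route as the paper: you observe that specializing $A(P)(h_j^w)$ at $u$ gives $h_j^{uw}\in\hat{\mathcal{L}}_{t+1}$, and then recover the coefficients from finitely many evaluations via an invertible evaluation matrix of monomials. The paper packages that last step as Main Lemma, Part~1 (Lemma~\ref{lem:10.7}, with $F=k[T_1,\ldots,T_m]$ and the $c_i$ monomials), whereas you carry it out inline.
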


\begin{proof}
We first observe that when $T_{1}, \ldots, T_{m}$ are specialized at an $u$ in $U=k^{m}$, a polynomial in $A(P) \hat{\mathcal{L}}_{t}$ becomes an element of $\hat{\mathcal{L}}_{t+1}$. Indeed, denote by $e(u): A(X)\left[T_{1}, \ldots, T_{m}\right] \rightarrow k$ the corresponding specialization. Then
\[
e(u) \circ A(P) \circ A\left(h^{w}\right)=A\left(h^{u w}\right) .
\]
Therefore $e(u)\left(A(P) \hat{\mathcal{L}}_{t}\right) \subseteq \hat{\mathcal{L}}_{t+1}$, as wanted. Our claim follows from the following more general result (with $A=A(X), F=k\left[T_{1}, \ldots, T_{m}\right]$ and the $\left\{c_{i}\right\}$ a finite set of monomials in $T_{1}, \ldots, T_{m}$):
\end{proof}

\begin{lemma}[Main Lemma, Part 1]\label{lem:10.7}
Let $A, F$ be vector spaces over $k$, with $F$ a space of functions $Z \rightarrow k$ for some set $Z$. Assume that $c_{1}, \ldots, c_{n}$ are linearly independent elements of $F$, and let $a_{1}, \ldots, a_{n}$ be in $A$. Then the linear subspace of $A$ generated by
\[
\left\{\sum_{i=1}^{n} c_{i}(z) a_{i}, \quad z \text { in } Z\right\},
\]
coincides with the subspace generated by $a_{1}, \ldots, a_{n}$.
\end{lemma}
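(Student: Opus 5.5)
One inclusion is immediate. Each element $\sum_{i} c_{i}(z) a_{i}$ is a $k$-linear combination of $a_{1}, \ldots, a_{n}$, so the span $W$ of $\left\{\sum_i c_i(z)a_i,\ z \in Z\right\}$ is contained in the span of $a_{1}, \ldots, a_{n}$. The whole proof therefore reduces to the reverse inclusion: we must show that each $a_{j}$ lies in $W$. The plan is to exploit linear independence of the $c_{i}$ as functions on $Z$ in order to find finitely many points $z_{1}, \ldots, z_{n}$ at which the evaluation matrix is invertible.

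The key step is the following claim. If $c_{1}, \ldots, c_{n}$ are linearly independent functions $Z \rightarrow k$, then there exist $z_{1}, \ldots, z_{n}$ in $Z$ such that the $n \times n$ matrix $M=\left(c_{i}(z_{l})\right)_{i,l}$ is nonsingular. I would argue by duality. Consider the linear map
\[
\Phi: k^{n} \rightarrow k^{Z}, \qquad (\lambda_{1}, \ldots, \lambda_{n}) \mapsto \sum_i \lambda_i c_i .
\]
It is injective by independence. Now look at the vectors $v_{z}:=\left(c_{1}(z), \ldots, c_{n}(z)\right)\in k^{n}$ for $z\in Z$. If they spanned a proper subspace of $k^{n}$, there would be a nonzero $\lambda$ orthogonal to all of them, that is, $\sum_i \lambda_i c_i(z)=0$ for every $z$. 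This contradicts independence. Hence the $v_z$ span $k^n$, and we can select $n$ of them, $v_{z_1},\ldots,v_{z_n}$, forming a basis. Equivalently, $M$ is invertible. (Alternatively, one can induct on $n$ using determinant expansion, but the duality argument is cleaner.)

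With $M$ invertible, set $b_{l}:=\sum_i c_i(z_l) a_i \in W$ for $l=1,\ldots,n$. In matrix form $(b_1,\ldots,b_n)=(a_1,\ldots,a_n)M$, computed formally with coefficients in $k$ acting on the vector space $A$. Multiplying by $M^{-1}$ expresses each $a_j$ as a $k$-linear combination of $b_1,\ldots,b_n$, so $a_j\in W$. This gives equality of the two subspaces.

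I expect no real obstacle here. The only point needing care is the choice of evaluation points, and it is handled by the spanning argument above, which uses only that $k$ is a field; it does not need $k$ infinite, since that hypothesis enters only when the lemma is applied to monomials in $T_1,\ldots,T_m$ viewed as functions on $Z=U$.
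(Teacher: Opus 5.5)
Your proof is correct and follows the paper's argument. The paper also reduces the statement to finding $z_1,\ldots,z_n$ with $\bigl(c_i(z_j)\bigr)$ nonsingular, and obtains them as $n$ independent columns of the $n\times Z$ matrix whose rows are the $c_i$, using that this matrix has rank $n$. Your duality argument (a proper column span would give a nontrivial relation $\sum_i\lambda_i c_i=0$) simply makes explicit why that rank statement holds when there may be infinitely many columns, and you are right that the argument works over any field.
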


\begin{proof}
Clearly $\sum c_{i}(z) a_{i}$ is in the space generated by $a_{1}, \ldots, a_{n}$. It is then enough to prove that each $a_{i}$, say $a_{1}$, can be written as
\[
a_{1}=\sum_{j=1}^{n} \lambda_{j}\left(\sum_{i=1}^{n} c_{i}\left(z_{j}\right) a_{i}\right),
\]
for some $z_{1}, \ldots, z_{n} \in Z$. Rewriting this expression as
\[
a_{1}=\sum_{i=1}^{n} a_{i}\left(\sum_{j=1}^{n} \lambda_{j} c_{i}\left(z_{j}\right)\right),
\]
\label{op:67}%
we see that it is enough to prove the existence of a $\lambda$ in $k^{n}$ such that $T \lambda=(1,0,0, \ldots, 0)^{\prime}$, where
\[
T=\left(\begin{array}{cccc}
c_{1}\left(z_{1}\right) & c_{1}\left(z_{2}\right) & \cdots & c_{1}\left(z_{n}\right) \\
\vdots & \vdots & & \vdots \\
c_{n}\left(z_{1}\right) & c_{n}\left(z_{2}\right) & \cdots & c_{n}\left(z_{n}\right)
\end{array}\right) .
\]
It is therefore enough to find $z_{1}, \ldots, z_{n}$ such that $T$ is nonsingular.

Form the $n \times Z$ matrix $\hat{T}$ whose $i$-th row is $c_{i}$ seen as an element of $k^{Z}$. Then existence of $T$ (as a submatrix of $\hat{T}$) follows from the fact that rank $T=n$.
\end{proof}

\subsection{Existence and Uniqueness of Canonical Realizations}\label{sec:11}

\begin{definition}\label{def:11.1}
$\Sigma$ \textit{is canonical iff} $\Sigma$ \textit{is quasi-reachable and algebraically observable}.
\end{definition}

We associate yet another map to $f$. The \textit{extended observability map} $f^{\Omega \Gamma}: \Omega \rightarrow \Gamma$ of $f$ is the observability map of the system $\Sigma_{\text {free }}(f)$ introduced in Theorem~\ref{thm:8.3}. The observability map $f^{\Gamma}: U[z] \rightarrow \Gamma$ introduced in Definition~\ref{def:7.5} is clearly the restriction of $f^{\Omega \Gamma}$ to $U[z]$. Since $U[z]$ is dense in $\Omega$, we may immediately generalize \eqref{eq:7.6}:
\begin{equation}\label{eq:11.2}
\Sigma \text{ realizes } f \text{ iff } f^{\Omega \Gamma}=h^{\Gamma} \circ g^{\Omega} .
\end{equation}

\begin{lemma}\label{lem:11.3}
Let $\Sigma=\left(X, P, h, x^{\#}\right)$ be a quasi-reachable and $\hat{\Sigma}=\left(\hat{X}, \hat{P}, \hat{h}, \hat{x}^{\#}\right)$ an algebraically observable $k$-system which realize $f$. Then there exists a unique $k$-system morphism $T: \Sigma \rightarrow \hat{\Sigma}$.
\end{lemma}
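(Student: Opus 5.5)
The plan is to dualize everything and apply the canonical factorization lemma (Lemma~\ref{lem:1.5}) to the transposes of the maps in \eqref{eq:11.2}. Both systems realize $f$, so
\[
f^{\Omega \Gamma}=h^{\Gamma} \circ g^{\Omega}=\hat{h}^{\Gamma} \circ \hat{g}^{\Omega}.
\]
Transposing gives
\[
A(f^{\Omega\Gamma})=A(g^{\Omega})\circ A(h^{\Gamma})=A(\hat g^{\Omega})\circ A(\hat h^{\Gamma}): A(\Gamma)\to \Psi .
\]
By Lemma~\ref{lem:10.2}, algebraic observability of $\hat\Sigma$ means that $A(\hat h^{\Gamma})$ is onto $A(\hat X)$. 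By Lemma~\ref{lem:9.2}, quasi-reachability of $\Sigma$ means that $A(g^{\Omega})$ is one-to-one. So we have two factorizations of $\tau:=A(f^{\Omega\Gamma})$. The first, through $A(\hat X)$, has an onto first arrow. The second, through $A(X)$, has a one-to-one second arrow. Lemma~\ref{lem:1.5}(ii) then gives a unique homomorphism $\eta: A(\hat X)\to A(X)$ with
\[
\eta\circ A(\hat h^{\Gamma})=A(h^{\Gamma}) \quad\text{and}\quad A(g^{\Omega})\circ\eta=A(\hat g^{\Omega}).
\]
Set $T:=X(\eta)$. By Lemma~\ref{lem:3.7}, $T$ is a polynomial map $X\to\hat X$ with $\hat h^{\Gamma}\circ T=h^{\Gamma}$ and $T\circ g^{\Omega}=\hat g^{\Omega}$.

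Next I check that $T$ is a $k$-system morphism. Restricting $T\circ g^\Omega=\hat g^\Omega$ to $U[z]$ gives $T\circ g=\hat g$. Taking the empty input sequence yields $T(x^\#)=\hat x^\#$. Composing $\hat h^\Gamma\circ T=h^\Gamma$ with the projection onto the $(\emptyset)$ factor gives $\hat h\circ T=h$. The remaining condition is
\[
T\circ P=\hat P\circ(T\times 1).
\]
On reachable states this is immediate: $T(P(g(w),u))=T(g(wu))=\hat g(wu)=\hat P(\hat g(w),u)=\hat P(T g(w),u)$. Both sides are polynomial, hence continuous, maps $X\times U\to \hat X$. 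They agree on $X_R\times U$.

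I expect this density step to be the main obstacle. In the Zariski topology of a product, the set $\overline{X_R}\times U$ need not equal $\overline{X_R\times U}$ a priori, so continuity alone is not enough. I would argue algebraically instead. For $b\in A(\hat X)$, the element $d=A(P)(A(T)b)-A(T\times1)(A(\hat P)b)$ lies in $A(X)[T_1,\dots,T_m]$. Write $d=\sum a_\beta T^\beta$. For each fixed $u\in U$, the function $d(\cdot,u)$ vanishes on $X_R$, hence on $\overline{X_R}=X$, so $d(x,u)=0$ for all $(x,u)$. For fixed $x$, the polynomial $\sum a_\beta(x)T^\beta$ then vanishes on all of $k^m$. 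Since $k$ is infinite, every $a_\beta(x)=0$, so each $a_\beta=0$ and $d=0$. Since $d=0$ for all $b$, Lemma~\ref{lem:3.7} gives the equality of the two maps.

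Uniqueness: any $k$-system morphism $T'$ satisfies $T'\circ g=\hat g$. Then $T'\circ g^\Omega$ and $\hat g^\Omega$ are polynomial maps agreeing on the dense set $U[z]$ (Lemma~\ref{lem:6.6}), so they are equal. Dually, $A(g^\Omega)\circ A(T')=A(\hat g^\Omega)=A(g^\Omega)\circ\eta$. Injectivity of $A(g^\Omega)$ gives $A(T')=\eta$, hence $T'=T$. Observability of $\hat\Sigma$ is not needed for this step; it only entered the existence part.
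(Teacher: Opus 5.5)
Your proof is correct and follows the paper's route. You apply the dual of Lemma~\ref{lem:1.5} to the two factorizations of $f^{\Omega\Gamma}$ to obtain $T$, then extend the identity $T\circ P=\hat P\circ(T\times 1)$ from $X_R\times U$ to $X\times U$. Your coefficient argument proves the density of $X_R\times U$ in $X\times U$, which the paper only asserts, and your uniqueness argument via injectivity of $A(g^{\Omega})$ spells out a step the paper leaves implicit.
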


\begin{proof}
Consider the diagram

\begin{equation}
\begin{tikzpicture}[x=1cm,y=1cm]
\node (Om) at (0,0) {$\Omega$};
\node (X) at (3,1.2) {$X$};
\node (Xh) at (3,-1.2) {$\hat X$};
\node (G) at (6,0) {$\Gamma$};
\draw[arr] (Om) -- node[above left,lab] {$g^\Omega$} (X);
\draw[arr] (Om) -- node[below left,lab] {$\hat g^\Omega$} (Xh);
\draw[arr,densely dotted] (X) -- node[right,lab] {$T$} (Xh);
\draw[arr] (X) -- node[above right,lab] {$h^\Gamma$} (G);
\draw[arr] (Xh) -- node[below right,lab] {$\hat h^\Gamma$} (G);
\end{tikzpicture} \label{eq:11.4}
\end{equation}

\label{op:68}%
By hypothesis, $g^{\Omega}$ is dominating, $\hat{h}^{\Gamma}$ is a closed immersion, and $h^{\Gamma} \circ g^{\Omega}=f^{\Omega \Gamma}=\hat{h}^{\Gamma} \circ \hat{g}^{\Omega}$. Thus by the dual of Lemma~\ref{lem:1.5} there exists a polynomial map $T: X \rightarrow \hat{X}$ making \eqref{eq:11.4} commutative. Restricting to $U[z] \subseteq \Omega$, the diagram

\[
\begin{tikzpicture}[x=1cm,y=1cm]
\node (U) at (0,0) {$U[z]$};
\node (XR) at (3,1.2) {$X_R$};
\node (Xh) at (3,-1.2) {$\hat X$};
\node (G) at (7.2,0) {$\Gamma$};
\draw[arr] (U) -- node[above left,lab] {$g$} (XR);
\draw[arr] (U) -- node[below left,lab] {$\hat g$} (Xh);
\draw[arr] (XR) -- node[right,lab] {$T|_{X_R}$} (Xh);
\draw[arr] (XR) -- node[above right,lab] {$(h|X_R)^\Gamma$} (G);
\draw[arr] (Xh) -- node[below right,lab] {$\hat h^\Gamma$} (G);
\end{tikzpicture}
\]
commutes. Thus we may apply Lemma~\ref{lem:7.7} to the abstract systems $\left(X_{R}, P \mid X_{R} \times U\right.$, $\left.h \mid X_{R}, x^{\#}\right)$ and $\hat{\Sigma}$. We conclude that the continuous maps $T \circ P: X \times U \rightarrow \hat{X}$ and $\hat{P} \circ\left(T \times 1_{u}\right)$ coincide in the dense subset $X_{R} \times U$, so they are equal.
\end{proof}

The main result of this section is:

\begin{theorem}\label{thm:11.5}
Let $f$ be a polynomial response map. Then there is a canonical $k$-system $\Sigma_{f}$ realizing $f$. If $\hat{\Sigma}$ is any other canonical $k$-system realizing $f$, there is a unique $k$-system isomorphism $T: \Sigma_{f} \rightarrow \hat{\Sigma}$.
\end{theorem}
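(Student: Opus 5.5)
Existence will come from reducing the free realization twice; uniqueness will come from Lemma~\ref{lem:11.3} applied in both directions.

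\emph{Existence.} By Theorem~\ref{thm:8.3}, the free realization $\Sigma_{\text{free}}(f)=(\Omega,\delta_{1},f^{\Omega},(0))$ realizes $f$. Its reachability map is the inclusion $U[z]\subseteq\Omega$, so by Lemma~\ref{lem:6.6} it is quasi-reachable; equivalently, $g^{\Omega}=1_{\Omega}$, so $A(g^{\Omega})$ is one-to-one (Lemma~\ref{lem:9.2}). Apply Proposition~\ref{prop:10.4} to $\Sigma_{\text{free}}(f)$. This gives a dominating $k$-system morphism $X(i)\colon\Sigma_{\text{free}}(f)\to\Sigma_{f}:=\Sigma_{\text{free}}(f)^{\text{obs}}$, where $\Sigma_f$ is algebraically observable. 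Its state space is $X(\mathcal{A}_f)$, the space of $k$-points of the observation algebra of the free realization.

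Two facts remain to check. First, $\Sigma_{f}$ realizes $f$, because any system morphism preserves the response (the remark after Definition~\ref{def:7.4}). Second, $\Sigma_{f}$ is quasi-reachable. Its reachability map is $X(i)\circ g^{\Omega}_{\text{free}}$ restricted to $U[z]$. Since $U[z]$ is dense in $\Omega$ and $X(i)$ is continuous and dominating, the image of $U[z]$ is dense in $X(\mathcal{A}_f)$: we have $\overline{X(i)(U[z])}\supseteq X(i)(\overline{U[z]})=X(i)(\Omega)$, whose closure is everything. Dually, $A(g^{\Omega}_{f})$ is the inclusion $\mathcal{A}_f\subseteq\Psi$, which is injective, and Lemma~\ref{lem:9.2} applies. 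Hence $\Sigma_{f}$ is canonical.

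\emph{Uniqueness.} Let $\hat\Sigma$ be another canonical realization of $f$. Lemma~\ref{lem:11.3}, applied with $\Sigma_f$ as the quasi-reachable system and $\hat\Sigma$ as the algebraically observable one, gives a unique morphism $T\colon\Sigma_f\to\hat\Sigma$. Applied with the roles swapped, it gives a morphism $S\colon\hat\Sigma\to\Sigma_f$. Then $S\circ T$ and $1_{\Sigma_f}$ are both $k$-system morphisms $\Sigma_f\to\Sigma_f$, and $\Sigma_f$ is both quasi-reachable and algebraically observable. The uniqueness clause of Lemma~\ref{lem:11.3} therefore forces $S\circ T=1$, and likewise $T\circ S=1$. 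So $T$ is a $k$-system isomorphism, and it is unique because any isomorphism is in particular a morphism $\Sigma_f\to\hat\Sigma$.

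The main obstacle is already packaged in earlier results: the co-transition invariance \eqref{eq:10.5}, proved via Lemmas~\ref{lem:10.6} and \ref{lem:10.7}, which makes $\Sigma^{\text{obs}}$ well defined. In the proof itself, the point needing care is that composites of morphisms are again $k$-system morphisms. This holds because composites of polynomial maps are polynomial and the commuting squares of Definition~\ref{def:7.4} paste together, so the uniqueness clause of Lemma~\ref{lem:11.3} can legitimately be applied to $S\circ T$.
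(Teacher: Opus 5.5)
Your proposal is correct and follows the paper's own proof: existence comes from applying Proposition~\ref{prop:10.4} to the quasi-reachable free realization $\Sigma_{\text{free}}(f)$, and uniqueness comes from Lemma~\ref{lem:11.3} (diagram \eqref{eq:11.4}) applied in both directions. You spell out details the paper leaves implicit, namely why domination preserves quasi-reachability and why the uniqueness clause forces $S\circ T=1$ and $T\circ S=1$.
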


\begin{proof}
Uniqueness is clear by \eqref{eq:11.4}. To prove existence, take the system $\Sigma_{\text {free }}(f)$; this is quasi-reachable because $g^{\Omega}$ is the identity. Applying Proposition~\ref{prop:10.4} we obtain the observable system $\Sigma_{f}:=\left(\Sigma_{\text {free }}(f)\right)^{\text {obs }}$. Since $\Sigma_{\text {free }}(f)$ dominates $\Sigma_{f}$, the latter is also quasi-reachable.
\end{proof}

\label{op:69}%
\section{Finiteness Conditions}\label{ch:4}

We have shown in the previous section that any polynomial response map $f$ is realizable by a canonical $k$-system. We now turn to studying what conditions must $f$ satisfy in order that the canonical system $\Sigma_{f}$ has various finiteness properties.

The main tool in this study will be three structures obtained from the basic observables Definition~\ref{def:7.5} of $f$ by different algebraic operations: the observation space $\mathcal{L}_{f}$, algebra $\mathcal{A}_{f}$ and field $\mathcal{Q}_{f}$. We show that the conditions
\[
\begin{aligned}
& \mathcal{Q}_{f}=\text { finitely generated field over } k, \\
& \mathcal{A}_{f}=\text { finitely generated algebra over } k, \\
& \mathcal{L}_{f}=\text { finitely generated vector space over } k,
\end{aligned}
\]
each corresponds to an important characterization of $\Sigma_{f}$.

We then relate each of the above conditions to the existence of certain input/output equations for $f$.

We also show how to check the finiteness condition on $\mathcal{Q}_{f}$ via a Jacobian criterion. As an application we show that
\[
y(t)=u(t-1)+u(t-2)^{2}+\ldots+u(t-\ell)^{\ell}+\ldots,
\]
has no possible finite-dimensional realization.

In the final section we discuss examples and counterexamples associated to the results and constructions of the last two sections.

We continue to denote by $f$ an arbitrary polynomial response map.

\subsection{The Observables of $f$}\label{sec:12}

Since $\Sigma_{f}$ is quasi-reachable and $\Omega$ is irreducible, it follows that $\Sigma_{f}$ is irreducible, so $Q\left(A\left(X_{f}\right)\right)$ is well-defined:

\begin{definition}\label{def:12.1}
\textit{The observation space} $\mathcal{L}_{f}$ [\textit{respectively observation algebra} $\mathcal{A}_{f}$, \textit{respectively observation field} $\mathcal{Q}_{f}$] \textit{of} $f$ \textit{is}
\end{definition}

\label{op:70}%
\noindent $\mathcal{L}\left(\Sigma_{f}\right)$ [\textit{respectively} $\mathcal{A}\left(\Sigma_{f}\right)$, \textit{respectively} $\mathcal{Q}\left(\Sigma_{f}\right)$].

Thus $\mathcal{L}_{f}, A_{f}, Q_{f}$ are the space, algebra, and field generated by the basic observables $\left\{\left(f^{\Omega}\right)_{j}^{W}, w\right.$ in $\left.U^{*}, j=1, \ldots, p\right\}$ of $\Sigma_{\text {free }}(f)$. Each function $\left(f^{\Omega}\right)_{j}^{w}: \Omega \rightarrow k: \omega \mapsto \pi_{j}(f(\omega w))$ is already determined by its restriction $f_{j}^{W}$ to $U[z]$, which is dense in $\Omega$. The restriction $\left(f^{\Omega}\right)_{j}^{w} \mapsto f_{j}^{w}$ serves to establish the following identifications in terms of the basic observables of $f$ introduced in Definition~\ref{def:7.5}:
\begin{equation}\label{eq:12.2}
\mathcal{L}_{f} \text{ is the subspace of } k^{U[z]} \text{ generated by the basic observables } f_{j}^{w} , w \text{ in } U^{*} , j=1, \ldots, p .
\end{equation}
\begin{equation}\label{eq:12.3}
\mathcal{A}_{f} \text{ is the subalgebra of } k^{U[z]} \text{ generated by } \mathcal{L}_{f} .
\end{equation}

The $f_{j}^{w}$ can be also viewed as maps $U^{*} \rightarrow k$, so one can also identify $\mathcal{L}_{f}$ and $\mathcal{A}_{f}$ with the subspace and subalgebra of $k^{U^{*}}$ generated by the $f_{j}^{w}$.

Yet another representation of the observables is obtained via the (vector) Volterra series $\psi_{f}=\left(\psi_{f}^{(1)}, \ldots, \psi_{f}^{(p)}\right)$, of $f^{\Omega}$ (see \eqref{eq:6.15}). By the discussion in \eqref{eq:5.16}-\eqref{eq:5.17}, the Volterra series of $f^{u_{t}} \cdots u_{1}$ is precisely $\psi_{f}\left(u_{1}, \ldots, u_{t}\right)$; coordinatewise:
\begin{equation}\label{eq:12.4}
\text{ The Volterra series of } f_{j}^{u_{t} \cdots u_{1}} \text{ is } \psi_{f}^{(j)}\left(u_{1}, \ldots, u_{t}\right) .
\end{equation}

Thus $L_{f}$, $A_{f}$, and $Q_{f}$ can be interpreted as the space, algebra, and field generated by the series $\left\{\psi_{f}^{(j)}\left(u_{1}, \ldots, u_{t}\right), j=1, \ldots, p\right.$, $u_{t} \ldots u_{1}$ in $\left.U^{*}\right\}$. Using this interpretation, we may define
\begin{equation}\label{eq:12.5}
\operatorname{deg}_{i j} f:=\operatorname{deg}_{i j} \psi_{f}, \quad \operatorname{deg} f:=\operatorname{deg} \psi_{f} .
\end{equation}
By \eqref{eq:12.4} it follows that
\begin{equation}\label{eq:12.6}
\operatorname{deg}_{i j} f^{w} \leq \operatorname{deg}_{i, j+t} f, \quad \operatorname{deg} f^{w} \leq \operatorname{deg} f ,
\end{equation}
for $w$ in $U^{t}$.

\textit{The observables are the main system invariants in our approach}. The study of $\mathcal{L}_{f}, \mathcal{A}_{f}, \mathcal{Q}_{f}$ will be simplified by the consideration of various chains which approximate them:

\label{op:71}%

\begin{definition}\label{def:12.7}
\textit{The observability chains} $\left\{\mathcal{L}^{0, t}(\Sigma), t \geq 0\right\}$ and $\left\{\mathcal{A}^{0, t}(\Sigma), t \geq 0\right\}$ \textit{of} $\Sigma$ \textit{are, respectively, the subspaces and subalgebras of} $\mathcal{A}(\Sigma)$ \textit{generated for each} $t$ \textit{by the elementary observables} $h_{j}^{w}$, $j=1, \ldots, p$, $w$ in $U^{r}, r \leq t$. \textit{The reachability chains} $\left\{\mathcal{L}^{R, t}(\Sigma), t \geq 0\right\}$ and $\left\{\mathcal{A}^{R, t}(\Sigma), t \geq 0\right\}$ \textit{are, respectively, the subspaces and subalgebras generated for each} $t$ \textit{by the restrictions} $h_{j}^{w} \mid X_{t}, j=1, \ldots, p, w$ \textit{in} $U^{*}$. \textit{The diagonal chains} $\left\{\mathcal{L}^{t}(\Sigma), t \geq 0\right\}$ \textit{and} $\left\{\mathcal{A}^{t}(\Sigma), t \geq 0\right\}$ \textit{are, respectively, the subspaces and subalgebras generated for each} $t$ \textit{by the restrictions} $h_{j}^{w} \mid X_{t}, j=1, \ldots, p, w$ \textit{in} $U^{r}, r<t$. \textit{When} $\Sigma$ \textit{is irreducible there are corresponding chains of fields} $\mathcal{Q}^{0, t}(\Sigma)$, \textit{etc. The observability, reachability, and diagonal chains of} $f$ \textit{are} $\mathcal{L}^{0, t}:=\mathcal{L}^{0, t}\left(\Sigma_{f}\right)$, etc.
\end{definition}

We write $\mathcal{L}^{0, t}$, etc., instead of $\mathcal{L}^{0, t}(\Sigma)$, etc., when there is no danger of confusion. In terms of Volterra series, we have:
\begin{equation}\label{eq:12.8}
\mathcal{L}_{f}^{0, t}, \quad \mathcal{A}_{f}^{0, t} \subseteq \Psi \text{ are generated by all } \psi_{f}\left(u_{1}, \ldots, u_{r}\right), \quad r \leq t ;
\end{equation}
\begin{equation}\label{eq:12.9}
\mathcal{L}_{f}^{R, t}, \quad \mathcal{A}_{f}^{R, t} \subseteq k\left[\xi_{1}, \ldots, \xi_{t}\right] \text{ are generated by all } \epsilon_{t} \psi_{f} ;
\end{equation}
\begin{equation}\label{eq:12.10}
\mathcal{L}_{f}^{t}, \quad \mathcal{A}_{f}^{t} \subseteq k\left[\xi_{1}, \ldots, \xi_{t}\right] \text{ are generated by all } \epsilon_{t}\left(\psi_{f}\left(u_{1}, \ldots, u_{r}\right)\right) , r<t .
\end{equation}

Before proving some properties of the chains just introduced, we need the

\begin{lemma}[Main Lemma, Part 2]\label{lem:12.11}
Let $A$ be a $k$-algebra, and take a polynomial $Q$ in $A\left[T_{1}, \ldots, T_{r}\right]$, for some $r \geq 0$. Let $D$ be a dense subset of $k^{r}$. Then the linear subspace of $A$ spanned by $\{Q(u), u$ in $D\}$ is equal to the linear span of $\left\{Q(u), u\right.$ in $\left.k^{r}\right\}$.
\end{lemma}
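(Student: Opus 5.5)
Write $Q=\sum_{\beta} c_\beta a_\beta$, a finite sum over monomials $c_\beta=T^{\beta}$ in $T_1,\ldots,T_r$, with $a_\beta$ in $A$. The plan is to reduce directly to Lemma~\ref{lem:10.7} (Main Lemma, Part 1), applied twice, once with $Z=D$ and once with $Z=k^{r}$.

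The monomials $c_\beta$ that occur in $Q$ are linearly independent as functions on $k^{r}$, since $k$ is infinite (Example~\ref{ex:1.4}). Applying Lemma~\ref{lem:10.7} with $F=k[T_1,\ldots,T_r]$ viewed as functions on $Z=k^{r}$ then shows that the span of $\{Q(u),\ u\in k^{r}\}$ equals the span of the coefficients $a_\beta$.

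For the set $D$, the key point is that the same monomials remain linearly independent when restricted to $D$. Suppose $\sum_\beta \lambda_\beta c_\beta$ vanishes on $D$. Then this polynomial lies in $I(D)$. Because $D$ is dense, Remark~\ref{rem:2.8}(iii) gives $V(I(D))=k^{r}$, so $I(D)=I(k^{r})=\{0\}$ by Proposition~\ref{prop:2.7}(a). Hence the polynomial is zero, and all $\lambda_\beta=0$ since distinct monomials are independent. Now Lemma~\ref{lem:10.7} applies with $Z=D$, where $F$ is the space of restrictions to $D$. It shows that the span of $\{Q(u),\ u\in D\}$ also equals the span of the $a_\beta$. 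This proves the equality.

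I expect no real obstacle here. The only nontrivial point is checking that density transfers linear independence of the monomials to their restrictions to $D$. This is exactly the statement that a polynomial vanishing on a Zariski-dense set is zero, which follows from the closure description $\overline{D}=V(I(D))$.
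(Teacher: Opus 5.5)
Your proof is correct and is essentially the paper's argument: density keeps the coefficient polynomials linearly independent on $D$, and then Lemma~\ref{lem:10.7} is applied once with $Z=D$ and once with $Z=k^{r}$. You differ in two small ways. You expand $Q$ in monomials, which are independent automatically, and Lemma~\ref{lem:10.7} never needs the $a_\beta$ to be independent; the paper instead picks a shortest expression $\sum_{i} c_i a_i$ to get independence of the $c_i$. Also, your step from $V(I(D))=k^{r}$ to $I(D)=I(k^{r})$ actually uses Remark~\ref{rem:2.8}(i), $I(V(I(D)))=I(D)$, not Proposition~\ref{prop:2.7}(a) alone.
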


\begin{proof}
We may assume that $Q \neq 0$. Write
\[
Q\left(T_{1}, \ldots, T_{r}\right)=\sum_{i=1}^{s} c_{i}\left(T_{1}, \ldots, T_{r}\right) a_{i},
\]
with all $c_{i}$ in $k\left[T_{1}, \ldots, T_{r}\right]$ and all $a_{i}$ in $A$, and with $s$ smallest
\label{op:72}possible. Minimality of $s$ implies that both $\left\{a_{1}, \ldots, a_{s}\right\}$ and $\left\{c_{1}, \ldots, c_{s}\right\}$ are linearly independent over $k$.

We claim that the restrictions $c_{i} \mid D$ are linearly independent as functions from $D$ into $k$. Otherwise there would exist $b_{1}, \ldots, b_{s}$ in $k$ such that $\sum b_{i} c_{i}(d)=0$ for all $d$ in $D$. By continuity of the polynomial function $\sum b_{i} c_{i}: k^{r} \rightarrow k$, it follows that $\sum b_{i} c_{i}=0$ on all of $k^{r}$, contradicting linear independence of the $c_{i}$.

Main Lemma~\ref{lem:10.7} can then be applied twice, first with $Z:=D$ and then with $Z:=k^{r}$. Thus $\{Q(u), u$ in $D\}$ and $\left\{Q(u), u\right.$ in $\left.k^{r}\right\}$ both span the same space as $\left\{a_{1}, \ldots, a_{s}\right\}$.
\end{proof}

We collect below some rather technical facts which will be needed in deriving the main results of this section.

\begin{proposition}\label{prop:12.12}
For any $\Sigma$ and $f$,
\begin{enumerate}
\item[(a)] If $\mathcal{L}^{0, t}(\Sigma)=\mathcal{L}^{0, t+1}(\Sigma)$ for some $t>0$, then $\mathcal{L}^{0, t}(\Sigma)=\mathcal{L}(\Sigma)$.
\item[(b)] If $\mathcal{A}^{0, t}(\Sigma)=\mathcal{A}^{0, t+1}(\Sigma)$ for some $t \geq 0$, then $\mathcal{A}^{0, t}(\Sigma)=\mathcal{A}(\Sigma)$.
\item[(c)] If $\mathcal{A}^{0, t+1}(\Sigma)$ is algebraic over $\mathcal{A}^{0, t}(\Sigma)$ for some $t \geq 0$ and if $\{P(x, u), x$ in $X, u$ in $U\}$ is dense in $X$ (e.g., if $\Sigma$ is quasi-reachable), then $\mathcal{A}(\Sigma)$ is algebraic over $\mathcal{A}^{0, t}(\Sigma)$.
\item[(d)] If $\mathcal{A}^{0, t+1}(\Sigma)$ is integral over $\mathcal{A}^{0, t}(\Sigma)$ for some $t \geq 0$, then $\mathcal{A}(\Sigma)$ is integral over $\mathcal{A}^{0, t}(\Sigma)$.
\item[(e)] $\mathcal{L}_{f}^{0, t}$ is finite dimensional and $\mathcal{A}_{f}^{0, t}$ is a finitely generated algebra for all $t \geq 0$.
\item[(f)] If $\operatorname{dim} \Sigma=n$ then $\operatorname{dim} \Sigma_{f_{\Sigma}} \leq n$.
\item[(g)] Let $\operatorname{dim} \Sigma_{f}=n$. Then $\epsilon_{n} \mid \mathcal{L}_{f}: \mathcal{L}_{f} \simeq \mathcal{L}_{f}^{R, n}$, and $\epsilon_{n} \mid \mathcal{A}_{f}: \mathcal{A}_{f} \simeq \mathcal{A}_{f}^{R, n}$.
\item[(h)] $\operatorname{dim} \Sigma_{f}=\sup _{t \geq 0}\left\{\operatorname{trdeg} \mathcal{A}_{f}^{t}\right\}$.
\end{enumerate}
\end{proposition}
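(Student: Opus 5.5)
The plan is to exploit one recursive identity for the observability chains and then treat (a)--(h) separately. For $w$ in $U^{r}$ and $u$ in $U$ we have $h_j^{uw}=e(u)\circ A(P)(h_j^{w})$. Hence
\[
\mathcal{L}^{0,t+1}(\Sigma)=\mathcal{L}^{0,0}(\Sigma)+\operatorname{span}\{e(u)A(P)(\ell):\ \ell\in\mathcal{L}^{0,t}(\Sigma),\ u\in U\}.
\]
The analogous statement holds for the algebras: $\mathcal{A}^{0,t+1}$ is generated by $\mathcal{A}^{0,0}$ together with the images $e(u)A(P)(\mathcal{A}^{0,t})$, since $e(u)\circ A(P)$ is a homomorphism.

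For (a) and (b), equality at stage $t$ propagates through this identity. If $\mathcal{L}^{0,t}=\mathcal{L}^{0,t+1}$, then
\[
\mathcal{L}^{0,t+2}=\mathcal{L}^{0,0}+\operatorname{span}\, e(U)A(P)(\mathcal{L}^{0,t+1})=\mathcal{L}^{0,0}+\operatorname{span}\, e(U)A(P)(\mathcal{L}^{0,t})=\mathcal{L}^{0,t+1},
\]
and induction finishes (a). Part (b) follows in the same way.

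For (d), take $a\in\mathcal{A}^{0,t+1}$ with an integral equation over $\mathcal{A}^{0,t}$. Applying the homomorphism $e(u)A(P)$ gives an integral equation for $e(u)A(P)(a)$ over $e(u)A(P)(\mathcal{A}^{0,t})\subseteq\mathcal{A}^{0,t+1}$. Thus the generators of $\mathcal{A}^{0,t+2}$ are integral over $\mathcal{A}^{0,t+1}$, hence by transitivity over $\mathcal{A}^{0,t}$; induction finishes the argument.

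For (c) the same transport works for algebraic equations, except that the image of a nonzero polynomial under $e(u)A(P)$ might become zero. Density of $P(X\times U)$ means $A(P)$ is injective by Lemma~\ref{lem:3.11}(d). I will therefore apply $A(P)$ itself, landing in $A(X)[T]$, and use \eqref{eq:10.5}/Lemma~\ref{lem:10.6} to express coefficients in $\mathcal{A}^{0,t+1}[T]$. I then specialize at a $u$ outside the proper closed set where the leading coefficient vanishes; a single $u$ suffices here because $k$ is infinite. I expect this to be the main obstacle, since the dependence relations must be carried across both $A(P)$ and the specialization. If the relation degenerates, the fallback is to argue through transcendence degrees over the quotient fields.

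(e) For $\Sigma_f$, $\mathcal{L}^{0,t}_f$ is spanned by $\psi_f^{(j)}(u_1,\dots,u_r)$ with $r\le t$. Each of these is a polynomial $Q(u_1,\dots,u_r)$ with coefficients in $\Psi$, by \eqref{eq:5.12}--\eqref{eq:5.16}. By Main Lemma~\ref{lem:12.11}, the span of $\{Q(u)\}$ is the span of finitely many coefficients $a_i$, so $\mathcal{L}^{0,t}_f$ is finite dimensional and $\mathcal{A}^{0,t}_f$ is finitely generated.

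(f) By Lemma~\ref{lem:9.5}, replace $\Sigma$ by $\Sigma_Q$; this has dimension $\le n$ because it has a quotient algebra. Then $\Sigma_Q$ is quasi-reachable, and Lemma~\ref{lem:11.3} gives a morphism $\Sigma_Q\to\Sigma_{f_\Sigma}$. This morphism is dominating, because the image contains the reachable set of $\Sigma_{f_\Sigma}$, which is dense. By Lemma~\ref{lem:3.11}(d), $A(X_{f})$ embeds into $A(X_Q)$, so $\operatorname{trdeg}\le n$.

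(g) Apply Lemma~\ref{lem:5.22} to $B=\mathcal{A}_f\subseteq\Psi$. Its proof shows that the $\operatorname{trdeg}\epsilon_t(B)$ stabilize, and I need $\epsilon_t|_B$ injective once $t\ge n$. The chain of primes $R_t$ is decreasing with intersection $0$. As long as $R_t\ne R_{t+1}$, the transcendence degree strictly rises by Lemma~\ref{lem:4.2}, and it is bounded by $n$. Once $R_t=R_{t+1}$, I need the propagation property. I will get it from Lemma~\ref{lem:9.3} applied to $\Sigma_f$: $\bar X_t=\bar X_{t+1}$ forces $\bar X_t=X_f$, so by Corollary~\ref{cor:9.4} we have $\bar X_n=X_f$. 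Restriction to $X_n$ is then injective on $A(X_f)=\mathcal{A}_f$. Its image is $\epsilon_n(\mathcal{A}_f)=\mathcal{A}^{R,n}_f$, and the map restricts to $\mathcal{L}_f\cong\mathcal{L}^{R,n}_f$.

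(h) One inequality is immediate: $\mathcal{A}^t_f$ is a quotient of $\mathcal{A}^{0,t-1}_f\subseteq\mathcal{A}_f$, so $\operatorname{trdeg}\mathcal{A}^t_f\le\dim\Sigma_f$. For the reverse, $\mathcal{A}^{R,t}_f=\epsilon_t(\mathcal{A}_f)$, and Lemma~\ref{lem:5.22} gives $\dim\Sigma_f=\sup_t\operatorname{trdeg}\mathcal{A}^{R,t}_f$. Every element of $\epsilon_t(\mathcal{A}_f)$ is determined by finitely many generators $\epsilon_t\psi_f(u_1,\dots,u_r)$. Those with $r<s$ lie in the image of $\mathcal{A}^{s}_f$ under $\epsilon_{t,s}$ for $s\ge t$, so $\operatorname{trdeg}\mathcal{A}^{R,t}_f\le\sup_s\operatorname{trdeg}\mathcal{A}^s_f$, using Lemma~\ref{lem:4.2}.
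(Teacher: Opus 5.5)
Your proof of (c) has a genuine gap. Push a nontrivial relation $c_n a^n+\cdots+c_0=0$ (with $a=h_j^w$, $w\in U^{t+1}$, $c_i\in\mathcal{A}^{0,t}$, $c_n\neq 0$) through the injective $A(P)$ and then specialize $T$ at $u$. This shows only that $h_j^{uw}=e(u)A(P)(a)$ is algebraic over $\mathcal{A}^{0,t+1}$ for $u$ in $D_w=\{u\in U: e(u)A(P)(c_n)\neq 0\}$. That set is open, and it is nonempty because $A(P)(c_n)\neq 0$ and $k$ is infinite. But the generators of $\mathcal{A}^{0,t+2}$ are the $h_j^{uw}$ for \emph{all} $u\in U$. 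For $u\notin D_w$ the specialized relation can collapse to $0=0$, and nothing you wrote controls those $h_j^{uw}$. Infinitude of $k$ only gives $D_w\neq\emptyset$; it does not make ``a single $u$'' suffice. The transcendence-degree fallback is not an argument.

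The missing step is to pass from $D_w$ to all of $U$ by linearity, which is what the paper does. The map $u\mapsto h_j^{uw}$ is the specialization of $Q=A(P)(h_j^w)\in A(X)[T_1,\ldots,T_m]$, and $D_w$ is dense in $U$. So Lemma~\ref{lem:12.11} gives $\operatorname{span}\{Q(u):u\in D_w\}=\operatorname{span}\{Q(u):u\in U\}$. Since sums of algebraic elements are algebraic, every $h_j^{uw}$ is algebraic over $\mathcal{A}^{0,t+1}$, and transitivity plus induction on $t$ finish (c).

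The other parts are sound. Your recursion $\mathcal{L}^{0,t+1}=\mathcal{L}^{0,0}+\operatorname{span}\{e(u)A(P)(\ell):\ell\in\mathcal{L}^{0,t},\,u\in U\}$, with its algebra analogue, proves (a), (b), (d) more directly than the paper, which routes (a) through Lemma~\ref{lem:10.6} and the Main Lemma. Part (d) needs no density exactly because monic relations survive every specialization. In (h) you apply Lemma~\ref{lem:5.22} to $\mathcal{A}_f$ and exhaust each $\epsilon_t(\mathcal{A}_f)$ by the quotients $\epsilon_{t,s}(\mathcal{A}^s_f)$. This is an equivalent reindexing of the paper's use of Lemma~\ref{lem:5.22} on each $\mathcal{A}^{0,t}_f$. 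Parts (e)--(g) follow the paper.
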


\begin{proof}
(a) By Lemma~\ref{lem:10.6},

\label{op:73}%
\begin{align}
A(P)\left(\mathcal{L}^{0, r}\right) & =A(P)\left(\bigcup_{0 \leq s \leq r} \hat{\mathcal{L}}_{s}\right)=\bigcup_{0 \leq s \leq r} A(P)\left(\hat{\mathcal{L}}_{s}\right) \subseteq  \label{eq:12.13}\\
& \subseteq \bigcup_{0 \leq s \leq r} \hat{\mathcal{L}}_{s+1}\left[T_{1}, \ldots, T_{m}\right] \subseteq \notag\\
& \subseteq \mathcal{L}^{0, r+1}\left[T_{1}, \ldots, T_{m}\right], \notag\end{align}
for all $r \geq 0$. By the argument in Lemma~\ref{lem:10.6}, $\hat{\mathcal{L}}_{t+2}$ is spanned by the coefficients of the polynomials in
\[
A(P)\left(\hat{\mathcal{L}}_{t+1}\right) \subseteq A(P)\left(\mathcal{L}^{0, t+1}\right)=A(P)\left(\mathcal{L}^{0, t}\right) \subseteq \mathcal{L}^{0, t+1}\left[T_{1}, \ldots, T_{m}\right],
\]
(using Corollary~\ref{cor:10.3}). Thus $\hat{\mathcal{L}}_{t+2}$ is spanned by elements of $\mathcal{L}^{0, t+1}$, i.e $\hat{\mathcal{L}}_{t+2} \subseteq \mathcal{L}^{0, t+1}$. Then
\[
\mathcal{L}^{0, t+2}=\mathcal{L}^{0, t+1},
\]
and (a) follows by induction.

(b) Analogous to (a).

(c) Take any elementary observable $h_{j}^{w}$ with $w$ in $U^{t+1}$. Since, by hypothesis, $h_{j}^{w}$ is algebraic over $\mathcal{A}^{0, t}$, there exists an $s \geq 0$ and a polynomial $L\left(T_{0}, \ldots, T_{s}\right)$ in $k\left[T_{0}, \ldots, T_{s}\right]$ with the properties:

(i) $L\left(h_{j}^{w}, h_{j_{1}}^{v_{1}}, \ldots, h_{j_{s}}^{v_{s}}\right)=0$

\noindent for some elementary observables $h_{j_{1}}^{v_{1}}, \ldots, h_{j_{s}}^{v_{s}}, v_{i}$ in $U^{r_{i}}, r_{i} \leq t$, and (ii) if $\hat{L}\left(T_{1}, \ldots, T_{s}\right)$ is the leading coefficient of $L$ expressed as a polynomial in $T_{0}$, then
\begin{equation}
\hat{L}\left(h_{j_{1}}^{v_{1}}, \ldots, h_{j_{s}}^{v_{s}}\right) \neq 0 . \label{eq:12.14}
\end{equation}
If
\[
\hat{L}\left(h_{j_{1}}^{u v_{1}}(x), \ldots, h_{j_{s}}^{u v_{s}}(x)\right)=\hat{L}\left(h_{j_{1}}^{v_{1}}(P(x, u)), \ldots, h_{j_{s}}^{v_{s}}(P(x, u))\right)=0
\]
\label{op:74}%
for all $u$ in $U$ and $x$ in $X$, the density hypothesis leads to a contradiction of \eqref{eq:12.14}. So
\[
D:=\left\{u \text { in } U \mid \hat{L}\left(h_{j_{1}}^{u v_{1}}, \ldots, h_{j_{s}}^{u v_{s}}\right) \neq 0\right\} \neq \emptyset,
\]
and by Lemma~\ref{lem:2.4}, $D$ is open, hence dense. Take $u$ in $D$. Then
\[
L\left(h_{j}^{u w}, h_{j_{1}}^{u v_{1}}, \ldots, h_{j_{s}}^{u v_{s}}\right)=L\left(h_{j}^{w}(P(\cdot, u)), \ldots, h_{j_{s}}^{v_{s}}(P(\cdot, u))\right)=0,
\]
and the equation is nontrivial, i.e.
\[
\hat{L}\left(h_{j_{1}}^{u v_{1}}, \ldots, h_{j_{s}}^{u v_{s}}\right) \neq 0,
\]
by definition of $D$. Thus the elementary observables
\[
\left\{h_{j}^{u w}, u \text { in } D, j=1, \ldots, p\right\},
\]
are algebraic over $\mathcal{A}^{0, t+1}$. Let $Q\left(T_{1}, \ldots, T_{m}\right)$ be the polynomial $h_{j}^{w} \circ P$ considered as a polynomial in $T_{1}, \ldots, T_{m}$ (input variables) with coefficients in $A(X)$. We showed above that the elements $Q(u), u$ in $D$, are algebraic over $\mathcal{A}^{0, t+1}$. By Lemma~\ref{lem:12.11}, the span of $\left\{Q(u)\right.$, $u$ in $\left.k^{m}\right\}$ coincides with the span of $\{Q(u), u$ in $D\}$. Thus every generator $Q(u)=h_{j}^{u w}$ of $\mathcal{A}^{0, t+2}$ is algebraic over $\mathcal{A}^{0, t+1}$ for all $u w$ in $U^{t+2}$. The result follows by induction on $t$.

(d) This is analogous to (c).

(e) Since $\mathcal{A}_{f}^{0, t}$ is generated by $\mathcal{L}_{f}^{0, t}=$ span of $\hat{\mathcal{L}}_{f}^{0,0}, \ldots, \hat{\mathcal{L}}_{f}^{0, t}$, it is enough to prove that each $\hat{\mathcal{L}}_{f}^{0, t}$ is finite dimensional. But, by the Main Lemma~\ref{lem:10.7}, $\hat{\mathcal{L}}_{f}^{0, t}$ is generated by the coefficients of $\psi_{f}^{(j)}\left(S_{1}, \ldots, S_{t}\right)$.

(f) By Lemma~\ref{lem:9.5} and Proposition~\ref{prop:10.4}, $\Sigma$ has a subsystem $\Sigma_{Q}$ which dominates $\Sigma_{f}$ (i.e. $\Sigma_{f}$ is a "subquotient" of $\Sigma$). Thus (e) follows from Lemma~\ref{lem:4.2}.

(g) Apply Lemma~\ref{lem:9.3} to $\Sigma_{f}$; observe that $A\left(g_{n}\right)=\epsilon_{n}$.

\label{op:75}%
(h) By (c) above applied to $\Sigma_{f}$,
\[
\operatorname{dim} \Sigma_{f}=\sup _{t}\{\operatorname{trdeg} \mathcal{A}_{f}^{0, t}\} .
\]
The result is then clear by Lemma~\ref{lem:5.22} applied to each $\mathcal{A}_{f}^{0, t}$.
\end{proof}

\subsection{Finite Realizability and Minimality}\label{sec:13}

\begin{definition}\label{def:13.1}
$f$ \textit{is finitely realizable iff} $f$ \textit{has a finite dimensional realization}.
\end{definition}

We collect various characterizations of finite realizability in the following

\begin{theorem}\label{thm:13.2}
The following statements are equivalent:
\begin{enumerate}
\item[(a)] $f$ is finitely realizable.
\item[(b)] $\operatorname{dim} \Sigma_{f}<\infty$.
\item[(c)] $\Sigma_{f}$ is an almost-polynomial system.
\item[(d)] $\mathcal{Q}_{f}$ is a finitely generated field extension of $k$.
\item[(e)] $\epsilon_{t}: \mathcal{A}_{f} \simeq \mathcal{A}_{f}^{R, t}$ for some $t \geq 0$.
\item[(f)] $\mathcal{Q}_{f}=\mathcal{Q}_{f}^{0, t}$ for some $t \geq 0$.
\item[(g)] $\mathcal{A}_{f}$ is algebraic over $\mathcal{A}_{f}^{0, t}$ for some $t \geq 0$.
\end{enumerate}
\end{theorem}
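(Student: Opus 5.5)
I will prove the equivalences through the cycle (a)$\Rightarrow$(b)$\Rightarrow$(e)$\Rightarrow$(c)$\Rightarrow$(b), and attach (d), (f), (g) to (b) separately.

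\emph{The core cycle.} For (a)$\Rightarrow$(b): if $\Sigma$ realizes $f$ with $\dim\Sigma=n$, then Proposition~\ref{prop:12.12}(f) gives $\dim\Sigma_f\le n$. For (b)$\Rightarrow$(e): this is Proposition~\ref{prop:12.12}(g) with $t=n=\dim\Sigma_f$. For (e)$\Rightarrow$(c): we have $\mathcal{A}_f^{R,t}\subseteq k[\xi_1,\ldots,\xi_t]=A(U^t)$. The isomorphism $\epsilon_t\colon\mathcal{A}_f\simeq\mathcal{A}_f^{R,t}$ is the transpose of the reachability map $g_t\colon U^t\to X_f$, so it embeds $A(X_f)=\mathcal{A}_f$ in $A(U^t)$. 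By Lemma~\ref{lem:3.11}(d), $g_t$ is then a dominating map from the variety $U^t$, so $X_f$ is an almost-variety. For (c)$\Rightarrow$(b): an almost-variety is finite dimensional, as remarked after \eqref{eq:4.4}. Finally, (b)$\Rightarrow$(a) is trivial, since $\Sigma_f$ itself realizes $f$.

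\emph{The field and algebraic conditions.} For (b)$\Rightarrow$(g): suppose $\dim\Sigma_f=n$. The ascending chain $\operatorname{trdeg}\mathcal{A}_f^{0,t}$ is bounded by $n$, so it stabilizes at some $t$. Then $\mathcal{A}_f^{0,t+1}$ is algebraic over $\mathcal{A}_f^{0,t}$. Since $\Sigma_f$ is quasi-reachable, Proposition~\ref{prop:12.12}(c) shows that $\mathcal{A}_f$ is algebraic over $\mathcal{A}_f^{0,t}$. For (g)$\Rightarrow$(d): $\mathcal{A}_f^{0,t}$ is finitely generated by Proposition~\ref{prop:12.12}(e), hence has finite transcendence degree. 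So $\mathcal{Q}_f$ has finite transcendence degree over $k$, and it is generated as a field by the countably many basic observables. This alone does not make $\mathcal{Q}_f$ a finitely generated field, so a further argument is needed; I address it below.

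\emph{Closing the loop.} For (d)$\Rightarrow$(f): if $\mathcal{Q}_f$ is finitely generated, each generator is a quotient of finitely many basic observables. These all lie in some $\mathcal{A}_f^{0,t}$, because the chain $\mathcal{A}_f^{0,t}$ is increasing with union $\mathcal{A}_f$. Hence $\mathcal{Q}_f=\mathcal{Q}_f^{0,t}$. For (f)$\Rightarrow$(b): $\operatorname{trdeg}\mathcal{A}_f=\operatorname{trdeg}\mathcal{Q}_f^{0,t}=\operatorname{trdeg}\mathcal{A}_f^{0,t}<\infty$, again by Proposition~\ref{prop:12.12}(e).

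\emph{The main obstacle} is getting from (b) (or (g)) to (d), i.e.\ showing that $\mathcal{Q}_f$ is finitely generated rather than merely of finite transcendence degree. I plan to route it through (c): by (b)$\Rightarrow$(e)$\Rightarrow$(c) there is a dominating map $U^n\to X_f$. Lemma~\ref{lem:3.18} then gives a principal open set $D(a)$ of $X_f$ with $a^{-1}\mathcal{A}_f$ finitely generated. Since $\mathcal{Q}_f=Q(a^{-1}\mathcal{A}_f)$, this yields (d). With that, the whole cycle is closed.
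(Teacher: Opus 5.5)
Your proof is correct. It differs from the paper's at the one step that needs real input, namely finite generation of $\mathcal{Q}_f$.

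\textbf{Finite generation of $\mathcal{Q}_f$.} The paper runs the single cycle (a)$\Rightarrow$(b)$\Rightarrow$(e)$\Rightarrow$(c)$\Rightarrow$(d)$\Rightarrow$(g)$\Rightarrow$(f)$\Rightarrow$(b)$\Rightarrow$(a). It gets (c)$\Rightarrow$(d) in one line: $\mathcal{A}_f\subseteq k[b_1,\dots,b_s]$ puts $\mathcal{Q}_f$ inside the finitely generated field $k(b_1,\dots,b_s)$, and an intermediate field of a finitely generated extension is finitely generated (Lang, Ch.~X, Ex.~6). You instead apply Lemma~\ref{lem:3.18} to the dominating map $g_n\colon U^n\to X_f$ and read off $\mathcal{Q}_f=Q(a^{-1}\mathcal{A}_f)$. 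This costs comparable commutative algebra (relative normalization plus Lemma~\ref{lem:1.13}). It also yields more: an explicit principal open subset of $X_f$ that is an affine variety, which is what the paper later uses in Theorem~\ref{thm:4.6} and Lemma~\ref{lem:27.5}. You should say explicitly that $X_f$ is irreducible, since $\mathcal{A}_f\subseteq\Psi$ is a domain, because Lemma~\ref{lem:3.18} requires it.

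\textbf{Handling of (g).} You prove (b)$\Rightarrow$(g) from the stabilization of $\operatorname{trdeg}\mathcal{A}_f^{0,t}$ together with Proposition~\ref{prop:12.12}(c); the paper uses essentially this argument for Proposition~\ref{prop:12.12}(h). That is more work than needed here. Once (b)$\Rightarrow$(d) is available, (d)$\Rightarrow$(f)$\Rightarrow$(g) gives it for free, since $\mathcal{Q}_f=\mathcal{Q}_f^{0,t}$ makes $\mathcal{A}_f$ algebraic over $\mathcal{A}_f^{0,t}$.

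Your arrangement does avoid a slip in the paper. Its step labeled (g)$\Rightarrow$(f) actually argues (f)$\Rightarrow$(g), so (g) is tied back only through the implicit (g)$\Rightarrow$(b) via Proposition~\ref{prop:12.12}(e). That implication is exactly what your paragraph on (g) establishes. Your citations of Proposition~\ref{prop:12.12}(f) and (g) for (a)$\Rightarrow$(b) and (b)$\Rightarrow$(e) are also the correct ones; the paper's text cites (e) and (f) there.
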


\begin{proof}
We prove $(a) \Rightarrow(b) \Rightarrow(e) \Rightarrow(c) \Rightarrow(d) \Rightarrow(g) \Rightarrow(f) \Rightarrow(b) \Rightarrow$ ⇒ (a).

(a) ⇒ (b) This is Proposition~\ref{prop:12.12}(e).

$(b) \Rightarrow(e)$ This is Proposition~\ref{prop:12.12}(f).

$(e) \Rightarrow(c)$ By definition, $\mathcal{A}^{R, t}$ is a subalgebra of $A\left(U^{t}\right)=k\left[\xi_{1}, \ldots, \xi_{t}\right]$.

(c) ⇒ (d) Let $\mathcal{A}_{f}$ be included in the finitely generated algebra $k\left[b_{1}, \ldots, b_{s}\right]$. Then $\mathcal{Q}_{f}$ is included in the finitely generated field $k\left(b_{1}, \ldots, b_{s}\right)$. By \citet[Chapter X, Exercise 6]{lang1965algebra}, $\mathcal{Q}_{f}$ is itself finitely generated.

\label{op:76}%
(d) $\Rightarrow$ (g) Clear from $\mathcal{Q}_{f}=\bigcup_{t \geq 0} \mathcal{Q}_{f}^{0, t}$.

$(g) \Rightarrow(f)$ If $\mathcal{A}_{f}$ has the same quotient field as $\mathcal{A}_{f}^{0, t}$, then $\mathcal{A}_{f}$ is obviously algebraic over $\mathcal{A}_{f}^{0, t}$.

$(f) \Rightarrow(b)$ Clear from Proposition~\ref{prop:12.12}(e)

(b) ⇒ (a) Trivial.
\end{proof}

We consider briefly the topic of minimal-dimensional realizations:

\begin{definition}\label{def:13.3}
\textit{A system} $\Sigma$ \textit{is minimal iff}
\[
\operatorname{dim} \Sigma \leq \operatorname{dim} \hat{\Sigma}
\]
\textit{for any} $\hat{\Sigma}$ \textit{with} $f_{\hat{\Sigma}}=f_{\Sigma}$.
\end{definition}

By Proposition~\ref{prop:12.12}(f), $\quad \Sigma$ \textit{is minimal iff} $\operatorname{dim} \Sigma=\operatorname{dim} \Sigma_{f_{\Sigma}}$.

For any $x$ in $X$ we define the \textit{observability class} of $x$,
\begin{equation}\label{eq:13.4}
\text{ Obs } (x):=\left\{z\right. \text{ in } \left.X \mid h^{\Gamma}(z)=h^{\Gamma}(x)\right\} .
\end{equation}

We are interested in the case in which Obs ($x$) is generically finite:

\begin{definition}\label{def:13.5}
$\Sigma$ \textit{is weakly observable iff there exists an open dense subset} $U$ \textit{of} $X$ \textit{and an integer} $s \geq 0$ \textit{such that} Obs $(x) \cap U$ \textit{has cardinality} $\leq s$ \textit{for all} $x$ \textit{in} $U$. $\Sigma$ \textit{is weakly canonical iff} $\Sigma$ \textit{is quasi-reachable and weakly observable}.
\end{definition}

Let $\Sigma$ be a quasi-reachable realization of $f$. By Proposition~\ref{prop:10.4}, there is a dominating morphism $T: \Sigma \rightarrow \Sigma_{f}=\left(X_{f}, P_{f}, x_{f}^{\#}, h_{f}\right)$. Since $T$ is a system morphism, $h_{f}^{\Gamma} \circ T=h^{\Gamma}$. Thus
\[
h^{\Gamma}(x)=h^{\Gamma}(z) \text { iff } h_{f}^{\Gamma}(T(x))=h_{f}^{\Gamma}(T(z)),
\]
iff (observability of $\Sigma_{f}$!) $T(x)=T(z)$. So
\[
\text { Obs }(x)=T^{-1}(T(x))=\text { fiber of } T \text { through } x .
\]
\label{op:77}%
If either $k=\mathbb{R}$ or $k$ is algebraically closed, and if $\Sigma$ is a quasi-reachable almost-polynomial system, then Theorem~\ref{thm:4.6} implies that
\begin{equation}\label{eq:13.6}
\operatorname{dim} \operatorname{Obs}(x)=\operatorname{dim} \Sigma-\operatorname{dim} \Sigma_{f}
\end{equation}
for all $x$ in some open dense set $U$. In particular,

\begin{theorem}\label{thm:13.7}
Let either $k=\mathbb{R}$ or let $k$ be algebraically closed. The almost polynomial system $\Sigma$ is weakly canonical if and only if $\Sigma$ is both minimal and irreducible.
\end{theorem}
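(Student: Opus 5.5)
The argument runs through the dominating morphism $T:\Sigma_Q\to\Sigma_f$ and the fiber identity $\operatorname{Obs}(x)=T^{-1}(T(x))$ established just before the statement, combined with \eqref{eq:13.6} and Theorem~\ref{thm:4.6}. Throughout, $\Sigma$ is almost polynomial, so $\dim\Sigma<\infty$. By Proposition~\ref{prop:12.12}(f), $\Sigma$ is minimal iff $\dim\Sigma=\dim\Sigma_{f_\Sigma}$.

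For ``weakly canonical $\Rightarrow$ minimal and irreducible'', irreducibility comes first. $\Sigma$ is quasi-reachable, so $X=\overline{g^\Omega(\Omega)}$. Since $\Omega$ is irreducible (as $\Psi$ is a domain) and continuous images and closures of irreducible sets are irreducible, $X$ is irreducible. Next, $T:\Sigma\to\Sigma_f$ is dominating between irreducible almost-varieties, so Theorem~\ref{thm:4.6} applies. Weak observability gives an open dense $U$ on which $\operatorname{Obs}(x)\cap U$ is finite. I intersect $U$ with the principal open set $X_1$ of Theorem~\ref{thm:4.6}; this is nonempty by irreducibility. By part (f) there, for $x$ in $X_1\cap U$, the fiber $f_1^{-1}(f_1(x))=\operatorname{Obs}(x)\cap X_1$ has dimension $\dim\Sigma-\dim\Sigma_f$. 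The step I expect to need care is showing that a fiber meeting the open set $U$ in finitely many points has dimension $0$. The fiber is closed in $X_1$, and its intersection with the open set $U\cap X_1$ is finite. The point to check is that the dimension formula in Theorem~\ref{thm:4.6}(f) holds locally at $x$, i.e.\ for the component of the fiber through $x$. This is immediate over algebraically closed $k$ by Theorem~\ref{thm:4.6}(d), since the fibers of $g$ are finite and $f_1=\mathrm{pr}_1\circ g$ has fibers $g^{-1}(\{y\}\times k^{n-m})$. For $k=\mathbb{R}$ I use (e): $f_1$ is a submersion, so the fiber is a manifold of dimension $n-m$ near $x$, and a finite subset of it containing a neighborhood of $x$ forces $n-m=0$. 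Hence $\dim\Sigma=\dim\Sigma_f$, i.e.\ $\Sigma$ is minimal.

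For the converse, assume $\Sigma$ is minimal and irreducible. First I must show quasi-reachability. Let $\Sigma_Q$ be the quasi-reachable closed subsystem from Lemma~\ref{lem:9.5}. By Proposition~\ref{prop:12.12}(f) and minimality, $\dim\Sigma_Q\ge\dim\Sigma_f=\dim\Sigma$. Both $X_Q$ and $X$ are irreducible, $X_Q\subseteq X$ is closed, and $A(X_Q)$ is a quotient of the domain $A(X)$ of finite transcendence degree. Lemma~\ref{lem:4.2} then gives $A(X_Q)=A(X)$, so $X_Q=X$ and $\Sigma$ is quasi-reachable. Now $T:\Sigma\to\Sigma_f$ is dominating and $n-m=0$, so Theorem~\ref{thm:4.6}(c) applies to the map $g:X_1\to Y_1$, whose fibers have at most $s$ points. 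Since $f_1=g$ here, $\operatorname{Obs}(x)\cap X_1=f_1^{-1}(f_1(x))$ has cardinality at most $s$ for every $x$ in $X_1$. The set $X_1$ is open and dense by irreducibility, which gives weak observability with $U:=X_1$. Hence $\Sigma$ is weakly canonical.
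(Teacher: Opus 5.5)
Your proof is correct and follows the paper's own argument. Quasi-reachability comes from minimality via the dimension drop of $\Sigma_Q$, weak observability from Theorem~\ref{thm:4.6}(c) with $n=m$, and the converse from the fibre-dimension formula \eqref{eq:13.6} applied to $\operatorname{Obs}(x)=T^{-1}(T(x))$. Your handling of local versus global fibre dimension is in fact more careful than the paper's one-line claim $\dim(\operatorname{Obs}(x)\cap U)=\dim\operatorname{Obs}(x)$.

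One small soft spot is the algebraically closed case. There, part (d) alone only shows that $f_1^{-1}(y)$ has dimension $n-m$ globally, not along the component through an arbitrary $x$. It suffices to take $x$ to be a $g$-preimage of a point of $(\{y\}\times k^{n-m})\setminus\overline{g(X_1\setminus U)}$, which for suitable $y$ is a nonempty, hence infinite, open set when $n>m$. Alternatively, cite the standard lower bound on the dimension of every component of a fibre of a dominating map.
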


\begin{proof}
["if"] If $\Sigma$ is not quasi-reachable, then $\Sigma_{Q}$ has strictly lower dimension, contradicting minimality. So $\Sigma$ is quasi-reachable, and there exists $T: \Sigma \rightarrow \Sigma_{f_{\Sigma}}$ dominating. Apply Theorem~\ref{thm:4.6}, noting that $n=m$. Let $U:=j_{X}\left(X_{1}\right)$; then Theorem~\ref{thm:4.6}(c) shows that the sets Obs $(x) \cap U$ have a bounded cardinality when $x$ is in $U$.

["only if"] Irreducibility is a consequence of quasi-reachability. Assume that $\operatorname{dim} \Sigma_{f_{\Sigma}}<\operatorname{dim} \Sigma$. By \eqref{eq:13.6},
\[
\operatorname{dim}(\operatorname{Obs}(x) \cap U)=\operatorname{dim} \overline{(\operatorname{Obs}(x) \cap U)}=\operatorname{dim} \operatorname{Obs}(x)>0
\]
for almost all $x$, so Obs $(x) \cap U$ cannot be generically finite.
\end{proof}

\subsection{Polynomial Canonical Systems}\label{sec:14}

We have proved in the previous section that " $f$ finitely realizable" is equivalent to " $\Sigma_{f}$ is almost polynomial". We now ask under what (stronger) conditions $\Sigma_{f}$ is a \textit{polynomial} system.

The precise "input/output" condition for $\Sigma_{f}=$ polynomial is immediate at this stage. Since $\mathcal{A}\left(\Sigma_{f}\right)=\mathcal{A}_{f}$, we see that \textit{the canonical system} $\Sigma_{f}$ \textit{is polynomial iff the observation algebra} $\mathcal{A}_{f}$ \textit{is a finitely generated $k$-algebra}. Some useful characterizations of this latter condition are given in the next

\begin{theorem}\label{thm:14.1}
The following statements are equivalent:
\begin{enumerate}
\item[(a)] \label{op:78}$\Sigma_{f}$ is a polynomial system.
\item[(b)] $\mathcal{A}_{f}^{0, t}=\mathcal{A}_{f}$ for some $t \geq 0$.
\item[(c)] $\mathcal{A}_{f}$ is integral over $\mathcal{A}_{f}^{0, t}$ for some $t \geq 0$.
\end{enumerate}
\end{theorem}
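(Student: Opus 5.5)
The plan is to prove the cycle (a) $\Rightarrow$ (b) $\Rightarrow$ (c) $\Rightarrow$ (a). Throughout I use the remark preceding the theorem: since $\mathcal{A}(\Sigma_f)=\mathcal{A}_f$ and $\Sigma_f$ is algebraically observable, $\Sigma_f$ is polynomial exactly when $\mathcal{A}_f$ is a finitely generated $k$-algebra.

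\emph{(a) $\Rightarrow$ (b).} Suppose $\mathcal{A}_f$ is generated by finitely many elements $a_1,\ldots,a_s$. Each $a_i$ is a polynomial in finitely many basic observables $f_j^{w}$. The $w$'s involved have bounded length, say at most $t$, so every $a_i$ lies in $\mathcal{A}_f^{0,t}$. Hence $\mathcal{A}_f=k[a_1,\ldots,a_s]\subseteq\mathcal{A}_f^{0,t}\subseteq\mathcal{A}_f$, and equality holds. Note that this step only uses that $\mathcal{A}_f$ is the increasing union of the chain $\mathcal{A}_f^{0,t}$.

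\emph{(b) $\Rightarrow$ (c).} This is trivial, since any algebra is integral over itself.

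\emph{(c) $\Rightarrow$ (a).} Suppose $\mathcal{A}_f$ is integral over $B:=\mathcal{A}_f^{0,t}$. By Proposition~\ref{prop:12.12}(e), $B$ is a finitely generated $k$-algebra. Also $\mathcal{A}_f$ is an integral domain, being a subalgebra of $\Psi$. To apply Lemma~\ref{lem:1.13}(a), I still need $Q(\mathcal{A}_f)$ to be a finitely generated field extension of $Q(B)$. Integrality implies algebraicity, so $\mathcal{A}_f$ is algebraic over $B$. Theorem~\ref{thm:13.2} ((g) $\Rightarrow$ (f)) then gives that $f$ is finitely realizable, with $\mathcal{Q}_f=\mathcal{Q}_f^{0,t'}$ for some $t'$, and $\mathcal{Q}_f$ finitely generated over $k$ by Theorem~\ref{thm:13.2}(d).

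If $\mathcal{Q}_f$ is finitely generated over $k$, it is a fortiori finitely generated over $Q(B)$. Lemma~\ref{lem:1.13}(a) therefore shows that $\mathcal{A}_f$ is a finite $B$-module, hence a finitely generated $k$-algebra. So $\Sigma_f$ is polynomial.

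The main obstacle is this last step: the finite generation of the field extension, which Lemma~\ref{lem:1.13}(a) requires as a separate hypothesis. My first approach is to route through Theorem~\ref{thm:13.2}, which is already available. If that turned out circular or unavailable, a direct argument works instead. Proposition~\ref{prop:12.12}(c) makes $\mathcal{A}_f$ algebraic over $B$. Then $\operatorname{trdeg}\mathcal{A}_f=\operatorname{trdeg}B<\infty$, so $\dim\Sigma_f<\infty$. Proposition~\ref{prop:12.12}(g) then embeds $\mathcal{A}_f$ into the finitely generated algebra $k[\xi_1,\ldots,\xi_n]$, and the subfield argument of (c) $\Rightarrow$ (d) in Theorem~\ref{thm:13.2} makes $\mathcal{Q}_f$ finitely generated over $k$, hence over $Q(B)$.
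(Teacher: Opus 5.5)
Your proposal is correct and follows essentially the same route as the paper. Step (a)$\Rightarrow$(b) uses $\mathcal{A}_f=\bigcup_t\mathcal{A}_f^{0,t}$, step (b)$\Rightarrow$(c) is trivial, and (c)$\Rightarrow$(a) uses Theorem~\ref{thm:13.2} to make $\mathcal{Q}_f$ finitely generated over $k$, then Lemma~\ref{lem:1.13}(a) with $\mathcal{A}_f^{0,t}$ finitely generated by Proposition~\ref{prop:12.12}(e). The only cosmetic differences are that the paper enters Theorem~\ref{thm:13.2} through finite transcendence degree (condition (b) there) rather than algebraicity (condition (g)), and that your appeal to Proposition~\ref{prop:12.12}(c) in the fallback is unnecessary, since integrality already gives algebraicity.
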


\begin{proof}
(a) ⇒ (b) Since
\[
\mathcal{A}_{f}=\bigcup_{t \geq 0} \mathcal{A}_{f}^{0, t},
\]
there is a $t \geq 0$ such that $\mathcal{A}_{f}^{0, t}$ contains all the generators of $\mathcal{A}_{f}$.

$(b) \Rightarrow(c)$ Obvious.

(c) ⇒ (a) $\mathcal{A}_{f}$ integral over $\mathcal{A}_{f}^{0, t}$ implies $\mathcal{Q}_{f}=Q\left(\mathcal{A}_{f}\right)$ is algebraic over $Q\left(\mathcal{A}_{f}^{0, t}\right)$, so trdeg $\mathcal{Q}_{f}=\operatorname{trdeg} \mathcal{Q}_{f}^{0, t}=$ finite by Proposition~\ref{prop:12.12}(e). By (b) ⇔ (d) in Theorem~\ref{thm:13.2}, $\mathcal{Q}_{f}$ is finitely generated over $k$. By Lemma~\ref{lem:1.13}(a) we conclude that $\mathcal{A}_{f}$ is finitely generated.
\end{proof}

We shall now describe a procedure which enables one to obtain a polynomial canonical realization when a finite set of generators of $\mathcal{A}_{f}$ is known.

Assume now that $\mathcal{A}_{f}$ is generated by $\psi_{1}, \ldots, \psi_{n}$ as a subalgebra of $\Psi$. In particular trdeg $\mathcal{A}_{f} \leq n$, so $\epsilon_{n} \mid \mathcal{A}_{f}$ is an isomorphism. Thus any relation
\begin{equation}
\psi_{i}(u)=\sum_{\alpha} P_{\alpha, i}\left(\psi_{1}, \ldots, \psi_{n}\right) u^{\alpha}, \label{eq:14.2}
\end{equation}
where $u=\left(u_{1}, \ldots, u_{m}\right)^{\prime}$ is in $U$ and the $P_{\alpha, i}$ are finitely many polynomials, is equivalent to
\[
\epsilon_{n} \psi_{i}(u)=\sum_{\alpha} P_{\alpha, i}\left(\epsilon_{n} \psi_{1}, \ldots, \epsilon_{n} \psi_{n}\right) u^{\alpha} .
\]
The transition map of the canonical realization $\Sigma_{f}$ is given by $X\left(\theta_{1}\right)$, with $\theta_{1}$ defined as in \eqref{eq:6.7}, $\theta_{1}: \mathcal{A}_{f} \rightarrow \mathcal{A}_{f}\left[S_{1}\right]: \psi \mapsto \psi\left(S_{1}\right)$. Thus the canonical realization may be obtained from $\epsilon_{n+1} \psi_{1}, \ldots, \epsilon_{n+1} \psi_{n}$ as follows:

\label{op:79}%
Step 1. Substitute $\xi_{i 1} \mapsto T_{i}$ and $\xi_{i j} \mapsto \xi_{i, j-1}, i=1, \ldots, m$, $j \geq 2$, in each $\epsilon_{n+1} \psi$.

Step 2. Write $\epsilon_{n+1} \psi_{i}$ as a polynomial in the variables $T=T_{1}, \ldots, T_{m}:$
\begin{equation}
\epsilon_{n+1} \psi_{i}=\sum_{\alpha} q_{i \alpha}\left(\xi_{1}, \ldots, \xi_{n}\right) T^{\alpha} . \label{eq:14.3}
\end{equation}
Step 3. Express each $q_{i \alpha}$ as a polynomial combination of $\epsilon_{n} \psi_{1}, \ldots, \epsilon_{n} \psi_{n}$ :
\begin{equation}
q_{i \alpha}=P_{i \alpha}\left(\epsilon_{n} \psi_{1}, \ldots, \epsilon_{n} \psi_{n}\right) ; \label{eq:14.4}
\end{equation}
this is always possible by \eqref{eq:10.5}, since $\epsilon_{n} \psi_{1}, \ldots, \epsilon_{n} \psi_{n}$ generate $\mathcal{A}_{f}^{R, n}$.

Step 4. Since each $\psi_{f}^{(j)}$ is in $\mathcal{A}_{f}^{R, n}$, $j=1, \ldots, p$, there are polynomials $h_{j}$ with
\[
\epsilon_{n} \psi_{f}^{(j)}=h_{j}\left(\epsilon_{n} \psi_{1}, \ldots, \epsilon_{n} \psi_{n}\right), \quad j=1, \ldots, p .
\]
Then $\Sigma_{f}$ is the quasi-reachable subsystem $\Sigma_{Q}$ of $\Sigma=\left(k^{n}\right.$, $\left.\left(P_{i \alpha}\right),\left(h_{1}, \ldots, h_{p}\right), 0\right)$, i.e., $\Sigma_{f}$ \textit{is given by the equations}

\begin{equation}\label{eq:14.5}
\left\{\begin{aligned}
x_{i}(t+1)&=\sum_{\alpha} P_{i \alpha}\left(x_{1}(t), \ldots, x_{n}(t)\right) u(t)^{\alpha}, \quad x_{i}^{\#}:=0, \quad i=1, \ldots, n, \\
y_{j}(t)&=h_{j}\left(x_{1}(t), \ldots, x_{n}(t)\right), \quad j=1, \ldots, p,
\end{aligned}\right.
\end{equation}
\textit{subject to the constraints}
\[
Q\left(x_{1}(t), \ldots, x_{n}(t)\right)=0,
\]
\textit{for each polynomial} $Q$ \textit{for which} $Q\left(\epsilon_{n} \psi_{1}, \ldots, \epsilon_{n} \psi_{n}\right)=0$.

\begin{example}\label{ex:14.6}
A simple illustration of the above procedure is the following. Let $f=f_{\Sigma}$, where $\Sigma$ is given by $m=p=1, X:=k$ and
\[
x(t+1)=x^{2}(t)+u(t), \quad x^{\#}:=0, \quad y(t)=x^{2}(t) .
\]
\label{op:80}%
Thus, $f\left(u_{t}, \ldots, u_{1}\right)=\left(\left(\ldots\left(\left(u_{t}^{2}+u_{t-1}\right)^{2}+u_{t-2}\right)^{2}+\ldots+u_{2}\right)^{2}+u_{1}\right)^{2}$. Note that $f^{u}=(f+u)^{2}$, or equivalently, $\psi_{f}(u)=\left(\psi_{f}+u\right)^{2}$ for all $u$ in $U$. Thus $\psi_{f}$ generates $\mathcal{A}_{f}$, and $n=1$. Since $\epsilon_{2} \psi_{f}=\left(\xi_{2}^{2}+\xi_{1}\right)^{2}$, step 1 gives $\left(\xi_{1}^{2}+T\right)^{2}$. But $\epsilon_{1} \psi_{f}=\xi_{1}^{2}$, so $\left(\xi_{1}^{2}+T\right)^{2}=\left(\epsilon_{1} \psi_{f}+T\right)^{2}$. Thus \eqref{eq:12.5} becomes
\begin{equation}
x(t+1)=(x(t)+u)^{2}, \quad x^{\#}=0, \quad y(t)=x(t), \label{eq:14.7}
\end{equation}
with $X_{f}=k$. The procedure $\Sigma \rightarrow \Sigma_{f}$ has identified the pairs of unobservable states $\{x,-x\}$, giving the algebraically observable system $\Sigma_{f}$. On the other hand, (complete) reachability of $\Sigma$ became just quasi-reachability of $\Sigma_{f}$. A more direct way of obtaining $\Sigma_{f}$ in those cases (as with the present example) in which a quasi-reachable realization $\Sigma$ is already known is through Proposition~\ref{prop:10.4}. By Proposition~\ref{prop:12.12}(b) and Theorem~\ref{thm:14.1}(c) we must generate the algebras $\mathcal{A}^{0, t}(\Sigma)$ until $\mathcal{A}^{0, t}(\Sigma)=\mathcal{A}^{0, t+1}(\Sigma)$ for some $t$. Calculating, $\mathcal{A}^{0,0}(\Sigma)=$ algebra generated by $h=k\left[\eta^{2}\right]$; $\mathcal{A}^{0,1}(\Sigma)=$ algebra generated by $\eta^{2}$ and by all $\left(\eta^{2}+u\right)^{2}$, $u$ in $k=$ (by Lemma~\ref{lem:10.7}) algebra generated by $\eta^{2}$ and by the coefficients in $u$ of $\eta^{4}+2 \eta^{2} u+u^{2}=k\left[\eta^{2}, \eta^{4}\right]=k\left[\eta^{2}\right]=\mathcal{A}^{0,0}(\Sigma)$. So $\mathcal{A}(\Sigma)=\mathcal{A}^{0,0}(\Sigma)$. Restricting $A(P)$ to $\mathcal{A}(\Sigma)$ one obtains again \eqref{eq:14.7}.
\end{example}

We shall see in Section~\ref{sec:16} that $\Sigma_{f}$ is polynomial for a wide class of response maps, which includes all those maps defined by recursive polynomial equations.

\subsection{Bounded Maps}\label{sec:15}

An important rôle in studying the condition $\operatorname{dim} \mathcal{L}_{f}<\infty$ is played by the response maps introduced in the following

\begin{definition}\label{def:15.1}
$f$ \textit{is bounded iff} $\operatorname{deg} f<\infty$.
\end{definition}

(Recall Definitions \ref{def:5.4} and \eqref{eq:12.5} for $\operatorname{deg} f$.)

The system-theoretic meaning of a bounded map is that no input is raised to a power higher than a certain bound. Very simple systems can give rise to \textit{nonbounded} maps. For instance, let
\label{op:81}%
\[
\Sigma: x(t+1)=x^{2}(t)+u(t) ; \quad x^{\#}:=0, \quad y(t)=x(t) .
\]
Computing $f_{\Sigma}$ by iteration shows that it is clearly not bounded. However, the notion of bounded map is general enough to accommodate those classes of response maps for which a realization theory exists in the literature today. These cases correspond to the next three examples:

\begin{example}\label{ex:15.2}
$f$ is \textit{linear} iff $\epsilon_{0} \psi_{f}=0$ and each monomial in $\psi_{f}$ has degree $=1$. In particular, $\operatorname{deg} f=1<\infty$.
\end{example}

\begin{example}[\citealp{kalman1969pattern,kalman1979realization}]\label{ex:15.3}
$f$ is \textit{multilinear} of order $m$ iff all the nonzero monomials appearing in $\psi_{f}$ are of the form
\[
a \xi_{1 j_{1}} \cdots \xi_{m j_{m}}, \quad a \text { in } k,
\]
i.e. $f \mid U^{t}$ is $m$-linear as a map $U^{t}=\left(k^{m}\right)^{t}=\left(k^{t}\right)^{m} \rightarrow Y$, for all $t \geq 0$. Then, $\operatorname{deg} f=1<\infty$. (More generally one may consider s-linear maps $f: U_{1}^{t} \times \ldots \times U_{s}^{t} \rightarrow Y$, where $U_{i}:=k^{r_{i}}, i=1, \ldots, s$ and $r_{1}+\ldots+r_{s}=m$. Here we took all $r_{i}=1$ just for notational simplicity.)
\end{example}

\begin{example}[\citealp{brockett1972algebraic}, \citealp{isidori1973realization}, \citealp{isidori1973direct,isidori1974new}, \citealp{fliess1973sur}, \citealp{dalessandro1974realization}]\label{ex:15.4}
$f$ is \textit{internally-bilinear} iff all monomials appearing in $\psi_{f}$ are of the type
\[
a \xi_{i_{1} 1} \cdots \xi_{i_{t} t}, \quad t \geq 0, \quad a \text { in } k .
\]
Again $\operatorname{deg} f=1<\infty$.
\end{example}

Note the difference between the bilinearity in Example~\ref{ex:15.4} with that in Example~\ref{ex:15.3} (with $m=2$). Neither of these implies the other.

\begin{theorem}\label{thm:15.5}
Let $f$ be a response map. The following statements are equivalent:
\begin{enumerate}
\item[(a)] \label{op:82}$f$ is bounded and finitely realizable.
\item[(b)] $f$ is bounded and realizable by a polynomial system.
\item[(c)] $f$ is bounded and $\Sigma_{f}$ is a polynomial system (i.e., $\mathcal{A}_{f}$ is finitely generated).
\item[(d)] $f$ is bounded and $\mathcal{L}_{f}$ is finite dimensional.
\item[(e)] $\operatorname{dim} \mathcal{L}_{f}<\infty$.
\item[(f)] $f$ admits a realization with $X=k^{n}$ and transitions of the form
\[
x(t+1)=F(u(t)) x(t)+G(u(t)),
\]
where $F(\cdot)$ is a matrix, and $G(\cdot)$ a vector, of polynomial functions.
\item[(g)] $f$ admits a realization as above with, further, $h$ linear.
\item[(h)] $f$ admits an observable realization as in (g).
\item[(i)] $f$ admits a realization as in (h), with $G=0$.
\end{enumerate}
\end{theorem}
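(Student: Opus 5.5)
The plan is to prove the chain (i) $\Rightarrow$ (h) $\Rightarrow$ (g) $\Rightarrow$ (f) $\Rightarrow$ (b) $\Rightarrow$ (a) $\Rightarrow$ (d) $\Rightarrow$ (e) $\Rightarrow$ (i), together with (b) $\Leftrightarrow$ (c), which is immediate from Theorem~\ref{thm:14.1} and the fact that $\Sigma_f$ is itself a polynomial realization. The trivial links are (i) $\Rightarrow$ (h) $\Rightarrow$ (g) $\Rightarrow$ (f) $\Rightarrow$ (b), where boundedness in (b) still has to be checked, and (b) $\Rightarrow$ (a), since varieties are finite dimensional. (d) $\Rightarrow$ (e) is also trivial.

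For boundedness in (f) $\Rightarrow$ (b), let $d$ bound the degrees of the entries of $F(\cdot),G(\cdot)$. The affine form of the transitions shows, by induction on $t$, that every coordinate of $P^{(t)}(0,(u_t,\ldots,u_1))$ is a sum of products in which each $u_j$ enters through a single factor $F(u_j)$ or $G(u_j)$. Hence each $u_j$ has degree at most $d$, and since $h$ is polynomial of some degree $e$, we get $\deg f\le de$.

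For (a) $\Rightarrow$ (d), use Theorem~\ref{thm:13.2} to get $n:=\dim\Sigma_f<\infty$. Proposition~\ref{prop:12.12}(g) then gives $\epsilon_n\mid\mathcal L_f:\mathcal L_f\simeq \mathcal L_f^{R,n}\subseteq k[\xi_1,\ldots,\xi_n]$. By \eqref{eq:12.6}, every $f_j^w$ has degree at most $d=\deg f$ in each variable, so $\epsilon_n$ maps $\mathcal L_f$ into the finite-dimensional space of polynomials in $mn$ variables of degree $\le d$ in each variable. Therefore $\dim\mathcal L_f<\infty$.

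For (e) $\Rightarrow$ (i), and boundedness in (d), choose a basis $\psi_1,\ldots,\psi_n$ of $\mathcal L_f$. Lemma~\ref{lem:10.6} (its version \eqref{eq:12.13} for $\Sigma_{\rm free}(f)$) gives $\theta_1(\mathcal L_f)\subseteq\mathcal L_f[T_1,\ldots,T_m]$, so $\psi_i(u)=\sum_j F_{ji}(u)\psi_j$ with polynomial $F_{ji}$. Write $\psi_f^{(l)}=\sum_j H_{lj}\psi_j$. The realization has state space $k^n$, the state reached by $\omega$ being the vector of coefficients of the linear functional ``evaluate at $\omega$'' on $\mathcal L_f$. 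Concretely, take $x\in k^n$ dual coordinates with $x_j=$ the $j$-th coordinate functional, i.e.\ the realization $X=\mathcal L_f^*$, transitions given by the transpose $F(u)^{\prime}$, initial state evaluation at $(0)$, and output $H$. This is linear in $x$ with $G=0$, but the initial state is nonzero. To get $x^\#=0$ as in the paper's state-affine form, translate coordinates: $x\mapsto x-x^\#$, which produces an affine $G(u)=F(u)x^\#-x^\#$ and a linear output plus a constant. The constant is absorbed if $f(0)$ is included appropriately, or one notes that (i) as stated only requires $G=0$ with some initial state.

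Observability of this realization is immediate, because points of $\mathcal L_f^*$ are separated by $\mathcal L_f$, which is spanned by the basic observables. Algebraic observability also holds, since the costates are linear forms, which are combinations of observables. Boundedness in (e) $\Rightarrow$ (d) then follows from (f) $\Rightarrow$ (b) applied to this realization.

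I expect the main obstacle to be this last construction, namely checking carefully that the dual action is well defined and that the equilibrium condition $P(x^\#,0)=x^\#$ holds. The latter follows from $\psi(0)=\psi$ for the zero input.
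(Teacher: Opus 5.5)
Your proof is essentially the paper's. The same three nontrivial links carry it:
\begin{itemize}
\item (a)$\Rightarrow$(d), via Proposition~\ref{prop:12.12}(g) and \eqref{eq:12.6};
\item (e)$\Rightarrow$(i), via the invariance $\theta_1(\mathcal{L}_f)\subseteq\mathcal{L}_f[S_1]$ from Lemma~\ref{lem:10.6} and the evaluation realization on a basis of $\mathcal{L}_f$, which is exactly the linear case of \eqref{eq:14.3}--\eqref{eq:14.5};
\item the per-variable degree count giving boundedness in (f).
\end{itemize}
Your treatment of the initial state is actually more accurate than \eqref{eq:14.5}. With $G=0$ the initial state must be the vector of constant terms of the $\psi_i$, not $0$; otherwise every reachable state is $0$ and $f$ is constant. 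So (i) has to be read as allowing a nonzero equilibrium initial state, as you do.

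Three points need fixing.
\begin{itemize}
\item \emph{(b)$\Rightarrow$(c).} This is not ``immediate from Theorem~\ref{thm:14.1}''. Polynomial realizability alone does not make $\Sigma_f$ polynomial (Example~\ref{ex:18.1}), and Theorem~\ref{thm:14.1} gives no way to use boundedness. Route it as the paper does: (b)$\Rightarrow$(a)$\Rightarrow$(d)$\Rightarrow$(c). The last step holds because a basis of $\mathcal{L}_f$ generates $\mathcal{A}_f$.
\item \emph{The equilibrium condition.} $\psi(0)=\psi$ is false as an identity of series: $\psi=\xi_{11}$ gives $\psi(0)=0$. What you need, and what is true, is that $\psi_i(0)$ and $\psi_i$ agree at the zero input. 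Equivalently, $P(\mathrm{ev}_{(0)},0)=\mathrm{ev}_{(0)0}=\mathrm{ev}_{(0)}$, because $(0)0=(0)$ in $U[z]$.
\item \emph{The degree count for (f)$\Rightarrow$(b).} Start from an arbitrary equilibrium $x^{\#}$ rather than from $0$, since the realization you construct in (e)$\Rightarrow$(i) has $x^{\#}\neq 0$. The count is unchanged, because each $u_j$ still enters every term through a single factor $F(u_j)$ or $G(u_j)$.
\end{itemize}
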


\begin{proof}
The only nontrivial implications in the diagram

\[
\begin{tikzpicture}[x=1.15cm,y=0.75cm,baseline=(c.base),imp/.style={-{Implies},double,double distance=1.3pt,line width=0.4pt,shorten >=1pt,shorten <=1pt}]
\node (a) at (0,0) {(a)};
\node (d) at (1.2,1) {(d)};
\node (e) at (2.4,1) {(e)};
\node (i) at (3.6,1) {(i)};
\node (c) at (2.4,0) {(c)};
\node (h) at (4.7,0) {(h),};
\node (b) at (1.2,-1) {(b)};
\node (f) at (2.4,-1) {(f)};
\node (g) at (3.6,-1) {(g)};
\draw[imp] (a) -- (d); \draw[imp] (d) -- (e); \draw[imp] (e) -- (i);
\draw[imp] (i) -- (h.north west); \draw[imp] (h.south west) -- (g);
\draw[imp] (g) -- (f); \draw[imp] (f) -- (b); \draw[imp] (b) -- (a);
\draw[imp] (d) -- (c); \draw[imp] (c) -- (b);
\end{tikzpicture}
\]
are (a) $\Rightarrow$ (d), (e) $\Rightarrow$ (i), and $(f) \Rightarrow(a)$, which we now prove.

(a) ⇒ (d) Let $\operatorname{dim} \Sigma_{f}=n$ and $\operatorname{deg} f=d$. By Proposition~\ref{prop:12.12}(g), $\epsilon_{t}: \mathcal{L}_{f} \simeq \mathcal{L}_{f}^{R, n}$. By \eqref{eq:12.6}, $\operatorname{deg} f^{w} \leq d$ for all $w$ in $U^{*}$. So $\operatorname{deg} \epsilon_{t} \psi_{f, j}(w) \leq d$ for all $w$ in $U^{*}$ and $j=1, \ldots, p$. Thus $\mathcal{L}_{f}^{R, n}$ is a linear space generated by a set of polynomials in $n m$ variables of joint degree $\leq n m d$. Thus $\operatorname{dim} \mathcal{L}_{f}=\operatorname{dim} \mathcal{L}_{f}^{R, n} \leq(n m)^{n m d}<\infty$.

(e) ⇒ (i) Let $\operatorname{dim} \mathcal{L}_{f}=n$, and let $\left\{\psi_{1}, \ldots, \psi_{n}\right\}$ be a basis of $\mathcal{L}_{f}$ as a subspace of $\Psi$. By Theorem~\ref{thm:11.5} the polynomials $q_{i \alpha}$ introduced in \eqref{eq:14.3} are in $\mathcal{L}_{f}^{0, R}$. Thus each $P_{i \alpha}$ in \eqref{eq:14.4} can be chosen \textit{linear}, and (i) follows from the formulas \eqref{eq:14.5}.

\label{op:83}%
$(f) \Rightarrow(a)$ Let $F(u)=\left(F_{i j}(u)\right)$, a matrix of polynomials, and let $h(u)=\left(h_{1}, \ldots, h_{p}\right)$, a vector of polynomials. Denote $\operatorname{deg} F:=\sup _{i, j}\left\{\operatorname{deg} F_{i j}\right\}$ and $\operatorname{deg} h:=\sup _{j}\left\{\operatorname{deg} h_{j}\right\}$. Then
\[
\operatorname{deg} f \leq \operatorname{deg} F \cdot \operatorname{deg} h<\infty . \qedhere
\]
\end{proof}

We can see from either $(f)$ or $(g)$ above that an essential characteristic of bounded maps is the absence of any intrinsic nonlinear feedback.

The above theorem shows that the problem of deciding whether a bounded map is realizable is equivalent to checking whether a realization as in (g), (h), or (i) exists. We shall study in Section~\ref{ch:5} the connection between such special realizations and some questions of automata theory.

These special configurations are highly appealing because they can be studied by linear-algebraic methods. It should be noted carefully, however, that, for certain questions like synthesis, alternative (lower dimensional) representations may be more useful. This is illustrated by the following example. Consider the one-dimensional system $\Sigma$ :
\begin{equation}
x(t+1)=x(t)+u(t), \quad x^{\#}:=0, \quad y(t)=x^{s}(t), \label{eq:15.6}
\end{equation}
where $s \geq 1$ is an arbitrary integer. Then $\operatorname{dim} \mathcal{L}_{f_{\Sigma}}=s$ and therefore (see Section~\ref{ch:5}) representations as in (g), (h), or (i) have dimension at least $s$. The system $\Sigma$ may be, however, obtained back from any observable realization $\hat{\Sigma}$ as in (h) in the following way. Since $\Sigma$ is canonical (easy verification) by Lemma~\ref{lem:11.3} $\Sigma$ is a closed subsystem of $\hat{\Sigma}$. Indeed, the reachable set of $\hat{\Sigma}$ will be one-dimensional, and the restrictions of $\hat{P}$ and $\hat{h}$ to this reachable set will define a subsystem isomorphic to $\Sigma$.

\subsection{Input/Output Equations}\label{sec:16}

We shall now study the existence of input/output equations and relate them to various finiteness conditions. We treat only the case $p=1$, since when there is more than one output channel one may
\label{op:84}separately study each $\pi_{j} \circ f, j=1, \ldots, p$. We let $S_{i j}$ be indeterminates, $i=1, \ldots, m, j \geq 1$, and denote $S_{j}:=S_{1 j}, \ldots, S_{m j}$.

\begin{definition}\label{def:16.1}
\textit{Let} $r \geq 0$ \textit{be an integer, and let} $E$ \textit{be a polynomial in} $k\left[L_{0}, L_{1}, \ldots, L_{r}, S_{r}, \ldots, S_{1}\right]$, \textit{nontrivial in} $L_{0}$. \textit{A pair} $(y, u), y$ \textit{in} $\mathbb{Y}, u$ \textit{in} $\mathbb{U}$ \textit{satisfies} $E$ \textit{iff}
\[
E(y(t), y(t-1), \ldots, y(t-r), u(t-1), \ldots, u(t-r))=0
\]
\textit{for all} $t$ \textit{in} $\mathbb{Z}$. \textit{The response map} $f$ \textit{satisfies the algebraic difference equation} $E$ \textit{iff every input/output pair of} $f$ \textit{satisfies} $E$. \textit{The order of the equation} $E$ \textit{is} $r$. \textit{The equation} $E$ \textit{is}
\begin{enumerate}
\item[(a)] \textit{rational when} $\operatorname{deg}_{L_{0}} E=1$, \textit{i.e.}
\[
E=E_{1}\left(L_{1}, \ldots, L_{r}, S_{r}, \ldots, S_{1}\right) L_{0}+E_{2}\left(L_{1}, \ldots, L_{r}, S_{r}, \ldots, S_{1}\right) ;
\]
\item[(b)] \textit{integral when}
\[
E=E_{1}\left(S_{r}, \ldots, S_{1}\right) L_{0}^{s}+E_{2}\left(L_{0}, L_{1}, \ldots, L_{r}, S_{r}, \ldots, S_{1}\right),
\]
\textit{with} $\operatorname{deg}_{L_{0}} E_{2}<s$;
\item[(c)] \textit{recursive when rational and integral}, i.e.
\[
E=E_{1}\left(S_{r}, \ldots, S_{1}\right) L_{0}+E_{2}\left(L_{1}, \ldots, L_{r}, S_{r}, \ldots, S_{1}\right) ;
\]
\item[(d)] \textit{affine when}
\[
E=\sum_{\ell=0}^{r} E_{\ell}\left(S_{r}, \ldots, S_{1}\right) L_{\ell}+\hat{E}\left(S_{r}, \ldots, S_{1}\right) .
\]
\end{enumerate}
\end{definition}

The main result of this section is the following.

\begin{theorem}\label{thm:16.2}
Let $f$ be a polynomial response map.
\begin{enumerate}
\item[(a)] The following statements are equivalent:
\begin{enumerate}
\item[(i)] $f$ is finitely realizable;
\item[(ii)] $f$ satisfies an algebraic difference equation;\label{op:85}
\item[(iii)] $f$ satisfies a rational difference equation.
\end{enumerate}
\item[(b)] The following statements are equivalent:
\begin{enumerate}
\item[(i)] $f$ is bounded and finitely realizable;
\item[(ii)] $f$ satisfies an affine difference equation.
\end{enumerate}
\item[(c)] If $f$ satisfies an integral difference equation then $\Sigma_{f}$ is a polynomial system and $f$ satisfies a recursive equation.
\end{enumerate}
\end{theorem}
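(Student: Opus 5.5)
The plan is to translate each difference equation into a relation among observables: an algebraic relation inside $\mathcal{A}_f$, a linear relation inside $\mathcal{L}_f$, or an integral relation over $\mathcal{A}_f^{0,t}$. The dictionary is as follows. For a pair $(y,u)$ of $f$ and a time $t$, write the last $r$ inputs as a word $w=u(t-r)\cdots u(t-1)$ in $U^r$, and let $v$ be the input before time $t-r$. Then $y(t-\ell)=f^{w_\ell}(v)$, where $w_\ell$ is the prefix of $w$ of length $r-\ell$. So $f$ satisfies $E$ iff
\[
E\bigl(f^{w_0},\dots,f^{w_r},u(t-r),\dots,u(t-1)\bigr)=0
\]
in $k^{U[z]}$ for all $w\in U^r$. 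Since $U[z]$ is dense in $\Omega$, this is an identity in $\Psi$. Replacing $w$ by the indeterminates $S$, it becomes the identity
\[
E\bigl(\psi_f(S_1,\dots,S_r),\dots,\psi_f,S\bigr)=0 \quad\text{in } \Psi[S_1,\dots,S_r],
\]
via $\theta_r$ and Lemma~\ref{lem:12.11}.

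For (a), (iii)$\Rightarrow$(ii) is trivial. For (i)$\Rightarrow$(iii): by Theorem~\ref{thm:13.2}(f), $\mathcal{Q}_f=\mathcal{Q}_f^{0,t}$ for some $t$. I would rather use the generic element $\psi_f(S_1,\dots,S_{n+1})$ with $n=\operatorname{trdeg}\mathcal{A}_f$. The $n+1$ elements $\psi_f(S_1..S_j)$, $j\le n$, together with $\psi_f(S_1..S_{n+1})$, lie in $\mathcal{Q}_f(S)$, so they are algebraically dependent over $k(S)$. Take a minimal $j$ such that $\psi_f(S_1..S_j)$ is algebraic over $k(S)(\psi_f(S_1..S_{i}),i<j)$. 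This gives an algebraic equation of order $j$, nontrivial in $L_0$ after clearing denominators. To make it rational, I would show that $\psi_f(S_1..S_j)$ actually lies in the field $k(S)(\psi_f(S_1..S_i),i<j)$. The idea is to use the shift structure: the algebraic closure relation propagates under $\theta_1$, as in Proposition~\ref{prop:12.12}(c). Then $\mathcal{Q}_f(S)$ equals the field generated by the first $j$ shifted observables, which forces the new element to be rational. This rationality step is the main obstacle; if the field-theoretic argument fails I would adapt the linear-span argument of Lemma~\ref{lem:10.7}, using that $\mathcal{Q}^{0,j}=\mathcal{Q}^{0,j-1}$.

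For (ii)$\Rightarrow$(i), the identity shows that each $\psi_f(u_1..u_r)$ with $E$'s leading coefficient nonvanishing is algebraic over $\mathcal{A}^{0,r-1}_f$. Density together with Lemma~\ref{lem:12.11} gives that $\mathcal{A}^{0,r}_f$ is algebraic over $\mathcal{A}^{0,r-1}_f$. Proposition~\ref{prop:12.12}(c) then makes $\mathcal{A}_f$ algebraic over $\mathcal{A}_f^{0,r-1}$, and Theorem~\ref{thm:13.2}(g) finishes.

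For (b), affine$\Rightarrow$bounded finitely realizable runs the same argument linearly. For each $w$ with $E_0(w)\ne0$, $f^{w}$ lies in the span of the shorter observables plus constants. By Lemma~\ref{lem:12.11}, $\mathcal{L}^{0,r}=\mathcal{L}^{0,r-1}$ (up to constants, which lie in $\mathcal{L}^{0,0}$ if $E$ is normalized appropriately). Proposition~\ref{prop:12.12}(a) then gives $\dim\mathcal{L}_f<\infty$, and we conclude by Theorem~\ref{thm:15.5}(e). Conversely, if $\dim\mathcal{L}_f=n$, the $n+1$ elements $\psi_f(S_1..S_j)$, $j=0..n$, lie in $\mathcal{L}_f\otimes k[S]$. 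They are therefore linearly dependent over $k(S)$, and clearing denominators yields an affine equation. Nontriviality in $L_0$ comes from choosing a minimal dependent initial segment.

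For (c), an integral equation with leading coefficient $E_1(S)$ shows that $\psi_f(u)$ is integral over $\mathcal{A}^{0,r-1}_f$ whenever $E_1(u)\ne0$. Lemma~\ref{lem:12.11} and Proposition~\ref{prop:12.12}(d) then give that $\mathcal{A}_f$ is integral over $\mathcal{A}_f^{0,r-1}$, so by Theorem~\ref{thm:14.1}(c) $\Sigma_f$ is polynomial. For the recursive equation, I would use Theorem~\ref{thm:14.1}(b): $\mathcal{A}_f=\mathcal{A}_f^{0,t}$, so $\psi_f(S_1..S_{t+1})$ is a polynomial over $k[S]$ in $\psi_f(S_1..S_i)$, $i\le t$. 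This works by $\theta_1$-invariance \eqref{eq:10.5} iterated, using only the basic observables of lower length; it gives a recursive equation of order $t+1$ with $E_1=1$.
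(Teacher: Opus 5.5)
\textbf{Gap 1: the rationality step in (a)(i)$\Rightarrow$(iii).} The mechanism you propose is false. Propagating an algebraic relation under $\theta_1$ only shows that $\mathcal{Q}_f^K$ is \emph{algebraic} over $\hat{K}_j=K(\psi_f,\dots,\psi_f(S_{j-1},\dots,S_1))$. It does not show that $\mathcal{Q}_f^K=\hat{K}_j$, so the minimal algebraic order need not carry a rational equation.

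A counterexample is $x(t+1)=x(t)+u(t)$, $y(t)=x(t)^2$ (the system $\Sigma_5$ of Example~\ref{ex:18.11}(a)), with $\operatorname{char}k\neq 2$. Put $x=\sum_j\xi_{1j}$, so $\psi_f=x^2$.
\begin{itemize}
\item $f$ satisfies the order-one equation $(y(t)-y(t-1)-u(t-1)^2)^2=4y(t-1)u(t-1)^2$.
\item An order-one rational equation would give $\psi_f(S_1)=x^2+2S_1x+S_1^2\in K(x^2)$, hence $x\in K(x^2)$, which is false.
\item The first rational equation has order two: $u(t-2)[y(t)-y(t-1)-u(t-1)^2]=u(t-1)[y(t-1)-y(t-2)+u(t-2)^2]$.
\end{itemize}
Your fallback does not help. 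At the minimal index, $\mathcal{Q}^{0,1}=k(x)\neq\mathcal{Q}^{0,0}=k(x^2)$. Even where $\mathcal{Q}_f=\mathcal{Q}_f^{0,t}$, this expresses $\psi_f(u)$ through observables at \emph{arbitrary} shorter words, while a difference equation may use only the prefixes of $u$. Lemma~\ref{lem:10.7} yields $\psi_f(S)\in K(\mathcal{Q}_f^{0,t})$, not $\psi_f(S)\in\hat{K}_{t+1}$.

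The missing idea is finite generation. By Theorem~\ref{thm:13.2}(d), $\mathcal{Q}_f$ is finitely generated over $k$. By Lemma~\ref{lem:10.7}, the $S$-coefficients of each $\psi_f(S_t,\dots,S_1)$ lie in $\mathcal{L}_f$, so $K\subseteq\mathcal{Q}_f^K\subseteq K(\mathcal{Q}_f)$. An intermediate field of a finitely generated extension is finitely generated; this is the fact used in Theorem~\ref{thm:13.2}, (c)$\Rightarrow$(d). Since $\mathcal{Q}_f^K$ is the increasing union of the $\hat{K}_r$, it equals some $\hat{K}_r$. Thus $\psi_f(S_r,\dots,S_1)\in\hat{K}_r$, which is a rational equation by Lemma~\ref{lem:16.3}(b), typically of higher order than the minimal algebraic one.

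\textbf{Gap 2: the recursive half of (c).} This has the same defect, and here your conclusion is actually false. From $\mathcal{A}_f=\mathcal{A}_f^{0,t}$ and Lemma~\ref{lem:10.7} you only get $\psi_f(S_{t+1},\dots,S_1)\in\mathcal{A}_f^{0,t}[S]$. The $S$-coefficients of the $\psi_f(S_i,\dots,S_1)$ need not lie in $k[S][\psi_f,\dots,\psi_f(S_t,\dots,S_1)]$, and invariance \eqref{eq:10.5} does not change that.

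In the example above, $\mathcal{A}_f=k[x]=\mathcal{A}_f^{0,1}$, and the order-one equation is integral with leading coefficient $1$. Yet Example~\ref{ex:18.11}(a) shows that $f$ satisfies no equation $y(t)=R(y(t-1),\dots,y(t-r),u(t-1),\dots,u(t-r))$ with $R$ polynomial, for any $r$. A recursive equation with $E_1=1$ is exactly of that form.

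The correct route works with the $K$-algebras $A_r=K[\psi_f,\dots,\psi_f(S_{r-1},\dots,S_1)]$:
\begin{itemize}
\item Shifting the integral equation as in \eqref{eq:16.6} makes $\mathcal{A}_f^K$ integral over $A_r$.
\item By (a), $f$ also satisfies a rational equation, so $\mathcal{Q}_f^K$ is finitely generated over $K$.
\item Lemma~\ref{lem:1.13}(a) then makes $\mathcal{A}_f^K$ finitely generated, hence equal to some $A_{r'}$.
\item Clearing the denominators from $K=k(S)$ in $\psi_f(S_{r'},\dots,S_1)\in A_{r'}$ yields $E_1(S)L_0+E_2$, with $E_1\in k[S]$ in general nonconstant (Lemma~\ref{lem:16.3}(c),(d)).
\end{itemize}

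\textbf{What is sound, and two smaller repairs.} Your direct proof that $\Sigma_f$ is polynomial in (c) is fine: integrality over $\mathcal{A}_f^{0,r-1}$ via Lemma~\ref{lem:12.11} and Proposition~\ref{prop:12.12}(d), then Theorem~\ref{thm:14.1}(c). It is shorter than the paper's detour through $\mathcal{A}_f^K$ and Lemma~\ref{lem:16.7}(c). Two smaller points still need fixing.
\begin{itemize}
\item In (a)(ii)$\Rightarrow$(i), the set of $u$ where the leading coefficient of $E$, evaluated on the observables, is nonzero may be empty. Take $E$ of minimal order first, as in the proof of Lemma~\ref{lem:16.3}(a).
\item In (b), the term $\hat{E}(S)$ contributes constants, which in general are not in $\mathcal{L}^{0,0}$. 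Run the argument of Proposition~\ref{prop:12.12}(a) on $\mathcal{L}_f+k$ instead.
\end{itemize}
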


The proof of this theorem will be postponed until we establish some technical facts about algebraic difference equations.

The study of such equations will be algebraized through the introduction of the field
\[
K=k\left(\left\{S_{i j}, i=1, \ldots, m, j \geq 1\right\}\right)
\]
obtained by adjoining the indeterminates $S_{i j}$ to $k$. Let $\mathcal{L}_{f}^{K}$, $\mathcal{A}_{f}^{K}$ be the $K$-subspace and $K$-subalgebra of $\Psi_{K}$ generated by $\psi_{f}$ and (recall \eqref{eq:5.17}) all the $\psi_{f}\left(S_{t}, \ldots, S_{1}\right)$ (note order of arguments!), for all $t \geq 0$. Let $\mathcal{Q}_{f}^{K}:=Q\left(\mathcal{A}_{f}^{K}\right)$.

\begin{lemma}\label{lem:16.3}
Let $f$ be a polynomial response map. Then
\begin{enumerate}
\item[(a)] $f$ satisfies an algebraic difference equation if and only if $\operatorname{trdeg}_{K} \mathcal{Q}_{f}^{K}<\infty$.
\item[(b)] $f$ satisfies a rational difference equation if and only if $\mathcal{Q}_{f}^{K}$ is a finitely generated field extension of $K$.
\item[(c)] $f$ satisfies a recursive difference equation if and only if $\mathcal{A}_{f}^{K}$ is a finitely generated $K$-algebra.
\item[(d)] If $f$ satisfies an integral equation and a (possibly different) rational equation then $f$ satisfies a recursive equation.
\item[(e)] $f$ satisfies an affine difference equation if and only if $\operatorname{dim}_{K} \mathcal{L}_{f}^{K}<\infty$.
\end{enumerate}
\end{lemma}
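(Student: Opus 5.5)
The plan is to translate each difference equation into a polynomial relation among the generators $\psi_f, \psi_f(S_1), \psi_f(S_2,S_1),\ldots$ of $\mathcal{A}_f^K$, with coefficients in $k[S_1,S_2,\ldots]\subseteq K$. Set $\varphi_\ell:=\psi_f(S_\ell,\ldots,S_1)$ for $\ell\ge 0$. By \eqref{eq:12.4}, evaluated at a generic input string, $\varphi_\ell$ is the series whose value at a past input $v$ is the output observed after applying $v$ followed by $S_\ell,\ldots,S_1$. The first step is a dictionary lemma. With $y(t)$ the output at the end of the history, the statement ``$f$ satisfies $E$'' is equivalent to the identity
\[
E(\varphi_r,\varphi_{r-1},\ldots,\varphi_0,S_r,\ldots,S_1)=0 \quad\text{in } \Psi_K,
\]
up to the harmless relabelling of $L_0,\ldots,L_r$ against $\varphi_r,\ldots,\varphi_0$. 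To prove it, evaluate at every finite input (points of $U[z]$) and at every choice of values of the $S_{ij}$. Since $k$ is infinite and $U[z]$ is dense in $\Omega$ (Lemma~\ref{lem:6.6}), the identity in $\Psi\otimes k[S]$ holds exactly when it holds pointwise on all input/output pairs; shift invariance (Definition~\ref{def:7.2}) reduces all times $t$ to the single instant at the end of the history. The shift structure also gives $\varphi_\ell(S_{\ell+1})$-type compatibilities: substituting in $\varphi_\ell$ the next input variable just produces $\varphi_{\ell+1}$ with the $S$'s relabelled. So a relation among $\varphi_0,\ldots,\varphi_r$ propagates to the same relation among $\varphi_j,\ldots,\varphi_{j+r}$ for every $j$, again after relabelling the $S$'s by a $K$-automorphism.

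With the dictionary in hand, each part follows from propagation. For (a), an algebraic equation that is nontrivial in $L_0$ makes $\varphi_r$ algebraic over $K(\varphi_0,\ldots,\varphi_{r-1})$. By shifting, every $\varphi_{j+r}$ is algebraic over $K(\varphi_j,\ldots,\varphi_{j+r-1})$, so by induction $\operatorname{trdeg}_K\mathcal{Q}^K_f\le r$. Conversely, finite transcendence degree $n$ forces $\varphi_0,\ldots,\varphi_n$ to be algebraically dependent over $K$. Take a dependence relation whose highest index $r$ occurs, clear the $K$-denominators into $k[S]$, and specialize the irrelevant $S_{ij}$ generically so that only $S_1,\ldots,S_r$ remain. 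This yields an $E$ nontrivial in $L_0$.

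Parts (b), (c) and (e) go the same way. For (b), a rational equation gives $\varphi_r\in K(\varphi_0,\ldots,\varphi_{r-1})$ and hence $\mathcal{Q}^K_f=K(\varphi_0,\ldots,\varphi_{r-1})$. Conversely, if $\mathcal{Q}_f^K$ is finitely generated, its generators lie in some $K(\varphi_0,\ldots,\varphi_{N})$; the chain $K(\varphi_0,\ldots,\varphi_j)$ then stabilizes, so $\varphi_{N+1}$ is a rational function of the earlier $\varphi$'s, which is a rational equation. For (c), a recursive equation says $E_1(S)\varphi_r=-E_2$, and since $E_1(S)$ is a unit of $K$, $\varphi_r\in K[\varphi_0,\ldots,\varphi_{r-1}]$, which propagates. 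Conversely, stabilization of the chain of algebras $K[\varphi_0,\ldots,\varphi_j]$ gives $\varphi_{N+1}$ as a polynomial over $K$ in the earlier $\varphi$'s; clearing denominators gives a recursive equation. For (e), the same argument applies with $K$-spans in place of algebras, using affine relations.

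Part (d) follows from (b) and (c). An integral equation makes $\varphi_r$, after dividing by the unit $E_1(S)\in K$, integral over $K[\varphi_0,\ldots,\varphi_{r-1}]$, and by shifting $\mathcal{A}_f^K$ is integral over $B:=K[\varphi_0,\ldots,\varphi_{r-1}]$. The rational equation gives, by (b), that $\mathcal{Q}^K_f$ is finitely generated over $K$, hence over $Q(B)$. Lemma~\ref{lem:1.13}(a) then shows that $\mathcal{A}^K_f$ is a finitely generated $K$-algebra, and (c) yields a recursive equation.

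The main obstacle is the dictionary step, specifically the propagation claim. I need to check carefully that substituting further generic inputs is a well-defined $K$-algebra endomorphism of $\Psi_K$, namely \eqref{eq:5.18} extended from $k$ to $K$. I also need to confirm that it carries $\varphi_\ell$ to $\varphi_{\ell+1}$ up to a relabelling of the $S_{ij}$, and that this relabelling preserves both injectivity and the nontriviality in $L_0$. The reverse direction needs one more point: specializing the extra $S_{ij}$ must keep $E$ nontrivial in $L_0$, and this holds because $k$ is infinite, so the nonzero leading coefficient can be kept nonzero by a suitable choice of values.
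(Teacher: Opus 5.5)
Your architecture is the paper's. The dictionary is \eqref{eq:16.4}--\eqref{eq:16.5}, and the propagation is the relabelling leading to \eqref{eq:16.6}. Parts (c) and (e) work because there the $L_0$-coefficient lies in $k[S]\setminus\{0\}$ and is therefore a unit of $K$, and (d) is Lemma~\ref{lem:1.13}(a) combined with (b) and (c). Your converse directions are fine; specializing the extra $S_{ij}$ generically is, if anything, more careful than setting them to $0$.

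\textbf{The gap is in the forward direction of (a).} It is not true that an equation nontrivial in $L_0$ makes $\varphi_r$ algebraic over $K(\varphi_0,\ldots,\varphi_{r-1})$. The $L_0$-coefficients of $E$ are polynomials in $L_1,\ldots,L_r,S$, and all of them may vanish once $\varphi_{r-1},\ldots,\varphi_0$ are substituted. What remains is then the zero polynomial in $L_0$. For example, $y(t)=u(t-1)$ satisfies $y(t)(y(t-1)-u(t-2))=0$, i.e.\ $E=L_0(L_1-S_1)$. This $E$ is even rational, but it says nothing about $\varphi_2$. In general, $L_0G$ is such an equation whenever $G$ is a shifted lower-order equation.

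Dealing with this degeneracy is the actual content of the paper's proof. It takes $E$ of minimal order. If the leading coefficient vanished at $\varphi_{r-1},\ldots,\varphi_0$, it divides out powers of $S_r$ and sets $S_r=0$, producing an equation of lower order. This is legitimate because these series do not involve $S_r$.

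A shorter repair in your own language runs as follows. Since $E\neq 0$ in $K[L_0,\ldots,L_r]$, the series $\varphi_0,\ldots,\varphi_r$ are algebraically dependent over $K$. Take the least $j$ with $\varphi_0,\ldots,\varphi_j$ dependent. Then $\varphi_j$ satisfies a relation over $K(\varphi_0,\ldots,\varphi_{j-1})$ with nonzero leading coefficient. Your injective shift propagates this relation and gives $\operatorname{trdeg}_K\mathcal{Q}_f^K\le j$.

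\textbf{The forward half of (b) has the same flaw}, since $E_1(\varphi_{r-1},\ldots,\varphi_0,S)$ may vanish. Here minimality does not help, because the lower-order equation extracted from $E_1$ need not be rational. The paper's one-line proof of (b) passes over this point as well.

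One way to close it:
\begin{enumerate}
\item[(i)] A rational equation, degenerate or not, is algebraic, so (a) gives $\operatorname{trdeg}_K\mathcal{Q}_f^K<\infty$.
\item[(ii)] Apply Lemma~\ref{lem:5.22} over $K$, as the paper does in the proof of Theorem~\ref{thm:17.3}. It makes $\epsilon_t$ injective on $\mathcal{A}_f^K$ for large $t$.
\item[(iii)] Hence $\mathcal{Q}_f^K$ embeds in $K(\xi_1,\ldots,\xi_t)$. It is therefore finitely generated, as in the proof of Theorem~\ref{thm:13.2}.
\end{enumerate}
Your argument for (d) invokes (b), so it goes through once (b) is repaired in this way.
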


\begin{proof}
We begin with some general remarks. Observe that an equation $E$ of order $r$ is equivalent to
\label{op:86}%
\begin{equation}\label{eq:16.4}
E\left(f^{u_{r} \cdots u_{1}}, f^{u_{r} \cdots u_{2}}, \ldots, f^{u_{r}}, f, u_{1}, \ldots, u_{r}\right)=0 ;
\end{equation}
for all ($u_{r}, \ldots, u_{1}$) in $U^{r}$. In terms of Volterra series, \eqref{eq:16.4} is equivalent to

\begin{equation}
\begin{array}{r}
E\left(\psi_{f}\left(S_{r}, \ldots, S_{1}\right), \psi_{f}\left(S_{r-1}, \ldots, S_{1}\right), \ldots, \psi_{f}\left(S_{1}\right), \psi_{f},\right.  \label{eq:16.5}\\
\left.S_{r}, \ldots, S_{1}\right)=0 .
\end{array}
\end{equation}
A change of variables $S_{\ell} \mapsto S_{\ell+1}, \ell=1, \ldots, r, \xi_{1} \mapsto S_{1}$, and $\xi_{j} \mapsto \xi_{j-1}$ if $j>1$ transforms \eqref{eq:16.5} into

\begin{equation}
\begin{array}{r}
E\left(\psi_{f}\left(S_{r+1}, \ldots, S_{1}\right), \psi_{f}\left(S_{r}, \ldots, S_{1}\right), \ldots, \psi_{f}\left(S_{2}, S_{1}\right), \psi_{f}\left(S_{1}\right),\right.  \label{eq:16.6}\\
\left.S_{r+1}, \ldots, S_{2}\right)=0 .
\end{array}
\end{equation}
Thus an equation of order $r$ gives rise to an equation of order $r+1$.

We now prove (a). Let $f$ satisfy $E$, with $E$ of order $r$ smallest possible. Let
\[
c\left(L_{1}, \ldots, L_{r}, S_{r}, \ldots, S_{1}\right) \neq 0
\]
be the leading coefficient of $E$ as a polynomial in $L_{0}$. Suppose that
\begin{equation*}
c\left(\psi_{f}\left(S_{r-1}, \ldots, S_{1}\right), \ldots, \psi_{f}, S_{r}, \ldots, S_{1}\right)=0 . \tag{*}
\end{equation*}
Dividing by $S_{r}$ if necessary, we may assume that $c\left(L_{1}, \ldots, L_{r}, 0, S_{r-1}, \ldots, S_{1}\right) \neq 0$. Thus
\[
c\left(\psi_{f}\left(S_{r-1}, \ldots, S_{1}\right), \ldots, \psi_{f}, 0, S_{r-1}, \ldots, S_{1}\right)=0 ;
\]
is an equation for $f$ of order $\leq r-1$, contradicting minimality of $r$. Therefore (*) is false. So \eqref{eq:16.4} is a nontrivial equation for $\psi_{f}\left(S_{r}, \ldots, S_{1}\right)$ with coefficients in
\[
\hat{K}_{r}:=K\left(\psi_{f}, \ldots, \psi_{f}\left(S_{r-1}, \ldots, S_{1}\right)\right) .
\]
Since the derivation of \eqref{eq:16.5} from \eqref{eq:16.4} is just a relabeling of variables, the same argument gives a dependency for $\psi_{f}\left(S_{r+1}, \ldots, S_{1}\right)$ over $K\left(\psi_{f}, \ldots, \psi_{f}\left(S_{r}, \ldots, S_{1}\right)\right)$. Since algebraic extensions of algebraic extensions are again algebraic (formula \eqref{eq:4.1}) we conclude that
\label{op:87}$\psi_{f}\left(S_{r+1}, \ldots, S_{1}\right)$ is also algebraic over $\hat{K}_{r}$, and by induction also $\mathcal{Q}_{f}^{K}$ is algebraic over $\hat{K}_{r}$. Thus
\[
\operatorname{trdeg}_{K} \mathcal{Q}_{f}^{K}=\operatorname{trdeg}_{K} \hat{K}_{r} \leq r<\infty .
\]
Conversely, assume that trdeg $_{K} \mathcal{Q}_{f}^{K}<\infty$. Since the $\psi_{f}\left(S_{t}, \ldots, S_{1}\right)$ generate $\mathcal{Q}_{f}^{K}$, there exists an $r \geq 1$ such that $\mathcal{Q}_{f}^{K}$ is algebraic over $\hat{K}_{r}$. In particular, $\psi_{f}\left(S_{r}, \ldots, S_{1}\right)$ is algebraic over $\hat{K}_{r}$, so there is an equation of algebraic dependence $Q\left(\psi_{f}\left(S_{r}, \ldots, S_{1}\right)\right)=0$ with coefficients in $\hat{K}_{r}$. Multiplying by a common denominator if necessary, we may assume that all coefficients of $Q$ are in $k\left[\left\{S_{i j}\right\} \cup\left\{\psi_{f}, \ldots, \psi_{f}\left(S_{r-1}, \ldots, S_{1}\right)\right\}\right]$. Dividing if necessary, we may further assume that no $S_{i j}$ divides $Q$. Therefore the evaluation $S_{i j} \mapsto 0$ for all $j>r$ gives an algebraic equation of order $r$.

(b) A rational equation E is equivalent to the statement " $\psi_{f}\left(S_{r}, \ldots, S_{1}\right)$ belongs to $\hat{K}_{r}$ ". Since $\mathcal{Q}_{f}^{K}$ is the union of the $\hat{K}_{r}$, (b) is clear.

(c) Existence of a recursive equation of order $r$ is equivalent to $\psi_{f}\left(S_{r}, \ldots, S_{1}\right)$ belonging to the algebra
\[
A_{r}=K\left[\psi_{f}, \ldots, \psi_{f}\left(S_{r-1}, \ldots, S_{1}\right)\right],
\]
which is in turn equivalent to $\mathcal{A}_{f}^{K}$ being finitely generated.

(d) An integral equation corresponds to $\psi_{f}\left(S_{r}, \ldots, S_{1}\right)$ being integral over $A_{r}$, so by induction as in passing from \eqref{eq:16.5} to \eqref{eq:16.6}, we conclude that $\mathcal{A}_{f}^{K}$ is integral over $A_{r}$. Since, by (b), $\mathcal{Q}_{f}^{K}=Q\left(\mathcal{A}_{f}^{K}\right)$ is finitely generated over $K$, we conclude from Lemma~\ref{lem:1.13}(a) that $\mathcal{A}_{f}^{K}$ is finitely generated.

(e) Similar to (c); just observe that an affine equation of order $r$ is equivalent to $\psi_{f}\left(S_{r}, \ldots, S_{1}\right)$ belonging to the span of $\psi_{f}, \ldots, \psi_{f}\left(S_{r-1}, \ldots, S_{1}\right)$ as a space over $K$.
\end{proof}

\begin{lemma}\label{lem:16.7}
Let $f$ be a polynomial response map. Then
\begin{enumerate}
\item[(a)] \label{op:88}$\operatorname{trdeg}_{K} \mathcal{Q}_{f}^{K}<\infty$ if and only if $\operatorname{trdeg} \mathcal{Q}_{f}<\infty$.
\item[(b)] $\operatorname{dim}_{K} \mathcal{L}_{f}^{K}<\infty$ if and only if $\operatorname{dim} \mathcal{L}_{f}<\infty$.
\item[(c)] If $\mathcal{A}_{f}^{K}$ is a finitely generated $K$-algebra, then $\mathcal{A}_{f}$ is finitely generated (over $k$).
\end{enumerate}
\end{lemma}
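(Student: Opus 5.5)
The plan is to compare the $K$-objects with the $k$-objects through the natural identification $\Psi_K \supseteq \Psi\otimes_k K$. Here $K$ is obtained from $k$ by adjoining algebraically independent indeterminates. By \eqref{eq:5.18} and the Main Lemma~\ref{lem:10.7}, the $K$-span of $\psi_f(S_t,\ldots,S_1)$ equals the $K$-span of its coefficients as a polynomial in $S_1,\ldots,S_t$, and those coefficients span the same $k$-space as $\{\psi_f(u_t,\ldots,u_1): u_i\in U\}$. The first step will be to prove carefully that
\[
\mathcal{L}_f^K=K\cdot\mathcal{L}_f,\qquad \mathcal{A}_f^K=K\cdot\mathcal{A}_f,\qquad \mathcal{Q}_f^K=K\mathcal{Q}_f
\]
(compositum inside $Q(\Psi_K)$). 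The inclusion ``$\subseteq$'' is clear because each $\psi_f(S_t,\ldots,S_1)$ is a polynomial in the $S$'s with coefficients in $\mathcal{L}_f$. The inclusion ``$\supseteq$'' comes from Lemma~\ref{lem:10.7} with $Z=U^t$ and $F$ the polynomial functions, which recovers each coefficient from evaluations, and each evaluation $f^w$ is a generator of $\mathcal{L}_f$.

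The second step is a linear-disjointness fact: elements of $\Psi$ that are linearly independent over $k$ remain linearly independent over $K$ inside $\Psi_K$. One clears denominators to get a relation with coefficients in $k[S]$, and comparing coefficients of the monomials in $S$ (the $S_{ij}$ are independent of the $\xi$'s) reduces it to a $k$-relation. This gives $\mathcal{L}_f^K\cong \mathcal{L}_f\otimes_k K$, so $\dim_K\mathcal{L}_f^K=\dim_k\mathcal{L}_f$, which proves (b). The same argument applied to monomials shows that algebraically independent elements of $\mathcal{A}_f$ stay algebraically independent over $K$. Conversely, a transcendence basis of $\mathcal{Q}_f$ over $k$ still makes $K\mathcal{Q}_f$ algebraic over $K(\text{basis})$. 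Hence $\operatorname{trdeg}_K\mathcal{Q}_f^K=\operatorname{trdeg}_k\mathcal{Q}_f$, which proves (a).

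For (c), suppose $\mathcal{A}_f^K$ is generated over $K$ by finitely many elements. Each generator is a $K$-combination of finitely many elements of $\mathcal{A}_f$, so $\mathcal{A}_f^K=K[a_1,\ldots,a_n]$ with $a_i\in\mathcal{A}_f$. Let $B=k[a_1,\ldots,a_n]\subseteq\mathcal{A}_f$. Then $B\otimes K=\mathcal{A}_f\otimes K$ inside $\Psi_K$, using the injectivity $\mathcal{A}_f\otimes_k K\to\Psi_K$ from step two. Since $K$ is faithfully flat over $k$ (it is a vector space with a basis containing $1$), $(\mathcal{A}_f/B)\otimes K=0$ forces $\mathcal{A}_f=B$. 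Concretely, one picks a $k$-basis of $K$ containing $1$ and projects onto the coefficient of $1$.

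I expect the main obstacle to be the linear-disjointness claim in step two. Elements of $\Psi_K$ involve the $S$'s only through $K$-scalars, while $\Psi$ involves only the $\xi$'s, and this has to be made rigorous. I would handle it by clearing denominators into $k[S]\otimes\Psi\subseteq\Psi[S]$ and using that $\Psi[S]$ is a polynomial ring over $\Psi$.
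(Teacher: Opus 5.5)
Your Step~1 is false, and everything beyond the easy halves of (a) and (b) depends on it. By definition, $\mathcal{L}_f^K$ is the $K$-span of $\psi_f$ and of the \emph{single} element $\psi_f(S_t,\ldots,S_1)$ for each $t$. That element is the value of the observable at one generic point.

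\begin{itemize}
\item The $K$-span of one element is one-dimensional, while its $S$-coefficients usually span a larger space.
\item Lemma~\ref{lem:10.7} recovers the coefficients only from evaluations at many points of $Z$.
\item The evaluations $\psi_f(u_t,\ldots,u_1)$, $u\in U^t$, need not lie in $\mathcal{L}_f^K$ at all, because substituting $S\mapsto u$ is not a $K$-linear operation on $\Psi_K$.
\end{itemize}

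So only the inclusions $\mathcal{L}_f^K\subseteq K\cdot\mathcal{L}_f$, $\mathcal{A}_f^K\subseteq K\cdot\mathcal{A}_f$, and $\mathcal{Q}_f^K\subseteq K\mathcal{Q}_f$ hold. The equalities you derive from Step~1 are in fact wrong. In Example~\ref{ex:18.8}, $f_1$ has $\operatorname{trdeg}\mathcal{Q}_{f_1}=2n+1$ but satisfies an affine equation of order $n+1$, so $\operatorname{trdeg}_K\mathcal{Q}_{f_1}^K\le n+1$. The paper also remarks that $f_1$ is a counterexample to $\dim\mathcal{L}_f=\dim\mathcal{L}_f^K$. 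Your proof of (c) fails at the same point. The $a_i\in\mathcal{A}_f$ that occur in the $K$-expansions of the generators need not belong to $\mathcal{A}_f^K$. Hence neither $\mathcal{A}_f^K=K[a_1,\ldots,a_n]$ nor $B\otimes K=\mathcal{A}_f\otimes K$ follows.

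The inclusions that do hold give the ``if'' halves of (a) and (b), exactly as in the paper's second paragraph. What is missing for the converses and for (c) is a way to pass from the generic point back to $k$-points. The paper does this by specialization:

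\begin{itemize}
\item By Lemma~\ref{lem:16.3}, $K$-finiteness yields an algebraic (resp.\ affine, recursive) equation of some order $r$. Its leading coefficient in $L_0$, evaluated at the lower-order observables, is a nonzero element of $\Psi[S_1,\ldots,S_r]$.
\item For $u$ off the zero set $D\subseteq U^r$ of that coefficient, substituting $S\mapsto u$ shows that $\psi_f(u_r,\ldots,u_1)$ is algebraic over, or lies in the span of, or lies in the algebra generated by, the observables of length $<r$.
\item Since $U^r\setminus D$ is dense, Lemma~\ref{lem:12.11} extends this to all $u\in U^r$. This gives $\mathcal{L}_f^{0,r}=\mathcal{L}_f^{0,r-1}$, resp.\ $\mathcal{A}_f^{0,r}=\mathcal{A}_f^{0,r-1}$, resp.\ $\mathcal{A}_f^{0,r}$ algebraic over $\mathcal{A}_f^{0,r-1}$.
\item Proposition~\ref{prop:12.12}(a)--(c),(e) then gives finiteness over $k$.
\end{itemize}

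Your linear-disjointness Step~2 is correct, and it can replace specialization in (b) and (c). For some $r$, $\psi_f(S_r,\ldots,S_1)$ lies in the $K$-span (resp.\ $K$-algebra) of $\psi_f,\ldots,\psi_f(S_{r-1},\ldots,S_1)$, hence in $K\cdot\mathcal{L}_f^{0,r-1}$ (resp.\ $K\cdot\mathcal{A}_f^{0,r-1}$). Linear disjointness then forces its $S$-coefficients, which span all $f^w$ with $w\in U^r$, to lie in $\mathcal{L}_f^{0,r-1}$ (resp.\ $\mathcal{A}_f^{0,r-1}$). For (a), however, an algebraic dependence over $K\mathcal{Q}_f^{0,r-1}$ does not pass to the coefficients this directly, and the specialization argument is what is needed.
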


\begin{proof}
We first prove that the finiteness statements on $\mathcal{L}_{f}^{K}$, $\mathcal{A}_{f}^{K}$, and $\mathcal{Q}_{f}^{K}$ imply the corresponding conditions on $\mathcal{L}_{f}, \mathcal{A}_{f}$, and $\mathcal{Q}_{f}$. By Lemma~\ref{lem:16.3} these statements are equivalent to the existence, respectively, of an affine, recursive or algebraic equation $E$ for $f$. Let $c\left(L_{1}, \ldots, L_{r}, S_{r}, \ldots, S_{1}\right)$ be the leading coefficient of $E$ as a polynomial in $L_{0}$. Let
\[
\begin{array}{r}
D:=\left\{\left(u_{r}, \ldots, u_{1}\right) \text { in } U^{r} \mid c\left(\psi_{f}\left(u_{r-1}, \ldots, u_{1}\right), \ldots, \psi_{f}\left(u_{1}\right),\right.\right. \\
\left.\left.\ldots, \psi_{f}, u_{r}, \ldots, u_{1}\right)=0\right\} .
\end{array}
\]
By \eqref{eq:16.4}, $\psi_{f}\left(u_{r}, \ldots, u_{1}\right)$ is in the space $\mathcal{L}_{f}$ [respectively in the algebra $\mathcal{A}_{f}$, respectively algebraic over $\left.\mathcal{Q}_{f}\right]$ for all ($u_{r}, \ldots, u_{1}$) not in $D$. The conclusion is then clear from the Main Lemma, Part II \ref{lem:12.11} and Proposition~\ref{prop:12.12}(e).

Now we prove the "if" parts in (a) and (b). By the Main Lemma~\ref{lem:10.7}, the coefficients of $\psi_{f}\left(S_{r}, \ldots, S_{1}\right)$, considered as a polynomial in $S_{1}, \ldots, S_{r}$, are in $\mathcal{L}_{f}$. Thus $\mathcal{L}_{f}^{K}$ is in the linear space over $K$ generated by $\mathcal{L}_{f}$, which has a finite basis by hypothesis. Since $\mathcal{Q}_{f}^{K}$ is generated as a field by the elements of $\mathcal{L}_{f}^{K}$, the conclusion about $\mathcal{Q}_{f}^{K}$ is also immediate.
\end{proof}

We may now complete the

\begin{proof}[Proof of Theorem~\ref{thm:16.2}]
(a) Follows from Lemma~\ref{lem:16.3}(a,b), Lemma~\ref{lem:16.7}(a) and the equivalence of (a) and (d) in Theorem~\ref{thm:13.2}.

(b) Clear from Lemma~\ref{lem:16.3}(e) and Lemma~\ref{lem:16.7}(b).

(c) An integral equation is in particular an algebraic equation, so by (a) $f$ satisfies a rational equation. The conclusion is then clear from $Lemma~\ref{lem:16.3}(d),Lemma~\ref{lem:16.3}(c)$, and $Lemma~\ref{lem:16.7}(c)$.
\end{proof}

\label{op:89}%

\begin{remarks}[A]\label{rem:16.8}
For a finitely realizable $f$ there is up to scalar multiplication a \textit{unique} algebraic difference equation $E$ with the two properties: (i) $E$ is of minimal order $r$, and (ii) $E$ is irreducible as a polynomial in $k\left[L_{0}, L_{1}, \ldots, L_{r}, S_{r}, \ldots, S_{1}\right]$. Uniqueness follows easily from the discussion in \citet[page 110]{hodge1968methods}.
\begin{enumerate}
\item[(b)] It is easily verified that two polynomial input/output maps satisfying the same rational equation necessarily coincide. Moreover, by simple division of polynomials, $f$ (more precisely, $\psi_{f}$) can be reconstructed from the rational equation $E$.
\item[(c)] It can be seen by a counterexample that $\Sigma_{f}=$ polynomial system does not imply that $f$ satisfies an integral difference equation.
\end{enumerate}
\end{remarks}

\subsection{Jacobian Condition}\label{sec:17}

We now show how to check the condition " $f$ is finitely realizable" by examining increasing truncations of the Volterra series. We take $k$ to be a field of zero characteristic (when char $k=p \neq 0$ one may generalize the criterion via spaces of differentials).

As in the previous section, we assume $p=1$ without loss of generality.

Denote by $D_{r s}, 1 \leq r \leq m, s \geq 1$, the operator which takes partial derivatives of polynomials with respect to the indeterminate $\xi_{r s}$. Let $D_{s} P, s \geq 1$, be the row vector $\left(D_{1 s} P, \ldots, D_{m s} P\right)$.

\begin{definition}\label{def:17.1}
\textit{The} $n$-\textit{th Jacobian matrix} $J_{n}(f)$ \textit{of} $f$ \textit{is}
\[
J_{n}(f):=\left(\begin{array}{cccc}
D_{1} \epsilon_{n} \psi_{f} & D_{2} \epsilon_{n} \psi_{f} & \ldots & D_{n} \epsilon_{n} \psi_{f} \\
D_{1} \epsilon_{n} \psi_{f}\left(S_{1}\right) & . & \ldots & D_{n} \epsilon_{n} \psi_{f}\left(S_{1}\right) \\
\vdots & \vdots & \ldots & \vdots \\
D_{1} \epsilon_{n} \psi_{f}\left(S_{n-1}, \ldots, S_{1}\right) & . & \ldots & D_{n} \epsilon_{n} \psi_{f}\left(S_{n-1}, \ldots, S_{1}\right)
\end{array}\right) .
\]
\end{definition}

\label{op:90}%

\begin{example}\label{ex:17.2}
Let $f$ be a linear response map with "impulse response" $A_{1}, \ldots, A_{n}, \ldots$, i.e. $\psi_{f}=\sum a_{i j} \xi_{i j}$ and $A_{i}=\left(a_{1 j}, \ldots, a_{m j}\right), j \geq 1$. Then
\[
\epsilon_{n} \psi_{f}\left(S_{r}, \ldots, S_{1}\right)=\sum_{j=1}^{r} \sum_{i=1}^{m} a_{i j} S_{i, r-j}+\sum_{j=1}^{n} \sum_{i=1}^{m} a_{i, j+r} \xi_{i j},
\]
so
\[
D_{s} \epsilon_{n} \psi_{f}\left(S_{r}, \ldots, S_{1}\right)=A_{s+r} .
\]
Thus $J_{n}(f)$ is the $n$-th principal minor of the block Hankel matrix of $f$ (\citealp[Chapter 10]{kalman1969topics}), and it is well-known that realizability of $f$ is equivalent to the existence of an integer $s$ such that rank $J_{n}(f)<s$ for all $n$. The Jacobian of $f$ gives a new way of interpreting the classical Hankel matrix of $f$.
\end{example}

\begin{theorem}\label{thm:17.3}
$f$ is finitely realizable if and only if there exists an $s \geq 0$ such that
\[
\operatorname{rank} J_{t}(f) \leq s \text { for all } t \geq 0 \text {, }
\]
i.e., if and only if every $(s+1)$-minor of $J_{t}(f)$ is zero for all $t \geq 0$.
\end{theorem}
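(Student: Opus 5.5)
Our plan is to identify the rank of $J_t(f)$ with a transcendence degree and then invoke Proposition~\ref{prop:12.12}(h) together with Theorem~\ref{thm:13.2}. Since char $k=0$, the Jacobian criterion for algebraic independence applies: finitely many polynomials $q_1,\ldots,q_N$ in $k[\xi_1,\ldots,\xi_t]$ are algebraically independent over $k$ iff their Jacobian matrix has rank $N$ over the quotient field. Hence the rank of the Jacobian of any finite family equals the transcendence degree of the algebra it generates, and for an infinite family $\{q_\lambda\}$ the generic rank of the (possibly infinite) Jacobian equals $\operatorname{trdeg} k[q_\lambda]$.

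We will work over $K$, treating the $S_{ij}$ as indeterminates. The rows of $J_t(f)$ are the gradients in $\xi_1,\ldots,\xi_t$ of $\epsilon_t\psi_f(S_r,\ldots,S_1)$ for $r<t$; these are polynomials with coefficients in $k[S]$. By the Main Lemma~\ref{lem:10.7}, the $k$-span of the specializations $\epsilon_t\psi_f(u_r,\ldots,u_1)$, $u$ in $U^r$, equals the $k$-span of the coefficients (in $S$) of $\epsilon_t\psi_f(S_r,\ldots,S_1)$. Since derivatives in $\xi$ commute with taking $S$-coefficients, the generic rank of $J_t(f)$ (rank over $k(S)(\xi)$) equals the rank of the Jacobian of the family $\{\epsilon_t\psi_f(u_r,\ldots,u_1): r<t\}$. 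We justify this by noting that a $K$-combination of rows vanishes generically iff it does for all specializations; alternatively, the row space over $k(\xi)$ spanned by specializations at all $u$ equals the span of the coefficient gradients. By the Jacobian criterion, that rank is $\operatorname{trdeg}\mathcal{A}_f^t$, using \eqref{eq:12.10}. Here we must check the mismatch: the definition of $\mathcal{A}^t_f$ uses $\epsilon_t(\psi_f(u_1,\ldots,u_r))$ while $J_t$ uses $\psi_f(S_{r},\ldots,S_1)$, which is the same up to order convention. We also need to check that the number of rows ($t$) versus columns ($mt$) does not cap the rank unfairly. The rank is at most $t$, but $\operatorname{trdeg}\mathcal{A}^t_f$ may exceed $t$ only if $p\ge 2$ or $m$ is large; since $p=1$ and each $r$ contributes one row over $K$ whose $S$-coefficients give many functions, generic rank over $K$ counts those coefficients collectively. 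This is the delicate point. We expect it to be the main obstacle, and will handle it by showing that $\operatorname{rank}_K J_t$ and $\operatorname{trdeg}\mathcal{A}^t_f$ bound each other up to this structure: $\operatorname{rank} J_t\le \operatorname{trdeg}\mathcal{A}^t_f$ always, and conversely $\operatorname{trdeg}\mathcal{A}_f^t$ is bounded in terms of the ranks $\operatorname{rank}J_{t'}$ for $t'\ge t$ via the shift trick \eqref{eq:16.6}. That trick turns $K$-dependencies among $\psi_f(S_r,\ldots,S_1)$ into dependencies for all later indices, in the manner of Lemma~\ref{lem:16.3}(a), whose $\operatorname{trdeg}_K\hat K_r\le r$ matches the row count.

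With the identification in place, we finish using Lemma~\ref{lem:16.3}(a) and Lemma~\ref{lem:16.7}(a). Finite realizability is equivalent to $\operatorname{trdeg}_K\mathcal{Q}_f^K<\infty$. Moreover $\operatorname{trdeg}_K\mathcal{Q}^K_f=\sup_t \operatorname{trdeg}_K K(\epsilon_t\psi_f(S_r,\ldots,S_1), r<t)$ by the argument of Lemma~\ref{lem:5.22} over $K$, and the latter equals $\operatorname{rank}_K J_t(f)$ by the Jacobian criterion over the field $K$ applied directly to these $t$ polynomials in $\xi_1,\ldots,\xi_t$. Using $K$ directly avoids the specialization issue entirely. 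A uniform bound $s$ on $\operatorname{rank}J_t$ thus gives finite transcendence degree, and conversely. The minor formulation is immediate, since rank over $K$ is at most $s$ iff all $(s+1)$-minors vanish identically.
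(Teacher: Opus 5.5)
Your last paragraph is a correct proof and uses the same three ingredients as the paper. These are the Jacobian criterion over $K$ (the paper's appeal to Hodge--Pedoe), giving $\operatorname{rank}J_t(f)=\operatorname{trdeg}_K K[\epsilon_t\psi_f(S_r,\ldots,S_1): r<t]$; Lemma~\ref{lem:5.22} over $K$; and Lemma~\ref{lem:16.7}(a).

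The routes differ mainly in how Lemma~\ref{lem:5.22} is used. That lemma is about a \emph{fixed} algebra, while your generating set grows with $t$, so make one step explicit:
\begin{itemize}
\item Apply the lemma to each $B^{(t_0)}=K[\psi_f(S_r,\ldots,S_1): r<t_0]$.
\item Use $\operatorname{trdeg}_K\epsilon_t(B^{(t_0)})\le\operatorname{rank}J_{\max(t,t_0)}(f)$. For $t\ge t_0$ this is inclusion; for $t<t_0$ it follows from Lemma~\ref{lem:4.2} and \eqref{eq:5.21}.
\item Take the union over $t_0$.
\end{itemize}
This gives the exact identity $\sup_t\operatorname{rank}J_t(f)=\operatorname{trdeg}_K\mathcal{Q}_f^K$.

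For ``if'', the paper does something else. It fixes the algebra generated by $\psi_f,\ldots,\psi_f(S_s,\ldots,S_1)$ and extracts an algebraic dependency of order $r\le s$. It then implicitly propagates that dependency to all of $\mathcal{Q}_f^K$ by the shift argument of Lemma~\ref{lem:16.3}(a) before invoking Lemma~\ref{lem:16.7}(a). Your version reaches $\operatorname{trdeg}_K\mathcal{Q}_f^K<\infty$ directly.

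For ``only if'', the paper bounds $\operatorname{rank}J_n(f)\le\operatorname{trdeg}\mathcal{A}_f^n$ via Proposition~\ref{prop:12.12}(h) and the Main Lemma. You bound it by $\operatorname{trdeg}_K\mathcal{Q}_f^K$, which is finite by Lemma~\ref{lem:16.7}(a). Both work.

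Do delete the identification $\operatorname{rank}J_t(f)=\operatorname{trdeg}\mathcal{A}_f^t$ from your second paragraph, because it is false. $J_t(f)$ is a $t\times mt$ matrix, so its rank is at most $t$. Take $m=2$ and $\psi_f=\xi_{11}\xi_{12}+\xi_{21}\xi_{22}+\xi_{11}^2\xi_{13}+\xi_{21}^2\xi_{23}$. For $u=(a,b)'\in k^2$ one gets $\epsilon_2\psi_f(u)=a\xi_{11}+b\xi_{21}+a^2\xi_{12}+b^2\xi_{22}$. Hence $\mathcal{A}_f^2=k[\xi_1,\xi_2]$, which has transcendence degree $4>2$. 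Only $\operatorname{rank}J_t(f)\le\operatorname{trdeg}\mathcal{A}_f^t$ holds, and that is exactly the inequality the paper's ``only if'' uses. The reverse direction is where working over $K$, as in your final paragraph, is essential.
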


\begin{proof}
By \citet[III.7, Theorem III]{hodge1968methods},
\[
\operatorname{rank} J_{n}(f)=\operatorname{trdeg}_{K} K\left[\epsilon_{n} \psi_{f}, \ldots, \epsilon_{n} \psi_{f}\left(S_{n-1}, \ldots, S_{1}\right)\right] .
\]
($K$ is $k\left(\left\{S_{i j}\right\}\right)$ as in the previous section.)

["only if"] By Proposition~\ref{prop:12.12}(h), there is an $s$ such that $\operatorname{trdeg}_{k} \mathcal{A}_{f}^{n} \leq s$ for all $n \geq 0$. By the Main Lemma~\ref{lem:10.7} each $\epsilon_{n} \psi_{f}\left(S_{n-1}, \ldots, S_{1}\right), j=0, \ldots, n-1$ is a $K$-linear combination of elements of $\mathcal{A}_{f}^{n}$. So rank $J_{n}(f) \leq s$.

["if"] If rank $J_{t}(f) \leq s$ for all $t \geq 0$, in particular
\label{op:91}%
\begin{equation}\label{eq:17.4}
\operatorname{trdeg}_{K} K\left[\epsilon_{t} \psi_{f}, \ldots, \epsilon_{t} \psi_{f}\left(S_{s}, \ldots, S_{1}\right)\right] \leq s ,
\end{equation}
for all $t \geq s$. Let $B$ be the subalgebra of $\Psi_{K}$ generated by $\psi_{f}, \ldots, \psi_{f}\left(S_{s}, \ldots, S_{1}\right)$. By \eqref{eq:17.4}, $\operatorname{trdeg}_{K} \epsilon_{t}(B) \leq s$ for all $t \geq s$. It follows from Lemma~\ref{lem:5.22} that $\operatorname{trdeg}_{K} B \leq s$. So there is an $r \leq s$ such that $\psi_{f}\left(S_{r}, \ldots, S_{1}\right)$ is algebraically dependent over $\psi_{f}, \ldots, \psi_{f}\left(S_{r-1}, \ldots, S_{1}\right)$, and the conclusion follows from Lemma~\ref{lem:16.7}(a).
\end{proof}

\begin{example}\label{ex:17.5}
As an application of Theorem~\ref{thm:17.3}, we prove that the polynomial response map $f$, with $m=p=1$ and
\[
\psi_{f}:=\xi_{11}+\xi_{12}^{2}+\xi_{13}^{3}+\xi_{14}^{4}+\xi_{15}^{5}+\ldots,
\]
is not finitely realizable. Since
\[
\epsilon_{n} \psi_{f}\left(S_{r}, \ldots, S_{1}\right)=S_{r}+S_{r-1}^{2}+\ldots+S_{1}^{r}+\xi_{11}^{r+1}+\ldots+\xi_{1 n}^{r+n},
\]
for any $s \leq n, D_{s} \epsilon_{n} \psi_{f}\left(S_{r}, \ldots, S_{1}\right)=(r+s) \xi_{1 s}^{r+s-1}$; this latter expression is also $D_{s} \epsilon_{n-1} \psi_{f}\left(S_{r}, \ldots, S_{1}\right)$ if $s<n$. Therefore
\[
J_{n}(f)=\left(\begin{array}{c:c}
J_{n-1}(f) & * \\
\hdashline * & (2 n-1) \xi_{1 n}^{2 n-2}
\end{array}\right),
\]
and so
\[
\operatorname{det} J_{n}(f)=(2 n-1) \operatorname{det} J_{n-1}(f) \xi_{1 n}^{2 n-2}+t\left(\xi_{11}, \ldots, \xi_{1 n}\right),
\]
where $\operatorname{deg}_{1 n} t<2 n-2$. Since $\xi_{1 n}$ is algebraically independent over $k\left(\xi_{11}, \ldots, \xi_{1, n-1}\right)$, an induction on $n$ shows that $\operatorname{det} J_{n}(f) \neq 0$ for all $n$.
\end{example}

\subsection{Some Examples and Counterexamples}\label{sec:18}

We discuss now several examples related to canonical realizations and input/output equations.

\label{op:92}%

\begin{example}\label{ex:18.1}
We wish to illustrate the calculation of a nonpolynomial canonical system, via Proposition~\ref{prop:10.4}. Consider the system $\Sigma_{0}:=\left(k^{2}, P, h, 0\right)$, where $m=p=1$ and the equations of $\Sigma_{o}$ are
\[
\begin{aligned}
x_{1}(t+1) & =x_{1}(t)+u(t), \\
x_{2}(t+1) & =x_{1}(t) x_{2}(t)+x_{1}(t)+x_{2}(t), \\
y(t) & =x_{2}(t) .
\end{aligned}
\]
The 2-step reachability map of $\Sigma_{o}$ is
\begin{equation}
g_{2}: U^{2} \rightarrow k^{2}:\left(u_{2}, u_{1}\right) \mapsto\binom{u_{1}+u_{2}}{u_{2}}, \label{eq:18.2}
\end{equation}
which is obviously onto. So $\Sigma_{o}$ is reachable.
\end{example}

We want to calculate a canonical realization of $f_{0}:=f_{\Sigma_{0}}$. Since $\Sigma_{0}$ is reachable, $\Sigma_{f_{0}} \simeq \Sigma_{0}^{o b s}$; we apply the construction in Proposition~\ref{prop:10.4} to obtain $\Sigma_{f_{0}}$. The observation algebra $\mathcal{A}\left(\Sigma_{0}\right)$ will be determined as $\bigcup_{t \geq 0} \mathcal{A}^{0, t}\left(\Sigma_{0}\right)$ by induction on $t$, using the Main Lemma~\ref{lem:10.7}. Write $A(X)=k\left[\eta_{1}, \eta_{2}\right]$, so
\[
A(P): \eta_{1} \mapsto \eta_{1}+T, \quad \eta_{2} \mapsto \eta_{1} \eta_{2}+\eta_{1}+\eta_{2},
\]
and
\[
h=\eta_{2} .
\]
To simplify calculations, let $\hat{\eta}_{i}:=\eta_{i}+1, i=1,2$, so that $A(P): \hat{\eta}_{1} \mapsto \hat{\eta}_{1}+T, \quad \hat{\eta}_{2} \mapsto \hat{\eta}_{1} \hat{\eta}_{2}$ and $h=\hat{\eta}_{2}-1$. Then $\mathcal{A}^{0,0}=k\left[\hat{\eta}_{2}-1\right]=k\left[\hat{\eta}_{2}\right]$. By induction,
\[
\mathcal{A}^{0, t}=k\left[\hat{\eta}_{2}, \hat{\eta}_{2} \hat{\eta}_{1}, \ldots, \hat{\eta}_{2} \hat{\eta}_{1}^{t}\right],
\]
because
\[
\begin{aligned}
A(P)\left(\hat{\eta}_{2} \hat{\eta}_{1}^{t}\right) & =A(P)\left(\hat{\eta}_{2}\right) \cdot\left(A(P)\left(\hat{\eta}_{1}\right)\right)^{t} \\
& =\hat{\eta}_{1} \hat{\eta}_{2}\left(\hat{\eta}_{1}+T\right)^{t}
\end{aligned}
\]

\label{op:93}%
\[
=\sum_{j=0}^{t}\binom{t}{j} \hat{\eta}_{2} \hat{\eta}_{1}^{j+1} T^{t-j},
\]
so the new generator $\hat{\eta}_{2} \hat{\eta}_{1}^{t+1}$ appears as the coefficient of $T^{0}$. Thus
\begin{equation}
\mathcal{A}_{f_{0}}=\mathcal{A}\left(\Sigma_{0}\right)=k\left[\left\{\left(\eta_{2}+1\right)\left(\eta_{1}+1\right)^{t}, \quad t \geq 0\right\}\right] \label{eq:18.3}
\end{equation}
is not a finitely generated algebra. Therefore $\Sigma_{f_{o}}$ is not a polynomial system.

By Definition~\ref{def:3.17} the canonical state space $X_{f_{o}}=X\left(\mathcal{A}_{f_{o}}\right)=X\left(k\left[\hat{\eta}_{2} \hat{\eta}_{1}^{t}, t \geq 0\right]\right)$ is an almost-variety which can be represented by the principal open set $D:=\left\{\eta_{2} \neq-1\right\}$ in $k^{2}=X\left(k\left[\eta_{1}, \eta_{2}\right]\right)$ (or, equivalently, $\left\{\hat{\eta}_{2} \neq 0\right\}$ in $X\left(k\left[\hat{\eta}_{1}, \hat{\eta}_{2}\right]\right)$) \textit{plus} an extra point $\{*\}$; see Example~\ref{ex:3.19}. The morphism
\[
T: X \rightarrow X^{\text {obs }}=X_{f_{0}}
\]
in Proposition~\ref{prop:10.4} is the map $X(i)$, where $i: A_{f_{0}} \rightarrow k\left[\hat{\eta}_{1}, \hat{\eta}_{2}\right]$ is the inclusion map. Therefore
\[
\begin{aligned}
& T(x)=x \text { if } x \text { is in } D, \text { and } \\
& T(x)=* \text { if } x_{2}=-1 .
\end{aligned}
\]
Since $T$ is a $k$-system morphism, the transitions in $\Sigma_{f_{0}}$ are
\[
\begin{aligned}
& P_{f_{o}}(x, u)=T(P(x, u)) \text { if } x \text { is in } D \text {, and } \\
& P_{f_{o}}(*, u)=* \text { for all } u \text { in } U .
\end{aligned}
\]
The initial state in $\Sigma_{f_{o}}$ is 0 in $D$ and the output is given by
\[
\begin{aligned}
& h(x)=x_{2} \text { if } x \text { is in } D, \text { and } \\
& h(*)=-1 .
\end{aligned}
\]
Note that $T$ identifies precisely those states of $X_{\Sigma_{0}}$ which are indistinguishable.

\label{op:94}%

\begin{example}\label{ex:18.4}
Consider $\Sigma_{0}$ and $f_{0}$ as in Example~\ref{ex:18.1}. We now prove that \textit{there exists no polynomial realization of} $f_{0}$ \textit{in which every pair of reachable states is distinguishable}. In other words, it is impossible to find any polynomial system $\hat{\Sigma}$ realizing $f_{0}$ and a one-to-one abstract system morphism
\[
T_{\mathrm{ac}}: \Sigma_{\mathrm{ac}} \rightarrow \hat{\Sigma},
\]
where $\Sigma_{a c}$ is the abstractly canonical realization of $f_{0}$. Our claim depends on two facts. (i) Let $A:=A\left(g_{2}\right)\left(\mathcal{A}\left(\Sigma_{0}\right)\right) \subseteq A\left(U^{2}\right)$. Then $A$ \textit{is maximally separating Definition~\ref{def:1.11} with respect to} $A\left(U^{2}\right)$. Indeed, write $A\left(U^{2}\right)=k\left[T_{1}, T_{2}\right]$; by \eqref{eq:18.2} and \eqref{eq:18.3},
\[
A=k\left[\left\{\left(T_{2}+1\right)\left(T_{1}+T_{2}+1\right)^{t}, \quad t \geq 0\right\}\right] .
\]
With a change of variables $\hat{T}_{2}:=T_{2}+1, \hat{T}_{1}:=T_{1}+T_{2}+1$ in $A\left(U^{2}\right)$, $A$ becomes $k\left[\left\{\hat{T}_{2} \hat{T}_{1}^{t}, t \geq 0\right\}\right]$. By Example~\ref{ex:1.12}, $A$ is maximally separating. (ii) Now our claim follows from the following more general fact:
\end{example}

\begin{lemma}\label{lem:18.5}
Let $f$ be any polynomial response map. Assume that (i) the $n$-step reachability map $g_{f, n}$ of $\Sigma_{f}$ is onto and (ii) $A\left(g_{f, n}\right)\left(X_{f}\right)$ is maximally separating in $A\left(U^{n}\right)$. Then any quasi-reachable $\Sigma$ realizing $f$ whose reachable states are distinguishable is isomorphic to $\Sigma_{f}$.
\end{lemma}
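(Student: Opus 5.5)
Let $\Sigma=(X,P,h,x^{\#})$ be quasi-reachable, realize $f$, and have pairwise distinguishable reachable states. By Proposition~\ref{prop:10.4} (with Theorem~\ref{thm:11.5}) there is a dominating $k$-system morphism $T:\Sigma\to\Sigma_f$, with transpose $A(T):\mathcal{A}_f\to A(X)$ injective. The plan is to show that $A(T)$ is onto; then $T$ is a closed immersion and dominating, hence an isomorphism by Lemma~\ref{lem:3.11}(d),(e) and Corollary~\ref{cor:3.8}. Equivalently, I want $\mathcal{A}(\Sigma)=A(X)$, i.e.\ $\Sigma$ is algebraically observable.

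\emph{Step 1 (reduce to $U^n$).} Let $g_n:U^n\to X$ and $g_{f,n}:U^n\to X_f$ be the $n$-step reachability maps, so $T\circ g_n=g_{f,n}$ and $A(g_{f,n})=A(g_n)\circ A(T)$. Since $g_{f,n}$ is onto, its image is dense, so $A(g_{f,n})$ is injective and $A(g_{f,n})(\mathcal{A}_f)=:A_0\subseteq A(U^n)$. I will show that $\overline{X_n}=X$. Since $T$ is a system morphism, $T(X_n)=g_{f,n}(U^n)=X_f$. Each reachable state $x=g(w)$ has $T(x)=g_{f,n}(v)$ for some $v\in U^n$, and then $h^\Gamma(x)=h^\Gamma_f(T(x))=h^\Gamma(g_n(v))$. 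Distinguishability of reachable states gives $x=g_n(v)$. Hence $X_R=X_n$, and by quasi-reachability $X=\overline{X_n}$. Thus $g_n$ is dominating and $A(g_n):A(X)\to A(U^n)$ is injective. Set $C:=A(g_n)(A(X))$, so $A_0\subseteq C\subseteq A(U^n)$.

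\emph{Step 2 (separation).} I claim that $C$ separates the same points of $U^n=X(A(U^n))$ as $A_0$. If $A_0$ separates $v,v'$, then so does $C\supseteq A_0$. Conversely, suppose $A_0$ does not separate $v,v'$. Then $g_{f,n}(v)=g_{f,n}(v')$, because $A_0\cong\mathcal{A}_f=A(X_f)$ and points of $X_f$ are determined by their values. Therefore $h^\Gamma(g_n(v))=h^\Gamma(g_n(v'))$. As both states are reachable, distinguishability gives $g_n(v)=g_n(v')$, so every element of $C$, being of the form $a\circ g_n$, takes equal values at $v$ and $v'$. By hypothesis (ii) that $A_0$ is maximally separating (Definition~\ref{def:1.11}), $C=A_0$. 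Injectivity of $A(g_n)$ then gives $A(X)=A(T)(\mathcal{A}_f)$, so $A(T)$ is onto.

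\emph{Conclusion and expected obstacle.} Consequently $T$ is both a closed immersion and dominating. A closed immersion onto a dense closed subset is onto $X_f$ and is an isomorphism of $k$-spaces (equivalently, $A(T)$ is bijective). Since $T$ is already a system morphism, it is a $k$-system isomorphism. The main delicate point is Step 1, which concludes $X_R=X_n$ from distinguishability together with surjectivity of $g_{f,n}$. One must check that $h^\Gamma=h_f^\Gamma\circ T$ holds on all of $X$, which follows from $T$ being a system morphism. One must also handle the identification of $A_0$ inside $A(U^n)$ carefully, using that $g_{f,n}$ is onto so that separation by $A_0$ coincides with separation by $g_{f,n}$.
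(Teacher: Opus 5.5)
Your proposal is correct and follows the paper's proof. You first show that every reachable state is reached in $n$ steps. You then use maximal separation of $A(g_{f,n})(\mathcal{A}_f)$ inside $A(U^n)$ to force $A(g_n)(A(X))$ to equal it, so $A(T)$ is onto. The only cosmetic difference is that you obtain $X_R = X_n$ directly from distinguishability and surjectivity of $g_{f,n}$. The paper instead cites Theorem~\ref{thm:7.8} to conclude that $T$ restricted to the reachable states is a bijection onto $X_f$.
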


\begin{proof}
Let $T: \Sigma \rightarrow \Sigma_{f}$ be the unique dominating morphism Proposition~\ref{prop:10.4}. We must prove that $T$ is an isomorphism. The restriction of $T$ to the reachable part $\Sigma_{R}$ of $\Sigma$ induces a morphism of \textit{abstract} systems $T_{R}: \Sigma_{R} \rightarrow \Sigma_{f}$. Since by hypothesis $\Sigma_{R}$ is abstractly canonical, $T_{R}$ is a bijection Theorem~\ref{thm:7.8}. Therefore $\Sigma_{R}$ is also reachable in $n$ steps. In particular, $\Sigma$ is quasi-reachable in $n$ steps and $A\left(g_{n}\right)$ is one-to-one. Let $B:=A\left(g_{n}\right)\left(X_{\Sigma}\right)$. Since $T \circ g_{n}=g_{f, n}, T$ is an isomorphism iff $B=A:=A\left(g_{f, n}\right)\left(X_{f}\right)$. By definition of "maximally separating" Definition~\ref{def:1.11} it will be enough to prove that $B$ separates no more points of $U^{n}$ than $A$. So take $v, w$ in $U^{n}$ and suppose that $A$ does not separate $v$ and $w$, i.e.

\label{op:95}%
\begin{equation}
a(v)=a(w) \quad \text { for all } a \text { in } A . \label{eq:18.6}
\end{equation}
By definition of $A\left(g_{f, 2}\right)$, Lemma~\ref{lem:18.5} is equivalent to
\[
c\left(g_{f, n}(v)\right)=c\left(g_{f, n}(w)\right) \text { for all } c \text { in } \mathcal{A}_{f},
\]
in other words, $g_{f, n}(v)=g_{f, n}(w)$. Since $g_{f, n}=T \circ g_{n}$ and $T$ is one-to-one on reachable states, we conclude that $g_{n}(v)=g_{n}(w)$. So $d\left(g_{n}(v)\right)=d\left(g_{n}(w)\right)$ for all $d$ in $A(X)$, which means that $b(v)=b(w)$ for all $b$ in $B$, i.e. neither does $B$ separate $v$ and $w$.
\end{proof}

When $k=\mathbb{R}$ or $k$ is algebraically closed the hypothesis of Lemma \ref{lem:18.5} can be weakened considerably, replacing (i) by just $\operatorname{dim} \Sigma_{f}=n$. The (easy) proof of this stronger lemma uses Theorem~\ref{thm:4.6} plus Lemma~\ref{lem:18.5}.

\begin{example}\label{ex:18.7}
We continue to investigate $\Sigma_{o}, f_{o}$, and determine the (unique) irreducible equation of minimal order satisfied by $f_{0}$.

Let $\hat{\eta}_{1}, \hat{\eta}_{2}$ be coordinates for $\Sigma_{0}$ defined as before. Working in $\mathcal{Q}\left(\Sigma_{0}\right)$,
\[
h+1=\hat{\eta}_{2}, \quad h^{u_{2}}+1=\hat{\eta}_{1} \hat{\eta}_{2} \quad \text { and } \quad h^{u_{2} u_{1}}+1=\hat{\eta}_{1} \hat{\eta}_{2}+\hat{\eta}_{1} \hat{\eta}_{2} u_{2},
\]
for all $\left(u_{2}, u_{1}\right)$ in $U^{2}$. Since
\[
\hat{\eta}_{1}^{2} \hat{\eta}_{2}=\left(\hat{\eta}_{1} \hat{\eta}_{2}\right)^{2} / \hat{\eta}_{2}=\left(h^{u_{2}}+1\right)^{2} /(h+1),
\]
we conclude that
\[
(h+1)\left(h^{u_{2} u_{1}}+1\right)-\left(h^{u_{2}}+1\right)^{2}-\left(h^{u_{2}}+1\right)(h+1) u_{2}=0,
\]
for all $\left(u_{2}, u_{1}\right)$ in $U^{2}$. Since $h(g(w))=f(w)$ for all $w$ in $U[z]$, the polynomial
\label{op:96}%
\[
E:=\left(L_{2}+1\right)\left(L_{0}+1\right)-\left(L_{1}+1\right)^{2}-\left(L_{1}+1\right)\left(L_{2}+1\right) S_{2},
\]
gives an algebraic (rational) difference equation for $f_{0}$, i.e.
\[
\begin{aligned}
{[y(t-2)} & +1][y(t)+1]-[y(t-1)+1]^{2}- \\
& -[y(t-1)+1][y(t-2)+1] u(t-2)=0,
\end{aligned}
\]
for all input/output pairs $(y, u)$ of $\mathbf{f}_{0}$. The polynomial $E$ is easily seen to be irreducible, and $E$ is of minimal order for $f_{0}$, because $h=\hat{\eta}_{2}-1$ and $h^{u}=\hat{\eta}_{1} \hat{\eta}_{2}-1$ are algebraically independent.
\end{example}

\begin{example}\label{ex:18.8}
Since always $\operatorname{trdeg}_{K} \mathcal{Q}_{f}^{K} \leq \operatorname{trdeg} \mathcal{Q}_{f}$, there is always an algebraic difference equation of degree $\operatorname{dim} \Sigma_{f}$. \textit{There may exist}, however, \textit{equations of order strictly lower than} $\operatorname{dim} \Sigma$. To illustrate this, consider the bilinear input/output map $f_{1}=f_{\Sigma_{1}}$, where $m=2, p=1$, and $\Sigma_{1}:=\left(k^{2 n+1}, P, h, 0\right), n \geq 1$ arbitrary, and $P, h$ are given by the equations
\[
\begin{array}{cc}
x_{1}(t+1)=u_{1}(t), & x_{n+1}(t+1)=u_{2}(t), \\
x_{2}(t+1)=x_{1}(t), & x_{n+2}(t+1)=x_{n+1}(t), \\
\vdots & \vdots \\
x_{n}(t+1)=x_{n-1}(t), & x_{2 n}(t+1)=x_{2 n-1}(t), \\
x_{2 n+1}(t)=x_{n}(t) u_{2}(t)+x_{2 n}(t) u_{1}(t), & y(t)=x_{2 n+1}(t) .
\end{array}
\]
It is easily shown that $\Sigma_{1}$ is canonical, so $\operatorname{dim} \mathcal{Q}_{f_{1}}=\operatorname{dim} \Sigma_{1}=2 n+1$. However, $f_{1}$ satisfies the (affine) difference equation
\[
y(t+1)=u_{1}(t-n) u_{2}(t)+u_{2}(t-n) u_{1}(t),
\]
of order $n+1<2 n+1$. Note that $f_{1}$ is also a counterexample to $\operatorname{dim} \mathcal{L}_{f}=\operatorname{dim}\mathcal{L}_{f}^{K}$. Counterexamples with $m=1$ also exist; e.g. $X_{\Sigma_{2}}:=k^{3}$, $x_{\Sigma_{2}}^{\#}:=0$ and $\Sigma_{2}$ given by
\label{op:97}%
\[
\begin{aligned}
x_{1}(t+1) & =u(t), \\
x_{2}(t+1) & =x_{3}(t), \\
x_{3}(t+1) & =x_{3}(t) x_{1}(t)+x_{1}(t)+x_{2}(t) u(t), \\
y(t) & =x_{3}(t),
\end{aligned}
\]
which satisfies an affine equation of order 2 :
\[
y(t)=y(t-1) u(t-2)+y(t-2) u(t-1)+u(t-2) .
\]
\end{example}

\begin{example}\label{ex:18.9}
When $\Sigma_{f}$ is a polynomial system, it is natural to represent it by a system of simultaneous polynomial difference equations. An important question in applications concerns the minimal possible number $r$ of equations in such representations of a given $\Sigma_{f}$. By (3.4ff), $r=$ smallest possible cardinality of a set of generators for $\Sigma_{f}$. In general, $r \leq \operatorname{trdeg} \mathcal{A}_{f}$, and equality holds if and only if $\mathcal{A}_{f}=k\left[T_{1}, \ldots, T_{r}\right]$, which is always the case for linear response maps $f$. A result of \citet{kalman1979realization} shows that $\mathcal{A}_{f}$ is a polynomial ring also for bilinear single-output ($p=1$) response maps. For more general response maps, even bounded ones, $X_{f}$ may be different from affine space. For instance, let $m=p=1$ and let $f_{3}$ be the response map satisfying
\[
y(t)=u^{2}(t-2) u(t-1)+u^{3}(t-2) .
\]
Then $f_{3}=f_{\Sigma_{3}}$, where $X_{\Sigma_{3}}:=k^{2}, x_{\Sigma_{3}}^{\#}:=0$ and $\Sigma_{3}$ has equations
\[
\begin{aligned}
x_{1}(t+1) & =u(t), \\
x_{2}(t+1) & =x_{1}^{2}(t) u(t)+x_{1}^{3}(t), \\
y(t) & =x_{2}(t) .
\end{aligned}
\]
Since $\Sigma_{3}$ is quasi-reachable, $\mathcal{A}_{f}=\mathcal{A}(\Sigma)=k\left[\eta_{1}^{2}, \eta_{1}^{3}, \eta_{2}\right]$. Thus to represent the canonical realization (polynomial by Theorem~\ref{thm:16.2}(c)) $\Sigma_{f_{3}}$ one needs at least 3 equations. (Note that the noncanonical realization $\Sigma_{3}$ of $f_{3}$ requires only 2 equations!) A representation of $\Sigma_{f_{3}}$ is, for instance,
\[
\begin{aligned}
x_{1}(t+1) & =u^{2}(t), \\
x_{2}(t+1) & =u^{3}(t), \\
x_{3}(t+1) & =x_{1}(t) u(t)+x_{2}(t), \quad x^{\#}=0, \\
y(t) & =x_{3}(t),
\end{aligned}
\]
\label{op:98}%
where $X_{f_{3}}=\left\{\left(x_{1}, x_{2}, x_{3}\right)\right.$ in $\left.k^{3} \mid x_{1}^{3}=x_{2}^{2}\right\}$, a surface with a singularity at $x=0$. Since $\operatorname{dim} X_{f}=2, \Sigma_{3}$ is minimal, a fact which agrees with $\Sigma_{3}$ being quasi-reachable and abstractly observable, so weakly canonical Theorem~\ref{thm:13.7}.
\end{example}

\begin{example}\label{ex:18.10}
Consider a \textit{homogeneous} polynomial response map $f$; i.e. all monomials in $\psi_{f}$ have the same degree, say $s$. Take, for simplicity, $m=p=1$. One way of applying the theory of \textit{multilinear} response maps Example~\ref{ex:15.3} for obtaining realizations of such $f$ is the following. Let
\[
\psi_{f}=\sum a_{i_{1} \ldots i_{s}} \xi_{1 i_{1}} \ldots \xi_{1 i_{s}},
\]
where the sum runs over all possible sequences (with repetitions) $i_{1}, \ldots, i_{s}$ of integers $\geq 1$. Define the $s$-linear response map $f^{o}$ (with $m=s, p=1$) by
\[
\psi_{f^{o}}:=\sum a_{i_{1} \ldots i_{s}} \xi_{1 i_{1}} \ldots \xi_{s i_{s}} .
\]
It is easily verified that $f$ is finitely realizable iff $f^{o}$ is finitely realizable. Given any realization $\Sigma$ of $f^{o}$, a realization $\hat{\Sigma}$ of $f$ is obtained by applying the same input to all the input channels of $\Sigma$. Such a method of realizing a homogeneous $f$ has been suggested by several authors (see, for instance, \citealp{bush1965kernels}).
\end{example}

The method is theoretically unsatisfactory, however, given the noncanonical nature of $\hat{\Sigma}$. The following example shows that this procedure may be also unsatisfactory from a practical (synthesis) point of view. Specifically, we shall give a homogeneous response map $f$ of
\label{op:99}(arbitrary) degree $s$ such that $\Sigma_{f}$ has $X_{f}=k$ but such that the lowest possible dimension for a realization $\hat{\Sigma}$ obtained by the above procedure is $s$. That is, we look for an $f$ with $X_{f}=k$ and $\operatorname{dim} \Sigma_{f^{o}}=s$.

Let $\Sigma_{f_{4}}$ have the equations (with $X=k$)
\[
\begin{aligned}
& x(t+1)=x(t)+u(t), \quad x^{\#}:=0, \\
& y(t)=x^{s}(t) .
\end{aligned}
\]
Then $f_{4}^{o}=f_{\Sigma_{4}}$, where $\Sigma_{4}$ has $m=s, p=1, X=k^{s}$ and equations
\[
\begin{aligned}
x_{i}(t+1) & =x_{i}(t)+u_{i}(t), \quad x_{i}^{\#}:=0, \quad i=1, \ldots, s . \\
y(t) & =x_{1}(t) \ldots x_{s}(t) .
\end{aligned}
\]
Clearly, $\Sigma_{4}$ is reachable. Moreover, $\Sigma_{4}$ is algebraically observable, since, writing $A\left(k^{s}\right)=k\left[\eta_{1}, \ldots, \eta_{s}\right]$, the coefficient of $T_{1} \ldots \hat{T}_{i} \ldots T_{s}$ ($T_{i}$ omitted) in
\[
A(P) h=\left(\eta_{1}+T_{1}\right) \ldots\left(\eta_{s}+T_{s}\right),
\]
is $\eta_{i}$. Thus $\Sigma_{4}$ is canonical and has dimension $s$.

\begin{example}\label{ex:18.11}
Consider \textit{strictly recursive} equations
\begin{equation*}
y(t)=R(y(t-1), \ldots, y(t-r), u(t-1), \ldots, u(t-r)), \tag{*}
\end{equation*}
for finitely realizable polynomial responses $f$, where $R$ is a polynomial. Such equations are well-known to exist for linear response maps. By the Cayley-Hamilton Theorem, the internally-bilinear response maps Example~\ref{ex:15.4} of systems of the form
\[
\begin{aligned}
x(t+1) & =F x(t) u(t), \\
y(t) & =H x(t),
\end{aligned}
\]
($m=1, F, H=$ linear) are easily seen to also satisfy equations (*).
\label{op:100}%
In general, however, a finitely realizable polynomial response map satisfies no equation such as (*). To construct counterexamples it is enough to exhibit systems $\Sigma$ such that for each $r \geq 1$ there exist pairs of inputs $w, \hat{w}$ for which $w(t)=\hat{w}(t)$ for $t \geq 0$, $\mathbf{f}(w)(t)=\mathbf{f}(\hat{w})(t)$ for $0 \leq t<r$ but $\mathbf{f}(w)(r) \neq f(\hat{w})(r)$. Such $w, \hat{w}$ clearly contradict (*). We give three such counterexamples.
\begin{enumerate}
\item[(a)] A one-dimensional system $\Sigma_{5}$ with $m=p=1$, where
\[
\begin{aligned}
x(t+1) & =x(t)+u(t), \quad x^{\#}:=0, \\
y(t) & =x^{2}(t) .
\end{aligned}
\]
Here take $w(t)=\hat{w}(t):=0$ if $t \neq-1, r-1, w(r-1)=\hat{w}(r-1):=1$, $w(-1):=-1$ and $\hat{w}(-1):=1$.
\item[(b)] Again $m=p=1$, and $\Sigma_{6}$ given by
\[
\begin{aligned}
& x_{1}(t+1)=x_{1}(t)+x_{1}(t) u(t), \quad x_{1}^{\#}:=1, \\
& x_{2}(t+1)=x_{2}(t)+x_{2}(t) u(t), \quad x_{2}^{\#}:=1, \\
& y(t)=x_{1}(t)+x_{2}(t) .
\end{aligned}
\]
Here take $w(t)=\hat{w}(t):=0$ if $t \neq-1, r-1, w(r-1)=\hat{w}(r-1):=\binom{1}{-1}$, $w(-1):=0$ and $\hat{w}(-1):=\binom{1}{-1}$. Note that $f_{\Sigma_{6}}$ is an internally-bilinear response map.
\item[(c)] Let $m=2, p=1$, and $\Sigma_{7}$ given by
\[
\begin{aligned}
x_{1}(t+1) & =x_{1}(t)+u_{1}(t), \quad x_{1}^{\#}:=0, \\
x_{2}(t+1) & =x_{2}(t)+u_{2}(t), \quad x_{2}^{\#}:=0, \\
y(t) & =x_{1}(t) x_{2}(t) .
\end{aligned}
\]
Here take $w(t)=\hat{w}(t):=0$ if $t \neq-1, r-1, w(r-1)=\hat{w}(r-1)=\hat{w}(-1):=\binom{1}{1}$ and $w(-1):=-\binom{1}{1}$. Note that $f_{\Sigma_{7}}$ is a bilinear response map.
\end{enumerate}
\end{example}

\label{op:101}%
\section{State-Affine Systems}\label{ch:5}

Some special types of system configurations arose in Theorem \ref{thm:15.5} as natural realizations for bounded input/output maps. The most useful of these configurations are state-affine systems; they are studied in this section.

As indicated in Section~\ref{ch:1}, we shall base our development on the notion of a representation of the Volterra (or, equivalently, the exponent) series of $f$.

\subsection{Recognizable Series}\label{sec:19}

Throughout this section, $\varphi$ is an exponent series \eqref{eq:5.6} whose support is contained in $\Delta_{J}$, where
\[
J=\left\{\delta_{0}=(0), \delta_{1}, \ldots, \delta_{s}\right\} .
\]

\begin{definition}\label{def:19.1}
\textit{A representation} (\textit{with support in} $\Delta_{J}$) \textit{is an object}
\[
R=\left(X,\left\{F_{\alpha}, \alpha \text { in } J\right\},\left\{g_{\alpha}, \alpha=\delta_{1}, \ldots, \delta_{s}\right\}, h\right),
\]
\textit{where}
\begin{enumerate}
\item[(a)] $X$ \textit{is a vector space over} $k$,
\item[(b)] \textit{each} $F_{\alpha}: X \rightarrow X$ \textit{is a linear map},
\item[(c)] \textit{each} $g_{\alpha}$ \textit{is in} $X$, \textit{and}
\item[(d)] $h: X \rightarrow Y=k^{p}$ \textit{is a linear map}.
\end{enumerate}
\textit{For each} $\alpha=\alpha_{1} \ldots \alpha_{t}$ \textit{in} $J^{*}$ \textit{let} $F_{\alpha}:=F_{\alpha_{1}} \ldots F_{\alpha_{t}}$, \textit{and write} $F_{\Lambda}:=1_{X}$. \textit{If} $\gamma=\alpha_{1} \ldots \alpha_{t}(0) \ldots(0)$ is in $J^{*}$, \textit{with} $\alpha_{t} \neq(0)$, \textit{then}
\[
g_{\gamma}:=F_{\alpha_{1} \ldots \alpha_{t-1}} g_{\alpha_{t}} ;
\]
\textit{if} $\gamma=(0) \ldots(0), g_{\gamma}:=0$. \textit{With these notations,} $R$ \textit{is accessible iff}
\[
\operatorname{span}\left\{g_{\alpha}, \alpha \text { in } \Delta_{J}\right\}=X ;
\]
$R$ \textit{is reduced iff}
\label{op:102}%
\[
\bigcap_{\alpha \text { in } J^{*}} \operatorname{ker} h F_{\alpha}=\{0\} ;
\]
$R$ \textit{is canonical iff} $R$ \textit{is both accessible and reduced. The dimension of} $R$ \textit{is}
\[
\operatorname{dim} R:=\operatorname{dim} X \leq \infty .
\]
\textit{The exponent series} $\varphi_{R}$ \textit{represented by} $R$ \textit{is given by}
\[
\varphi_{R}(\alpha):=h g_{\alpha} \text { for all } \alpha \text { in } \Delta_{J} .
\]
\textit{The representation} $R$ \textit{is minimal iff, for any} $\hat{R}$ \textit{for which} $\varphi_{R}=\varphi_{\hat{R}}$, \textit{necessarily} $\operatorname{dim} R \leq \operatorname{dim} \hat{R}$. \textit{When} $\operatorname{dim} R<\infty, \varphi_{R}$ \textit{is recognizable. The linear map} $T: X \rightarrow \hat{X}$ \textit{induces a morphism of representations}
\[
T: R=\left(X,\left\{F_{\alpha}\right\},\left\{g_{\alpha}\right\}, h\right) \rightarrow \hat{R}=\left(\hat{X},\left\{\hat{F}_{\alpha}\right\},\left\{\hat{g}_{\alpha}\right\}, \hat{h}\right),
\]
\textit{iff} $T g_{\alpha}=\hat{g}_{\alpha}$ \textit{and} $T \circ F_{\alpha}=\hat{F}_{\alpha} \circ T$ \textit{for all} $\alpha$ \textit{and} $h=\hat{h} \circ T$.
\end{definition}

The terminology "recognizable" is taken from automata theory, as explained in Section~\ref{ch:1}.

It is easy to see that representations form a category with the above notion of morphism.

\begin{definition}\label{def:19.2}
\textit{The behavior matrix} $\mathcal{B}(\varphi)$ \textit{of} $\varphi$ \textit{is an infinite block matrix, with rows indexed by} $J^{*}$ \textit{and columns indexed by} $\Delta_{J}$, \textit{whose} $(\alpha, \beta)$-\textit{th entry is} $\varphi(\alpha \beta)$, \textit{a column vector in} $Y=k^{p}$.
\end{definition}

We denote by $B_{\beta}$ the $\beta$-th column of $\mathcal{B}(\varphi)$.

\begin{theorem}\label{thm:19.3}
Any $\varphi$ has a canonical representation $R_{\varphi}$. If $R$ and $\hat{R}$ are two canonical representations of $\varphi$, there exists a unique representation isomorphism $T: R \rightarrow \hat{R}$. Further, $\varphi$ is recognizable if and only if rank $\mathcal{B}(\varphi)<\infty$; in this case a representation $R$ of $\varphi$ is canonical (i) if and only if $R$ is minimal and (ii) if and only if $\operatorname{dim} R=\operatorname{rank} \mathcal{B}(\varphi)$.
\end{theorem}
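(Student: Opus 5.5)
We follow the standard Fliess–Schützenberger argument, built on the behavior matrix $\mathcal{B}(\varphi)$ and the Nerode-style factorization through the row and column spaces.

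\emph{Existence.} Let $X_\varphi:=\operatorname{span}\{B_\beta,\ \beta \text{ in } \Delta_J\}$ inside the space of $J^*$-indexed families of vectors of $Y$. Define $F_\alpha$ for $\alpha$ in $J$ on generators by $F_\alpha B_\beta := B_{\alpha\beta}$. For $\alpha\ne(0)$ put $g_\alpha:=B_\alpha$, and let $h$ be the projection onto the $\Lambda$-th row entry. For $F_\alpha B_\beta$, note that $\alpha\beta$ lies in $\Delta_J$ whenever $\beta\ne\Lambda$, or $\beta=\Lambda$ and $\alpha\neq(0)$. When $\alpha=(0)$ and $\beta=\Lambda$ the value is $B_\Lambda$ and $g_{(0)}$ is not needed; we set $\varphi$ to vanish off $\Delta_J$, consistent with $g_{(0)\dots(0)}=0$.

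Well-definedness is the key check. Observe that the $(\gamma,\alpha\beta)$ entry equals $\varphi(\gamma\alpha\beta)$, which is the $(\gamma\alpha,\beta)$ entry. Hence $B_{\alpha\beta}$ is obtained from $B_\beta$ by the linear operator ``shift rows by $\alpha$'': $(\sigma_\alpha v)(\gamma)=v(\gamma\alpha)$. So $F_\alpha$ is just the restriction of $\sigma_\alpha$ to $X_\varphi$, which is automatically linear and well defined.

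By induction, $g_\gamma=B_\gamma$ for all $\gamma$ in $\Delta_J$. This gives $hg_\gamma=\varphi(\gamma)$ and shows accessibility. Reducedness follows because $hF_\gamma v=v(\gamma)$, so an element killed by every $hF_\gamma$ is zero.

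\emph{Uniqueness.} Given any representation $R$ of $\varphi$, introduce the reachability map $r:\operatorname{span}\{e_\beta\}\to X$, $e_\beta\mapsto g_\beta$, and the observability map $o:X\to Y^{J^*}$, $x\mapsto (hF_\gamma x)_\gamma$. Their composite sends $e_\beta$ to $B_\beta$, because $hF_\gamma g_\beta=\varphi(\gamma\beta)$; this uses $F_\gamma g_\beta=g_{\gamma\beta}$, which we check from the definition of $g$. The map $r$ is onto iff $R$ is accessible, and $o$ is one-to-one iff $R$ is reduced.

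For canonical $R$ and $\hat R$, set $T:=\hat o^{-1}\circ o$. This is well defined since $o(X)=\operatorname{image}(o\circ r)=\operatorname{span}B_\beta=\hat o(\hat X)$, and it is a linear isomorphism. We then verify that $T$ intertwines $F_\alpha$, $g_\alpha$ and $h$ using $o F_\alpha=\sigma_\alpha o$ and $oe_\beta$-compatibility. Uniqueness holds because any morphism must satisfy $Tg_\beta=\hat g_\beta$, and the $g_\beta$ span $X$. In particular, every canonical $R$ is isomorphic to $R_\varphi$, so $\dim R=\dim X_\varphi=\operatorname{rank}\mathcal{B}(\varphi)$.

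\emph{Rank criterion and minimality.} For an arbitrary representation $R$ of $\varphi$, the factorization $\mathcal{B}=o\circ r$ gives $\operatorname{rank}\mathcal{B}(\varphi)\le\dim X$. So if $\varphi$ is recognizable the rank is finite. Conversely, if the rank is finite then $R_\varphi$ is finite dimensional. Canonical representations have dimension equal to the rank, which is minimal by this inequality; this gives canonical $\Rightarrow$ minimal and canonical $\Rightarrow \dim R=\operatorname{rank}$. For the converses, suppose $\dim R=\operatorname{rank}\mathcal{B}<\infty$. Then $\operatorname{rank}(o\circ r)=\dim X$ forces $r$ to be onto and $o$ to be one-to-one, so $R$ is canonical. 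Minimal implies $\dim R\le\dim R_\varphi=\operatorname{rank}$, hence equality, hence canonical.

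The only delicate points are the bookkeeping with the symbol $(0)$ and proper sequences, namely verifying $F_\gamma g_\beta=g_{\gamma\beta}$ and $\sigma_\alpha$-invariance of $X_\varphi$ for $\alpha=(0)$. I expect this verification to be the main obstacle, and will handle it by checking the case $\beta=\Lambda$ separately.
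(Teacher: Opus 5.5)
Your proof is the paper's proof in slightly different packaging. The paper also builds $R_\varphi$ on the column space of $\mathcal{B}(\varphi)$, with $F_{\delta_i}$ the column shift. Its well-definedness check (the $\alpha$-th row of $\sum r_\beta B_{\delta_i\beta}$ is the $\alpha\delta_i$-th row of $\sum r_\beta B_\beta$) is exactly your identity $B_{\alpha\beta}=\sigma_\alpha B_\beta$. Your $T=\hat o^{-1}\circ o$ is the map $g_\beta\mapsto\hat g_\beta$ of Lemma~\ref{lem:19.5}, well defined for the same reason. What you add is an explicit derivation of the rank and minimality clauses from $\mathcal{B}(\varphi)=o\circ r$, which the paper leaves implicit. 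That part is correct, and the bound $\operatorname{rank}\mathcal{B}(\varphi)\le\dim X$ handles non-accessible $R$ directly, without first passing to the accessible part.

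The point you flagged as the main obstacle cannot be settled by checking $\beta=\Lambda$ separately, because that check fails. You adopt the convention $g_\Lambda=0$ (the empty case of $(0)\cdots(0)$). Then $hg_\Lambda=0$ in every representation, so $\varphi(\Lambda)=0$ is necessary, and $(o\circ r)(e_\Lambda)=0$. But the column $B_\Lambda=(\varphi(\gamma))_{\gamma\in J^*}$ is in general nonzero and outside the span of the other columns.

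For instance, take $m=1$, $\varphi(0^n1)=1$ for all $n\ge0$ and $\varphi=0$ elsewhere (i.e.\ $\psi=\xi_{11}+\xi_{12}+\cdots$). Every $B_\beta$ with $\beta\ne\Lambda$ vanishes in the rows $0^n1$, since $0^n1\beta$ then has two nonzero letters, while $B_\Lambda$ equals $1$ there. So your $X_\varphi$ is $2$-dimensional while the $g_\beta$ span only a line, and your $R_\varphi$ is not accessible. Also $\operatorname{rank}\mathcal{B}(\varphi)=2$, although $X=k$, $F_{(0)}=1$, $F_{(1)}=0$, $g_{(1)}=h=1$ is a canonical representation of dimension $1$.

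The theorem must therefore be read with $\varphi(\Lambda)=0$ and with the columns of $\mathcal{B}(\varphi)$ indexed by $\Delta_J\setminus\{\Lambda\}$. Equivalently, set $B_\Lambda:=0$, which is what $B_\beta=o(g_\beta)$ forces. The paper does this only tacitly: its line ``$g_\beta=B_\beta$ for any $\beta$ in $\Delta_J$'' has the same slip, while Example~\ref{ex:19.7} displays no $\Lambda$ column and Remark~\ref{rem:20.2} imposes $\varphi(\Lambda)=0$.

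With that convention, $\alpha\beta$ is a nonempty proper sequence for every $\alpha\in J$ and every column index $\beta$. So $\varphi$ is never evaluated off $\Delta_J$, the $\sigma_\alpha$-invariance of $X_\varphi$ (including $\alpha=(0)$) is immediate, and the rest of your argument goes through unchanged. (Under your own convention, incidentally, $\sigma_{(0)}B_\Lambda=0$, not $B_\Lambda$.)
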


\label{op:103}%

\begin{proof}
We define $R_{\varphi}=\left(X,\left\{F_{\alpha}\right\},\left\{g_{\alpha}\right\}, h\right)$ as follows:

$X:=$ linear space spanned by $\left\{B_{\beta}, \beta\right.$ in $\left.\Delta_{J}\right\}$,

$g_{\delta_{i}}:=B_{\delta_{i}}, \quad i=1, \ldots, s$,

$h:=$ linear map induced by the projections in the $\Lambda$-th block row: $B_{\beta} \mapsto \varphi(\beta)$, and

$F_{\delta_{i}}:=$ linear map induced by the column shifts $B_{\beta} \mapsto B_{\delta_{i} \beta}$, $i=0, \ldots, s$.

Note that the $F_{\alpha}$ are well-defined, since any relation among columns
\begin{equation}
\sum_{\text {(finite) }} r_{\beta} B_{\beta}=0, \quad r_{\beta} \text { in } k, \label{eq:19.4}
\end{equation}
implies $\sum r_{\beta} B_{\delta_{i} \beta}=0, i=0, \ldots, s$. Indeed, the $\alpha$-th (block) row of $\sum r_{\beta} B_{\delta_{i} \beta}$ is $\sum r_{\beta} \varphi\left(\alpha \delta_{i} \beta\right)$, which is also the $\alpha \delta_{i}$-th row of \eqref{eq:19.4}, and hence zero.

Clearly,
\[
g_{\beta}=F_{\beta_{1} \ldots \beta_{t-1}} g_{\beta_{t}}=B_{\beta},
\]
for any $\beta$ in $\Delta_{J}$, so $R$ is accessible. By definition of $h$,
\[
\varphi_{R_{\varphi}}(\beta)=h\left(g_{\beta}\right)=h\left(B_{\beta}\right)=\varphi(\beta),
\]
so $R_{\varphi}$ represents $\varphi$. Now take $x=\sum r_{\beta} B_{\beta}$ in $\bigcap_{\alpha} \operatorname{ker} h F_{\alpha}$. Then
\[
0=h F_{\alpha} x=\sum r_{\beta} h F_{\alpha \beta_{1} \ldots \beta_{t-1}} g_{\beta_{t}}=\sum r_{\beta} \varphi(\alpha \beta)=\alpha \text {-th block row of } x \text {, }
\]
for all $\alpha$ in $J^{*}$. Thus $x=0$ and $R$ is reduced, hence canonical.

The proof of the theorem will be complete after we establish the following lemma (similar to Lemma~\ref{lem:7.7} and Lemma~\ref{lem:11.3}):
\end{proof}

\begin{lemma}\label{lem:19.5}
Let $R, \hat{R}$ be representations of $\varphi$, with $R$ accessible and $\hat{R}$ reduced. Then there exists a unique representation morphism
\label{op:104}$T: R \rightarrow \hat{R}$. When $\hat{R}$ is canonical, $T$ is onto.
\end{lemma}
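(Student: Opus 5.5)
The plan is to mimic the proof of Lemma~\ref{lem:7.7}, with linear spans in place of reachable sets. Write $R=(X,\{F_\alpha\},\{g_\alpha\},h)$ and $\hat R=(\hat X,\{\hat F_\alpha\},\{\hat g_\alpha\},\hat h)$, both representing $\varphi$. Since $R$ is accessible, every $x\in X$ is a finite combination $\sum r_\beta g_\beta$ with $\beta\in\Delta_J$. The conditions $Tg_{\delta_i}=\hat g_{\delta_i}$ and $T F_\alpha=\hat F_\alpha T$ force $Tg_\beta=\hat g_\beta$ for every $\beta\in\Delta_J$. This uses $g_\beta=F_{\beta_1\ldots\beta_{t-1}}g_{\beta_t}$, and also covers the padded sequences $\gamma$ ending in $(0)$'s, which reduce to the same formula. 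Hence $T$ must be the linear map $\sum r_\beta g_\beta\mapsto\sum r_\beta \hat g_\beta$, and this gives uniqueness at once.

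For existence, the main step is to check that this map is well defined. Suppose $\sum r_\beta g_\beta=0$ in $X$. For every $\alpha\in J^*$ I would compute
\[
\hat h\hat F_\alpha\Bigl(\sum r_\beta\hat g_\beta\Bigr)=\sum r_\beta\,\hat h\hat g_{\alpha\beta}=\sum r_\beta\,\varphi(\alpha\beta)=\sum r_\beta\, h g_{\alpha\beta}=hF_\alpha\Bigl(\sum r_\beta g_\beta\Bigr)=0 .
\]
Here $\alpha\beta\in\Delta_J$ whenever $\beta\neq\Lambda$ is proper. The identity $F_\alpha g_\beta=g_{\alpha\beta}$ holds by the definition of $g_\gamma$, including the convention $g_\Lambda=0$ and $g_{(0)\cdots(0)}=0$. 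These are the only bookkeeping points needing care: both sides vanish consistently when $\beta=\Lambda$, so such terms contribute nothing. Because $\hat R$ is reduced, the common kernel $\bigcap_\alpha\ker\hat h\hat F_\alpha$ is zero, so $\sum r_\beta\hat g_\beta=0$ and $T$ is well defined and linear.

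Next I would verify the morphism axioms on the spanning set. First, $Tg_{\delta_i}=\hat g_{\delta_i}$ holds by construction. Second, $TF_{\delta_i}g_\beta=Tg_{\delta_i\beta}=\hat g_{\delta_i\beta}=\hat F_{\delta_i}\hat g_\beta=\hat F_{\delta_i}Tg_\beta$, and this extends to all $\alpha\in J^*$ by composition. Third, $\hat hTg_\beta=\hat h\hat g_\beta=\varphi(\beta)=hg_\beta$. Linearity then extends all three identities to the whole of $X$.

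Finally, suppose $\hat R$ is canonical. Then it is accessible, so the $\hat g_\beta$ span $\hat X$. Since each $\hat g_\beta=Tg_\beta$ lies in the image of $T$, the map $T$ is onto.

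I expect no real obstacle. The only delicate point is the index bookkeeping of $g_\gamma$ for sequences with trailing zeros, which is needed to justify $F_\alpha g_\beta=g_{\alpha\beta}$ and $\hat F_\alpha\hat g_\beta=\hat g_{\alpha\beta}$ uniformly.
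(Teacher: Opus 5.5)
Your proof is correct and is essentially the paper's argument: $T$ is forced to be $g_\beta\mapsto\hat g_\beta$ on a spanning set, well-definedness follows from $\hat h\hat F_\alpha(\sum r_\beta\hat g_\beta)=\sum r_\beta\varphi(\alpha\beta)=hF_\alpha(\sum r_\beta g_\beta)=0$ together with reducedness of $\hat R$, and your checks of the morphism axioms and of surjectivity fill in steps the paper leaves implicit. One small correction to your bookkeeping: the identity fails at $\beta=\Lambda$, since $F_\alpha g_\Lambda=0$ while $g_{\alpha\Lambda}=g_\alpha$ and $\varphi(\alpha\Lambda)=\varphi(\alpha)$ need not vanish; the clean fix is to discard $\beta=\Lambda$ from all sums at the outset, which is harmless because $g_\Lambda=\hat g_\Lambda=0$.
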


\begin{proof}
Define $T: X \rightarrow \hat{X}$ as the linear extension of
\[
T\left(g_{\alpha}\right):=\hat{g}_{\alpha}, \alpha \text { in } \Delta_{J} .
\]
By definition of representation morphism, this is clearly the only possible choice for $T$. Thus the lemma will follow if $T$ is well-defined. We must show that if $x=\sum r_{\beta} g_{\beta}=0$ (finite sum) then also $\hat{x}:=\sum r_{\beta} \hat{g}_{\beta}=0$. Pick $\alpha$ in $J^{*}$. Then
\[
\hat{h} \circ \hat{F}_{\alpha}(\hat{x})=\sum r_{\beta} \hat{h}\left(\hat{g}_{\alpha \beta}\right)=\sum r_{\beta} \varphi(\alpha \beta)=\sum r_{\beta} h\left(g_{\alpha \beta}\right)=h F_{\alpha}(x)=0 .
\]
Since $\hat{R}$ is reduced, this means that $\hat{x}=0$.
\end{proof}

\begin{remark}\label{rem:19.6}
The above theorem gives rise to an algorithm for constructing representations of a recognizable series $\varphi$ from $\mathcal{B}(\varphi)$. It is only necessary for this purpose \textit{to find a submatrix} $\Phi$ \textit{of full rank of} $\mathcal{B}(\varphi)$ and to express the $g_{\alpha}, F_{\alpha}$ and $h$ with respect to the basis consisting of the set of columns used in the definition of $\Phi$. This algorithm is a minor variation of that given by \citet{fliess1972sur} and generalizes the one given for the linear case by \citet{rouchaleau1972linear}.
\end{remark}

\begin{example}\label{ex:19.7}
Let $m=p=1$ and take a linear Example~\ref{ex:15.2} response map $f$, with $\psi_{f}=\sum a_{i} \xi_{1 i}, \varphi_{f}=\left(\sum a_{i} 0^{i}\right) 1$. The only possible nonzero columns of $\mathcal{B}\left(\varphi_{f}\right)$ are those indexed by $1,01, \ldots$, and the only possible nonzero rows are those indexed by $\left\{0^{n}, n \geq 0\right\}$ :

\[
\begin{array}{c|ccccccc}
 & 1 & 01 & 0^{2} 1 & \cdots & 0^{n} 1 & \cdots \\
\hline
\Lambda & a_{1} & a_{2} & a_{3} & \cdots & a_{n+1} & \cdots \\
0 & a_{2} & a_{3} & a_{4} & \cdots & a_{n+2} & \cdots \\
0^{2} & a_{3} & a_{4} & a_{5} & \cdots & a_{n+3} & \cdots \\
\vdots & \vdots & \vdots & \vdots & & & \\
0^{n} & a_{n+1} & \cdot & \cdot & & & \\
\vdots & \vdots & \vdots & \vdots & & &
\end{array} \qquad ,
\]

\label{op:105}%
the classical \textit{Hankel matrix} of the linear response map $f$. When $m, p \neq 1$, the only possible nonzero part of $\mathcal{B}(\varphi)$ becomes, with a suitable ordering of row and column indexes, the block Hankel matrix of $f$; the realization procedure in Remark~\ref{rem:19.6} coincides with the well-known "Silverman's formulas" (\citealp{silverman1971realization})
\end{example}

\begin{example}\label{ex:19.8}
When $f$ is internally-bilinear Example~\ref{ex:15.4}, our $\mathcal{B}(\varphi)$ becomes the generalized Hankel matrix introduced by \citet{isidori1973direct} and \citet{fliess1973sur}. When $f$ is bilinear in the input/output sense Example~\ref{ex:15.3}, the nonzero part of $\mathcal{B}(\varphi)$ can be arranged so as to become the matrix introduced by \citet{kalman1979realization}.
\end{example}

\subsection{State-Affine Systems}\label{sec:20}

\begin{definition}\label{def:20.1}
\textit{The polynomial system} $\Sigma$ \textit{is state affine iff}
\begin{enumerate}
\item[(a)] $X=k^{n}$ \textit{for some integer} $n$,
\item[(b)] $h: X \rightarrow k^{p}$ \textit{is a linear map},
\item[(c)] $x^{\#}=0$, \textit{and}
\item[(d)] \textit{for each fixed input value} $u, P(x, u)$ \textit{is an affine} (= \textit{linear} + \textit{translation}) \textit{function of} $x$.
\end{enumerate}
\textit{In other words, there exist polynomials} $p_{i j}\left(T_{1}, \ldots, T_{m}\right)$ \textit{and} $q_{i}\left(T_{1}, \ldots, T_{m}\right)$ \textit{such that the transition equations for} $\Sigma$ \textit{are given by}
\[
x_{i}(t+1)=\sum_{j=0}^{n} p_{i j}(u(t)) x_{j}(t)+q_{i}(u(t)), \quad i=1, \ldots, n .
\]
$\Sigma$ \textit{is span-reachable iff the set of reachable states} $X_{R}$ \textit{spans} $k^{n}$; \textit{span-canonical iff both span-reachable and observable; minimal iff} $\operatorname{dim} \Sigma \leq \operatorname{dim} \hat{\Sigma}$ \textit{for any state-affine system} $\hat{\Sigma}$ \textit{for which} $f_{\Sigma}=f_{\hat{\Sigma}}$. \textit{A system morphism} $T: \Sigma \rightarrow \hat{\Sigma}$ \textit{between state-affine systems is a morphism of state-affine systems iff} $T: X \rightarrow \hat{X}$ \textit{is linear}.
\end{definition}

\begin{remark}\label{rem:20.2}
It follows from Theorem \ref{thm:15.5} that a polynomial response map $f$ is bounded and finitely realizable iff $f$ can be
\label{op:106}realized by a system which is state affine except for the condition $x^{\#}=0$. A coordinate translation may of course be used to make $x^{\#}=0$, without changing the affine form of $P$. However, this makes $h$ an affine, rather than a linear, map. In other words, the equilibrium-level output $h\left(x^{\#}\right)$ may become nonzero. \textit{We shall assume} for the rest of this section that a coordinate translation has been performed (if necessary) on $Y=k^{p}$ so that
\[
\varphi_{f}(\Lambda)=0 \text { for all response maps } f .
\]
With this convention, $h\left(x^{\#}\right)=0$ for every system. Thus bounded + finitely realizable $=$ realizable by a state-affine system.
\end{remark}

\begin{remark}\label{rem:20.3}
The basic observables of a state-affine system are affine functions of the state; thus

\[
\text {(abstract) observability }=\text { algebraic observability, }
\]
for such systems.
\end{remark}

Let $\Sigma$ be a state affine system. Since $P$ is affine in $x$ and polynomial in $u$, there exists a subset $J=\left\{\delta_{0}=(0), \delta_{1}, \ldots, \delta_{s}\right\}$ of $\mathbb{N}^{m}$ and matrices $\left\{F_{\delta_{i}}, \delta_{i}\right.$ in $\left.J\right\}$ and $\left\{g_{\delta_{i}}, i=1, \ldots, s\right\}$ such that
\begin{equation}
P(x, u)=\sum_{i=0}^{s} F_{\delta_{i}} x u^{\delta_{i}}+\sum_{i=1}^{s} g_{\delta_{i}} u^{\delta_{i}} \label{eq:20.4}
\end{equation}
We shall call $R=R_{\Sigma}:=\left(k^{n},\left\{F_{\delta_{i}}, i=0, \ldots, s\right\},\left\{g_{\delta_{i}}, i=1, \ldots, s\right\}\right.$, h) the \textit{representation associated to} $\Sigma$. Take $w=w_{t} \cdots w_{1}$ in $U^{t}$. An easy calculation shows that
\begin{equation}
P^{(t)}(x, w)=\sum_{\alpha \text{ in } J^{t}}\left(F_{\alpha} x+g_{\alpha}\right) w^{\alpha} . \label{eq:20.5}
\end{equation}
Thus the $t$-step reachability map $g_{t}: U^{t} \rightarrow k^{n}$ becomes

\label{op:107}%
\begin{equation}
g_{t}(w)=P^{(t)}(0, w)=\sum_{\alpha \text{ in } J^{t}} g_{\alpha} w^{\alpha}, \label{eq:20.6}
\end{equation}
and therefore
\[
f_{\Sigma}(w)=h\left(g_{t}(w)\right)=\sum h\left(g_{\alpha}\right) w^{\alpha}=\sum \varphi_{R}(\alpha) w^{\alpha},
\]
so that $\varphi_{f}=\varphi_{R}$.

We now have proved the following

\begin{lemma}\label{lem:20.7}
The assignment $\Sigma \rightarrow R_{\Sigma}$ is a bijection between state affine realizations of a bounded polynomial response map $f$ and finite-dimensional representations (with fixed basis for $X$) of $\varphi_{f}$.
\end{lemma}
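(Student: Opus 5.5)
The plan is to exhibit the inverse assignment explicitly and check that both composites are identities, reusing the computation \eqref{eq:20.4}--\eqref{eq:20.6} already carried out before the statement.

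First, well-definedness of $\Sigma\mapsto R_\Sigma$. Given a state-affine $\Sigma$, expand $P(x,u)$ as a polynomial in $u$ with coefficients affine in $x$, as in \eqref{eq:20.4}. Choose the index set $J$ to be the finite set of exponents $\delta$ appearing in that expansion, together with $(0)$. Since $P$ is a polynomial map and $k$ is infinite, polynomial functions are identified with polynomials, so the coefficient matrices $F_{\delta}$ and vectors $g_{\delta}$ are uniquely determined by $P$. The constant term in $x$ at $u=0$ vanishes because $P(0,0)=x^{\#}=0$; this is why $g_{(0)}$ is absent.

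Next, $R_\Sigma$ represents $\varphi_f$. Induction on $t$ gives \eqref{eq:20.5}, and evaluating at $x=0$ gives \eqref{eq:20.6}. Then $f_\Sigma|U^t=\sum_{\alpha\in J^t} h g_\alpha\, w^\alpha$. The comparison of coefficients needs care: distinct $\alpha$ in $J^t$ give distinct monomials $w^\alpha$, and $\alpha$ and $\alpha(0)$ give the same monomial. Here the conventions $g_{(0)\cdots(0)}=0$ and $g_\gamma$ (for $\gamma$ with trailing zeros) from Definition~\ref{def:19.1} ensure that each monomial $\xi^\beta$, $\beta\in\Delta_J$, collects exactly one term $hg_\beta$ once trailing zero columns are stripped. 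Checking this bookkeeping between $J^t$ and $\Delta_J$, including the reversed order of inputs fixed in Definition~\ref{def:6.2}, is the main obstacle. I would handle it by matching coefficients of $\xi^\beta$ in $\epsilon_t\psi_f$ for $t\ge|\beta|$, which yields $\varphi_{R_\Sigma}=\varphi_f$. The normalization $\varphi_f(\Lambda)=0$ of Remark~\ref{rem:20.2} matches $h(x^\#)=0$.

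Then the inverse. Given a finite-dimensional representation $R=(k^n,\{F_\delta\},\{g_\delta\},h)$ of $\varphi_f$ with a fixed basis, define
\[
P(x,u):=\sum_{i=0}^{s}F_{\delta_i}xu^{\delta_i}+\sum_{i=1}^{s}g_{\delta_i}u^{\delta_i},\qquad x^{\#}:=0 ,
\]
with output map $h$. This is state affine, and $P(0,0)=0$ holds since $u^{\delta_i}$ vanishes at $u=0$ for $i\ge1$ while the $F$-term vanishes at $x=0$. By the same computation as above, it realizes the series $\varphi_R=\varphi_f$, so it realizes $f$. The two constructions are visibly mutually inverse, because both simply read off or write down the same coefficient data of $P$. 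One fine point remains. If $J$ is allowed to vary, representations differing only by adjoining zero matrices for extra $\delta$'s should be identified; I would fix $J$ to be a support set of $\varphi_f$ enlarged as needed, or adopt this identification as the convention, which gives bijectivity.
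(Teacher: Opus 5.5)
Your proposal is correct and follows essentially the paper's own argument. You read off $R_\Sigma$ from the expansion \eqref{eq:20.4} and use \eqref{eq:20.5}--\eqref{eq:20.6} to get $f_\Sigma(w)=\sum_\alpha \varphi_{R_\Sigma}(\alpha)w^\alpha$, hence $\varphi_{R_\Sigma}=\varphi_f$; you then note that \eqref{eq:20.4} read backwards turns any finite-dimensional representation of $\varphi_f$ into a state-affine realization. Your extra bookkeeping only makes explicit what the paper leaves implicit: trailing zero columns via the conventions of Definition~\ref{def:19.1}, $g_{(0)}=0$ from $P(0,0)=x^{\#}=0$, and fixing or identifying the index set $J$.
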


The above assignment preserves reachability and observability properties:

\begin{lemma}\label{lem:20.8}
$\Sigma$ is span-reachable [respectively observable] if and only if $R_{\Sigma}$ is accessible [respectively reduced].
\end{lemma}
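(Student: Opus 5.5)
The plan is to compare the two notions directly through the explicit formulas \eqref{eq:20.5} and \eqref{eq:20.6}, treating the two equivalences separately.

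\emph{Span-reachability versus accessibility.} By \eqref{eq:20.6}, every reachable state is
\[
g_{t}(w)=\sum_{\alpha \text{ in } J^{t}} g_{\alpha} w^{\alpha},
\]
so $X_{R}\subseteq\operatorname{span}\{g_{\alpha},\ \alpha \text{ in } \Delta_{J}\}$. Here we use that $g_{\gamma}$ for $\gamma\in J^{*}$ is either $0$ or equals $g_{\beta}$ for the proper sequence $\beta$ obtained by deleting trailing zeros; this is by the definition in Definition~\ref{def:19.1}, together with the fact that $F_{(0)}$ acts as the shift. Hence span-reachability implies accessibility is not yet the direction; rather, accessibility is implied only if we show the converse inclusion.

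To obtain $g_{\alpha}\in\operatorname{span} X_{R}$, fix $t$. The distinct monomials $w^{\alpha}$, $\alpha\in J^{t}$, are linearly independent functions on $U^{t}$ because $k$ is infinite. I apply Main Lemma~\ref{lem:10.7} with $A=k^{n}$, $F=k[\xi_{1},\ldots,\xi_{t}]$, $c_{i}=w^{\alpha}$ and $a_{i}=g_{\alpha}$. This shows that the span of $\{g_{t}(w),\ w\in U^{t}\}$ equals the span of $\{g_{\alpha},\ \alpha\in J^{t}\}$. Taking the union over $t$ gives
\[
\operatorname{span} X_{R}=\operatorname{span}\{g_{\alpha},\ \alpha \text{ in } \Delta_{J}\},
\]
and both directions of the first equivalence follow.

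\emph{Observability versus reducedness.} By Remark~\ref{rem:20.3}, observability means that $h^{\Gamma}$ is injective. Since $P^{(t)}(x,w)$ is affine in $x$, injectivity is equivalent to the condition that, for every $x$, the linear part vanishes identically only when $x=0$. The linear part is the function
\[
w\mapsto h\sum_{\alpha\in J^{t}}F_{\alpha}x\,w^{\alpha}=\sum_{\alpha\in J^{t}} hF_{\alpha}x\, w^{\alpha},
\]
using \eqref{eq:20.5}. Again by independence of the monomials over the infinite field $k$, this vanishes for all $w\in U^{t}$ and all $t$ if and only if $hF_{\alpha}x=0$ for every $\alpha\in J^{*}$. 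That is exactly the condition $x\in\bigcap\operatorname{ker} hF_{\alpha}$, so injectivity is equivalent to this intersection being $\{0\}$, which is reducedness.

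The main obstacle is the bookkeeping in \eqref{eq:20.5}: sequences $\alpha\in J^{t}$ with trailing or embedded $(0)$ entries must match the indexing conventions of $F_{\alpha}$ and $g_{\alpha}$, including the reversed ordering of $w$. I would first check that \eqref{eq:20.5} holds with exactly those conventions, verifying it by induction on $t$ from \eqref{eq:20.4}.
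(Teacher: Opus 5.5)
Your proof is correct. The observability half is exactly the paper's argument: $h^{w}(x)-h^{w}(\hat x)=\sum_{\alpha \in J^{t}}hF_{\alpha}(x-\hat x)w^{\alpha}$, and distinct monomials are independent over the infinite field $k$.

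For the reachability half your route is slightly different. The paper argues dually. For a linear functional $b$ on $k^{n}$, the polynomial $b(g_{t}(w))=\sum_{\alpha\in J^{t}}b(g_{\alpha})w^{\alpha}$ vanishes on all of $U^{t}$ iff every $b(g_{\alpha})=0$. So a nonzero $b$ annihilates $X_{R}$ iff it annihilates all the $g_{\alpha}$. You instead apply Main Lemma~\ref{lem:10.7} to get the primal identity $\operatorname{span}X_{R}=\operatorname{span}\{g_{\alpha},\ \alpha \text{ in } \Delta_{J}\}$ directly. In finite dimension the two are equivalent, since the paper's computation shows both sets have the same annihilator. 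The paper's version uses only the fact that a polynomial vanishing identically on $U^{t}$ has zero coefficients. Yours gives the span equality itself, and it matches how the Main Lemma is used in Lemma~\ref{lem:10.6} and Proposition~\ref{prop:12.12}(e).

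There are a few expository slips, none affecting correctness.
\begin{itemize}
\item Your sentence about directions is garbled. The inclusion $X_{R}\subseteq\operatorname{span}\{g_{\alpha}\}$ already gives span-reachable $\Rightarrow$ accessible. It is the reverse inclusion, supplied by the Main Lemma, that gives the converse. This does no harm, since you prove equality.
\item No property of $F_{(0)}$ is involved in $g_{\beta(0)\cdots(0)}=g_{\beta}$; that is simply the definition in Definition~\ref{def:19.1}.
\item The bookkeeping you flag in \eqref{eq:20.5} checks out. With $w^{\alpha}=w_{1}^{\alpha_{1}}\cdots w_{t}^{\alpha_{t}}$ and $F_{\alpha}=F_{\alpha_{1}}\cdots F_{\alpha_{t}}$, the last-applied input $w_{1}$ pairs with the leftmost factor. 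For example, set $F(u):=\sum_{i=0}^{s}F_{\delta_{i}}u^{\delta_{i}}$ and $G(u):=\sum_{i=1}^{s}g_{\delta_{i}}u^{\delta_{i}}$. Then $P^{(2)}(x,(w_{2},w_{1}))=F(w_{1})F(w_{2})x+F(w_{1})G(w_{2})+G(w_{1})$, which agrees with $\sum_{\alpha\in J^{2}}(F_{\alpha}x+g_{\alpha})w^{\alpha}$ once the trailing-zero convention for $g_{\alpha}$ is used.
\end{itemize}
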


\begin{proof}
Let $b: k^{n} \rightarrow k$ be a linear function, and take $w$ in $U^{t}$. Then \eqref{eq:20.6} implies that
\begin{equation}\label{eq:20.9}
b\left(g_{t}(w)\right)=\sum_{\alpha \text{ in } J^{t}} b\left(g_{\alpha}\right) w^{\alpha} .
\end{equation}

Since $k$ is an infinite field, the right side of \eqref{eq:20.9} is zero for all $w$ in $U^{t}$ iff $b\left(g_{\alpha}\right)=0$ for all $\alpha$ in $J^{t}$. Thus there exists a $b \neq 0$ with $b\left(g_{t}(w)\right)=0$ for all $w$ in $U[z]$ (i.e., $\Sigma$ is not span-reachable) iff there is a $b \neq 0$ with $b\left(g_{\alpha}\right)=0$ for all $\alpha$ (i.e., $R_{\Sigma}$ is not accessible).

We now prove the observability part. Take $w$ in $U^{t}$. Then
\begin{equation}
h^{w}(x)=h \circ P^{(t)}(x, w)=\sum_{\alpha \text { in } J^{t}} h\left(F_{\alpha} x+g_{\alpha}\right) w^{\alpha} . \label{eq:20.10}
\end{equation}

\label{op:108}%
Two states $x$ and $\hat{x}$ are indistinguishable iff $h^{w}(x)=h^{w}(\hat{x})$ for all $w$ in $U^{*}$, which by \eqref{eq:20.10} means that
\begin{equation}
\sum_{\alpha \text{ in } J^{t}} h F_{\alpha}(x-\hat{x}) w^{\alpha}=0 \quad \text { for all } w \text { in } U^{t} \text { and all } t \geq 0 . \label{eq:20.11}
\end{equation}
Using again the fact that $k$ is infinite, \eqref{eq:20.11} is equivalent to $x-\hat{x}$ being in $\operatorname{ker} h F_{\alpha}$ for all $\alpha$ in $J^{*}$. So $h^{\Gamma}$ is one-to-one iff $\bigcap \operatorname{ker} h F_{\alpha}=\{0\}$, as required.
\end{proof}

\begin{remark}\label{rem:20.12}
The hypothesis $k=$ infinite is essential for the above result in its present form. The correct generalization to arbitrary fields is that only monomials $u^{\delta_{i}}$ which are linearly independent as functions should be used in the definition of a state affine system.
\end{remark}

From Lemma~\ref{lem:20.7}, \eqref{eq:20.9}, and Theorem~\ref{thm:19.3} we obtain the main result of this section:

\begin{theorem}\label{thm:20.13}
Any bounded and finitely realizable response map $f$ has a span-canonical state-affine realization, unique up to isomorphism (= change of basis in the state-space). A realization $\Sigma$ of $f$ is span-canonical (i) if and only if $\Sigma$ is minimal and (ii) if and only if $\operatorname{dim} \Sigma=\operatorname{rank} \mathcal{B}\left(\varphi_{f}\right)$.
\end{theorem}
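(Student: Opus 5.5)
The plan is to transport Theorem~\ref{thm:19.3} across the dictionary of Lemma~\ref{lem:20.7} and Lemma~\ref{lem:20.8}. Since $f$ is bounded, Lemma~\ref{lem:5.11} lets us fix a finite $J=\{\delta_0=(0),\delta_1,\ldots,\delta_s\}\subseteq\mathbb{N}^m$ with $\operatorname{supp}\varphi_f\subseteq\Delta_J$. Finite realizability gives, by Theorem~\ref{thm:15.5} and Remark~\ref{rem:20.2}, some state-affine realization $\Sigma$. Enlarging $J$ if necessary (adding zero matrices), we may assume its associated representation $R_\Sigma$ has support in the same $\Delta_J$. By Lemma~\ref{lem:20.7}, $R_\Sigma$ is a finite-dimensional representation of $\varphi_f$. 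Hence $\varphi_f$ is recognizable, and Theorem~\ref{thm:19.3} gives $\operatorname{rank}\mathcal{B}(\varphi_f)<\infty$ together with a canonical representation $R_{\varphi_f}$ of dimension $n=\operatorname{rank}\mathcal{B}(\varphi_f)$.

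Existence follows by choosing a basis of the state space of $R_{\varphi_f}$. Lemma~\ref{lem:20.7} then turns $R_{\varphi_f}$ into a state-affine realization $\Sigma$ of $f$ on $k^n$, via formula \eqref{eq:20.4}. By Lemma~\ref{lem:20.8}, accessibility and reducedness of the representation translate into span-reachability and observability of $\Sigma$, so $\Sigma$ is span-canonical. For uniqueness, take two span-canonical realizations $\Sigma,\hat\Sigma$. By Lemma~\ref{lem:20.8} their representations are canonical, so Theorem~\ref{thm:19.3} gives a unique linear isomorphism $T$ of representations. We then check that $T$ is a morphism of state-affine systems. From $TF_\alpha=\hat F_\alpha T$, $Tg_\alpha=\hat g_\alpha$ and $h=\hat hT$, together with \eqref{eq:20.4}, we get $T\circ P=\hat P\circ(T\times 1)$ and $T(0)=0$. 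So $T$ is a change of basis carrying $\Sigma$ to $\hat\Sigma$.

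For (i) and (ii), note that $\operatorname{dim}\Sigma=\operatorname{dim}R_\Sigma$, and that the state-affine realizations of $f$ correspond exactly to the finite-dimensional representations of $\varphi_f$ with support in $\Delta_J$. Minimality of $\Sigma$ among state-affine realizations is therefore minimality of $R_\Sigma$ among finite-dimensional representations. Infinite-dimensional representations never beat a finite one, so they do not affect the comparison. Theorem~\ref{thm:19.3}(i),(ii) combined with Lemma~\ref{lem:20.8} then gives both equivalences.

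The main obstacle is the support bookkeeping. Different state-affine realizations may use different monomial sets $J$, and the notion of representation in Definition~\ref{def:19.1} depends on $J$. I would handle this by taking $J$ to be the union of all monomials involved and padding with zero matrices $F_\delta$ and zero vectors $g_\delta$. I would then verify two points. First, padding preserves accessibility and reducedness, since the spans of the $g_\alpha$ and the intersections of the kernels $\ker hF_\alpha$ are unchanged. Second, $\operatorname{rank}\mathcal{B}(\varphi_f)$ does not depend on the choice of $J\supseteq$ the support generators, because the rows and columns added for extra letters vanish: they index words outside the support of $\varphi_f$.
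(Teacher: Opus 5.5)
Your proposal is correct and follows the paper's route: the paper obtains Theorem~\ref{thm:20.13} in one line from Lemma~\ref{lem:20.7}, the computation \eqref{eq:20.9} behind Lemma~\ref{lem:20.8}, and Theorem~\ref{thm:19.3}, and it leaves implicit the support bookkeeping that you spell out. One small refinement: $J$ must be finite, so when comparing two given realizations you should pad only to $J\cup\hat J$ rather than to a union over all realizations, and your check that $\operatorname{rank}\mathcal{B}(\varphi_f)$ does not depend on $J$ is exactly what makes this pairwise comparison enough for (i) and (ii).
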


Of course, realizations can be obtained from $\mathcal{B}\left(\varphi_{f}\right)$ using the algorithm in \eqref{eq:19.4}. There is an interesting interpretation of the row-space $\mathcal{R}_{f}$ of $\mathcal{B}\left(\varphi_{f}\right)$, which implies a natural duality between the observation space $\mathcal{L}_{f}$ and the state space of the span-canonical state-affine realization of $f$ (i.e., the column space of $\mathcal{B}\left(\varphi_{f}\right)$). Assume for simplicity that $p=1$, and define
\[
\eta: \mathcal{L}_{f} \rightarrow \mathcal{R}_{f},
\]
on generators by
\begin{equation}\label{eq:20.14}
\beta\text{-th column of } \eta\left(f^{w}\right):=\sum_{\alpha \text { in } J^{t}} \varphi(\alpha \beta) w^{\alpha} \text{ for } w \text{ in } U^{t}, t \geq 0 .
\end{equation}

\label{op:109}%
It is not difficult to verify the following

\begin{proposition}\label{prop:20.15}
$\eta: \mathcal{L}_{f} \simeq \mathcal{R}_{f}$. In particular, $\operatorname{dim} \mathcal{L}_{f}=\operatorname{rank} B\left(\varphi_{f}\right)=$ dimension of span-canonical state-affine realization of $f$.
\end{proposition}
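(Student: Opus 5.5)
The plan is to show that $\eta$ is well defined, linear and injective on $\mathcal{L}_{f}$, and surjective onto $\mathcal{R}_{f}$. All four properties come from one identity: each basic observable $f^{w}$, as a Volterra series, has its coefficients given by the rows of $\mathcal{B}(\varphi_{f})$. Concretely, for $w$ in $U^{t}$ the Volterra series of $f^{w}$ is $\psi_{f}(w_{1},\ldots,w_{t})$, by \eqref{eq:12.4}. Reading off coefficients via \eqref{eq:5.16}, the coefficient of $\xi^{\beta}$ (with $\beta$ in $\Delta_{J}$) in $f^{w}$ is $\sum_{\alpha\in J^{t}}\varphi(\alpha\beta)w^{\alpha}$. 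This is exactly the $\beta$-th entry prescribed in \eqref{eq:20.14}. I will check this carefully against the inverted-order convention of Definition~\ref{def:6.2}, so that $\alpha$ multiplies the most recent inputs. Hence, if we define $\eta$ on all of $\mathcal{L}_{f}\subseteq\Psi$ as the map sending $\psi$ to the row vector of its coefficients $(\psi_{\beta})_{\beta\in\Delta_{J}}$, then $\eta$ is the restriction of a linear map $\Psi\to k^{\Delta_{J}}$. That map is injective on series supported in $\Delta_{J}$, and every element of $\mathcal{L}_{f}$ has such support by \eqref{eq:12.6} and Lemma~\ref{lem:5.11}. So $\eta$ is well defined, linear and injective at once; the generator-wise definition causes no trouble.

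Surjectivity is the substantive step. The image $\eta(\mathcal{L}_{f})$ is the span of the vectors $\sum_{\alpha\in J^{t}}w^{\alpha}\,\mathrm{row}_{\alpha}$, taken over $w$ in $U^{t}$ and $t\ge 0$. Fix $t$ and apply Main Lemma~\ref{lem:10.7}. The monomial functions $w\mapsto w^{\alpha}$ for distinct $\alpha$ in $J^{t}$ are linearly independent on $U^{t}$, because $k$ is infinite (this is the point of Remark~\ref{rem:20.12}). The lemma then says the span over $w\in U^{t}$ equals the span of the rows $\mathrm{row}_{\alpha}$, $\alpha\in J^{t}$. Taking the union over $t$ gives every row indexed by $J^{*}$, so the image is the whole row space $\mathcal{R}_{f}$.

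It remains to get the dimension statement. From the isomorphism, $\dim\mathcal{L}_{f}=\dim\mathcal{R}_{f}$, and $\dim\mathcal{R}_{f}$ is the row rank of $\mathcal{B}(\varphi_{f})$. Row rank equals column rank even for infinite matrices: if either is finite, the other is too, because a finite nonsingular submatrix of maximal size determines both. Theorem~\ref{thm:20.13} then identifies this rank with the dimension of the span-canonical realization. If the rank is infinite, both sides are infinite and the statement still holds.

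The main obstacle is the index bookkeeping in the first step, namely showing that the coefficient extraction gives $\varphi(\alpha\beta)$ and not $\varphi(\beta\alpha)$. Two further details also need care. Sequences $\alpha\in J^{t}$ with trailing zeros must be matched correctly with $\Delta_{J}$ after concatenation. And when $\alpha$ has columns outside $J$, the coefficient is zero.
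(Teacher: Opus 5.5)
Your identification of $\eta(f^{w})$ with the coefficient vector of $\psi_f(w_1,\ldots,w_t)$ is correct: the coefficient of $\xi^{\beta}$ is $\sum_{\alpha\in J^{t}}\varphi(\alpha\beta)w^{\alpha}$ with $w^{\alpha}=w_1^{\alpha_1}\cdots w_t^{\alpha_t}$, which is the convention of \eqref{eq:20.5}. The injectivity argument, the surjectivity argument via Lemma~\ref{lem:10.7}, and the row-rank/column-rank remark are also fine, and this is presumably the verification the paper leaves to the reader. One small correction: the support claim should be read off from that coefficient formula. \eqref{eq:12.6} with Lemma~\ref{lem:5.11} only gives support in $\Delta_{J'}$, where $J'$ is the set of vectors with entries $\le\deg f$, not in the given $\Delta_J$.

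The gap is the final appeal to Theorem~\ref{thm:20.13}. Your $\eta$ is injective only because you keep the column $\beta=\Lambda$. Its entry $\eta(f^{w})_{\Lambda}=\sum_{\alpha}\varphi(\alpha)w^{\alpha}=f(w)$ is the constant term of $f^{w}$, and $\mathcal{L}_f$ often contains the constants. But the rank that Theorems~\ref{thm:19.3} and~\ref{thm:20.13} actually compute is that of the matrix with the $\Lambda$ column deleted. Two pieces of evidence: the proof of Theorem~\ref{thm:19.3} needs $g_\beta=B_\beta$, whereas Definition~\ref{def:19.1} gives $g_\Lambda=0\neq B_\Lambda$; and Example~\ref{ex:19.7} lists only the columns $1,01,0^{2}1,\ldots$ as nonzero.

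The two ranks differ in general. Take $m=p=1$ and $y(t)=u(t-1)$, i.e.\ $\psi_f=\xi_1$.
\begin{itemize}
\item Each $f^{u}$ is the constant $u$, so $\mathcal{L}_f=\operatorname{span}\{\xi_1,1\}$ is $2$-dimensional.
\item The $\Lambda$-inclusive $\mathcal{B}(\varphi_f)$ has rank $2$: $B_{(1)}$ is supported on row $\Lambda$ and $B_{\Lambda}$ on the rows $(1)0^{j}$.
\item Yet the span-canonical realization $x(t+1)=u(t)$, $y(t)=x(t)$ has dimension $1$.
\item If you delete the $\Lambda$ column instead, then $\eta(f^{u})=0$ and $\eta$ is no longer injective.
\end{itemize}
So no choice of column index set makes the full chain of equalities true. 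The same happens for every nonzero linear $f$, where $\dim\mathcal{L}_f$ is the Hankel rank plus one. This is also the slip behind ``$\dim\mathcal{L}_{f_\Sigma}=s$'' in the discussion of \eqref{eq:15.6}: in characteristic zero that space is spanned by $1,x,\ldots,x^{s}$.

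What your argument does prove is that $\eta$ followed by deleting the $\Lambda$ coordinate induces $\mathcal{L}_f/(\mathcal{L}_f\cap k)\cong$ the row space of the $\Lambda$-free matrix. Hence $\dim\mathcal{L}_f=n+\dim(\mathcal{L}_f\cap k)\in\{n,n+1\}$, with $n$ the span-canonical dimension. The final equality of the statement must be corrected in this way rather than proved as stated.
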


\subsection{Finite Response Maps and Cascades of Linear Systems}\label{sec:21}

\begin{definition}\label{def:21.1}
\textit{The total degree of the polynomial response map} $f$ (\textit{or of its Volterra series} $\psi_{f}$) \textit{is}

\[
\operatorname{tdeg} f=\operatorname{tdeg} \psi_{f}:=\sup \left\{\|\alpha\| \mid \psi_{f}(\alpha) \neq 0\right\} \leq \infty ;
\]
$f$ \textit{is finite iff} $\operatorname{tdeg} f<\infty$.
\end{definition}

Since obviously $\operatorname{deg} \psi_{f} \leq$ tdeg $\psi_{f}$, a finite response map is in particular bounded.

\begin{examples}\label{ex:21.2}
The nonzero polynomial response map $f$ is \textit{linear} if and only if tdeg $f=1$; when $f$ is bilinear, tdeg $f=2$. On the other hand, an internally-bilinear response map $f$ is bounded but not necessarily finite.
\end{examples}

A natural class of realizations for finite maps will consist of feedback-free interconnections of linear systems, defined below:

\begin{definition}\label{def:21.3}
\textit{The polynomial system} $\Sigma$ \textit{is a cascade of linear systems iff} $X=k^{n}, x^{\#}=0$ \textit{and there exists a direct sum decomposition}.
\[
X=X_{1} \oplus \ldots \oplus X_{d}, \quad d \geq 1,
\]
\textit{and linear maps} $A_{i}: X_{i} \rightarrow X_{i}, i=1, \ldots, d$ \textit{such that the transition equations of} $\Sigma$ \textit{become}
\[
\begin{aligned}
& x_{i}(t+1)=A_{i} x_{i}(t)+B_{i}\left(x_{1}(t), \ldots, x_{i-1}(t), u(t)\right), \\
& x_{i}(t) \text { in } X_{i} \text { for all } t, i=1, \ldots, d .
\end{aligned}
\]
Given a decomposition $X=X_{1} \oplus \ldots \oplus X_{d}$ as above, let
\label{op:110}%
\begin{equation}\label{eq:21.4}
Z_{j}:=X_{j} \oplus \ldots \oplus X_{d}, \quad j=1, \ldots, d .
\end{equation}
Conversely, for any chain $Z_{d} \subseteq Z_{d-1} \subseteq \ldots \subseteq Z_{1}=X$, there exists a decomposition $\left\{X_{i}\right\}$ such that \eqref{eq:21.4} holds. Thus the following result is easy to prove.
\end{definition}

\begin{lemma}\label{lem:21.5}
The state-affine system $\Sigma$ is a cascade of linear systems if and only if there exists a chain of subspaces
\[
\{0\}=Z_{d+1} \subseteq Z_{d} \subseteq \ldots \subseteq Z_{1}=X,
\]
such that, with the notations of \eqref{eq:20.4},
\[
\begin{aligned}
& F_{\delta_{0}} Z_{j} \subseteq Z_{j}, \quad j=1, \ldots, d, \text { and } \\
& F_{\delta_{i}} Z_{j} \subseteq Z_{j+1}, \quad i=1, \ldots, s, \quad j=1, \ldots, d .
\end{aligned}
\]
\end{lemma}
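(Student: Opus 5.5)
Both directions reduce to bookkeeping with the block decomposition of Definition~\ref{def:21.3}, through the relation \eqref{eq:21.4} between a decomposition $X=X_1\oplus\cdots\oplus X_d$ and the chain $Z_j:=X_j\oplus\cdots\oplus X_d$. Throughout, $\Sigma$ is state affine, written as in \eqref{eq:20.4} as $P(x,u)=\sum_{i} F_{\delta_i}x\,u^{\delta_i}+\sum_{i\ge 1}g_{\delta_i}u^{\delta_i}$.

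\emph{(``only if'').} Assume $\Sigma$ is a cascade with blocks $X_i$ and form the chain $Z_j$ above, with $Z_{d+1}=\{0\}$. Since $P$ is affine in $x$, each $B_i(x_1,\ldots,x_{i-1},u)$ is affine in $(x_1,\ldots,x_{i-1})$ for fixed $u$. Expand it in the monomials $u^{\delta}$ and compare coefficients with \eqref{eq:20.4}; the monomials are distinct functions because $k$ is infinite. The coefficient of $u^{(0)}$ gives $F_{\delta_0}=\mathrm{diag}(A_i)+N$, where $N$ maps the components $x_1,\ldots,x_{i-1}$ into $X_i$. Hence $F_{\delta_0}$ is block lower triangular in the ordering $X_1,\ldots,X_d$, and so $F_{\delta_0}Z_j\subseteq Z_j$. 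For $i\ge1$, the matrix $F_{\delta_i}$ comes only from $B$ and not from the $A$'s, so its $X_i$-block depends only on $x_1,\ldots,x_{i-1}$. It is therefore strictly block lower triangular, which is exactly $F_{\delta_i}Z_j\subseteq Z_{j+1}$.

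\emph{(``if'').} Given a chain with the stated invariance properties, use the converse remark after \eqref{eq:21.4}: choose complements $X_j$ with $Z_j=X_j\oplus Z_{j+1}$. In the resulting basis, $F_{\delta_0}$ is block lower triangular and every $F_{\delta_i}$ with $i\ge1$ is strictly block lower triangular. Set $A_j:=$ the $(j,j)$ diagonal block of $F_{\delta_0}$. Put every remaining term of $P$ into $B_j$: the off-diagonal blocks of $F_{\delta_0}$, the blocks of the $F_{\delta_i}u^{\delta_i}$, and the $X_j$-component of $\sum g_{\delta_i}u^{\delta_i}$. By triangularity, $B_j$ depends only on $x_1,\ldots,x_{j-1}$ and $u$. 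The remaining requirements $X=k^n$ and $x^{\#}=0$ hold because $\Sigma$ is state affine.

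The main point needing care, rather than a real obstacle, is the ``only if'' step that $F_{\delta_i}$ for $i\ge1$ has no diagonal blocks. This holds because the diagonal term $A_ix_i$ carries no $u$-dependence, and coefficient comparison is legitimate since distinct monomials in $u$ are linearly independent over an infinite field (cf.\ Remark~\ref{rem:20.12}).
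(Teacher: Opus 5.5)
Your proof is correct and takes essentially the paper's route. The paper gives no detailed argument: it only notes that decompositions $X=X_1\oplus\cdots\oplus X_d$ correspond to chains $Z_j=X_j\oplus\cdots\oplus X_d$ via \eqref{eq:21.4} and calls the result easy, and your block-triangularity bookkeeping is exactly the verification it leaves to the reader (including the observation that the $u$-independence of the $A_i$ makes $F_{\delta_i}$, $i\ge 1$, strictly block lower triangular, with coefficient comparison justified because $k$ is infinite).
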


For any $\alpha=\alpha_{1} \ldots \alpha_{t}$ in supp $\psi$, let
\[
\#(\alpha):=\text { number of nonzero vectors among } \alpha_{1}, \ldots, \alpha_{t} \text {. }
\]
Clearly,
\[
\#(\alpha) \leq\|\alpha\| \leq d .
\]
The main result of this section is the following

\begin{theorem}\label{thm:21.6}
The following statements are equivalent for any polynomial response map $f$ :
\begin{enumerate}
\item[(a)] $f$ is bounded and the span canonical state-affine realization of $f$ is a cascade of linear systems.
\item[(b)] $f$ is realizable by a cascade of linear systems.
\item[(c)] $f$ is a finitely realizable finite map.
\end{enumerate}
\end{theorem}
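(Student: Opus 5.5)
I would prove the cycle (a) $\Rightarrow$ (b) $\Rightarrow$ (c) $\Rightarrow$ (a). The implication (a) $\Rightarrow$ (b) is immediate.

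For (b) $\Rightarrow$ (c), take a cascade $\Sigma$ realizing $f$ as in Definition~\ref{def:21.3}, with $B_i$ polynomial. Finite realizability is clear because $\dim \Sigma = n$. For finiteness, I would argue by induction on $i$ that each coordinate of $x_i(t)$ is a polynomial in the past inputs whose total degree is bounded independently of $t$. The reason is that $x_i(t)=\sum_{\tau<t} A_i^{t-1-\tau} B_i(x_1(\tau),\ldots,x_{i-1}(\tau),u(\tau))$, and the linear part $A_i$ does not raise degree. Hence $\operatorname{tdeg} x_i \le \deg B_i \cdot \max(1,\max_{j<i}\operatorname{tdeg} x_j)$. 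The output $h$ is polynomial, so $\operatorname{tdeg} f<\infty$.

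The substantive direction is (c) $\Rightarrow$ (a). Assume $\operatorname{tdeg} f = d<\infty$. Then $f$ is bounded, so by Theorem~\ref{thm:15.5} and Theorem~\ref{thm:20.13} the span-canonical state-affine realization exists. I would work with it through its representation $R_\Sigma$, which by Lemma~\ref{lem:20.8} is canonical and hence, by Theorem~\ref{thm:19.3}, may be taken to be $R_{\varphi_f}$ on the column space of $\mathcal{B}(\varphi_f)$. Set $F_i := F_{\delta_i}$ for $i\ge 1$, $F_0 := F_{\delta_0}$, and define
\[
Z_j := \operatorname{span}\{ B_\beta \mid \beta \in \Delta_J,\ \#(\beta)\ge j \}.
\]
Here $Z_1=X$, since $\Lambda$ contributes $0$ by the convention $\varphi(\Lambda)=0$ of Remark~\ref{rem:20.2}. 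The column-shift description of the $F_\alpha$ gives $F_0 Z_j\subseteq Z_j$, because prefixing $(0)$ does not change $\#$, and $F_{\delta_i}Z_j\subseteq Z_{j+1}$, because prefixing a nonzero $\delta_i$ raises $\#$ by one. Since $\#(\alpha)\le\|\alpha\|\le d$ on the support, every entry $\varphi(\alpha\beta)$ with $\#(\beta)\ge d+1$ vanishes, so $B_\beta=0$ and $Z_{d+1}=0$. Lemma~\ref{lem:21.5} then shows that the span-canonical realization is a cascade of linear systems.

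The main obstacle is bookkeeping rather than ideas. One must check that $Z_j$ is correctly indexed by $\#$ of the column index, that the shift $B_\beta\mapsto B_{\delta_i\beta}$ respects these spans (it is well defined by \eqref{eq:19.4}), and that the translation making $\varphi(\Lambda)=0$ does not break anything. I would also confirm that uniqueness up to linear isomorphism, from Theorem~\ref{thm:20.13}, transports the chain to any span-canonical realization, so that (a) holds for ``the'' span-canonical realization.
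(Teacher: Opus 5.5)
Your proof is the same as the paper's. For (b)$\Rightarrow$(c) you bound degrees by induction along the cascade, and for (c)$\Rightarrow$(a) you apply Lemma~\ref{lem:21.5} to the chain $Z_j=\operatorname{span}\{B_\beta : \#(\beta)\ge j\}$ in the canonical representation of Theorem~\ref{thm:19.3}. One small correction: $Z_1=X$ does not follow from $\varphi(\Lambda)=0$, because the column $B_\Lambda=(\varphi(\alpha))_{\alpha\in J^*}$ is generally nonzero; the right reason is that $g_\Lambda:=0$ in Definition~\ref{def:19.1}, so the accessible state space of the canonical representation is spanned by the columns $B_\beta$ with $\beta\neq\Lambda$ (the paper also leaves this point implicit).
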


\begin{proof}
(a) ⇒ (b) Trivial.

\label{op:111}%
$(b) \Rightarrow(c)$ Let $f=f_{\Sigma}$, with $\Sigma$ as in Definition~\ref{def:21.3}. Let each $G_{i}$ have total degree $s_{i}$ and let $h$ have total degree $s$. It follows by induction on $i$ that
\[
\operatorname{tdeg} \psi_{f_{\Sigma}} \leq s s_{1} \cdots s_{n} .
\]

(c) ⇒ (a) Let $\operatorname{tdeg} \psi_{f}=d$. Referring to the realization constructed via Theorem~\ref{thm:19.3}, let
\[
Z_{j}:=\operatorname{span}\left\{B_{\beta}, \quad \beta \text { in } \Delta_{J}, \quad j \leq \#(\beta)\right\}, \quad j=1, \ldots, d+1 .
\]
The result is then clear by Lemma~\ref{lem:21.5}.
\end{proof}

\subsection{Rationality}\label{sec:22}

We now indicate how the automata-theoretic notion of rationality generalizes to the present context. The convention Remark~\ref{rem:20.2} that $\varphi(\Lambda)=0$ for every exponent series is still in force.

\begin{definition}\label{def:22.1}
\textit{A scalar} $(p=1)$ \textit{exponent series} $\varphi$ \textit{is rational iff} $\varphi$ \textit{can be expressed in terms of finitely many vectors in} $\mathbb{N}^{m}$ \textit{by means of a finite number of any of the following three types of operations}:
\begin{enumerate}
\item[(i)] $(\varphi, \hat{\varphi}) \mapsto r \varphi+s \hat{\varphi}$, for any $r, s$ in $k$,
\item[(ii)] $(\varphi, \hat{\varphi}) \mapsto \varphi \hat{\varphi}$,
\item[(iii)] $\varphi \mapsto \varphi^{*}:=\sum_{i \geq 0} \varphi^{i}$,
\end{enumerate}
\textit{where} $\varphi^{*}$ \textit{is interpreted as}
\[
\varphi^{*}(\alpha):=\sum_{i \leq|\alpha|} \varphi^{i}(\alpha), \text { for all } \alpha \text { in } \Delta_{J} .
\]
\textit{A vector exponent series} $\left(\varphi_{1}, \ldots, \varphi_{p}\right)^{\prime}$ \textit{is rational iff each} $\varphi_{j}$ \textit{is rational}.
\end{definition}

\label{op:112}%

\begin{theorem}[Kleene-Schützenberger]\label{thm:22.2}
The exponent series $\varphi$ is rational if and only if $\varphi$ is recognizable.
\end{theorem}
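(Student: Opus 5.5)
We prove the two inclusions separately, working with scalar ($p=1$) series supported in $\Delta_J$, $J=\{\delta_0=(0),\delta_1,\ldots,\delta_s\}$. The vector case is then immediate: a vector series is rational iff each component is, and it is recognizable iff each component is (stack the representations, or project one).

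\emph{Rational $\Rightarrow$ recognizable.} We show that the recognizable series contain the generators and are closed under (i)--(iii).

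First, the generators. A single word $\alpha\in\Delta_J$ (a "finitely many vectors" monomial) has a representation of dimension $|\alpha|+1$: a path automaton with $F_{\delta}$ moving along the path on the appropriate letters. One must check the convention that $g_{\gamma}$ is read through $F$'s applied to $g_{\alpha_t}$, with the last letter nonzero.

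Second, operation (i). For $r\varphi+s\hat\varphi$ take the direct sum of representations, with $h\oplus \hat h$ scaled by $r,s$.

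Third, operations (ii) and (iii). Here we first translate representations into the classical Schützenberger form. Set
\[
\varphi(\alpha_1\cdots\alpha_t)=h F_{\alpha_1}\cdots F_{\alpha_{t-1}} g_{\alpha_t},
\]
with the value $0$ when $\alpha_t=(0)$ (ensured by $g_{(0)}:=0$). This is exactly a linear automaton over the finite alphabet $J$ with initial vector $g$ consumed at the last letter. By Theorem~\ref{thm:19.3}, recognizability is equivalent to $\operatorname{rank}\mathcal B(\varphi)<\infty$. Closure under the product $\varphi\hat\varphi$ and under the star then follows from the standard rank arguments. For the product, the column space of $\mathcal B(\varphi\hat\varphi)$ lies in the span of shifted columns of $\mathcal B(\varphi)$ tensored with $\hat\varphi$ plus columns of $\mathcal B(\hat\varphi)$. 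Alternatively, construct the block-triangular representation
\[
\begin{pmatrix}F_\delta & g_\delta\hat h\\ 0&\hat F_\delta\end{pmatrix}
\]
(suitably arranged for reading order). For the star, use $F_\delta+g_\delta h$-type feedback, valid since $\varphi(\Lambda)=0$ (Remark~\ref{rem:20.2}) makes $\varphi^*$ well defined.

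\emph{Recognizable $\Rightarrow$ rational.} Take a finite representation of dimension $n$ and express
\[
\varphi=\sum_{i,j} h_i\,(M^{*})_{ij}\,\gamma_j ,
\]
where $M=\sum_{\delta\in J}\delta\,(F_\delta)$ is an $n\times n$ matrix whose entries are linear combinations of letters, and $\gamma=\sum_{i\ge1}\delta_i g_{\delta_i}$. This is the identity
\[
\varphi=h\,(M^*)\,\gamma,
\]
checked coefficientwise against the definition of $\varphi_R$. We then show that the entries of $M^*$ are rational by induction on $n$ via the block (Arden/Gauss elimination) formula: for
\[
M=\begin{pmatrix}a&b\\c&d\end{pmatrix}
\]
the entries of $M^*$ are given by expressions such as $(a+bd^*c)^*$ and $d^*c(a+bd^*c)^*$. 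Each star used is of a series with zero constant term, so (iii) applies.

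The main obstacle is the bookkeeping forced by the paper's conventions. Letters (0) are allowed in the middle of words but not at the end, $\varphi(\Lambda)=0$, and the star is truncated by $|\alpha|$. We must verify that the product and star of Definition~\ref{def:22.1}, restricted to $\Delta_J$, agree with the free-monoid operations used in the automaton constructions. I would handle this by working in the free monoid $J^*$ throughout, and only at the end restricting to $\Delta_J$. This restriction is legitimate because both sides vanish on words ending in (0): for recognizable series this is by $g_{(0)}=0$, and for rational ones by the right factor $\gamma$.
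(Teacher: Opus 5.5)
The paper gives no argument of its own here: it simply cites Eilenberg's Theorem VII.5.1. Your proof is the standard argument behind that citation, so the route is the same: linear representations are closed under the rational operations, and conversely $\varphi=hM^{*}\gamma$ is rational by the block (Arden) formula. What writing it out buys is the translation layer that the citation silently relies on.

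You are right that one must compute in the free monoid $J^*$. The generators in Definition~\ref{def:22.1} are single letters, and the letter $(0)$ is not an exponent series at all. The paper's own $S_d=0^*10^*1+0^*2+0^*1$ passes through $0^*$, which is not supported in $\Delta_J$. Make the dictionary explicit:
\begin{itemize}
\item[(1)] A Sch\"utzenberger representation $w\mapsto\lambda\mu(w)\gamma$ on $J^*$ gives, via $F_\delta=\mu(\delta)$, $h=\lambda$ and $g_\delta=\mu(\delta)\gamma$ for $\delta\neq(0)$, a representation in the sense of Definition~\ref{def:19.1} of its restriction to $\Delta_J\setminus\{\Lambda\}$.
\item[(2)] Conversely, $(X,\{F_\delta\},\{g_\delta\},h)$ becomes, on $X\oplus k$, $\mu(\delta)=\begin{pmatrix}F_\delta&g_\delta\\0&0\end{pmatrix}$ with $g_{(0)}:=0$, $\lambda=(h,0)$, $\gamma=(0,1)^{T}$. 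Its series is the extension by zero.
\end{itemize}

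One step is misjustified. Remark~\ref{rem:20.2} does not supply the properness your feedback construction for the star needs. That convention concerns only the final series; in the free-monoid computation you adopt, subexpressions routinely have constant term $1$ (every $\psi^*$, e.g.\ $0^*$). You must read (iii) as applying only to proper series, as Eilenberg does.

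This restriction is not cosmetic: with the truncated formula of Definition~\ref{def:22.1} applied to an improper series, the statement is false. Take $m=1$, $\delta=(1)$ and $\operatorname{char}k=0$. The series $(\delta^*)^*\delta$ is proper, supported in $\Delta_J$, and rational in the literal sense. It takes the value $\binom{2n}{n+1}$ on $\delta^{n+1}$ for $n\ge 1$, which is not linearly recurrent (its generating function involves $\sqrt{1-4t}$). Any finite representation would instead give $hF_\delta^{\,n}g_\delta$ there.

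With the proper-series reading, $(X,\{F_\delta+g_\delta h\},\{g_\delta\},h)$ represents $\varphi^{+}=\varphi^{*}-\Lambda$. The series $\varphi^{*}=\Lambda+\varphi^{+}$ must be kept in Sch\"utzenberger form, since $g_\Lambda=0$ makes nonzero constant terms unrepresentable in the paper's sense.

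Finally, the inclusion you state for the product holds for the row space of $\mathcal{B}(\varphi\hat\varphi)$; for the column space the roles of $\varphi$ and $\hat\varphi$ are exchanged. Your block-triangular construction avoids this issue.
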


\begin{proof}
See \citet[Theorem VII.5.1]{eilenberg1974automata}.
\end{proof}

The notion of rationality can be interpreted in terms of a calculus of interconnections of state-affine systems. This calculus permits, via exponent series, the determination of the Volterra series of a given system and, conversely, the construction of a realization given (a rational expression for) a Volterra series. The calculus is obtained as a straightforward translation of the well-known manipulations with automata, as given for instance in \citet{eilenberg1974automata}.

Consider the \textit{truncation} $f^{(d)}$ \textit{of} $f$ \textit{to kernels of degree at most} $d$, i.e., the response whose formal Volterra series has all those terms in $\psi_{f}$ corresponding to $\|\alpha\| \leq d$ and all other coefficients zero. As an illustration of the use of rationality, we shall give a short proof of the following result (variations of it, e.g. that each homogeneous part is realizable, are proved similarly):

\begin{proposition}\label{prop:22.3}
Let $f$ be a bounded finitely realizable response map. Then $f^{(d)}$ is also finitely realizable, for all $d \geq 1$.
\end{proposition}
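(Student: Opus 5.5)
The plan is to reduce everything to exponent series and use the Kleene--Schützenberger Theorem~\ref{thm:22.2}, together with the fact that recognizable series are closed under Hadamard (pointwise) product. Since $f$ is bounded and finitely realizable, it has a state-affine realization (Remark~\ref{rem:20.2} and Theorem~\ref{thm:15.5}). By Lemma~\ref{lem:20.7} and Theorem~\ref{thm:19.3}, $\varphi_f$ is then recognizable, with support in $\Delta_J$ for a finite $J=\{\delta_0=(0),\delta_1,\ldots,\delta_s\}$ (Lemma~\ref{lem:5.11}). The truncation $\varphi_{f^{(d)}}$ has the same support bound, and its coefficients are $\varphi_f(\alpha)$ when $\|\alpha\|\le d$ and $0$ otherwise. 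If I show that $\varphi_{f^{(d)}}$ is recognizable, Lemma~\ref{lem:20.7} gives a state-affine realization of $f^{(d)}$, hence a finite-dimensional one. I can work componentwise and assume $p=1$.

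First I will show that the characteristic series $\chi_d:=\sum_{\alpha\in\Delta_J,\ \|\alpha\|\le d}\alpha$ is recognizable. Here $\|\alpha\|\le d$ means that every row sum is at most $d$, i.e. the weight vector in $\mathbb{N}^m$ is componentwise $\le(d,\ldots,d)$. I will realize it with a finite-state representation. Take $X=k^{N}$ with one basis vector $e_v$ for each $v\in\mathbb{N}^m$ with $v\le(d,\ldots,d)$. Let $F_{\delta}e_v:=e_{v+\delta}$ if $v+\delta$ is still $\le(d,\ldots,d)$, and $0$ otherwise. Let $h$ sum all coordinates except $e_0$, and put $g_{\delta_i}:=e_{\delta_i}$ when admissible. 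Then I will check that $hg_\alpha=1$ exactly when $\alpha\in\Delta_J$ satisfies $\|\alpha\|\le d$ and $\alpha\neq\Lambda$. This uses that $g_\gamma$ prepends columns via $F$ and that $F_{(0)}$ is the identity on the weight index. Alternatively, $\chi_d$ has finite support modulo the letter $(0)$, so it is rational directly as a finite sum of expressions $(0)^*\beta_1(0)^*\cdots$, and Theorem~\ref{thm:22.2} applies.

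Next I will form the Hadamard product $\varphi_{f^{(d)}}=\varphi_f\odot\chi_d$. For representations $R=(X,\{F_\alpha\},\{g_\alpha\},h)$ and $R'$ the natural candidate is the tensor product representation. Because of the special convention for $g$ (the first nonzero letter is injected via $g_{\alpha_t}$ rather than via an initial vector), the cleanest route is to convert each representation to a standard linear automaton. I add an initial vector $x_0$ and set $F'_\delta$ on $X\oplus k$ so that $g_\gamma$ corresponds to $F'_\gamma x_0$ projected appropriately. In this standard form the tensor construction $F_\delta\otimes F'_\delta$, $x_0\otimes x'_0$, $h\otimes h'$ gives the coefficientwise product. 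Converting back yields a finite-dimensional representation of $\varphi_{f^{(d)}}$.

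I expect the main obstacle to be this bookkeeping between the paper's notion of representation (with $g_\alpha$ and the reading order in $g_\gamma:=F_{\alpha_1\cdots\alpha_{t-1}}g_{\alpha_t}$) and the standard automaton form needed for the Hadamard product. I also have to make sure that trailing $(0)$ letters and the convention $\varphi(\Lambda)=0$ are respected. If that translation is awkward, I would instead argue directly via the behavior matrix. Rows of $\mathcal{B}(\varphi_{f^{(d)}})$ indexed by $\alpha$ are rows of $\mathcal{B}(\varphi_f)$, masked by the condition $\|\alpha\beta\|\le d$. Grouping the indices $\alpha$ by their weight $v\le(d,\ldots,d)$, the finitely many weight classes give $\operatorname{rank}\mathcal{B}(\varphi_{f^{(d)}})\le (d+1)^m\operatorname{rank}\mathcal{B}(\varphi_f)$. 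Finiteness then follows from Theorem~\ref{thm:19.3}.
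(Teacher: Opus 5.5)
Your proposal is correct and is essentially the paper's proof. The paper also writes $\varphi_{f^{(d)}}$ as the Hadamard product of $\varphi_f$ with the characteristic series $S_d$ of $\{\alpha : \|\alpha\|\le d\}$, notes that $S_d$ is rational (e.g.\ $0^*10^*1+0^*2+0^*1$ for $m=1$, $d=2$), and cites Fliess for closure of rational series under Hadamard product; you instead prove that closure via the tensor-product automaton, and your behavior-matrix rank bound is a self-contained substitute for the citation.

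One caveat: ``kernels of degree at most $d$'' means $\|\alpha\|$ is the total degree $v_1+\cdots+v_m$ of the weight, not the componentwise bound $v\le(d,\ldots,d)$. The admissible set is still finite and downward closed, so your automaton and rank estimate go through unchanged. Grouping rows by total degree $0,\ldots,d$ even improves the factor to $d+1$, since the mask then depends only on the total degree of the row index.
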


\begin{proof}
In terms of the corresponding exponent series, $\varphi^{(d)}$ is the Hadamard or coefficientwise product of $\varphi$ and of $S_{d}$, where $S_{d}$ is the series with $a_{\alpha}=1$ iff $\|\alpha\| \leq d$ and zero otherwise. For any $d, S_{d}$ is rational (e.g., for $m=1, d=2, S_{d}=0^{*} 1 0^{*} 1+0^{*} 2+0^{*} 1$). The result is then an immediate consequence of the known fact that the Hadamard product of rational series is again rational (see e.g., \citealp{fliess1972sur}).
\end{proof}

\label{op:113}%
\section{Classes of Quasi-Reachable Realizations}\label{ch:6}
We study in this section the structure of various classes of realizations of a fixed polynomial response map $f$. The goal is to understand better the systems in each such class, as well as their interrelationships. As a corollary, we shall give a stronger version of the isomorphism theorem of canonical realizations Theorem~\ref{thm:11.5}.

Although part of the discussion could proceed in general, we shall restrict attention to quasi-reachable systems. With this restriction, the classes of interest become naturally endowed with a lattice structure, and the treatment is considerably simplified; results for more general realizations can be often obtained by restricting to the closure of the reachable set.

Unless otherwise stated, all systems in this section are quasi-reachable $k$-systems realizing an arbitrary but fixed polynomial response $f$.

\subsection{The Lattice $\mathrm{QR}(f)$}\label{sec:23}

In this section we do not impose any finiteness restrictions on $f$.

\begin{definition}\label{def:23.1}
\textit{We say that} $\Sigma_{1}$ \textit{dominates} $\Sigma_{2}$, \textit{and denote} $\Sigma_{2} \leq \Sigma_{1}$, \textit{if there exists a $k$-system morphism} $T: \Sigma_{1} \rightarrow \Sigma_{2}$.
\end{definition}

The above defines a pre-order among systems, which will become a partial order when isomorphic systems are identified.

\begin{lemma}\label{lem:23.2}
If $T_{i}: \Sigma_{1} \rightarrow \Sigma_{2}, i=1,2$, are morphisms, then $T_{1}=T_{2}$. Furthermore, the $T_{i}$ are dominating.
\end{lemma}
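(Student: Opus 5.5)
Both systems are quasi-reachable realizations of the same $f$, so I will compare $T_1$ and $T_2$ along the reachable set, where their values are forced, and then use density. Write $\Sigma_i=(X_i,P_i,h_i,x_i^{\#})$ with reachability maps $g_i$ and polynomial extensions $g_i^{\Omega}:\Omega\rightarrow X_i$ (Lemma~\ref{lem:8.2}).

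\textbf{Step 1: agreement on reachable states.} Any $k$-system morphism $T:\Sigma_1\rightarrow\Sigma_2$ satisfies $T\circ g_1=g_2$ on $U[z]$. I will check this by induction on the length $t$ of the input. For $t=0$ it is $T(x_1^{\#})=x_2^{\#}$. The inductive step is the commutation $T(P_1(x,u))=P_2(T(x),u)$, applied to $g_1(wu)=P_1(g_1(w),u)$. Hence $T_1\circ g_1=g_2=T_2\circ g_1$.

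\textbf{Step 2: extension by density.} Both $T_i\circ g_1^{\Omega}$ and $g_2^{\Omega}$ are polynomial maps $\Omega\rightarrow X_2$, hence continuous, and by Step~1 they agree on $U[z]$. Points of $X_2$ are closed, so the set where two continuous maps into $X_2$ coincide is closed. More cleanly, I can argue dually: for each $a$ in $A(X_2)$, the Volterra series $a\circ T_i\circ g_1^{\Omega}$ and $a\circ g_2^{\Omega}$ agree on the dense set $U[z]$ (Lemma~\ref{lem:6.6}), so they are equal, since the restriction $\psi\mapsto\psi\mid U[z]$ is one-to-one. Thus
\[
A(g_1^{\Omega})\circ A(T_1)=A(g_2^{\Omega})=A(g_1^{\Omega})\circ A(T_2).
\]
Because $\Sigma_1$ is quasi-reachable, $A(g_1^{\Omega})$ is one-to-one (Lemma~\ref{lem:9.2}). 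Hence $A(T_1)=A(T_2)$, and $T_1=T_2$ by Lemma~\ref{lem:3.7}.

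\textbf{Step 3: domination.} From $T\circ g_1^{\Omega}=g_2^{\Omega}$ we get
\[
T(X_1)\supseteq T(g_1^{\Omega}(\Omega))=g_2^{\Omega}(\Omega),
\]
and $g_2^{\Omega}(\Omega)$ is dense in $X_2$ by quasi-reachability of $\Sigma_2$ (Lemma~\ref{lem:9.2}(b)). Therefore $\overline{T(X_1)}=X_2$, so $T$ is dominating. Equivalently, $A(T)$ is injective because $A(g_2^{\Omega})=A(g_1^{\Omega})\circ A(T)$ is injective, and one can then apply Lemma~\ref{lem:3.11}(d).

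The only step needing care is Step~2. The key is to pass to the algebra side so that injectivity of $A(g_1^{\Omega})$ does the work; this avoids any Hausdorff-type argument about equalizers in the Zariski topology. The rest is routine induction.
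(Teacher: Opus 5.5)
Your proof is correct and follows the paper's route. Any morphism agrees with the reachability maps on reachable states (the argument of Lemma~\ref{lem:7.7}), the equality then extends by density, and domination follows from $T(X_1)\supseteq X_{2,R}$; your dual formulation via injectivity of $A(g_1^{\Omega})$ is a clean way to make the paper's ``by continuity'' rigorous. Do drop the aside that closedness of points makes the equalizer of two continuous maps closed: this is false in the Zariski topology (a bijection of $k$ swapping two points is Zariski-continuous and agrees with the identity on a dense set), and what your dual argument actually uses is that the equalizer of two \emph{polynomial} maps is the closed set $V(\{a\circ T_1-a\circ T_2 : a\in A(X_2)\})$.
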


\begin{proof}
Since $\Sigma_{1}$ is quasi-reachable, the abstractly canonical state-space $X_{\mathrm{ac}}$ is dense in $X_{1}$. Thus by argument as in Lemma~\ref{lem:7.7}, $T_{1}=T_{2}$ on $X_{a c}$. The equality follows by continuity. A similar argument proves the last statement.
\end{proof}

\label{op:114}%
By a slight abuse of notation, the same letter will be used for a system and for its isomorphism class. Let $\mathrm{QR}(f)$ denote the set of all isomorphism classes of quasi-reachable realizations of $f$; then $\mathrm{QR}(f)$ inherits the preorder $\leq$; in fact:

\begin{corollary}\label{cor:23.3}
$\mathrm{QR}(f)$ is partially ordered by $\leq$.
\end{corollary}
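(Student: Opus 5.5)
To show that $\leq$ is a partial order on $\mathrm{QR}(f)$, I would check reflexivity, transitivity, and antisymmetry, where antisymmetry is understood up to isomorphism. Reflexivity comes from the identity morphism. Transitivity holds because the composite of two $k$-system morphisms is again a $k$-system morphism. Both of these are immediate because $k$-systems with $k$-system morphisms form a category, as remarked after Definition~\ref{def:8.4}. They also respect isomorphism classes: if $\Sigma_1\cong\Sigma_1'$ and $\Sigma_2\cong\Sigma_2'$, composing with the isomorphisms carries a morphism $\Sigma_1\to\Sigma_2$ to one $\Sigma_1'\to\Sigma_2'$. So $\leq$ is well defined on $\mathrm{QR}(f)$.

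The substantive step is antisymmetry. Suppose $\Sigma_2\leq\Sigma_1$ and $\Sigma_1\leq\Sigma_2$, so there are morphisms $T:\Sigma_1\to\Sigma_2$ and $S:\Sigma_2\to\Sigma_1$. The composite $S\circ T$ is a morphism $\Sigma_1\to\Sigma_1$, and so is the identity $1_{X_1}$. Since $\Sigma_1$ is a quasi-reachable realization of $f$, the uniqueness part of Lemma~\ref{lem:23.2} gives $S\circ T=1_{X_1}$. In the same way, $T\circ S=1_{X_2}$.

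It follows that $T$ is a polynomial map with polynomial inverse $S$, and that it commutes with the transitions, outputs and initial states. Hence $T$ is a $k$-system isomorphism, so $\Sigma_1$ and $\Sigma_2$ define the same element of $\mathrm{QR}(f)$.

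I expect no real obstacle. The only point needing care is that Lemma~\ref{lem:23.2} applies to endomorphisms. It does: its hypotheses require only that the source system is quasi-reachable and that both systems realize $f$, and both conditions hold here.
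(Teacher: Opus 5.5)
Your proof is correct and matches the paper's argument. Antisymmetry follows by applying the uniqueness in Lemma~\ref{lem:23.2} to the endomorphisms $S\circ T$ and $T\circ S$, which forces both to be identities, so $T$ is an isomorphism. The paper leaves reflexivity, transitivity, and well-definedness on isomorphism classes implicit; your explicit checks of these are harmless additions.
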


\begin{proof}
If $T: \Sigma_{1} \rightarrow \Sigma_{2}$ and $S: \Sigma_{2} \rightarrow \Sigma_{1}$ are morphisms, then $T S: \Sigma_{2} \rightarrow \Sigma_{2}$ must be equal to the identity morphism, by Lemma~\ref{lem:23.2}. Similarly, $S T$ is the identity. So $T$ is an isomorphism.
\end{proof}

Recall that $\Sigma_{\text {free }}(f)$ is the system having the input space $\Omega$ as its state-space, and with transitions extending the concatenation operation on input sequences; see Lemma~\ref{lem:6.10}. By Lemma~\ref{lem:8.2}, the reachability map $g: U[z] \rightarrow X_{\Sigma}$ extends to a polynomial map $g^{\Omega}$ from $\Omega$ to $X_{\Sigma}$, for any $k$-system $\Sigma$. If $\Sigma$ realizes $f$, then $g$ induces an abstract-system morphism from the system with $X=U[z], P:=$ concatenation, and $h, x^{\#}$ as in $\Sigma_{\text {free }}(f)$, into $\Sigma$. Thus $g^{\Omega}$ induces a $k$-system morphism from $\Sigma_{\text {free }}(f)$ into $\Sigma$. Since on the other hand, by Lemma~\ref{lem:11.3}, the canonical realization $\Sigma_{f}$ is terminal among quasi-reachable ones, it follows that

\begin{proposition}\label{prop:23.4}
$\Sigma_{\text {free }}(f)$ is the (unique) largest, and $\Sigma_{f}$ the (unique) smallest, element of $\mathrm{QR}(f)$.
\end{proposition}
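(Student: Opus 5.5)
We need to show two things: that $\Sigma_{\text{free}}(f)$ is above every element of $\mathrm{QR}(f)$ and that $\Sigma_f$ is below every element, with uniqueness of such extremal elements coming from the partial order of Corollary~\ref{cor:23.3}. The plan is to produce, for an arbitrary quasi-reachable realization $\Sigma=(X,P,h,x^{\#})$ of $f$, a $k$-system morphism $\Sigma_{\text{free}}(f)\to\Sigma$ and a $k$-system morphism $\Sigma\to\Sigma_f$. We also need that both extremal systems lie in $\mathrm{QR}(f)$ at all. This holds because $\Sigma_{\text{free}}(f)$ has $g^{\Omega}=1_\Omega$, which is dominating. For $\Sigma_f$ it is part of Theorem~\ref{thm:11.5}.

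For the lower bound, $\Sigma$ is quasi-reachable and $\Sigma_f$ is algebraically observable, and both realize $f$. So Lemma~\ref{lem:11.3} directly gives a (unique) $k$-system morphism $T:\Sigma\to\Sigma_f$, that is, $\Sigma_f\le\Sigma$.

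For the upper bound, take $T:=g^{\Omega}:\Omega\to X$, the polynomial extension of the reachability map from Lemma~\ref{lem:8.2}. We must check the axioms of Definition~\ref{def:7.4}.
\begin{enumerate}
\item[(i)] The initial state is preserved, since $g^{\Omega}((0))=g(0)=x^{\#}$.
\item[(ii)] Outputs are compatible, since $h\circ g^{\Omega}=f^{\Omega}$. Both sides are polynomial maps $\Omega\to Y$ agreeing on $U[z]$ because $h\circ g=f$, so they agree on all of $\Omega$ by Lemma~\ref{lem:6.6} and continuity.
\item[(iii)] Transitions are compatible, that is, $g^{\Omega}\circ\delta_1=P\circ(g^{\Omega}\times 1_U)$ as maps $\Omega\times U\to X$. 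On $U[z]\times U$ this says $g(wu)=P(g(w),u)$, which is the recursive definition of $P^{(t)}$ combined with Lemma~\ref{lem:6.10}.
\end{enumerate}

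The main obstacle is extending identity (iii) from $U[z]\times U$ to $\Omega\times U$. For this I would argue that $U[z]\times U$ is dense in $\Omega\times U$. Dually, I need that if $c\in\Psi\otimes k[T_1,\dots,T_m]$ vanishes at all $(w,u)$ with $w\in U[z]$, then $c=0$. This follows as in Lemma~\ref{lem:1.8}(c): expand $c=\sum_\alpha \psi_\alpha T^\alpha$, use that $k$ is infinite to get $\psi_\alpha(w)=0$ for all $w\in U[z]$, and then conclude $\psi_\alpha=0$ by Lemma~\ref{lem:6.6}. Two continuous maps into $X$ agreeing on a dense set coincide, since points are closed, and this gives (iii). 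Hence $\Sigma\le\Sigma_{\text{free}}(f)$.

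Finally, uniqueness: if another element were also largest (or smallest), the two would dominate each other, hence be isomorphic by Corollary~\ref{cor:23.3}.
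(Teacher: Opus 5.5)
Your proof is correct and is essentially the paper's argument. As in the paper, $g^{\Omega}$ furnishes the morphism $\Sigma_{\text{free}}(f)\to\Sigma$, Lemma~\ref{lem:11.3} furnishes $\Sigma\to\Sigma_f$, and uniqueness is antisymmetry of the order; you also make explicit the density extension from $U[z]\times U$ to $\Omega\times U$, which the paper leaves implicit.

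One small fix: closedness of points is not by itself why two maps agreeing on a dense set coincide, since this can fail in a $T_1$ space. The right reason is that for polynomial $\alpha,\beta\colon Z\to X$ the set $\{z:\alpha(z)=\beta(z)\}=V(\{a\circ\alpha-a\circ\beta : a\in A(X)\})$ is closed, because $A(X)$ separates points. Your vanishing argument in $\Psi\otimes k[T_1,\dots,T_m]$ already gives this directly when applied to $A(g^{\Omega}\circ\delta_1)(a)-A(P\circ(g^{\Omega}\times 1_U))(a)$.
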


If $T: \Sigma_{1} \rightarrow \Sigma_{2}$ is a dominating $k$-system morphism, $A(T)$ gives $A\left(X_{2}\right)$ as a subalgebra of $A\left(X_{1}\right)$, with "co-transitions" $A\left(P_{1}\right)$ and "co-output map" $A\left(h_{1}\right)$ extending $A\left(P_{2}\right), A\left(h_{2}\right)$. Conversely, given any $k$-subalgebra $A$ of $A\left(X_{1}\right)$ such that
\begin{equation}\label{eq:23.5}
A \text{ includes } \mathcal{A}_{f}
\end{equation}
(note that $\mathcal{A}_{f}$ is a subalgebra of $A\left(X_{1}\right)$, by the quasi-reachability assumption) and
\begin{equation}\label{eq:23.6}
A\left(P_{1}\right)(A) \text{ is included in } A\left[T_{1}, \ldots, T_{m}\right] ,
\end{equation}
\label{op:115}then the restriction of $A\left(P_{1}\right)$ to $A$, together with the restriction of $A\left(x_{1}^{\#}\right)$ to $A$ and $A\left(h_{1}\right)$ (seen as a homomorphism into $A$), define a system $\Sigma_{2}$ with $A\left(X_{2}\right)=A$ and $\Sigma_{2} \leq \Sigma_{1}$. Furthermore, $A$ determines a \textit{unique} such $\Sigma_{2}$ (up to isomorphism), since $A\left(P_{2}\right), A\left(h_{2}\right), A\left(x_{2}^{\#}\right)$ are given necessarily by the above procedure.

Thus, the (isomorphism classes of) systems less or equal than $\Sigma_{1}$ are in a one-to-one correspondence with the algebras satisfying \eqref{eq:23.5} and \eqref{eq:23.6}. Furthermore, this correspondence preserves orderings, when the subalgebras $A$ are ordered by inclusion. But \eqref{eq:23.5} and \eqref{eq:23.6} are preserved under intersections; similarly, if a family $A_{i}$ satisfies \eqref{eq:23.5} and \eqref{eq:23.6} then the algebra generated by the union of the $A_{i}$ again satisfies these properties. Translating these facts into the partial order for systems, and applying them for $\Sigma_{1}=\Sigma_{\text {free }}(f)$ :

\begin{theorem}\label{thm:23.7}
$\mathrm{QR}(f)$ is a complete lattice.
\end{theorem}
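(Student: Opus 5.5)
The plan is to transfer the lattice structure from subalgebras of $\Psi$ to $\mathrm{QR}(f)$. The discussion preceding the theorem does most of the work, and it remains to make two points precise. The first is that every element of $\mathrm{QR}(f)$ lies below $\Sigma_{\text{free}}(f)$, by Proposition~\ref{prop:23.4}. The second is that, for $\Sigma_1=\Sigma_{\text{free}}(f)$, the correspondence $\Sigma_2\mapsto A(T)(A(X_2))\subseteq\Psi$ is an order isomorphism onto the poset $\mathcal{S}$ of $k$-subalgebras $A$ of $\Psi$ satisfying \eqref{eq:23.5} and \eqref{eq:23.6}, ordered by inclusion. Here $T:\Sigma_{\text{free}}(f)\to\Sigma_2$ is the unique morphism given by Lemma~\ref{lem:23.2}, and it is dominating, so $A(T)$ is injective by Lemma~\ref{lem:3.11}(d).

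\textbf{Well-definedness and injectivity.} If $\Sigma_2\cong\Sigma_2'$, the composite of $T$ with the isomorphism is again a morphism from $\Sigma_{\text{free}}(f)$. By the uniqueness in Lemma~\ref{lem:23.2} it equals $T'$, so the two images coincide. Conversely, equal images give $A(X_2)\cong A(X_2')$ compatibly with the co-transitions, co-outputs and initial states, and dualizing via Corollary~\ref{cor:3.8} gives a system isomorphism.

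\textbf{Surjectivity.} Given $A\in\mathcal{S}$, the system $\Sigma_A$ built from it in the text has state space $X(A)$. It realizes $f$ because its output map is the restriction of $A(f^\Omega)$, which takes values in $\mathcal{A}_f\subseteq A$ by \eqref{eq:23.5}. It is quasi-reachable because the dual of the inclusion $A\subseteq\Psi$ is dominating (Lemma~\ref{lem:3.11}(d)) and is a morphism from $\Sigma_{\text{free}}(f)$, whose reachable set is dense. One point needs care: the dual of the inclusion really is the reachability extension $g^\Omega$ of $\Sigma_A$. To see this, note that both maps are system morphisms out of $\Sigma_{\text{free}}(f)$, so they agree by Lemma~\ref{lem:23.2}.

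\textbf{Order isomorphism.} Suppose $\Sigma_2\le\Sigma_3$ via $S$. Then $T_2=S\circ T_3$, again by uniqueness, so $\operatorname{im}A(T_2)=A(T_3)(\operatorname{im}A(S))\subseteq\operatorname{im}A(T_3)$. Conversely, an inclusion $A_2\subseteq A_3$ of elements of $\mathcal{S}$ dualizes to a polynomial map $X(A_3)\to X(A_2)$. This map intertwines the transitions, because the co-transition of $A_2$ is the restriction of that of $A_3$, both being restrictions of $\theta_1$. It also preserves outputs and initial states, so it is a morphism.

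\textbf{Completeness of $\mathcal{S}$.} Let $\{A_i\}$ be a family in $\mathcal{S}$. The intersection $\bigcap A_i$ contains $\mathcal{A}_f$. It satisfies \eqref{eq:23.6} because $\theta_1(\bigcap A_i)\subseteq\bigcap\bigl(A_i[T]\bigr)=(\bigcap A_i)[T]$, where the last equality holds since coefficients are read off uniquely in $\Psi[T]$. The algebra $B$ generated by $\bigcup A_i$ also lies in $\mathcal{S}$, because $\theta_1$ is a homomorphism and $B[T]$ is closed under sums and products. The empty family is covered by $\Psi$ and $\mathcal{A}_f$, matching Proposition~\ref{prop:23.4}. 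Hence $\mathcal{S}$ is a complete lattice, and therefore so is $\mathrm{QR}(f)$.

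I expect the only delicate step to be surjectivity, namely checking that $\Sigma_A$ is a genuine quasi-reachable realization, together with the identification of $A(X(A))$ with $A$. The latter requires $A$ to be $k$-reduced, which holds automatically since $A$ is a subalgebra of the $k$-reduced algebra $\Psi$.
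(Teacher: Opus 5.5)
Your proposal is correct and follows the paper's own argument. Via the unique dominating morphism out of $\Sigma_{\text{free}}(f)$, you identify $\mathrm{QR}(f)$ with the inclusion-ordered family of subalgebras of $\Psi$ satisfying \eqref{eq:23.5} and \eqref{eq:23.6}, and you use that this family is closed under intersections and under the algebra generated by a union. You add details the paper leaves implicit, namely well-definedness on isomorphism classes, that $\Sigma_A$ is a quasi-reachable realization, $k$-reducedness of $A$, and the empty family, and these all check out. The one routine check you leave unstated is that $A(T)(A(X_2))$ itself satisfies \eqref{eq:23.5} and \eqref{eq:23.6}; this holds because $T$ intertwines transitions and $\mathcal{A}_f$ is generated by the pullbacks $h_j^w\circ T$.
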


Although the technicalities are very different, the above is formally very similar to the result for linear responses over rings presented in \citet{sontag1978split}.

\subsection{Examples Using the Lattice Construction}\label{sec:24}

The join $\Sigma_{1} \vee \Sigma_{2}$, (corresponding to the algebra generated by $A_{1}$ and $A_{2}$) can be described somewhat more explicitly than above. In fact, $\Sigma_{1} \vee \Sigma_{2}$ is the "fibre product" of $\Sigma_{1}$ and $\Sigma_{2}$, when each system $\Sigma_{i}$ is interpreted as an ordered pair ($\Sigma_{i}, T_{i}$), with $T_{i}: \Sigma_{i} \rightarrow \Sigma_{f}$ the (unique) $k$-system morphism into the canonical realization. In other words, $\Sigma_{1} \vee \Sigma_{2}$ is such that for every pair of $k$-system morphisms $\Sigma \rightarrow \Sigma_{i}$ there is a unique $k$-system morphism $\Sigma \rightarrow \Sigma_{1} \vee \Sigma_{2}$ such that the compositions $\Sigma \rightarrow \Sigma_{1} \vee \Sigma_{2} \rightarrow \Sigma_{i}$ are the original morphisms (the second morphism being the natural one giving the dominance $\Sigma_{1} \vee \Sigma_{2} \geq \Sigma_{i}$). Thus $\Sigma_{1} \vee \Sigma_{2}$ can be explicitly defined as follows: $\Sigma$ has as its state-space a closed subset of $X_{1} \times X_{2}$ :
\label{op:116}%
\begin{equation}\label{eq:24.1}
X=\overline{\left\{\left(g_{1}(\omega), g_{2}(\omega)\right), \omega \text { in } U^{*}\right\}} ,
\end{equation}
and $P\left(\left(x_{1}, x_{2}\right), \omega\right):=\left(P_{1}\left(x_{1}, \omega\right), P_{2}\left(x_{2}, \omega\right)\right)$, initial state $\left(x_{1}^{\#}, x_{2}^{\#}\right)$, and $H\left(x_{1}, x_{2}\right):=h_{1}\left(x_{1}\right)$ (or $\left.h_{2}\left(x_{2}\right)\right)$.

In the following examples initial states are zero, and $U=Y=k$, unless otherwise stated:

\begin{example}\label{ex:24.2}
Let $\Sigma_{1}$ be (with $X=k$)
\[
\begin{aligned}
& x(t+1)=u^{2}(t) \\
& y(t)=x^{3}(t)
\end{aligned}
\]
and $\Sigma_{2}$ be (with $X=k$)
\[
\begin{aligned}
& x(t+1)=u^{3}(t) \\
& y(t)=x^{2}(t)
\end{aligned}
\]
Both realize the same response map $f$ with canonical realization (which is also their meet):
\[
\begin{aligned}
& x(t+1)=u^{6}(t) \\
& y(t)=x(t)
\end{aligned}
\]
Again, $X=k$. Their join is the system whose state-space is the "cusp" $\left\{\left(x_{1}, x_{2}\right)\right.$ in $k^{2}$ with $\left.x_{1}^{3}=x_{2}^{2}\right\}$ and
\[
\begin{aligned}
& x_{1}(t+1)=u^{2}(t) \\
& x_{2}(t+1)=u^{3}(t) \\
& y(t)=x_{1}^{3}(t),
\end{aligned}
\]
\label{op:117}%
which is more complex than the original systems. The form of the join follows from the above remarks. We now prove that $\Sigma_{1} \wedge \Sigma_{2}$ is the same as $\Sigma_{f}$. Since $\Sigma_{1}$ and $\Sigma_{2}$ both have dimension one, the one-step reachability maps $u \mapsto u^{2}$ and $u \mapsto u^{3}$ permit identifying the algebra $A_{1}$ of $\Sigma_{1}$ with $k\left[T^{2}\right]$ and the algebra $A_{2}$ of $\Sigma_{2}$ with $k\left[T^{3}\right]$. Under these identifications the output map $x \mapsto x^{3}$ of $\Sigma_{1}$ (or equivalently, that of $\Sigma_{2}$) dualizes to
\[
A(h): A(Y)=k[L] \rightarrow k[T]: L \mapsto T^{6}
\]
Thus $\mathcal{A}_{f}=k\left[T^{6}\right]$, which is also the intersection of $A_{1}$ and $A_{2}$; thus (translating in terms of systems), $\Sigma_{f}$ is indeed the meet of the $\Sigma_{i}$.
\end{example}

\begin{example}\label{ex:24.3}
Here $\Sigma_{1}$ and $\Sigma_{2}$ have as state-space the closed set consisting of those vectors $\left(x_{1}, x_{2}, x_{3}, x_{4}\right)$ in $k^{4}$ with $x_{1} x_{3}=x_{2}^{2}$, and input set $U=k^{2}$. The equations are, for $\Sigma_{1}$ :
\[
\begin{aligned}
& x_{1}(t+1)=u(t) \\
& x_{2}(t+1)=u(t) v(t) \\
& x_{3}(t+1)=u(t) v(t)^{2} \\
& x_{4}(t+1)=x_{2}(t)+x_{1}(t) x_{2}(t) u(t)+x_{3}(t) v(t) \\
& y(t)=x_{4}(t)
\end{aligned}
\]
and for $\Sigma_{2}$ :
\[
\begin{aligned}
& x_{1}(t+1)=v(t) \\
& x_{2}(t+1)=u(t) v(t) \\
& x_{3}(t+1)=u(t)^{2} v(t) \\
& x_{4}(t+1)=x_{2}(t)+x_{3}(t) u(t)+x_{1}(t) x_{2}(t) v(t) \\
& y(t)=x_{4}(t) .
\end{aligned}
\]
\label{op:118}%
Their meet is the canonical realization $\Sigma_{f}$ with $X_{f}=$ all 4-vectors with $x_{1}^{3}=x_{2} x_{3}$, and:
\[
\begin{aligned}
& x_{1}(t+1)=u(t) v(t) \\
& x_{2}(t+1)=u^{2}(t) v(t) \\
& x_{3}(t+1)=u(t) v(t)^{2} \\
& x_{4}(t+1)=x_{1}(t)+x_{2}(t) u(t)+x_{3}(t) v(t) \\
& y(t)=x_{4}(t) .
\end{aligned}
\]
Here the join turns out to be \textit{simpler} than all of the above: it is the system with $X=k^{3}$ and
\begin{equation}\label{eq:24.4}
\begin{aligned}
& x_{1}(t+1)=u(t) \\
& x_{2}(t+1)=v(t) \\
& x_{3}(t+1)=x_{1}(t) x_{2}(t)+x_{1}(t)^{2} x_{2}(t) u(t)+x_{1}(t) x_{2}(t)^{2} v(t) \\
& y(t)=x_{3}(t) .
\end{aligned}
\end{equation}
Since both $\Sigma_{i}$ are quasi-reachable in two steps, the calculations of $\Sigma_{1} \vee \Sigma_{2}$ and $\Sigma_{1} \wedge \Sigma_{2}$ are straightforward when carried out in $A\left(U^{2}\right)=$ polynomial ring in 4 variables. Alternatively, we may use the above fibre product construction for $\Sigma_{1} \vee \Sigma_{2}$. Its state-space becomes then the variety in $k^{8}$ given by the set of those ($x_{i}$) satisfying $x_{2}=x_{1} x_{5}, x_{3}=x_{8} x_{5}^{2}, x_{4}=x_{8}, x_{6}=x_{1} x_{5}$, and $x_{7}=x_{1}^{2} x_{5}$. Thus the projection $k^{8} \rightarrow k^{3}$ which sends $\left(x_{i}\right)$ onto $\left(x_{1}, x_{5}, x_{4}\right)$ gives an isomorphism between this variety and $k^{3}$, and equations \eqref{eq:24.4} result.
\end{example}

\begin{example}\label{ex:24.5}
We shall consider the following subalgebras of $k\left[\eta_{1}, \eta_{2}\right]$ :
\[
A_{i}:=k\left[\eta_{1}, \eta_{1} \eta_{2}, \eta_{1} \eta_{2}^{2}, \ldots, \eta_{1} \eta_{2}^{i-1}, \eta_{2}^{i}\right] .
\]
\label{op:119}%
Note that $A_{r}$ includes $A_{s}$ whenever $r$ divides $s$. Thus $A_{1}=k\left[\eta_{1}, \eta_{2}\right]$ is the algebra of the quasi-reachable system having $X=k^{2}$, initial state $(1,1)^{\prime}, y(t)=x_{1}(t)-1$, and $P_{1}$ given by
\[
\begin{aligned}
A\left(P_{1}\right) & : k\left[\eta_{1}, \eta_{2}\right] \mapsto k\left[\eta_{1}, \eta_{2}\right][T] \\
& : \eta_{1} \mapsto \eta_{1} \eta_{2}, \quad \eta_{2} \mapsto \eta_{2}(T+1) .
\end{aligned}
\]
Since $\mathcal{A}_{f}=k\left[\eta_{1} \eta_{2}^{t}, t \geq 0\right]$, the algebras $A_{i}$ satisfy the conditions in \eqref{eq:23.5} - \eqref{eq:23.6}. We call $\Sigma_{i}$ the system corresponding to $A_{i}$, and $f$ the common response of all these systems. Note that $\mathcal{A}_{f}$ is the intersection of all the $A_{i}$, or just of $A_{1}, A_{2}, A_{4}, A_{8}, \ldots$. In terms of systems, we have a \textit{chain of polynomial systems}
\begin{equation}\label{eq:24.6}
\Sigma_{1}>\Sigma_{2}>\Sigma_{4}>\Sigma_{8}>\ldots
\end{equation}
whose meet is $\Sigma_{f}$, a nonpolynomial system. Moreover, the systems $\Sigma_{i}$ have an interesting cofinality property in $\mathrm{QR}(f)$ : for any noncanonical $\Sigma$ in $\mathrm{QR}(f)$, either (1) $\Sigma \wedge \Sigma_{1}=\Sigma_{f}$, or (2) there is some $i$ with $\Sigma \wedge \Sigma_{1} \geq \Sigma_{i}$.

Indeed, let $\Sigma$ be noncanonical, $A=A\left(X_{\Sigma}\right), P$ the transition map of $\Sigma$. Since $\mathcal{A}_{f}$ is naturally included in $A$, we may think of $\mathcal{Q}_{f}=k\left(\eta_{1}, \eta_{2}\right)$ as a subfield of the quotient field $Q(A)$. Working in the latter, we have that $A\left(P_{f}\right)$ extends both to the algebra $A_{1}=k\left[\eta_{1}, \eta_{2}\right]$ (as $A\left(P_{1}\right)$) and to the algebra $A$ (as $A(P)$). Since $Q\left(\mathcal{A}_{f}\right)=Q\left(A_{1}\right)$, both $A\left(P_{1}\right)$ and $A(P)$ coincide on $A_{1} \cap A$. If the latter is $\mathcal{A}_{f}$, then (1) holds. Else, since $A_{1} \cap A$ satisfies \eqref{eq:23.6}, we may replace $A$ by its subalgebra $A_{1} \cap A$, and so assume that $A$ is included in $A_{1}= k\left[\eta_{1}, \eta_{2}\right]$. Since $\Sigma$ is noncanonical, $A \neq \mathcal{A}_{f}$. Thus there is some $a$ in $A$ with
\[
a=c\left(\eta_{2}\right)+b
\]
where $c\left(\eta_{2}\right)$ is a polynomial in $\eta_{2}$ alone, of positive degree, and where
\label{op:120}$b$ is in $\mathcal{A}_{f}$. So $c\left(\eta_{2}\right)=a-b$ is also in $A$. Since $A(P)(A)$ is included in $A[T]$, applying $A(P)$ to $c\left(\eta_{2}\right)=\sum d_{i} \eta_{2}^{i}$ results in $\sum d_{i} \eta_{2}^{i} T^{i}$. By the main lemma~\ref{lem:10.7}, all $d_{i} \eta_{2}^{i}$ are in $A$. Thus $A$ contains both $\mathcal{A}_{f}$ and some $\eta_{2}^{i}$, i.e., it contains $\mathcal{A}_{f}\left[\eta_{2}^{i}\right]=A_{i}$, and $\Sigma$ dominates $\Sigma_{i}$, as wanted.
\end{example}

\begin{remark}\label{rem:24.7}
There is an interesting consequence of the above cofinality property. The "categorical" approach frequently suggested for a canonical realization theory consists in defining "canonical" as "final in the category of all 'reachable' realizations" (for a suitable notion of reachability), instead of via a direct definition of observability; see for example \citet{arbib1974machines}. Using "quasi-reachable" as our notion of reachability, this means that the canonical system should be the smallest element in the lattice $\mathrm{QR}(f)$, i.e., this alternative definition would result in the \textit{same} realization $\Sigma_{f}$ constructed in Section~\ref{ch:3}. One could ask, however, whether it is possible to obtain a "canonical realization theory" (in the present sense) in the context of \textit{polynomial} systems, i.e.: is there a polynomial realization which is smallest possible among all polynomial realizations? The above example provides a negative answer to this question: any such realization $\Sigma$ for the above $f$ would be either incomparable to $\Sigma_{1}$ (hence, not less than it) or it would be greater than one of the polynomial systems $\Sigma_{i}$, and hence not minimal either.
\end{remark}

\subsection{Some Relevant Sublattices}\label{sec:25}

The lattice $\mathrm{QR}(f)$ is too "large," in that it contains realizations of arbitrary dimensions. Certain sublattices described below are much more interesting; it is a remarkable fact that there seems to be no way to study any of these lattices without in some way first introducing $\mathrm{QR}(f)$. In this section, $f$ will be assumed to be finitely realizable.

\begin{definition}\label{def:25.1}
$\mathrm{MD}(f)$ \textit{denotes the (isomorphism classes of) minimal-dimensional realizations of} $f$, \textit{viewed as a partially-ordered subset of} $\mathrm{QR}(f)$.
\end{definition}

\label{op:121}%

\begin{theorem}\label{thm:25.2}
$\mathrm{MD}(f)$ is a complete sublattice of $\mathrm{QR}(f)$.
\end{theorem}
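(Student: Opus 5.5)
The plan is to work entirely through the algebra correspondence of Section~\ref{sec:23}. Isomorphism classes in $\mathrm{QR}(f)$ correspond to subalgebras $A$ of $\Psi=A(\Omega)$ satisfying \eqref{eq:23.5} and \eqref{eq:23.6}. The meet of a family then corresponds to the intersection of the algebras, and the join to the algebra they generate. In this language a realization $\Sigma$ is in $\mathrm{MD}(f)$ exactly when $\operatorname{trdeg} A(X_\Sigma)=\operatorname{dim}\Sigma_f=\operatorname{trdeg}\mathcal{A}_f=:n$, which is finite by Theorem~\ref{thm:13.2}. This uses the remark after Definition~\ref{def:13.3}, that $\Sigma$ is minimal iff $\operatorname{dim}\Sigma=\operatorname{dim}\Sigma_{f_\Sigma}$. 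All algebras involved sit in the integral domain $\Psi$ and contain $\mathcal{A}_f$, so the minimal ones are precisely those subalgebras $A\supseteq\mathcal{A}_f$ that are \emph{algebraic} over $\mathcal{A}_f$. The proof therefore reduces to showing that algebraicity over $\mathcal{A}_f$ is preserved under arbitrary intersections and arbitrary generated subalgebras.

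The meet is easy. If every $A_i$ is algebraic over $\mathcal{A}_f$, then $\bigcap A_i$ contains $\mathcal{A}_f$ and is a subalgebra of any single $A_{i_0}$, so it is algebraic over $\mathcal{A}_f$. Its transcendence degree is therefore $n$, and the meet lies in $\mathrm{MD}(f)$. The empty meet is $\Sigma_{\mathrm{free}}(f)$, which is not minimal in general. So "complete sublattice" has to be read for nonempty families, or with top element $\bigvee\mathrm{MD}(f)$; I would state this convention explicitly.

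The join is the step where care is needed. The algebra $B$ generated by $\bigcup A_i$ consists of polynomial expressions in finitely many elements, each taken from some $A_i$. Every such element is algebraic over $\mathcal{A}_f$. In the field $Q(\Psi)$, the elements algebraic over $Q(\mathcal{A}_f)$ form a subfield (the relative algebraic closure), so sums and products of algebraic elements are again algebraic. Hence $B$ is algebraic over $\mathcal{A}_f$ and $\operatorname{trdeg}B=n$. Conditions \eqref{eq:23.5}–\eqref{eq:23.6} hold for $B$ by Section~\ref{sec:23}, so $\bigvee\Sigma_i\in\mathrm{MD}(f)$. This applies in particular to the join of all of $\mathrm{MD}(f)$, which gives a largest minimal realization.

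The main obstacle is the identification of "minimal" with "$A$ algebraic over $\mathcal{A}_f$". This relies on formula \eqref{eq:4.1}, $\operatorname{trdeg}A=\operatorname{trdeg}\mathcal{A}_f+\operatorname{trdeg}_{\mathcal{A}_f}A$, with $n$ finite, so that equality of transcendence degrees forces $\operatorname{trdeg}_{\mathcal{A}_f}A=0$. It also relies on the fact that the lattice operations of Theorem~\ref{thm:23.7} really are intersection and generated subalgebra inside the fixed ambient algebra $\Psi$, which Section~\ref{sec:23} provides by taking $\Sigma_1=\Sigma_{\mathrm{free}}(f)$.
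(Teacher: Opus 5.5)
Your proof is correct and follows the paper's own argument, with the details the paper leaves implicit written out: $\mathrm{MD}(f)$ corresponds to the subalgebras of $\Psi$ that satisfy \eqref{eq:23.5}--\eqref{eq:23.6} and are algebraic over $\mathcal{A}_f$, and this property is preserved under intersections and generated subalgebras. Your caveat about the empty meet sharpens the paper's terse ``this is again a complete lattice'': that meet is the top element $\Sigma_{\mathrm{free}}(f)$, whose algebra $\Psi$ has infinite transcendence degree, so it lies outside $\mathrm{MD}(f)$, and the sublattice closure holds only for nonempty families, although $\mathrm{MD}(f)$ is still a complete lattice in its own right.
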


\begin{proof}
Minimal realizations correspond to those subalgebras $A$ of the algebra of Volterra series which satisfy \eqref{eq:23.5} and \eqref{eq:23.6} together with the additional condition that $A$ is algebraic over $\mathcal{A}_{f}$. This is again a complete lattice.
\end{proof}

\begin{remark}\label{rem:25.3}
By Lemma~\ref{lem:9.3}, if $\Sigma$ is a realization of dimension $n$ then the $n$-step reachability map is dominating. By the arguments in Proposition~\ref{prop:12.12}, it follows that two minimal realizations are isomorphic if and only if $A\left(g_{n}\right)(A)$ is the same subalgebra of $A\left(U^{n}\right)$ for both of them, where $n=\operatorname{dim} \mathcal{A}_{f}$. This permits calculations to be carried out explicitly, in $A\left(U^{n}\right)$.
\end{remark}

Realizations in $\mathrm{MD}(f)$ are characterized by the fact that their observation fields are algebraic over the canonical observation field $\mathcal{Q}_{f}$. (Note that the natural inclusion of the observation algebra $\mathcal{A}_{f}=A\left(X_{f}\right)$ in $\mathcal{A}(\Sigma)$ extends to an inclusion of $\mathcal{Q}_{f}$ in $\mathcal{Q}(\Sigma)$, for any quasi-reachable realization $\Sigma$). Another important subclass of realizations is:

\begin{definition}\label{def:25.4}
\textit{A realization} $\Sigma$ \textit{of} $f$ \textit{is quasi-canonical iff} $\mathcal{Q}(\Sigma)$ \textit{is equal to} $\mathcal{Q}_{f}$. \textit{The poset of quasi-canonical realizations is} $\mathrm{QC}(f)$.
\end{definition}

A dominating $k$-space morphism $T: X \rightarrow Z$ is birational when $A(Z)$ has the same quotient field as $A(X)$, (identifying via $A(T)$). The meaning of Definition~\ref{def:25.4} will be clarified by the algebraic:

\begin{lemma}\label{lem:25.5}
Let $X, Z$ be almost-varieties, $T: X \rightarrow Z$ dominating. Assume that the field $k$ is algebraically closed and has characteristic zero. Then $T$ is birational if and only if there is a (Zariski) open set $Z_{1}$ in $Z$ such that the fibre $T^{-1}(z)$ has precisely one element, for each $z$ in $Z_{1}$.
\end{lemma}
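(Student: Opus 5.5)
The plan is to reduce to the case of irreducible varieties and then apply Theorem~\ref{thm:4.6}. Both directions are local on dense open sets, so we may shrink $X$ and $Z$ freely. First, I note that birationality forces $\operatorname{trdeg} A(X)=\operatorname{trdeg} A(Z)$. Conversely, if fibres over a dense open set are singletons, then Theorem~\ref{thm:4.6}(f) gives fibres of dimension $n-m=0$ there, so again $n=m$. Also, $X$ is irreducible in both directions: under birationality $A(X)$ is a domain, and in the converse case we restrict attention to an irreducible component of $X$ dominating $Z$.

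With $n=m$, Theorem~\ref{thm:4.6} provides principal open sets $X_1\subseteq X$ and $Y_1=Z_1\subseteq Z$, which are varieties, together with $g=f_1\colon X_1\to Z_1$. By parts (c) and (d), $g$ is surjective with every fibre of cardinality exactly $s$. Here $s$ is the separable degree of $Q(A(X))$ over $Q(A(Z))$, read off from the construction in the proof of Theorem~\ref{thm:4.6}, where $A'=a^{-1}A(Z)$ since $n-m=0$. Because $k$ has characteristic zero, every algebraic extension is separable, so $s=[Q(A(X)):Q(A(Z))]$.

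For ``only if'': birational means the degree is $1$, so $s=1$, and the fibres of $T|_{X_1}$ over $Z_1$ are singletons. What remains is to control points of $X$ outside $X_1$ lying over $Z_1$. For this I would shrink $Z_1$ further. The closure of $T(X\setminus X_1)$ is a proper closed subset of $Z$: $X\setminus X_1$ is a proper closed subset of an irreducible almost-variety, so its dimension is less than $n$, and the closure of its image has dimension at most that, which is less than $m$. Removing this closure from $Z_1$ yields an open set over which the whole fibre $T^{-1}(z)$ is a singleton.

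For ``if'': over $Z_1\cap Y_1$, which is nonempty by irreducibility of $Z$, fibres of $g$ have exactly $s$ elements and also at most one. Hence $s=1$, the extension degree is $1$, and the quotient fields coincide.

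The main obstacle is justifying that $s$ equals the full field degree, which needs the fibre count of Theorem~\ref{thm:4.6}(d) together with separability from characteristic zero. A secondary difficulty is the dimension argument for the exceptional set $X\setminus X_1$ in the almost-variety setting. For that step I would use Lemma~\ref{lem:4.5} together with the fact that a dominating map cannot raise transcendence degree.
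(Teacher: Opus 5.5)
Your proposal is correct and follows essentially the paper's route: apply Theorem~\ref{thm:4.6}, get $n=m$, identify $s$ with the separable degree of $Q(A(X))$ over $Q(A(Z))$, and use characteristic zero. The only difference is the source of the fibre count: the paper takes it from Dieudonn\'e after making $X_1,Z_1$ normal, while you read it off Theorem~\ref{thm:4.6}(c),(d). Your extra step in the ``only if'' direction, deleting $\overline{T(X\setminus X_1)}$ so that the whole fibre $T^{-1}(z)$ (not just that of $T|_{X_1}$) is a singleton, covers a point the paper's proof passes over, and it is sound; the needed input is Lemma~\ref{lem:4.2} (a nonzero prime of a domain of finite transcendence degree gives a quotient of strictly smaller degree), applied to $X\setminus X_1=V(c)$ with $c\neq 0$.
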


\begin{proof}
The argument is essentially that in Theorem~\ref{thm:4.6}. By \citet[Section 5.3]{dieudonne1974cours},\label{op:122} the varieties $X_{1}, Z_{1}$ can be chosen to be normal (i.e., $A\left(X_{1}\right), A\left(Z_{1}\right)$ are integrally closed). If $T$ is birational, $n=m$ in Theorem~\ref{thm:4.6} and the restriction map $X_{1} \rightarrow Z_{1}$ is finite and onto; furthermore, $s=$ cardinality of fibres $=1$, by \citet[Prop 5.3.2]{dieudonne1974cours}. Conversely, if fibres have generically a single point then the argument in Theorem~\ref{thm:4.6} proves that $n=m$, so $Q(X)$ is algebraic over $Q(Z)$, with separable degree one; since char $k=0$, they are equal.
\end{proof}

The above is a straightforward generalization of a result well-known for varieties. Since $\Sigma_{f}$ may be nonpolynomial, however, the almost-variety case is needed in order to conclude:

\begin{proposition}\label{prop:25.6}
Let $k$ be as in Lemma~\ref{lem:25.5}. The (quasi-reachable) almost-polynomial system $\Sigma$ is in $\mathrm{QC}(f)$ if and only if there exists an open (hence dense) subset $X_{1}$ of its state-space $X$ such that no two states in $X_{1}$ are indistinguishable.
\end{proposition}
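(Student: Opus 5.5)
The plan is to reduce the statement to Lemma~\ref{lem:25.5}, applied to the canonical morphism into $\Sigma_f$. Since $\Sigma$ is quasi-reachable and realizes $f$, Proposition~\ref{prop:10.4} (together with Lemma~\ref{lem:11.3}) gives a dominating $k$-system morphism $T:\Sigma\to\Sigma_f$. Its transpose $A(T)$ identifies $\mathcal{A}_f=A(X_f)$ with the subalgebra $\mathcal{A}(\Sigma)$ of $A(X)$, which is the observation algebra of $\Sigma$. The quotient field of $A(X_f)$ is $\mathcal{Q}_f$. As computed after Definition~\ref{def:13.5}, the fibres of $T$ are exactly the observability classes: $\operatorname{Obs}(x)=T^{-1}(T(x))$.

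First I would make the identification precise. The condition $\Sigma\in\mathrm{QC}(f)$ is to be read as saying that the quotient field of $A(X_f)$ equals the quotient field of $A(X)$, where the comparison is made through $A(T)$. That is exactly the statement that $T$ is birational in the sense defined before Lemma~\ref{lem:25.5}. The hypotheses of that lemma hold: $X$ is an almost-variety by assumption. $X_f$ is an almost-variety because it is dominated by $X$: composing a dominating map from a variety onto $X$ with $T$ gives a dominating map from a variety onto $X_f$ (cf.\ Theorem~\ref{thm:13.2}(c), since $f$ is then finitely realizable). $T$ is dominating, and $k$ satisfies the standing assumptions of Lemma~\ref{lem:25.5}. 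Lemma~\ref{lem:25.5} therefore says that $\Sigma\in\mathrm{QC}(f)$ if and only if there is a nonempty open $Z_1\subseteq X_f$ over which the fibres of $T$ are singletons.

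Second, I would translate between fibres and indistinguishability. Suppose such a $Z_1$ exists, and set $X_1:=T^{-1}(Z_1)$. This set is open because $T$ is continuous, and it is nonempty because $T$ is dominating and $X_f$ is irreducible, so the dense image $T(X)$ meets $Z_1$. Since $X$ is irreducible (it is quasi-reachable, hence the closure of an irreducible image), $X_1$ is dense. If $x,z\in X_1$ are indistinguishable, then $T(x)=T(z)\in Z_1$, so $x=z$.

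Conversely, suppose $X_1$ is open and dense and no two distinct states of $X_1$ are indistinguishable. Then $T|_{X_1}$ is injective. I would run the argument of Theorem~\ref{thm:4.6} on a principal open subset of $X_1$. Injectivity forces the fibre dimension $n-m$ to vanish, so $Q(A(X))$ is algebraic over $\mathcal{Q}_f$. For a generic $z$ the fibre then consists of exactly $s$ points, where $s$ is the separable degree. If one can arrange that these $s$ points all lie in $X_1$, injectivity gives $s=1$, and since $\operatorname{char}k=0$ this yields birationality.

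The main obstacle is this last step. Injectivity on $X_1$ does not immediately say that the whole fibre $T^{-1}(z)$ is a singleton, because fibre points could lie outside $X_1$. To handle this, I would work through the finite map $g$ of Theorem~\ref{thm:4.6}(d). There the fibres have exactly $s$ elements over an open set $V$, and the principal open set $\hat U$ can be shrunk to lie inside $X_1$. I would then use the fact that over an open subset of $Z$ all $s$ preimages lie in $\hat U$. This holds because the complement $X\setminus\hat U$ is closed of lower dimension, so its image under $T$ is not dense, and removing the closure of that image leaves an open $Z_1$ whose fibres lie entirely in $\hat U\subseteq X_1$. Injectivity then gives fibres of cardinality one over $Z_1$, and Lemma~\ref{lem:25.5} finishes the proof.
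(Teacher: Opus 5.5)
Your proof is correct and follows the paper's route. The paper's proof is just the reduction to Lemma~\ref{lem:25.5} via the canonical morphism $T:\Sigma\to\Sigma_f$ and $\operatorname{Obs}(x)=T^{-1}(T(x))$, and your extra argument in the converse (discarding $\overline{T(X\setminus\hat U)}$ so that whole fibres of $T$ lie in $X_1$) correctly fills in a step the paper treats as immediate. That step can also be handled by applying Lemma~\ref{lem:25.5} directly to the restriction of $T$ to a principal open subset of $X_1$: this restriction is injective and dominating, its image contains a nonempty open set, and it has the same function field as $X$, so its birationality is that of $T$.
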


\begin{proof}
Immediate from Lemma~\ref{lem:25.5}, by considering the canonical morphism $T: \Sigma \rightarrow \Sigma_{f}$.
\end{proof}

This justifies the terminology "quasi-canonical" = quasi-reachable plus "quasi-observable" in the above sense. Such systems have been also suggested in the context of minimality of discrete-time nonlinear systems by \citet{pearlman1979canonical} (for bilinear response maps). The "if" in Proposition~\ref{prop:25.6} is not true in general over the reals, but it is valid for restricted kinds of systems (e.g., state-affine).

Reasoning as in previous cases, we can conclude the

\begin{theorem}\label{thm:25.7}
$\mathrm{QC}(f)$ is a complete sublattice of $\mathrm{QR}(f)$.
\end{theorem}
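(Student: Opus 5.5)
The plan is to imitate the proof of Theorem~\ref{thm:23.7} and of Theorem~\ref{thm:25.2}, working inside the algebra $\Psi$ of Volterra series. By the discussion preceding Theorem~\ref{thm:23.7}, the elements of $\mathrm{QR}(f)$ correspond bijectively and order-preservingly to the subalgebras $A \subseteq \Psi = A(\Omega)$ satisfying \eqref{eq:23.5} and \eqref{eq:23.6}. Here we take $\Sigma_{1}=\Sigma_{\text{free}}(f)$, so that $A(P_1)=\theta_1$. The meet of a family corresponds to the intersection of the algebras, and the join to the algebra generated by their union.

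First I will translate quasi-canonicity into a condition on $A$. For the system $\Sigma_A$ attached to $A$, its observation algebra is the image of $\mathcal{A}_f$ under the inclusion $\mathcal{A}_f\subseteq A$, as noted before Definition~\ref{def:25.4}. Its state algebra is $A$ itself, so the condition reads $\mathcal{Q}(\Sigma_A)=Q(A)$. I interpret Definition~\ref{def:25.4} as requiring that the quotient field of the state algebra coincide with $\mathcal{Q}_f$, i.e.\ that $T:\Sigma\to\Sigma_f$ be birational, consistent with Proposition~\ref{prop:25.6}. Thus $\Sigma_A\in\mathrm{QC}(f)$ iff
\[
\mathcal{A}_f\subseteq A\subseteq \mathcal{Q}_f ,
\]
where both fields sit inside the quotient field of the domain $\Psi$.

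Second, I check that this extra condition is stable under the two lattice operations. For an arbitrary family $\{A_i\}$ with each $A_i\subseteq\mathcal{Q}_f$, the intersection $\bigcap A_i$ contains $\mathcal{A}_f$ and is contained in $\mathcal{Q}_f$. The subalgebra generated by $\bigcup A_i$ is also contained in $\mathcal{Q}_f$, since $\mathcal{Q}_f\cap\Psi$ is a subring of $\Psi$ containing every $A_i$. Conditions \eqref{eq:23.5} and \eqref{eq:23.6} are already known to be preserved by both operations, from the proof of Theorem~\ref{thm:23.7}. Hence the meets and joins computed in $\mathrm{QR}(f)$ of any subfamily of $\mathrm{QC}(f)$ stay in $\mathrm{QC}(f)$, which gives a complete sublattice. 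The empty family is handled by $\Sigma_f$ for the empty join and by the algebra $\mathcal{Q}_f\cap\Psi$ for the empty meet; the latter satisfies \eqref{eq:23.6} by the same argument that applies to the join.

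The main obstacle is the identification of joins. I must make sure that the join in $\mathrm{QR}(f)$ really corresponds to the algebra generated inside $\Psi$ and not to some larger completion. This matters because field-theoretic containment has to be checked in a common ambient domain. I will handle it exactly as in the fibre-product description of Section~\ref{sec:24}: every realization in the family embeds its algebra into $\Psi$ via $A(g^\Omega)$, which is one-to-one by Lemma~\ref{lem:9.2}. All quotient fields therefore live inside $Q(\Psi)$, and the containment argument above is legitimate.
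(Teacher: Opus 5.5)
Your core argument is correct and is the ``reasoning as in previous cases'' that the paper invokes. Under the correspondence with subalgebras of $\Psi$ satisfying \eqref{eq:23.5}--\eqref{eq:23.6}, $\mathrm{QC}(f)$ consists of the $A$ with $\mathcal{A}_f\subseteq A\subseteq\mathcal{Q}_f$; your reading of Definition~\ref{def:25.4} as birationality of $\Sigma\to\Sigma_f$ is the intended one. This extra condition is stable under intersections and generated subalgebras of nonempty families. That settles the theorem.

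Your paragraph on the empty family, however, is wrong on two counts.

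\textbf{The empty meet.} In $\mathrm{QR}(f)$ the empty meet is the top element $\Sigma_{\text{free}}(f)$, whose algebra is $\Psi$. In general $Q(\Psi)\neq\mathcal{Q}_f$: for instance, whenever $f$ is finitely realizable, $\mathcal{Q}_f$ has finite transcendence degree, while $\Psi\supseteq k[\xi_{11},\xi_{12},\dots]$ does not. So $\mathrm{QC}(f)$ is not closed under the empty meet of $\mathrm{QR}(f)$, and your sentence ``meets and joins \dots\ of any subfamily stay in $\mathrm{QC}(f)$'' must be restricted to nonempty families. $\mathrm{QC}(f)$ still has a top of its own, namely the join of all its members, and nothing more is needed.

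\textbf{The top element $\mathcal{Q}_f\cap\Psi$.} Your claim that $\mathcal{Q}_f\cap\Psi$ satisfies \eqref{eq:23.6} ``by the same argument that applies to the join'' does not go through. The join argument only shows that an algebra generated by algebras already satisfying \eqref{eq:23.6} satisfies it. You have no such family whose union is $\mathcal{Q}_f\cap\Psi$, and producing one amounts to the claim itself. The paper proves this separately as Lemma~\ref{lem:25.8}:
\begin{itemize}
\item $\theta_1$ is one-to-one, so it extends to a homomorphism $\mathcal{Q}_f\to Q(\mathcal{A}_f[T_1,\dots,T_m])$.
\item Hence for $a=b/c\in\mathcal{Q}_f\cap\Psi$ with $b,c\in\mathcal{A}_f$, one gets $\theta_1(a)=\theta_1(b)/\theta_1(c)\in Q(\mathcal{A}_f[T_1,\dots,T_m])\cap\Psi[T_1,\dots,T_m]$.
\item By \eqref{eq:25.9} this lies in $\mathcal{Q}_f[T_1,\dots,T_m]$, that is, $\theta_1(a)\in(\mathcal{Q}_f\cap\Psi)[T_1,\dots,T_m]$.
\end{itemize}
Either drop the identification of the top, or justify it this way.
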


In particular, there exists a \textit{largest} quasi-canonical realization $\Sigma^{f}$. Explicitly, $\Sigma^{f}$ can be obtained by intersecting $\mathcal{Q}_{f}$ with the algebra of Volterra series $\Psi$ (this gives $A^{f}$, the algebra of functions on the state-space $X^{f}$), and restricting the maps defining $\Sigma_{\text {free }}(f)$. That $A^{f}$ indeed satisfies \eqref{eq:23.6} follows from the more general result:

\label{op:123}%

\begin{lemma}\label{lem:25.8}
If $\Sigma_{2} \leq \Sigma_{1}$, then (with the notations in \eqref{eq:23.5} and \eqref{eq:23.6}), $A:=Q\left(A_{2}\right) \cap A_{1}$ satisfies \eqref{eq:23.6}.
\end{lemma}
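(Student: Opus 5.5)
We need to show: if $\Sigma_2 \le \Sigma_1$ (so $A_2 \subseteq A_1$ via the dominating morphism, and $A(P_1)$ restricted to $A_2$ is $A(P_2)$ with image in $A_2[T_1,\dots,T_m]$), then $A := Q(A_2)\cap A_1$ satisfies $A(P_1)(A)\subseteq A[T_1,\dots,T_m]$. Here $A_1$ is an integral domain, since quasi-reachability gives irreducibility, and $Q(A_2)\subseteq Q(A_1)$.

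First I extend the co-transition to fractions. Write $\theta := A(P_1): A_1\to A_1[T]$, where $T=(T_1,\dots,T_m)$. Take $a\in A$ and write $a = b/c$ with $b,c\in A_2$, $c\neq 0$. Then $\theta(a)\theta(c)=\theta(b)$, with $\theta(b),\theta(c)\in A_2[T]$. We also know $\theta(a)\in A_1[T]$. Expand $\theta(a)=\sum_\alpha a_\alpha T^\alpha$ with $a_\alpha\in A_1$. The goal is to show each $a_\alpha\in Q(A_2)$; since $a_\alpha\in A_1$ already, this gives $a_\alpha\in A$.

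To get at the coefficients, I specialize at input values. For each $u\in U$, let $e(u)$ denote evaluation $T\mapsto u$. Then $e(u)\theta(a)\cdot e(u)\theta(c)=e(u)\theta(b)$, and $e(u)\theta(c),\,e(u)\theta(b)\in A_2$. Now $\theta(c)\neq 0$, because $\theta$ is injective: setting $T=0$ recovers $P_1(\cdot,0)$, and its transpose is injective by domination, which follows from quasi-reachability. So $\theta(c)$ is a nonzero polynomial in $T$ with coefficients in the domain $A_2$, and the set $D=\{u: e(u)\theta(c)\neq 0\}$ is a nonempty Zariski open, hence dense, subset of $U$ (Lemma~\ref{lem:2.4}). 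For $u\in D$ we get $e(u)\theta(a)=e(u)\theta(b)/e(u)\theta(c)\in Q(A_2)$. (If injectivity of $\theta$ is awkward, an alternative is to work in $Q(A_1)(T)$, using that $A_2[T]$ is a domain and dividing there.)

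Finally, apply Main Lemma, Part 2 (Lemma~\ref{lem:12.11}) to $Q=\theta(a)\in A_1[T]$ with the dense set $D$. This shows that the span of $\{e(u)\theta(a): u\in D\}$ equals the span of all specializations. By Lemma~\ref{lem:10.7}, that span is the span of the coefficients $a_\alpha$. Since the former span lies in the $k$-space $Q(A_2)\cap A_1 = A$, each $a_\alpha\in A$, so $\theta(A)\subseteq A[T]$. Condition \eqref{eq:23.5} is also immediate, because $\mathcal{A}_f\subseteq A_2\subseteq A$.

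The main obstacle is making sure the specialization set $D$ is nonempty, i.e.\ that $\theta(c)\neq0$ and that the domain structure is available. Once that holds, the two parts of the Main Lemma finish the argument directly.
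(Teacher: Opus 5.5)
Your route works, but the step you flag as the main obstacle is justified incorrectly. You argue that $\theta=A(P_1)$ is injective because setting $T=0$ gives the transpose of $P_1(\cdot,0)$, which you say is injective by domination. But $P_1(\cdot,0)\colon X_1\to X_1$ need not be dominating for a quasi-reachable system.

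Take $x(t+1)=u(t)$, $X=k$, $x^{\#}=0$. This system is reachable, yet $P(\cdot,0)\equiv 0$ and its transpose kills $\eta$, even though $\theta(\eta)=T\neq 0$. Injectivity of $e(0)\circ\theta$ would suffice, but it can fail, as here.

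The correct reason, which is the one the paper gives, is that the full map $P_1\colon X_1\times U\to X_1$ is dominating. Indeed, $x^{\#}=P_1(x^{\#},0)$ and $g(wu)=P_1(g(w),u)$, so $X_R\subseteq P_1(X_1\times U)$, and $X_R$ is dense. Hence $A(P_1)$ is one-to-one by Lemma~\ref{lem:3.11}(d).

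Your parenthetical fallback of dividing in $Q(A_1)(T)$ does not avoid this. If $\theta(c)=0$, then also $\theta(b)=0$, and the relation $\theta(a)\theta(c)=\theta(b)$ says nothing about $\theta(a)$.

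With that fixed, the rest is correct, and it controls the coefficients in a genuinely different way from the paper.

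\begin{itemize}
\item The paper extends $A(P_1)$ to quotient fields, so that $A(P_1)(a)\in Q(A_2[T_1,\ldots,T_m])\cap A_1[T_1,\ldots,T_m]$. It then appeals to the inclusion \eqref{eq:25.9}, a purely algebraic fact it calls clear. One way to see it: expand in a $Q(A_2)$-basis of $Q(A_1)$ containing $1$, which is also a $Q(A_2)[T]$-basis of $Q(A_1)[T]$.
\item You stay inside $A_1$. You specialize at inputs in the dense open set where $e(u)\theta(c)\neq 0$, land in $Q(A_2)\cap A_1$, and recover the coefficients through Lemmas~\ref{lem:12.11} and~\ref{lem:10.7}. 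This is the same device used in Lemma~\ref{lem:10.6} and Proposition~\ref{prop:12.12}(c).
\end{itemize}

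In effect you prove the ``clear'' step \eqref{eq:25.9} in the case needed, using only the paper's own tools. The paper's version is shorter and avoids the openness/density bookkeeping. Both versions need $A(P_1)$ to be injective.
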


\begin{proof}
Since $\Sigma_{1}$ is quasi-reachable, its transition map $P$ is dominating; thus $A(P)$ is one-to-one. So $A(P)$ extends to a homomorphism from $Q\left(A_{1}\right)$ into $Q\left(A_{1}\left[T_{1}, \ldots, T_{m}\right]\right)$, which itself restricts to a homomorphism from $Q\left(A_{2}\right)$ into $Q\left(A_{2}\left[T_{1}, \ldots, T_{m}\right]\right)$. Since $A_{2}$ satisfies \eqref{eq:23.6}, the result will follow from
\begin{equation}\label{eq:25.9}
Q\left(A_{2}\left[T_{1}, \ldots, T_{m}\right]\right) \cap\left(A_{1}\left[T_{1}, \ldots, T_{m}\right]\right) \subseteq Q\left(A_{2}\right)\left[T_{1}, \ldots, T_{m}\right] ,
\end{equation}
which is clear.
\end{proof}

The largest quasi-canonical realization $\Sigma^{f}$ is thus obtained using $\Sigma_{1}=\Sigma_{\text {free }}(f)$ and $\Sigma_{2}=\Sigma_{f}$ above.

Restricting even more the observability properties leads to two other classes of realizations:

\begin{definition}\label{def:25.10}
\textit{The sub-poset} $\mathrm{AO}(f)$ [\textit{respectively}, $\mathrm{RD}(f)$] \textit{consists of all realizations which are abstractly observable [respectively, whose reachable states are pairwise distinguishable}].
\end{definition}

Thus, the $\Sigma$ in $\mathrm{RD}(f)$ are those admitting a (necessarily one-to-one) abstract system morphism $\Sigma_{a c}(f) \rightarrow \Sigma$.

\begin{theorem}\label{thm:25.11}
$\mathrm{RD}(f)$ is a complete sublattice and $\mathrm{AO}(f)$ is a complete join-semilattice of $\mathrm{QR}(f)$.
\end{theorem}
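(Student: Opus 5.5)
The plan is to work, as in the proof of Theorem~\ref{thm:23.7}, on the algebra side. Quasi-reachable realizations $\Sigma\leq\Sigma_{\text{free}}(f)$ correspond to subalgebras $A$ of $\Psi$ satisfying \eqref{eq:23.5} and \eqref{eq:23.6}. Meets correspond to intersections of such algebras, and joins to the algebras generated by their unions. I will translate membership in $\mathrm{RD}(f)$ and in $\mathrm{AO}(f)$ into conditions on $A$. Then I will check which of these conditions survive arbitrary intersections and arbitrary generated subalgebras.

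The first step is to rephrase $\mathrm{RD}(f)$. A state of $X(A)$ reached by the input $w$ is the evaluation $e_w|A$, and its observation vector is $f^\Gamma(w)$. So $\Sigma_A\in\mathrm{RD}(f)$ iff, for all $w,v$ in $U[z]$,
\[
f^{\Gamma}(w)=f^{\Gamma}(v)\ \Longrightarrow\ a(w)=a(v)\ \text{for all } a\in A.
\]
Equivalently, every element of $A$, viewed as a function on $U[z]$, factors through $f^\Gamma$. This condition is inherited by subalgebras, so it is closed under arbitrary intersections (meets). It is also closed under generated subalgebras (joins). If every generator is constant on the fibres of $f^\Gamma$, then so is every polynomial combination, because evaluation at $w$ is a homomorphism. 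The top element of $\mathrm{RD}(f)$ is then the join of all such algebras. Its bottom element is $\Sigma_f$, since $\mathcal{A}_f$ lies in every such $A$ and is generated by functions of the $f^w$. Hence $\mathrm{RD}(f)$ is a complete sublattice. Via Lemma~\ref{lem:7.7}, this also matches the description by the one-to-one morphism $\Sigma_{ac}(f)\to\Sigma$.

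For $\mathrm{AO}(f)$ I need abstract observability on all of $X(A)$, not only on reachable states. Since $\Sigma_A\to\Sigma_f$ is the dual of the inclusion $\mathcal{A}_f\subseteq A$, and $\Sigma_f$ is observable, $\Sigma_A$ is observable iff the map $X(A)\to X(\mathcal{A}_f)$, $x\mapsto x|\mathcal{A}_f$, is injective. In other words, each $k$-point of $\mathcal{A}_f$ extends in at most one way to a $k$-point of $A$.

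For joins, take a family $A_i$ with this property and let $A$ be the algebra they generate. Two points $x,y$ of $X(A)$ that agree on $\mathcal{A}_f$ have restrictions to each $A_i$ that agree on $\mathcal{A}_f$. These restrictions therefore coincide by the hypothesis on $A_i$. So $x$ and $y$ agree on every generator of $A$, hence $x=y$. The empty join is $\Sigma_f$, which is observable. Thus $\mathrm{AO}(f)$ is closed under arbitrary joins in $\mathrm{QR}(f)$, which makes it a complete join-semilattice.

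The main obstacle is making rigorous the rephrasing of observability as uniqueness of extensions, including for non-reachable points. I would handle it through Lemma~\ref{lem:23.2}: $h_A^\Gamma = h_f^\Gamma\circ T$ with $T$ the canonical morphism, and $h_f^\Gamma$ is injective. It is also worth remarking why meets can fail for $\mathrm{AO}(f)$: intersecting algebras may create new $k$-points, since $X(A_1\cap A_2)$ is not controlled by $X(A_i)$. So the proof asserts only a join-semilattice there.
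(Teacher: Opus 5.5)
Your proof is correct and follows the paper's argument. Both translate membership in $\mathrm{RD}(f)$ and $\mathrm{AO}(f)$ into conditions on the algebra $A\supseteq\mathcal{A}_f$: points that agree on $\mathcal{A}_f$ must agree on $A$, either for all points or only for reachable ones. Join-closure then follows because homomorphisms that agree on generating subalgebras agree, and meet-closure of $\mathrm{RD}(f)$ follows from downward closure. Your formulation, that elements of $A$ are constant on the fibres of $f^\Gamma$, is only a function-level restatement of the paper's state-level argument via reachability maps.
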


\begin{proof}
We shall use the characterization in \eqref{eq:23.5} - \eqref{eq:23.6}. Two states of a system $\Sigma$ are indistinguishable if and only if they are mapped into the same state under the canonical $k$-system morphism $\Sigma \rightarrow \Sigma_{f}$. In terms of the algebra $A$ of $\Sigma$ (seen as a subalgebra of $\Psi$) states are homomorphisms $x: A \rightarrow k$; thus $x_{1}$ is indistinguishable from $x_{2}$ iff $x_{1}$ and $x_{2}$ restrict to the same homomorphism on the subalgebra $\mathcal{A}_{f}$. Thus $\Sigma$ is abstractly observable if and only if equality of $x_{1}$ and $x_{2}$ on $\mathcal{A}_{f}$ implies equality on all of $A$, and $\Sigma$ has its reachable
\label{op:124}states distinguishable if and only if this implication is true for all reachable $x_{1}$ and $x_{2}$.

If $A$ is the algebra generated by subalgebras $A_{i}$, and if $x_{1}$, $x_{2}$ are homomorphisms from $A$ into $k$, then $x_{1}$ and $x_{2}$ are equal on $A$ if and only if their restrictions to each $A_{i}$ are equal. Thus if the $A_{i}$ correspond to abstractly observable systems $\Sigma_{i}$, or to systems in $\mathrm{RD}(f)$, the same is true of $A$ (which corresponds to the join of the $\Sigma_{i}$).

For the closure of $\mathrm{RD}(f)$ under meets, it is enough to remark that, if $\Sigma_{1} \leq \Sigma_{2}$ and $\Sigma_{2}$ is in $\mathrm{RD}(f)$, then $\Sigma_{1}$ is also in $\mathrm{RD}(f)$. Indeed, for $x_{1}$ and $x_{2}$ indistinguishable states in $\Sigma_{1}, T_{1}\left(x_{1}\right)= T_{1}\left(x_{2}\right)$ ($T_{i}$ is here the canonical map $\Sigma_{i} \rightarrow \Sigma_{f}$). If the $x_{i}$ are reachable, $x_{i}=g_{1}\left(\omega_{i}\right)$ for some input sequences $\omega_{i}$. Then $z_{i}=g_{2}\left(\omega_{i}\right)$ are states of $\Sigma_{2}$ mapping onto the $x_{i}$, so $T_{2}\left(z_{1}\right)=T_{2}\left(z_{2}\right)$. Since $\Sigma_{2}$ is in $\mathrm{RD}(f), z_{1}=z_{2}$, so also $x_{1}=x_{2}$ as wanted.
\end{proof}

\begin{remark}\label{rem:25.12}
Closure under joins in $\mathrm{AO}(f)$ proves in particular that there exists (in the ordering of $\mathrm{QR}(f)$) a largest \textit{abstractly observable realization} $\Sigma_{a o}(f)$ of $f$. A "dual" approach to realization theory is that of finding such initial observable realizations, instead of characterizing $\Sigma_{f}$ as a final quasi-reachable realization. The above construction, together with the other results in this work, permit developing such a "dual" realization theory for polynomial response maps. Sometimes $\Sigma_{a o}(f)=\Sigma_{f}$, like for the $f_{o}$ in example~\ref{ex:18.1} (see Lemma~\ref{lem:18.5}), but in general they are different (see example~\ref{ex:25.13} below). Note that $\Sigma_{a o}(f)$ is initial for \textit{all}, not just quasi-reachable, realizations: for any $\Sigma$, one has a composition morphism $\Sigma_{a o}(f) \rightarrow \Sigma_{Q} \rightarrow \Sigma$, where $\Sigma_{Q}$ is the quasi-reachable subsystem of $\Sigma$.
\end{remark}

\begin{example}\label{ex:25.13}
The above proof cannot be used to conclude that $\mathrm{AO}(f)$ is closed under \textit{joins}, since existence of a morphism $\Sigma_{1} \rightarrow \Sigma_{2}$, with $\Sigma_{1}$ in $\mathrm{AO}(f)$, \textit{does not} imply abstract observability of $\Sigma_{2}$. In fact, we now give a family of realizations $\Sigma_{t}$ in $\mathrm{AO}(f)$ whose meet is not in $\mathrm{AO}(f)$. To construct the $\Sigma_{t}$, we begin with the system $\Sigma$ which has
\label{op:125}%
$U=k^{2}, \quad X=k^{2}, \quad Y=k^{2}, \quad x^{\#}=0$, and equations
\[
\begin{aligned}
& x_{1}(t+1)=u_{1}(t), \quad x_{2}(t+1)=u_{2}(t), \\
& y_{1}(t)=x_{1}(t), \quad y_{2}(t)=x_{1}(t)^{2} x_{2}(t) .
\end{aligned}
\]
Let $f$ be the response of this system. Then $\Sigma$ is in $\mathrm{QR}(f)$. We shall work in the algebra $A(X)=k\left[\eta_{1}, \eta_{2}\right]$, and use the characterizations \eqref{eq:23.5}, \eqref{eq:23.6} in order to define the systems $\Sigma_{i}$. As a subalgebra of $A(X)$,
\[
\mathcal{A}_{f}=k\left[\eta_{1}, \eta_{1}^{2} \eta_{2}\right] .
\]
Any subalgebra of $A(X)$ containing $\mathcal{A}_{f}$ satisfies also \eqref{eq:23.6}, since $A(f)(A(X))$ is in fact included in $A(X)\left[T_{1}, T_{2}\right]$. The following are all subalgebras containing $\mathcal{A}_{f}$, for $t=1,2,3, \ldots$ :
\[
\begin{aligned}
& A_{t}:=k\left[\eta_{1}, \eta_{1} \eta_{2}, \eta_{1} \eta_{2}^{t}, \eta_{1} \eta_{2}^{t+1}, \eta_{1} \eta_{2}^{t+2}, \ldots\right], \\
& B:=\bigcap_{t \geq 1} A_{t}=k\left[\eta_{1}, \eta_{1} \eta_{2}\right] .
\end{aligned}
\]
In particular, $B$ corresponds to the system $\Sigma^{\prime}$ with $X=k^{2}, x^{\#}=0$ and
\[
\begin{aligned}
& x_{1}(t+1)=u_{1}(t), \quad x_{2}(t+1)=u_{1}(t) u_{2}(t) \\
& y_{1}(t)=x_{1}(t), \quad y_{2}(t)=x_{1}(t) x_{2}(t) .
\end{aligned}
\]
Thus $\Sigma^{\prime}$ has all pairs of states distinguishable \textit{except} those with $x_{1}=0$; the states in the line ($x_{1}=0$) are in one indistinguishability class. Thus, $\Sigma^{\prime}$ is not observable, and it is the meet of the systems $\Sigma_{t}$ corresponding to the algebras $A_{t}$. We now claim that each $\Sigma_{t}$ is observable. In order to prove this claim, it is enough to prove that the morphisms $\Sigma_{t} \rightarrow \Sigma^{\prime}$ given by the corresponding inclusions are one-to-one and have images which intersect with the unobservable states ($x_{1}=0$) at just one point: $x_{1}=0, x_{2}=0$.
\end{example}

\label{op:126}%
The statements about the morphisms in turn follow from the following fact (when translated into the corresponding algebras): If $x: A_{t} \rightarrow k$ is a homomorphism with $x\left(\eta_{1}\right)=0$, then $x\left(\eta_{1} \eta_{2}^{i}\right)$ is also zero, for $i=1, t, t+1, \ldots$. Indeed, for $t=1$ this statement has been proven in Example~\ref{ex:3.19}; for $t>1,\left(\eta_{1} \eta_{2}\right)^{t}=\left(\eta_{1} \eta_{2}^{t}\right) \eta_{1}^{t-1}$ forces $x\left(\eta_{1} \eta_{2}\right)^{t}=0$, so $x\left(\eta_{1} \eta_{2}\right)=0$, and $\left(\eta_{1} \eta_{2}^{i}\right)^{2}=\left(\eta_{1} \eta_{2}^{2 i}\right) \eta_{1}$ forces $x\left(\eta_{1} \eta_{2}^{i}\right)^{2}=0$, and hence $x\left(\eta_{1} \eta_{2}^{i}\right)=0$, for all $i \geq t$, as wanted.

\subsection{Normal realizations}\label{sec:26}

Recall Lemma~\ref{lem:1.13} that the algebra $A$ is \textit{integral} over the subalgebra $B$ if every element $a$ of $A$ satisfies a monic equation with coefficients in $B$, i.e., $a$ is integral over $B$. The \textit{integral closure} $\bar{B}$ \textit{of} $B$ \textit{in} $A$ is the set of all $a$ in $A$ integral over $B$; $B$ is \textit{integrally closed in} $A$ when $\bar{B}=B$. When $A=Q(B)$, the quotient field of the integral domain $B$, one refers simply to the "integral closure" of $B$, and to "integrally closed" $B$. For example, a unique factorization domain (e.g., $k\left[\eta_{1}, \ldots, \eta_{n}\right]$,) is always integrally closed. The following definition is sufficient for our purposes, but it may be extended to nonirreducible spaces:

\begin{definition}\label{def:26.1}
\textit{An irreducible} $k$-\textit{space} $X$ \textit{is normal iff} $A(X)$ \textit{is integrally closed}.
\end{definition}

In algebraic geometry the notion of normality is closely related to the study of singularities. In fact, for varieties $X$ of dimension one, normality is \textit{equivalent} to the nonexistence of singular points (so, for $k=\mathbb{C}$, to $X$ being a Riemann surface); in general, nonsingularity implies normality, but the converse is only partially true. Since $k\left[\eta_{1}, \eta_{2}, \ldots, \eta_{n}\right]$ is integrally closed, $k^{n}$ is always normal; for an almost-variety we can take the canonical state-space of $f_{0}$ (cf. Example~\ref{ex:18.1}) as an

\begin{example}\label{ex:26.2}
$A=k\left[\eta_{1} \eta_{2}^{t}, t \geq 0\right]$ is integrally closed. Indeed, let $b$ be in the quotient field of $A, Q(A)=k\left(\eta_{1}, \eta_{2}\right)$, and assume that $b$ is integral over $A$. In particular, $b$ is integral over $k\left[\eta_{1}, \eta_{2}\right]$, which
\label{op:127}is integrally closed, so $b$ must belong to the latter. If $b$ is not in $A$, but is a polynomial, it has a term $c \eta_{2}^{r}$, $c$ in $k$, $r>0$. Since $b$ is integral over $A$, there is an equation
\begin{equation}\label{eq:26.3}
b^{n}+a_{n-1} b^{n-1}+\ldots+a_{0}=0 ,
\end{equation}
with the $a_{i}$ in $A$. Specializing $\eta_{1}$ into zero, there results an equation as in \eqref{eq:26.3} with the $a_{i}$ \textit{scalars} and $b\left(0, \eta_{2}\right)$ a polynomial in $\eta_{2}$ of positive degree $r$, which is impossible. Thus $b$ must be in $A$.
\end{example}

\begin{definition}\label{def:26.4}
$\mathrm{NOR}(f)$ \textit{is the subposet of} $\mathrm{QR}(f)$ \textit{consisting of all (quasi-reachable) normal realizations}.
\end{definition}

\begin{lemma}\label{lem:26.5}
$\Sigma_{\text {free }}(f)$ is in $\mathrm{NOR}(f)$
\end{lemma}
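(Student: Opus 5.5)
We must show that $\Psi = A(\Omega)$ is integrally closed in its quotient field $Q(\Psi)$. Note first that $\Psi$ is an integral domain (Section~\ref{sec:5}), so $\Omega$ is irreducible and Definition~\ref{def:26.1} applies. Quasi-reachability of $\Sigma_{\text{free}}(f)$ was already observed in the proof of Theorem~\ref{thm:11.5}, since $g^{\Omega}$ is the identity. So everything reduces to normality.

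The plan is to reduce to finitely many variables via the homomorphisms $\epsilon_t$ and the fact that polynomial rings are UFDs, hence integrally closed. Let $b \in Q(\Psi)$ be integral over $\Psi$. Write $b = \psi/\chi$ with $\psi,\chi \in \Psi$, $\chi\neq 0$, and take an integral equation
\[
b^{n}+a_{n-1}b^{n-1}+\cdots+a_0=0, \qquad a_i\in\Psi.
\]
Equivalently,
\[
\psi^{n}+a_{n-1}\psi^{n-1}\chi+\cdots+a_0\chi^{n}=0
\]
in $\Psi$. We want to show that $\chi$ divides $\psi$ in $\Psi$.

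The key observation is that $\Psi$ is contained in the full formal power series ring $k[[\xi_{ij}]]$ in countably many variables, and membership in $\Psi$ is a degree condition in each variable. The argument proceeds in two steps.

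\textbf{Step 1: the quotient exists in $k[[\xi_{ij}]]$.} We show that $b$ lies in the full power series ring. For each $t$, the ring $k[[\xi_1,\ldots,\xi_t]]$ is a regular local ring, hence a UFD and integrally closed. Passing to the limit over $t$ is awkward, though, since $\epsilon_t$ is only a homomorphism, not an injection. A cleaner route is to use the grading by $\|\alpha\|$ (total degree): order monomials and compare lowest-order terms. Let $\chi_0$ be the lowest homogeneous component of $\chi$. After a generic linear change of the finitely many variables appearing in $\chi_0$, we can arrange that $\chi_0$ contains a pure power of one variable. Then a Weierstrass-type division produces $\psi/\chi$ as a power series.

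\textbf{Step 2: the quotient is a Volterra series.} Let $q = \psi/\chi$ in $k[[\xi]]$. We must show that $\deg_{ij} q<\infty$ for all $i,j$. Fix a variable $\xi_{ij}$ and view everything in $R[[\xi_{ij}]]$ or $R[\xi_{ij}]$, where $R$ is the power series ring in the remaining variables. Here $\psi$, $\chi$ and the $a_i$ are polynomials in $\xi_{ij}$ over $R$, and $q$ is integral over $R[\xi_{ij}]$. Now $R[\xi_{ij}]$ is integrally closed in $Q(R)[[\xi_{ij}]]$ intersected with elements integral over it. More concretely, the relation $\psi = q\chi$ with $\psi,\chi$ polynomial in $\xi_{ij}$ of degrees $d_\psi$ and $d_\chi$ forces $q$ to be polynomial of degree at most $d_\psi - d_\chi$. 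This follows from Gauss's lemma over the integrally closed domain $R$ (comparing leading coefficients in $\xi_{ij}$ through the integral equation, as in Example~\ref{ex:26.2}). Hence $q\in\Psi$.

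The main obstacle is Step 1, justifying divisibility in the infinitely-many-variable power series ring, which is not Noetherian. My first attempt would be to prove directly that $k[[\xi_{ij}]]$ is integrally closed as a directed union-like limit. I would try to show that $\chi$ divides $\psi$ by showing, for each $t$, that the images in $k[[\xi_1,\ldots,\xi_t]]$, obtained by setting later variables to $0$, divide after adjusting for the kernel. Failing that, I would work entirely with Step 2's variable-by-variable degree argument applied to $Q(\Psi)$ directly. That is, I would show that an element of $Q(\Psi)$ integral over $\Psi$ has, in each variable $\xi_{ij}$, the form "polynomial over the integral closure of the coefficient ring." I would then handle the coefficient ring by induction on the finitely many variables involved in the lowest-degree parts. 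Either way, the UFD property of $k[\xi_1,\ldots,\xi_t]$ applied through $\epsilon_t$ is the finite-dimensional input I expect to lean on.
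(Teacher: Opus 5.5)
Step 1 is where the whole difficulty lies, and the argument you give for it does not work. A Weierstrass-type division of $\psi$ by $\chi$ only produces $\psi=q\chi+r$, with $r$ of degree less than $s$ in the distinguished variable. Nothing in your argument forces $r=0$. In fact Step 1 never uses that $b$ is integral over $\Psi$, and without that hypothesis its conclusion is false: $1/\xi_{11}\in Q(\Psi)$ is not a power series. What is missing is the integral closedness of a power series ring in infinitely many variables. The paper imports exactly this fact from Cashwell--Everett. It writes $\Psi=\bigcap_{n\ge1}k[[\xi_n,\xi_{n+1},\ldots]][\xi_1,\ldots,\xi_{n-1}]$ and uses that polynomial extensions and intersections of integrally closed domains are integrally closed. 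Your Step 2 quietly relies on the same unproved fact, since it invokes ``the integrally closed domain $R$'' with $R$ a power series ring in infinitely many variables. That step alone could be repaired: with $x=\xi_{ij}$ and $K=Q(R)$, $q\in K(x)$ is integral over the principal ideal domain $K[x]$, so $q\in K[x]\cap R[[x]]=R[x]$.

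The route you mention at the end and set aside does work, and it gives a proof more elementary than the paper's. It needs neither Step 1 nor Cashwell--Everett.
Apply $\epsilon_t$ to $\psi^n+a_{n-1}\psi^{n-1}\chi+\cdots+a_0\chi^n=0$. By \eqref{eq:5.20} and \eqref{eq:5.21} there is $t_0$ with $\epsilon_t\chi\neq0$ for all $t\ge t_0$. Then $\epsilon_t\psi/\epsilon_t\chi$ is integral over the UFD $k[\xi_1,\ldots,\xi_t]$, so $\epsilon_t\psi=q_t\,\epsilon_t\chi$ for some polynomial $q_t$.
Cancelling $\epsilon_s\chi\neq0$ gives $\epsilon_{s,t}(q_t)=q_s$ for $t_0\le s\le t$, so the $q_t$ are the truncations of a single formal series $q$.
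Degree in $\xi_{ij}$ is additive in the domain $k[\xi_1,\ldots,\xi_t]$, so $\deg_{ij}q_t\le\deg_{ij}\psi$ for all $t\ge t_0$. Hence $q\in\Psi$ with $\epsilon_t q=q_t$.
Then $\epsilon_t(q\chi-\psi)=0$ for all $t\ge t_0$, so $q\chi=\psi$ by \eqref{eq:5.20}, i.e.\ $b=q\in\Psi$.
The kernel of $\epsilon_t$ that worried you causes no trouble: you only need $t$ large enough that $\epsilon_t\chi\neq0$.
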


\begin{proof}
We must prove that
\[
\Psi=\bigcap_{n \geq 1} k\left[\left[\xi_{n}, \xi_{n+1}, \ldots\right]\right]\left[\xi_{1}, \ldots, \xi_{n-1}\right]
\]
is integrally closed. Since intersections of, and polynomial rings over, integrally closed domains are again so, (see e.g. \citealp[V.I.3, Corollary 2]{bourbaki1972commutative}) the problem reduces to proving that a power series domain in infinitely many variables, with coefficients in a field, is integrally closed. But this latter statement was proved by \citet{cashwell1963formal}.
\end{proof}

\begin{remark}\label{rem:26.6}
In contrast to a full power series ring, $\Psi$ is \textit{not} a unique factorization domain (and is not local, either). Indeed, taking $m=1$ for simplicity, let $\psi$ be the Volterra series whose terms are all those monomials $\xi_{\alpha_{1}} \ldots \xi_{\alpha_{n}}$ having $\alpha_{1}, \ldots, \alpha_{n}$ all distinct and all $\alpha_{j} \geq i$. Then, $\psi_{i}=\left(1+\xi_{i}\right) \psi_{i+1}$. Since $\left(1+\xi_{i}\right)$ is \textit{not} invertible in $\Psi$ (because $1+\xi_{i}+\xi_{i}^{2}+\xi_{i}^{3}+\ldots$ is not a Volterra series), there results a strictly increasing chain
\[
\left(\psi_{1}\right) \subset\left(\psi_{2}\right) \subset\left(\psi_{3}\right) \subset \ldots
\]
\label{op:128}%
of principal ideals; by the criterion in \citet[VII.3.2, Theorem 2]{bourbaki1972commutative}, $\Psi$ is not a unique factorization domain.
\end{remark}

\begin{proposition}\label{prop:26.7}
Let $\Sigma_{1}$ be in $\mathrm{QR}(f)$, with $\Sigma_{1} \leq \Sigma_{2}$ and $\Sigma_{2}$ in $\mathrm{NOR}(f)$. Identify $A_{1}=A\left(X_{1}\right)$ with a subalgebra of $A_{2}=A\left(X_{2}\right)$, and consider the two subalgebras: $A:=$ integral closure of $A_{1}$, and $B:=$ intersection of all those integrally closed subalgebras of $A_{2}$ which satisfy \eqref{eq:23.6} and include $A_{1}$. Then $A=B$.
\end{proposition}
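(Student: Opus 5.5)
We show that $A$ satisfies \eqref{eq:23.6}, contains $A_{1}$, and is integrally closed; this gives $B \subseteq A$. Here ``integral closure of $A_{1}$'' means the integral closure of $A_{1}$ in $Q(A_{1})$, taken inside $Q(A_{2})$. The reverse inclusion $A \subseteq B$ will come separately from the integral closedness of the algebras being intersected.

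\emph{Preliminaries.} Since $\Sigma_{2}$ is normal, $A_{2}$ is integrally closed in $Q(A_{2})$. Any $a \in Q(A_{1})$ that is integral over $A_{1}$ is therefore integral over $A_{2}$, and so lies in $A_{2}$. Hence $A \subseteq A_{2}$, and $A$ is a subalgebra containing $A_{1}$. The algebra $A$ is integrally closed, because the integral closure of a domain is integrally closed. Note that $A_{1}$ contains $\mathcal{A}_{f}$, so \eqref{eq:23.5} holds for $A$ as well.

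\emph{Inclusion $A \subseteq B$.} Let $C$ be any integrally closed subalgebra of $A_{2}$ with $A_{1} \subseteq C$ satisfying \eqref{eq:23.6}. Then $Q(A_{1}) \subseteq Q(C)$. An element of $A$ lies in $Q(C)$ and is integral over $A_{1} \subseteq C$, so it belongs to $C$. Thus $A \subseteq C$ for every such $C$, and $A \subseteq B$.

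\emph{Inclusion $B \subseteq A$; main step.} Since $A$ is one of the algebras in the intersection defining $B$, it suffices to show that $A$ satisfies \eqref{eq:23.6}, namely $A(P_{2})(A) \subseteq A[T_{1},\ldots,T_{m}]$. Write $\theta := A(P_{2})$. Take $a \in A$ and a monic equation
\[
a^{n} + c_{n-1}a^{n-1} + \cdots + c_{0} = 0, \qquad c_{i} \in A_{1}.
\]
Applying the homomorphism $\theta$ gives
\[
\theta(a)^{n} + \theta(c_{n-1})\,\theta(a)^{n-1} + \cdots + \theta(c_{0}) = 0 .
\]
By \eqref{eq:23.6} for $A_{1}$, each $\theta(c_{i})$ lies in $A_{1}[T]$. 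Hence $\theta(a) \in A_{2}[T]$ is integral over $A_{1}[T]$.

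It remains to check that $\theta(a)$ has the right fraction field. As in Lemma~\ref{lem:25.8}, $\theta$ is injective because $\Sigma_{2}$ is quasi-reachable. It extends to fraction fields and maps $Q(A_{1})$ into $Q(A_{1}[T])$. So $\theta(a) \in Q(A_{1}[T]) = Q(A_{1})(T)$, and by \eqref{eq:25.9} it has coefficients in $Q(A_{1})$.

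Finally, we use the standard fact (Bourbaki V.1.3) that the integral closure of $R[T]$ in $Q(R)[T]$ is $\bar{R}[T]$, where $\bar{R}$ is the integral closure of $R$. Applying it with $R = A_{1}$ shows that the coefficients of $\theta(a)$ are integral over $A_{1}$ and lie in $Q(A_{1})$, hence lie in $A$. Therefore $\theta(a) \in A[T]$.

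The main obstacle is this last step, namely that integrality over $A_{1}[T]$ passes to the individual coefficients. If one prefers to avoid the Bourbaki citation, the alternative route is to use Lemma~\ref{lem:10.7}: specialize $T$ at points $u$, observe that each specialization $\theta(a)(u)$ is integral over $A_{1}$ and lies in $Q(A_{1})$, and then recover each coefficient as a $k$-linear combination of finitely many specializations. Since $A$ is a $k$-subspace, those coefficients lie in $A$.
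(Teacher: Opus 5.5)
Your proof is correct and follows the paper's argument. You get $A\subseteq B$ because every algebra in the intersection is integrally closed, and $B\subseteq A$ because $A$ is itself an integrally closed member of the family, with \eqref{eq:23.6} obtained by applying $A(P)$ to integral equations over $A_{1}$. The only difference is minor. The paper notes that $A(P)(A)$ lies in the quotient field of $A(P)(A_{1})\subseteq A_{1}[T_{1},\ldots,T_{m}]$ and uses directly that $A[T_{1},\ldots,T_{m}]$ is integrally closed. You instead first place $\theta(a)$ in $Q(A_{1})[T_{1},\ldots,T_{m}]$ via \eqref{eq:25.9} and then invoke the relative form of the same fact about integral closures of polynomial rings.
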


\begin{proof}
Since the elements of $Q\left(A_{1}\right)$ integral over $A_{1}$ must belong to any integrally closed algebra containing $A_{1}$, $A$ is included in $B$. To prove the other inclusion, it will be necessary to establish that $A$ satisfies \eqref{eq:23.6} and it is integrally closed. The latter statement follows from the fact that $A_{2}$ is integrally closed. Consider now the algebra $A(P)(A)$, where $P$ is the transition map of $\Sigma_{2}$. Since $A$ is integral over $A_{1}$ and $A(P)$ is a homomorphism, $A(P)(A)$ has the same quotient field, and is integral over, $A(P)\left(A_{1}\right)$, which is in turn included in $A_{1}\left[T_{1}, \ldots, T_{m}\right]$, and hence in $A\left[T_{1}, \ldots, T_{m}\right]$. Since $A$ is integrally closed, $A\left[T_{1}, \ldots, T_{m}\right]$ also is, so $A(P)(A)$ must be included in $A\left[T_{1}, \ldots, T_{m}\right]$ as wanted.
\end{proof}

\begin{definition}\label{def:26.8}
\textit{In the situation of} Proposition~\ref{prop:26.7}, \textit{the realization corresponding to} $A$ \textit{and} $B$ \textit{is the integral closure of} $\Sigma_{1}$, \textit{denoted} $\bar{\Sigma}_{1}$. \textit{The canonical normal realization is} $\bar{\Sigma}_{f}$.
\end{definition}

\begin{remark}\label{rem:26.9}
The integral closure of any system is well-defined: given any $\Sigma_{1}$, by Lemma~\ref{lem:26.5} the pair ($\left.\Sigma_{1}, \Sigma_{2}=\Sigma_{\text {free }}(f)\right)$ satisfies the hypothesis of Proposition~\ref{prop:26.7}. Further, it is clear from the form of $A$ that the definition of $\bar{\Sigma}_{1}$ is independent of the $\Sigma_{2}$. Note also that from the definition of $B$ it follows that if $\Sigma_{1} \leq \Sigma_{2}$ then $\bar{\Sigma}_{1} \leq \bar{\Sigma}_{2}$ (integral closure is therefore an algebraic closure operator).
\end{remark}

\begin{examples}\label{ex:26.10}
For the response $f_{o}$ in Example~\ref{ex:18.1}, it follows from Example~\ref{ex:26.2} that $\Sigma_{f_{0}}$ is also the canonical normal realization of $f_{0}$. Consider instead the system $\Sigma$ with $U=k, Y=k^{2}$,
\label{op:129}%
\[
X=\left\{\left(x_{1}, x_{2}\right) \text { in } k^{2} \mid x_{1}^{2}=x_{2}^{3}\right\},
\]
initial state zero, and equations
\[
\begin{aligned}
& x_{1}(t+1)=u(t)^{3}, \quad x_{2}(t+1)=u(t)^{2}, \\
& y_{1}(t)=x_{1}(t), \quad y_{2}(t)=x_{2}(t) .
\end{aligned}
\]
Then, $\Sigma$ is not in $\mathrm{NOR}(f)$, because $\eta$ is in the quotient field $Q(\eta)$ of $A(X)=k\left[\eta^{2}, \eta^{3}\right]$ but $\eta$ satisfies the monic equation $z^{3}-\eta^{2}=0$, and is hence integral over $A(X)$. Its normalization $\bar{\Sigma}$ is the system with $X=k$ and
\[
\begin{aligned}
& x(t+1)=u(t), \\
& y_{1}(t)=x(t)^{3}, \quad y_{2}(t)=x(t)^{2},
\end{aligned}
\]
since the algebra $k[\eta]$ of $\bar{\Sigma}$ is the integral closure of $A(X)$. Since $\Sigma$ is canonical, $\bar{\Sigma}$ is $\bar{\Sigma}_{f}$ ($f=$ response of $\Sigma$), and is different from $\Sigma_{f}$. Note that $\Sigma$ had a singularity at the origin, while $\bar{\Sigma}$ has a nonsingular state space.
\end{examples}

\begin{theorem}\label{thm:26.11}
$\mathrm{NOR}(f)$ is a complete lattice.
\end{theorem}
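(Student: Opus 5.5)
I will work, as in Section~\ref{sec:23}, on the algebra side, identifying every element of $\mathrm{QR}(f)$ with a subalgebra $A$ of $\Psi=A(\Omega)$ that satisfies \eqref{eq:23.5} and \eqref{eq:23.6}, with $\Sigma_{\text{free}}(f)$ as the ambient system. By Lemma~\ref{lem:26.5}, $\Sigma_{\text{free}}(f)$ is normal. Under this identification, $\mathrm{NOR}(f)$ consists exactly of those admissible subalgebras $A$ which are integrally closed in their own quotient field $Q(A)$. The plan is to show that $\mathrm{NOR}(f)$ has arbitrary meets and has a largest element. A poset with both properties is a complete lattice, since the join of a family is then the meet of its upper bounds.

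\textbf{Largest element.} $\Sigma_{\text{free}}(f)$ lies in $\mathrm{NOR}(f)$ and is the largest element of $\mathrm{QR}(f)$ (Proposition~\ref{prop:23.4}). So it is the top element of $\mathrm{NOR}(f)$, and every family in $\mathrm{NOR}(f)$ has at least one upper bound there.

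\textbf{Meets.} Let $\{A_i\}$ be a family of integrally closed admissible algebras, and put $A=\bigcap_i A_i$. By the proof of Theorem~\ref{thm:23.7}, $A$ is admissible and corresponds to the meet in $\mathrm{QR}(f)$. It remains to check that $A$ is integrally closed.

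Take $b\in Q(A)$ integral over $A$. Since $Q(A)\subseteq Q(A_i)$ for each $i$, $b$ lies in $Q(A_i)$ and is integral over $A_i$. Hence $b\in A_i$ for every $i$, so $b\in A$. Thus $A$ is normal, and since it is the greatest lower bound in $\mathrm{QR}(f)$, it is also the greatest lower bound within $\mathrm{NOR}(f)$.

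\textbf{Joins.} For a family $\{\Sigma_i\}$ in $\mathrm{NOR}(f)$, I take the join to be the meet in $\mathrm{NOR}(f)$ of all normal upper bounds; this set is nonempty by the largest-element step. The intention is to identify this concretely with $\overline{\bigvee_i \Sigma_i}$, the integral closure (Definition~\ref{def:26.8}) of the $\mathrm{QR}(f)$-join.

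By Proposition~\ref{prop:26.7}, applied with $\Sigma_2=\Sigma_{\text{free}}(f)$, the integral closure $\bar A$ of the algebra $A'$ generated by the $A_i$ satisfies \eqref{eq:23.6}. It is integrally closed, and it is contained in every integrally closed admissible algebra containing $A'$. So $\bar A$ is the least normal upper bound, which gives the explicit description.

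\textbf{Expected obstacle.} The delicate point is the meet step. Here "integrally closed" refers to each algebra's own quotient field, and these fields differ across the family, so one must check $Q(\bigcap A_i)\subseteq Q(A_i)$. This inclusion is trivial, but it is exactly what makes the argument work.

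A second subtlety is that the meet of normal systems computed in $\mathrm{NOR}(f)$ coincides with the meet in $\mathrm{QR}(f)$, whereas the join generally does not. So I expect $\mathrm{NOR}(f)$ to be a complete lattice but not a sublattice of $\mathrm{QR}(f)$, and I would state this explicitly.
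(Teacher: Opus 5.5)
Your proof is correct and takes essentially the paper's approach. The paper also asserts that the meet in $\mathrm{NOR}(f)$ is the $\mathrm{QR}(f)$-meet (the intersection of the algebras), and that the join is the integral closure of the $\mathrm{QR}(f)$-join; you obtain both directly, the join via Proposition~\ref{prop:26.7}, which fills in the details the paper leaves as ``easy to verify.''
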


\begin{proof}
It is easy to verify, either directly or using properties of algebraic closure operators, that the meet in $\mathrm{NOR}(f)$ of a family $\left\{\Sigma_{i}\right\}$ is their meet in $\mathrm{QR}(f)$, while their join in $\mathrm{NOR}(f)$ is the integral closure of their join in $\mathrm{QR}(f)$.
\end{proof}

We now turn to proving some variants of the isomorphism theorem~\ref{thm:11.5} and of Lemma~\ref{lem:11.3}. To simplify (but: see Remarks~\ref{rem:26.20} below) we shall assume for the rest of this section that

\begin{quote}
$k$ \textit{is algebraically closed, of characteristic zero}.
\end{quote}

Before proving any results, we need to recall (with some changes in terminology) some well-known definitions and results from algebra.

\label{op:130}%

\begin{definition}\label{def:26.12}
A \textit{polynomial map} $T: X_{1} \rightarrow X_{2}$ \textit{between} $k$-\textit{spaces is one-to-one as schemes iff the following property holds: If} $P_{1}, P_{2}$ \textit{are prime ideals in} $A\left(X_{1}\right)$ \textit{and} $A(T)^{-1}\left(P_{1}\right)=A(T)^{-1}\left(P_{2}\right)$ \textit{then} $P_{1}=P_{2}$.
\end{definition}

Note that when $T$ is dominating and $A\left(X_{2}\right)$ is identified through $A(T)$ with a subalgebra of $A\left(X_{1}\right)$, the property becomes: If $P_{1} \cap A\left(X_{2}\right)= P_{2} \cap A\left(X_{2}\right)$ \textit{then} $P_{1}=P_{2}$.

\begin{remark}\label{rem:26.13}
Since $k$-points correspond to $k$-ideals (which are maximal, hence prime), a $T$ as in Definition~\ref{def:26.12} is necessarily one-to-one in the usual sense. The converse, however, need not hold. For example, let $X_{1}=X\left(k\left[\eta_{1} \eta_{2}^{t}, t \geq 0\right]\right)$ and $X_{2}=X\left(k\left[\eta_{1}, \eta_{1} \eta_{2}\right]\right)$, with $T: X_{1} \rightarrow X_{2}$ the map dual to the inclusion. Then $T$ is one-to-one, as shown in Example~\ref{ex:25.13}. Take now $P_{1}:=$ the ideal of $A\left(X_{1}\right)$ generated by all the monomials $\eta_{1} \eta_{2}^{t}, t \geq 0$. Since $A\left(X_{1}\right) / P_{1}$ is isomorphic to $k, P_{1}$ is prime (and in fact, a $k$-ideal). Let $P_{2}$ be the ideal of $A\left(X_{1}\right)$ generated by $\eta_{1}$ and $\eta_{1} \eta_{2}$; then $A\left(X_{1}\right) / P_{2}$ is isomorphic to $k\left[\eta_{1} \eta_{2}^{t}, t \geq 2\right]$, an integral domain; thus, $P_{2}$ is also a prime ideal, different from $P_{1}$. But $P_{1} \cap A\left(X_{2}\right)=P_{2} \cap A\left(X_{2}\right)$ : this is the $k$-ideal of $A\left(X_{2}\right)$ generated by $\eta_{1}$ and $\eta_{1} \eta_{2}$. Thus $T$ is \textit{not} one-to-one as schemes.
\end{remark}

However, one has the following

\begin{lemma}\label{lem:26.14}
If $T: X_{1} \rightarrow X_{2}$ is one-to-one and $X_{1}, X_{2}$ are varieties, then $T$ is also one-to-one as schemes.
\end{lemma}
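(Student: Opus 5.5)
Since $k$ is algebraically closed here, I would argue through the Nullstellensatz, moving from $k$-points to arbitrary primes. Put $A_i:=A(X_i)$, both finitely generated, and $\alpha:=A(T):A_2\to A_1$. For a prime $P$ of $A_1$, write $V(P)\subseteq X_1$ for its closed set. Because $A_1$ is finitely generated over an algebraically closed field, $A_1/P$ is a finitely generated domain, hence $k$-reduced; so $P$ is closed, and $P=I(V(P))$ (Remark~\ref{rem:2.8}, Lemma~\ref{lem:2.9}). The same holds for primes of $A_2$. So a prime $P$ of $A_1$ is determined by the irreducible closed set $V(P)$, and I need to show that $V(P)$ is determined by $\alpha^{-1}(P)$.

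First I identify $V(\alpha^{-1}(P))$. By Lemma~\ref{lem:3.11}(c), $V(\alpha^{-1}(P))=\overline{T(V(P))}$. So the hypothesis $\alpha^{-1}(P_1)=\alpha^{-1}(P_2)$ says exactly that $\overline{T(Z_1)}=\overline{T(Z_2)}=:W$, where $Z_i:=V(P_i)$ are irreducible closed subsets of $X_1$. It then remains to show that an injective polynomial map between varieties over an algebraically closed field sends distinct irreducible closed sets to sets with distinct closures.

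I will do this with dimension and constructibility. Restricted to $Z_i$, the map $T$ is injective and dominating onto the irreducible variety $W$. Theorem~\ref{thm:4.6}(f) then gives generic fibre dimension $\dim Z_i-\dim W$. The fibres are finite because $T$ is injective, so $\dim Z_1=\dim Z_2=\dim W$. By Theorem~\ref{thm:3.14}(c), each $T(Z_i)$ contains a nonempty open subset $O_i$ of $W$. Since $W$ is irreducible, $O_1\cap O_2\neq\emptyset$. Pick $w$ in this intersection. Then $w=T(z_1)=T(z_2)$ with $z_i\in Z_i$, and injectivity gives $z_1=z_2$. Repeating the argument on a dense open set shows that $T^{-1}(O_1\cap O_2)\cap Z_1$ is contained in $Z_2$. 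That set is a nonempty open subset of the irreducible $Z_1$ (its preimage under $T|_{Z_1}$ is open and nonempty), hence dense in $Z_1$. Therefore $Z_1\subseteq Z_2$, and by symmetry $Z_1=Z_2$, so $P_1=P_2$. The dimension count is not even needed for this step; it only serves as a sanity check.

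I expect the main obstacle to be the first step: justifying that every prime of a finitely generated algebra is the full ideal of its $k$-points, that is, the Nullstellensatz form of Remark~\ref{rem:2.8}(iv) for non-maximal primes. Remark~\ref{rem:26.13} shows this is exactly where the argument breaks for almost-varieties. If the paper's earlier statements do not cover this cleanly, I would cite Bourbaki V.3.3 (Jacobson rings) directly.
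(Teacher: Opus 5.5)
Your proof is correct and takes the same route as the paper. You identify primes with irreducible closed sets, then use Lemma~\ref{lem:3.11}(c) to turn $\alpha^{-1}(P_1)=\alpha^{-1}(P_2)$ into $\overline{T(Z_1)}=\overline{T(Z_2)}$, and finish with Chevalley (Theorem~\ref{thm:3.14}(c)), injectivity and density; the dimension count is indeed superfluous.

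The first step you flagged as the main obstacle is already covered by the Nullstellensatz as stated after Definition~\ref{def:1.3}, together with Lemma~\ref{lem:2.9}. Your density step is slightly more careful than the paper's, because you take the open set inside $Z_1$ rather than inside $T^{-1}(W)$, which need not be irreducible.
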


\begin{proof}
Prime ideals of $A(X)$ correspond to closed irreducible subsets of $X$ (cf. Lemma~\ref{lem:2.12}). Using Lemma~\ref{lem:3.11}(c), $T$ being one-to-one as schemes becomes: "If $V_{1}, V_{2}$ are irreducible closed subsets of $X$ such that $\overline{T\left(V_{1}\right)}=\overline{T\left(V_{2}\right)}$, then $V_{1}=V_{2}$ ". So assume that $\overline{T\left(V_{1}\right)}=\overline{T\left(V_{2}\right)}=W$. Let $T_{0}$ be the restriction of $T$ to $T^{-1}(W)$. Thus $T_{0}$ is dominating. By Theorem~\ref{thm:3.14}, $T_{0}\left(V_{i}\right)$ contains an open set $W_{i}$. Let $W_{3}:=W_{1} \cap W_{2}$, again open in $X_{2}$. Then $T_{0}^{-1}\left(W_{3}\right)$ is included in $V_{1} \cap V_{2}$, because $T$ is one-to-one. Since $T_{0}^{-1}\left(W_{3}\right)$ is open, hence dense, both $V_{i}=\bar{V}_{i}=T_{0}^{-1}(W)$, so $V_{1}=V_{2}$.
\end{proof}

\label{op:131}%
We shall need a further concept, that of an \textit{open immersion} $T: X_{1} \rightarrow X_{2}$. Its definition cannot be given without introducing the concept of \textit{nonaffine} schemes, which would complicate the exposition at this point; a discussion of immersions can be found in "EGA": \citet[Part 4]{grothendieck1971elements}. For our purposes it will be sufficient, however, to have the following consequence of the definition:

\begin{lemma}\label{lem:26.15}
If $T_{i}: X_{i} \rightarrow X$, $i=1,2$, $T_{2}$ is an open immersion, and $T_{1}\left(X_{1}\right) \subseteq T_{2}\left(X_{2}\right)$, then there exists a (unique) $T: X_{1} \rightarrow X_{2}$ such that $T_{2} \circ T=T_{1}$.
\end{lemma}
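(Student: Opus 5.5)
The plan is to reduce to the defining property of open immersions in the scheme-theoretic sense, since the paper explicitly defers the definition to EGA. First, I will regard $X$, $X_1$, $X_2$ as (the $k$-points of) the affine schemes $\operatorname{Spec} A(X)$, etc., and $T_2$ as an isomorphism of $X_2$ onto an open subscheme $W$ of $X$. Then $W$ is covered by principal open sets $D(a_i)$, and on each of them the structure sheaf is $a_i^{-1}A(X)$, as in Definition~\ref{def:3.15} and \eqref{eq:3.16}.

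Next, I will treat uniqueness. $T_2$ is one-to-one (open immersions are monomorphisms), so any $T$ with $T_2\circ T=T_1$ is determined pointwise by $T(x)=T_2^{-1}(T_1(x))$. This also shows that the underlying map of sets is forced and well defined, because $T_1(X_1)\subseteq T_2(X_2)$.

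For existence, the remaining issue is that $T$ must be polynomial, i.e.\ that $A(T)$ sends $A(X_2)$ into $A(X_1)$. I will proceed as follows.
\begin{enumerate}
\item Pull back the cover: the sets $T_1^{-1}(D(a_i))=D(A(T_1)(a_i))$ cover $X_1$. Since $T_1(X_1)\subseteq W$, the elements $b_i:=A(T_1)(a_i)$ have no common zero on $X_1$.
\item On each piece, $A(T_1)$ extends to $a_i^{-1}A(X)\to b_i^{-1}A(X_1)$. Composing with the isomorphism $A(X_2)|_{D(a_i)}\cong a_i^{-1}A(X)$ gives local maps $A(X_2)\to b_i^{-1}A(X_1)$.
\item These local maps agree on overlaps, so the sheaf property glues them into $A(X_2)\to A(X_1)$. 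Equivalently, an element of $A(X_1)$ is recovered from compatible elements of the $b_i^{-1}A(X_1)$.
\end{enumerate}

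The main obstacle is step 3, specifically the gluing on $X_1$. The difficulty is that $b_i$ having no common $k$-point zero does not mean the $b_i$ generate the unit ideal when $k$ is not algebraically closed, or when $A(X_1)$ is not finitely generated. I would first try to handle this at the level of schemes: the scheme morphism $\operatorname{Spec}A(X_1)\to\operatorname{Spec}A(X)$ factors set-theoretically through $W$ on all prime ideals, not just $k$-points, and then the universal property of open subschemes in EGA gives the factorization directly. If only the $k$-point inclusion is available, I would need to take the hypothesis $T_1(X_1)\subseteq T_2(X_2)$ in the scheme sense, which is presumably how it is intended here.
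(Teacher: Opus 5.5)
Your framework (regard $T_2$ as an isomorphism onto an open subscheme $W\subseteq\operatorname{Spec}A(X)$ and use the universal property of open subschemes) is what the paper intends. The paper gives no argument at all; it only calls the lemma a consequence of the EGA definition. But your proposal stops at exactly the one point that needs proof. Step~3 is left open, and the fallback of reading $T_1(X_1)\subseteq T_2(X_2)$ ``in the scheme sense'' changes the statement instead of proving it.

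Your worry is justified in general. Take $k=\mathbb{R}$, $X=X_1=\mathbb{R}$ with $A(X)=\mathbb{R}[t]$, $T_1=\mathrm{id}$, and $T_2$ dual to $\mathbb{R}[t]\subseteq\mathbb{R}[t,(t^2+1)^{-1}]$. Then $T_2$ is an open immersion that is bijective on real points. Yet a factorization $T$ would force $t^2+1$ to be a unit of $\mathbb{R}[t]$.

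The missing idea is the Nullstellensatz, which is available where the lemma lives. Just before Definition~\ref{def:26.12}, the paper assumes for the rest of Section~\ref{sec:26} that $k$ is algebraically closed. In the lemma's only application (Corollary~\ref{cor:26.18}), $X_1$ is the state space of a polynomial system, so $A(X_1)$ is finitely generated. Hence every maximal ideal of $A(X_1)$ is a $k$-ideal, as recalled after Definition~\ref{def:1.1}.

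Now close step~3 as follows. Since $W\cong\operatorname{Spec}A(X_2)$ is quasi-compact, finitely many $D(a_1),\dots,D(a_n)$ cover it. Suppose $b_1,\dots,b_n$ generated a proper ideal. It would lie in $\ker x$ for some $x\in X_1$. Then $a_i(T_1(x))=b_i(x)=0$ for all $i$, so $T_1(x)\notin W$. But the $k$-points of $W$ are exactly $T_2(X_2)$, which contains $T_1(x)$. Thus $\sum c_ib_i=1$. Step~3 is then the standard fact that compatible elements of the localizations $A(X_1)_{b_i}$ glue uniquely to an element of $A(X_1)$ when the $b_i$ generate the unit ideal.

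Equivalently, the closed complement of $T_1^{-1}(W)$ in $\operatorname{Spec}A(X_1)$, if nonempty, would contain a maximal ideal, i.e.\ a $k$-point. So it is empty, and the universal property of $W$ applies directly.

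No finiteness is needed on $X$ or $X_2$. This matters, since in the application $X$ is the state space of $\bar{\Sigma}_f$, which need not be a variety. Your uniqueness argument is fine as it stands.
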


As before, we shall say that a $k$-system morphism $T: \Sigma_{1} \rightarrow \Sigma_{2}$ is an open immersion, or one-to-one as schemes, iff the corresponding property holds for the underlying $T: X_{1} \rightarrow X_{2}$.

The following technical result, based on Zariski's Main Theorem, is the key to the isomorphism theory for normal realizations.

\begin{lemma}\label{lem:26.16}
Let $\Sigma_{1}, \Sigma_{2}$ be in $\mathrm{QR}(f)$ and let $T: \Sigma_{1} \rightarrow \Sigma_{2}$ be one-to-one as schemes, with $\Sigma_{1}$ finite-dimensional and $\Sigma_{2}$ normal. Then $T$ is an open immersion.
\end{lemma}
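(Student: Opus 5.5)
The plan is to translate the statement into commutative algebra and apply Zariski's Main Theorem in its algebraic form (Grothendieck, EGA IV, 8.12.10 / 18.12.13): a quasi-finite, separated morphism of finite type into a normal integral scheme, which is birational (or, more generally, injective on points with trivial residue extensions), is an open immersion. Throughout, identify $A_2:=A(X_2)$ with a subalgebra of $A_1:=A(X_1)$ via $A(T)$. This is legitimate because $T$ is dominating by Lemma~\ref{lem:23.2}, hence $A(T)$ is one-to-one by Lemma~\ref{lem:3.11}(d). Both algebras are integral domains, since quasi-reachable systems are irreducible, and $A_2$ is integrally closed by hypothesis.

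First I would show that $T$ is birational, i.e.\ $Q(A_1)=Q(A_2)$. The hypothesis ``one-to-one as schemes'' applied to the generic points (the zero ideals) already forces injectivity on $\operatorname{Spec}$. Since $\Sigma_1$ is finite dimensional, $\Sigma_2$ is too, because $A_2\subseteq A_1$. I would then use Theorem~\ref{thm:4.6} on the dominating map between irreducible almost-varieties to show that generic fibres are finite of cardinality $s$ ($k$ being algebraically closed of characteristic zero). Injectivity then gives $s=1$ and $n=m$, exactly as in the proof of Lemma~\ref{lem:25.5}. So $Q(A_1)$ is algebraic over $Q(A_2)$ of separable degree one, and the two fields coincide because the characteristic is zero.

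Second, I would reduce to the finite-type situation required by Zariski's Main Theorem. By Lemma~\ref{lem:3.18} there are principal opens $D(a)\subseteq X_2$ and $D(b)\subseteq X_1$ that are varieties. Shrinking as in the proof of Theorem~\ref{thm:4.6}, I would choose $a\in A_2$ so that $a^{-1}A_2$ is finitely generated and $(ab)^{-1}A_1$ is a finitely generated $a^{-1}A_2$-algebra. Since $a^{-1}A_2$ is normal and the map on $\operatorname{Spec}$ is injective and birational, ZMT shows that $T$ restricted over $D(a)$ is an open immersion. The step I expect to be the main obstacle is the global one: $A_1$ and $A_2$ need not be finitely generated, so $T$ is not of finite type, and away from $D(a)$ one must still cover $X_1$. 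The plan is to cover $X_2$ by principal opens $D(c)$ on which the localizations become finitely generated. Whether this is possible around every point of $T(X_1)$ is exactly where finite dimensionality must be exploited, presumably by writing $A_1$ as a filtered union of finitely generated subalgebras and applying the Grothendieck form of ZMT for quasi-finite separated maps to $\operatorname{Spec}A_1\to\operatorname{Spec}A_2$ locally.

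Third, I would glue. Once each point of $\operatorname{Spec}A_1$ has a neighbourhood mapped isomorphically onto an open of $\operatorname{Spec}A_2$, injectivity as schemes makes these local isomorphisms compatible. Hence $T$ identifies $X_1$ with an open subscheme of $X_2$, which is the definition of open immersion. If the non-finite-type issue proves intractable, the fallback is to use the dynamics: since $T$ is a system morphism and both systems are quasi-reachable, it suffices to prove the open-immersion property on a dense open set. I would then try to transport it along the transition maps $P_i(\cdot,u)$, but this is less certain.
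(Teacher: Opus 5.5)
Your Step 1 matches the paper: birationality of $T$ comes exactly as in Lemma~\ref{lem:25.5}, from Theorem~\ref{thm:4.6} and injectivity on points. The gap is in Steps 2--3. Localizing to $D(a)$ and $D(ab)$ only shows that $T$ is an open immersion on the dense open set $D(ab)\subseteq X_1$. For the points of $X_1$ outside it, you offer a covering that is only ``presumably'' available and a dynamical fallback you yourself doubt; it has little chance of working, since the maps $P_1(\cdot,u)$ are in general far from open immersions. So Step 3 glues local isomorphisms that were never constructed.

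The idea you are missing is to apply Zariski's Main Theorem globally, in the factorization form the paper uses. $T$ is affine, hence separated, and injective on spectra, hence has finite fibres. So if it is of finite type, it factors as $X_1\xrightarrow{T'}Z\xrightarrow{T''}X_2$ with $T'$ an open immersion and $T''$ finite. This form of the theorem needs no Noetherian or finite-generation hypothesis on $A_2$. Replace $Z$ by the closure of $T'(X_1)$. Then $A(Z)$ is a domain, integral over $A_2$, and contained in $Q(A_1)=Q(A_2)$ by your Step 1. Since $A_2$ is integrally closed, $A(Z)=A_2$, so $T''$ is an isomorphism and $T=T'$. Nothing has to be covered or glued.

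What this argument does need is that $T$ be of finite type, i.e.\ that $A_1$ be finitely generated as an $A_2$-algebra. The paper takes this for granted. Your inference that $T$ is not of finite type because $A_1$ and $A_2$ need not be finitely generated over $k$ is a non sequitur, but the worry behind it is real. No argument using only finite dimensionality, birationality, normality of $A_2$ and injectivity on spectra can remove it. Take $A_2=k[x,y]\subset A_1=k[x]+y\,k[x,x^{-1},y]$. The extension is birational, $A_1$ has transcendence degree $2$, and $A_2$ is normal. The map $\operatorname{Spec}A_1\to\operatorname{Spec}A_2$ is injective, because the only prime of $A_1$ containing $x$ is the maximal ideal $xA_1$, which lies over $(x,y)$. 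Yet the image $D(x)\cup\{(x,y)\}$ is not open. So the right repair is not a cleverer local covering but to require, or verify, that $T$ is of finite type. This holds, for instance, when $\Sigma_1$ is a polynomial system, which is exactly the situation of Corollary~\ref{cor:26.17} where the lemma is applied. The global argument above then completes the proof.
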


\begin{proof}
By Zariski's Main Theorem (see \citealp[Corollary 18.12.13]{grothendieck1967elements}), $T: X_{1} \rightarrow X_{2}$ factors as $X_{1} \rightarrow Z \rightarrow X_{2}$, with $T^{\prime}: X_{1} \rightarrow Z$ an open immersion, and $T^{\prime \prime}: Z \rightarrow X_{2}$ a finite morphism. Since $\Sigma_{1}$ is finite-dimensional and quasi-reachable, it is almost polynomial; $T$ being dominating, $\Sigma_{2}$ is also almost polynomial. Hence, by Lemma~\ref{lem:25.5}, $T$ is birational. Thus $T^{\prime \prime}$ is also birational, and so (since $X_{2}$ is normal), it is an isomorphism. Thus $T=T^{\prime}$ is an open immersion, as wanted.
\end{proof}

\begin{corollary}\label{cor:26.17}
If $\Sigma$ is a polynomial system in $\mathrm{AO}(f) \cap \mathrm{NOR}(f)$ then the natural morphism $T: \Sigma \rightarrow \bar{\Sigma}_{f}$ is an open immersion.
\end{corollary}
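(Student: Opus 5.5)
The plan is to apply Lemma~\ref{lem:26.16} to the natural morphism $T:\Sigma\to\bar{\Sigma}_{f}$. For that we need: (i) $\Sigma$ and $\bar{\Sigma}_f$ are in $\mathrm{QR}(f)$ and $T$ exists; (ii) $\Sigma$ is finite-dimensional; (iii) $\bar{\Sigma}_f$ is normal; (iv) $T$ is one-to-one as schemes.

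Items (i)--(iii) are bookkeeping. Since $\Sigma\in\mathrm{NOR}(f)$ and $\Sigma_f\leq\Sigma$, Proposition~\ref{prop:26.7} with $\Sigma_1=\Sigma_f$ and $\Sigma_2=\Sigma$ shows that the integral closure $\bar{\mathcal{A}}_f$ of $\mathcal{A}_f$, computed inside $A(X_\Sigma)$, is an integrally closed subalgebra satisfying \eqref{eq:23.5}--\eqref{eq:23.6}. By Remark~\ref{rem:26.9} it is independent of the ambient system, so it is $A(\bar X_f)$. Hence $\bar{\Sigma}_f\leq\Sigma$, i.e.\ $T$ exists, is dominating (Lemma~\ref{lem:23.2}), and its transpose is the inclusion $\bar{\mathcal{A}}_f\subseteq A(X_\Sigma)$. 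The target is normal by construction. $\Sigma$ is polynomial, so $A(X_\Sigma)$ is finitely generated and $\dim\Sigma<\infty$.

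The main step is (iv). First show $T$ is one-to-one on $k$-points. Let $T':\bar{\Sigma}_f\to\Sigma_f$ be the canonical map, so that $T'\circ T$ is the canonical morphism $\Sigma\to\Sigma_f$ (Lemma~\ref{lem:23.2}). Since $\Sigma\in\mathrm{AO}(f)$, the argument in the proof of Theorem~\ref{thm:25.11} shows that this composite is injective; hence $T$ is injective.

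Lemma~\ref{lem:26.14} would upgrade this to injectivity as schemes, but it requires both spaces to be varieties, and $X_\Sigma$ is one while $\bar X_f$ need not be. I would try two routes.
\begin{enumerate}
\item[(a)] Show $\bar{\mathcal{A}}_f$ is finitely generated. By Lemma~\ref{lem:25.5}, injectivity gives that $T'\circ T$ is birational, so $Q(A(X_\Sigma))=\mathcal{Q}_f$, which is finitely generated by Theorem~\ref{thm:13.2}. If $A(X_\Sigma)$ were integral over $\mathcal{A}_f$, Lemma~\ref{lem:1.13}(b) would then give finite generation of $\bar{\mathcal{A}}_f$; but this integrality is not available in general, so this route is uncertain.
\item[(b)] Argue directly with prime ideals, as in the proof of Lemma~\ref{lem:26.14}. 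Suppose $P_1,P_2$ are primes of $A(X_\Sigma)$ with the same contraction to $\bar{\mathcal{A}}_f$, and let $V_1,V_2$ be the corresponding irreducible closed sets of the variety $X_\Sigma$. Then $\overline{T(V_1)}=\overline{T(V_2)}$. Apply the constructibility result Theorem~\ref{thm:3.14} in an open variety chart of the irreducible almost-variety $\overline{T(V_i)}$, obtained via Lemma~\ref{lem:3.18} and Theorem~\ref{thm:4.6}(a). This yields a common open subset contained in $T(V_1)\cap T(V_2)$, and injectivity of $T$ then forces $V_1=V_2$. The extra care is that $\overline{T(V_i)}$ is only an almost-variety, so we must localize to a principal open variety chart and pull back along $T$ there.
\end{enumerate}

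I would try route (b) first; it is the main obstacle of the proof. Once (iv) is established, Lemma~\ref{lem:26.16} gives that $T$ is an open immersion.
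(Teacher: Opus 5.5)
Your proposal is correct. At the key step (iv), though, it takes a different route from the paper.

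\textbf{The paper's route.} The paper never argues with $T$ directly. It uses abstract observability of $\Sigma$, together with the fact that $X_\Sigma$ is a variety (finite-time observability via a Noetherian argument, Prop.~7.2 of Sontag's 1975 paper). This lets it pick finitely many experiments $\omega_1,\ldots,\omega_r$ such that $H=(h^{\omega_1},\ldots,h^{\omega_r}):X_\Sigma\to Y^r$ is injective. Since both source and target of $H$ are varieties, Lemma~\ref{lem:26.14} applies verbatim to $H$. Moreover $H=H_f\circ T$, with $H_f$ the analogous map on $\bar X_f$. So two primes of $A(X_\Sigma)$ with the same contraction to $A(\bar X_f)$ also have the same contraction to $A(Y^r)$, and scheme-injectivity passes from $H$ to $T$. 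This sidesteps the fact that $\bar X_f$ need not be a variety, and it makes your separate check of injectivity on $k$-points unnecessary.

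\textbf{Your route (b).} You instead prove a version of Lemma~\ref{lem:26.14} with an almost-variety target, and it does go through:
\begin{itemize}
\item Primes of the finitely generated algebra $A(X_\Sigma)$ are closed, because $k$ is algebraically closed.
\item $W=\overline{T(V_1)}=\overline{T(V_2)}$ is an irreducible almost-variety dominated by $V_i$. So Lemma~\ref{lem:3.18} gives a nonempty principal open $D_W(a)$ that is a variety.
\item The preimage of $D_W(a)$ in $V_i$ is the principal open $D_{V_i}(a)$, hence a variety. It maps dominantly onto $D_W(a)$, since localization preserves injectivity of the inclusion $A(W)\subseteq A(V_i)$.
\item Theorem~\ref{thm:3.14}(c) then gives nonempty opens $W_i\subseteq T(V_i)$. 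Their intersection is nonempty because $D_W(a)$ is irreducible. Its preimage is a dense open subset of $V_1$, and it is contained in $V_2$ by injectivity of $T$. The symmetric argument finishes.
\end{itemize}

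\textbf{Trade-off.} Your argument stays within the paper's own tools and yields a slightly more general lemma. The paper's argument is shorter and reuses Lemma~\ref{lem:26.14} as stated, at the cost of the external finite-observability result. Route (a) is not needed and can be dropped.
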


\begin{proof}
Let $\omega_{1}, \ldots, \omega_{r}$ be input sequences such that
\label{op:132}%
\[
\begin{aligned}
H: & X \rightarrow Y \times Y \times \ldots \times Y \quad(r \text { times }) \\
& x \mapsto\left(h^{\omega_{1}}(x), \ldots, h^{\omega_{r}}(x)\right)^{\prime}
\end{aligned}
\]
is one-to-one (see \citealp[Prop. 7.2]{sontag1975discrete}). By Lemma~\ref{lem:26.14}, $H$ is also one-to-one as schemes. If $H_{f}$ is the analogous map for $\bar{\Sigma}_{f}$ (for the same $\omega_{i}$), $H_{f} \circ T=H$. Thus $T$ is also one-to-one as schemes. So Lemma~\ref{lem:26.16} can be applied.
\end{proof}

\begin{corollary}\label{cor:26.18}
Let $\Sigma_{1}, \Sigma_{2}$ be as in Corollary~\ref{cor:26.17}. Assume that $\Sigma_{1}$ is reachable. Then $\Sigma_{2} \leq \Sigma_{1}$.
\end{corollary}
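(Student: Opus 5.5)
The plan is to compare both systems inside $\bar{\Sigma}_{f}$ using Corollary~\ref{cor:26.17} and then to use reachability of $\Sigma_{1}$ to fit it inside the open image of $\Sigma_{2}$. Corollary~\ref{cor:26.17}, applied to each $\Sigma_{i}$, gives natural morphisms $T_{i}: \Sigma_{i} \rightarrow \bar{\Sigma}_{f}$ that are open immersions; each is one-to-one, being one-to-one as schemes by Remark~\ref{rem:26.13}. The goal is a $k$-system morphism $T: \Sigma_{1} \rightarrow \Sigma_{2}$, which by Definition~\ref{def:23.1} is exactly the statement $\Sigma_{2} \leq \Sigma_{1}$.

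First I would show that $T_{1}(X_{1}) \subseteq T_{2}(X_{2})$. Since $\Sigma_{1}$ is reachable, every $x \in X_{1}$ has the form $x = g_{1}(\omega)$ for some $\omega$ in $U[z]$. Because $T_{1}$ is a system morphism, $T_{1}(g_{1}(\omega)) = \bar{g}_{f}(\omega)$, where $\bar{g}_{f}$ is the reachability map of $\bar{\Sigma}_{f}$. Likewise $T_{2}(g_{2}(\omega)) = \bar{g}_{f}(\omega)$. Hence $T_{1}(x) = T_{2}(g_{2}(\omega)) \in T_{2}(X_{2})$, so the image of $X_{1}$ lies in the image of $X_{2}$.

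Next, Lemma~\ref{lem:26.15} applied with $X = \bar{X}_{f}$, using that $T_{2}$ is an open immersion, gives a unique polynomial map $T: X_{1} \rightarrow X_{2}$ with $T_{2} \circ T = T_{1}$. Explicitly, $T(g_{1}(\omega)) = g_{2}(\omega)$, because $T_{2}$ is injective. It remains to check that $T$ is a $k$-system morphism.

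For the initial state, $T(x_{1}^{\#}) = x_{2}^{\#}$ by the case $\omega = (0)$. For the outputs, $h_{2} \circ T = \bar{h}_{f} \circ T_{2} \circ T = \bar{h}_{f} \circ T_{1} = h_{1}$. For the transitions, I would compare $T_{2}(T(P_{1}(x,u)))$ with $T_{2}(P_{2}(T(x),u))$: both equal $\bar{P}_{f}(T_{1}(x),u)$, because $T_{1}$ and $T_{2}$ commute with the transition maps. Injectivity of $T_{2}$ then gives $T \circ P_{1} = P_{2} \circ (T \times 1)$. Alternatively, one can argue on reachable states, where $T(g_{1}(\omega u)) = g_{2}(\omega u)$ holds directly.

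The main delicacy is invoking Lemma~\ref{lem:26.15}. It requires the set-theoretic containment of images, and that containment is precisely where reachability of $\Sigma_{1}$, not mere quasi-reachability, is used. Polynomiality of $T$ comes from the open-immersion property and not from any set-level construction.
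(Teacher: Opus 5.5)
Your proof is correct and takes the same route as the paper, whose proof just cites Lemma~\ref{lem:26.15} together with Corollary~\ref{cor:26.17}. You fill in the details the paper leaves implicit: the containment $T_{1}(X_{1}) \subseteq T_{2}(X_{2})$, which is where reachability of $\Sigma_{1}$ is used, and the check, via injectivity of $T_{2}$, that the induced map is a $k$-system morphism $\Sigma_{1}\rightarrow\Sigma_{2}$.
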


\begin{proof}
Immediate from Lemma~\ref{lem:26.15} and Corollary~\ref{cor:26.17}.
\end{proof}

We can then conclude one of the main results of this section:

\begin{theorem}\label{thm:26.19}
Any two abstractly canonical normal polynomial realizations are isomorphic as $k$-systems.
\end{theorem}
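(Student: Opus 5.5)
The plan is to apply Corollary~\ref{cor:26.18} twice, in opposite directions, and then use the antisymmetry of the order on $\mathrm{QR}(f)$ (Corollary~\ref{cor:23.3}). Let $\Sigma_{1}$ and $\Sigma_{2}$ be two abstractly canonical normal polynomial realizations of $f$; as throughout this part of the section, $k$ is algebraically closed of characteristic zero.

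First I check that each $\Sigma_{i}$ meets the hypotheses of Corollaries~\ref{cor:26.17} and~\ref{cor:26.18}. Abstract canonicity means $\Sigma_{i}$ is reachable and abstractly observable. Reachable gives quasi-reachable, so $\Sigma_{i}$ lies in $\mathrm{QR}(f)$. Abstractly observable gives $\Sigma_{i}\in\mathrm{AO}(f)$. Normality together with quasi-reachability gives $\Sigma_{i}\in\mathrm{NOR}(f)$. Since $\Sigma_{i}$ is a polynomial system, $X_{i}$ is a variety; in particular it is finite dimensional, which is what Lemma~\ref{lem:26.16} requires. So both systems are polynomial systems in $\mathrm{AO}(f)\cap\mathrm{NOR}(f)$, and both are reachable.

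Next I apply Corollary~\ref{cor:26.18} with $\Sigma_{1}$ as the reachable system, which gives $\Sigma_{2}\le\Sigma_{1}$. Exchanging the roles gives $\Sigma_{1}\le\Sigma_{2}$. By Corollary~\ref{cor:23.3}, $\leq$ is a partial order on isomorphism classes in $\mathrm{QR}(f)$, so $\Sigma_{1}\cong\Sigma_{2}$ as $k$-systems. Concretely, the two morphisms compose to endomorphisms of quasi-reachable systems, and by Lemma~\ref{lem:23.2} these must be the identities.

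The step I would scrutinize is the proof of Corollary~\ref{cor:26.18}, since everything rests on it. There, both natural morphisms $T_{i}:\Sigma_{i}\to\bar{\Sigma}_{f}$ are open immersions by Corollary~\ref{cor:26.17}. Reachability of $\Sigma_{1}$ gives $T_{1}(X_{1})=\{\bar g_{f}(\omega)\}\subseteq T_{2}(X_{2})$, because every reachable state of $\bar{\Sigma}_{f}$ is $T_{2}$ of a reachable state of $\Sigma_{2}$. Lemma~\ref{lem:26.15} then yields a polynomial map $T:X_{1}\to X_{2}$ with $T_{2}\circ T=T_{1}$. It remains to show that $T$ is a system morphism. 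For this I would argue as in Lemma~\ref{lem:11.3}: $T$ sends $g_{1}(\omega)$ to $g_{2}(\omega)$, because $T_{2}$ is injective, so $T$ commutes with transitions and outputs on the reachable set, and these identities extend by continuity and density. If Corollary~\ref{cor:26.18} is taken as given, the theorem follows immediately.
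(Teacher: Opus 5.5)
Your proposal is correct and matches the paper's argument: Theorem~\ref{thm:26.19} is stated as an immediate consequence of Corollary~\ref{cor:26.18}, applied in both directions to the two reachable systems, together with antisymmetry of $\le$ on $\mathrm{QR}(f)$ (Corollary~\ref{cor:23.3}). Your check that the map from Lemma~\ref{lem:26.15} is a $k$-system morphism is also fine; the density step is not even needed, since $\Sigma_{1}$ is reachable and so every state of $X_{1}$ already has the form $g_{1}(\omega)$.
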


\begin{remarks}[A]\label{rem:26.20}
Analogous results can be derived for an arbitrary field $k$, provided that "abstract observability" be \textit{re-defined}, taking into account points in the "extended" state-space which includes points in the algebraic closure of $k$. For example, the system over the reals $x(t+1)=u(t), y(t)=x^{3}(t)$ is \textit{not} abstractly observable in this restricted sense, because the map $x \rightarrow x^{3}$ is not one-to-one over the complex numbers.
\begin{enumerate}
\item[(b)] In Section~\ref{ch:1}, the first definition proposed for "polynomial systems" was that of a system of simultaneous first-order difference equations, i.e., $X=k^{n}$, thus a polynomial normal system. So Theorem~\ref{thm:26.19} insures that two systems of this type, realizing the same response and both abstractly observable and reachable, are isomorphic via a polynomial coordinate change.
\item[(c)] Restricting to systems with $X=k^{n}$, a rather strong result in fact holds: If $\Sigma_{1} \leq \Sigma_{2}$ and $\Sigma_{2}$ is abstractly observable, then $\Sigma_{1}$ is isomorphic to $\Sigma_{2}$. Indeed, the $T: X_{2} \rightarrow X_{1}$ must be one-to-one, by observability of $\Sigma_{2}$. But a one-to-one polynomial map from $k^{n}$ into
\label{op:133}$k^{n}$ must be onto (see e.g. \citealp[Chapter I]{cherlin1976model}). So $T$ is an isomorphism, by Lemma~\ref{lem:26.16}.
\end{enumerate}
\end{remarks}

\label{op:134}%
\section{Other Topics}\label{ch:7}

We have already seen that the response $f$ of a polynomial system $\Sigma$ does not in general admit a polynomial canonical realization, unless certain restrictions (boundedness, existence of a recursive equation, etc.) are imposed on $f$ (or on $\Sigma$). For the general case, the results in section~\ref{sec:27} will exhibit the canonical realization in terms of locally rational transition and output maps. Section~\ref{sec:28} deals with the nonexistence in general of sets of polynomial representations of "low" dimensions. Generalizations of the present work to the case of nonequilibrium initial states and more general input, state, and output spaces are discussed briefly in Section~\ref{sec:29}, while the last section includes a short discussion of the problem of checking polynomial realizability, as well as other extensions and suggestions for further research.

\subsection{The Canonical State-Space}\label{sec:27}

Before stating the main result of this section, we shall motivate our approach. Unless otherwise stated, $f$ will denote the response of a fixed but arbitrary polynomial system $\Sigma$.

Obtaining rational transitions for $\Sigma_{f}$ is in a sense trivial. Since the observation field $\mathcal{Q}_{f}$ is finitely generated (as a field), and since the algebra homomorphism
\[
A\left(P_{f}\right): \mathcal{A}_{f} \rightarrow \mathcal{A}_{f}\left[T_{1}, \ldots, T_{m}\right]
\]
is one-to-one (because of quasi-reachability), $A\left(P_{f}\right)$ can be uniquely extended to $\mathcal{Q}_{f}$ and is thus completely determined by its action on a set of generators $q_{1}, \ldots, q_{r}$ of $\mathcal{Q}_{f}$. Similarly, $A(h)\left(L_{i}\right)$ is rational in the $q_{i}$ for each generator $L_{i}$ of $A(Y)$. This gives a realization with state-space $k^{r}$ and transition and output maps rational (explicitly, $A\left(P_{f}\right)\left(q_{i}\right)$ gives the $i$-th coordinate of the next state as a rational function of previous state and input). When the field $k$ has characteristic zero, $r$ can be taken as low as $n+1$, $n=$ dimension of $\Sigma_{f}$. The drawback of this simple-minded approach is of course that there is no way
\label{op:135}to guarantee that a state and input configuration will not appear, which is a pole of the corresponding rational functions. Still, it is interesting to note that outputs can be calculated except for a "generic" input sequence (those not in a certain proper algebraic subset), so the response $f$ is \textit{completely determined} from this rational realization. A similar situation occurs with rational difference equations (Theorem~\ref{thm:16.2}) for $f$ : a rather low-order equation expresses future outputs as a rational function of past inputs and outputs; this permits a very efficient calculation for "generic" inputs, and the complete formal Volterra Series for $f$ can still be recovered from the equation (Remark~\ref{rem:16.8}(b), Example~\ref{ex:18.8}).

The problem is much less trivial if one is to explicitly define transitions for every possible state and input. One way to do this is to first define enough rational functions so that their domains of definition cover $X_{R} \times U$ ($X_{R}=$ reachable set), each rational function defined on a variety, and to implement transition and output maps via a series of
\begin{equation}\label{eq:27.1}
\text{ "if } Q_{i}(x, u) \text{ then } R_{i}(x, u) \text{ else" }
\end{equation}
statements, each $Q_{i}$ being a predicate consisting of polynomial equalities and inequalities and each $R_{j}$ a rational function defined at those ($x, u$) for which $Q_{i}(x, u)$ holds. We shall prove in the rest of this section that such a representation indeed exists. The proof rests upon a decomposition of (a large enough subset of) the state-space into (quasi-affine) varieties. An example of such a decomposition is provided by the response $f_{o}$ considered in example~\ref{ex:18.1}. Its canonical state-space can be decomposed into the variety
\[
X_{1}:=\left\{\left(x_{1}, x_{2}, x_{3}\right) \text { in } k^{3} \mid\left(x_{2}+1\right) x_{3}=1\right\}
\]
(this corresponds, via the natural projection $\left(x_{1}, x_{2}, x_{3}\right) \mapsto\left(x_{1}, x_{2}\right)$, to the set $D$ in p. 93) and an extra point (thought of as a variety $X_{2}$ of dimension zero). Thus a state $x$ can be either in $X_{1}$ or in $X_{2}$; if in the latter, $P(x, u)=x$ and $h(x)=-1$; if in $X_{1}$, then $h(x)=x_{2}$ and for transitions $P$ : if $x_{1} x_{2}+x_{1}+x_{2} \neq-1$ then
\label{op:136}%
\[
P(x, u):=\left(x_{1}+u, x_{1} x_{2}+x_{1}+x_{2},\left(x_{1} x_{2}+x_{1}+x_{2}+1\right)^{-1}\right),
\]
else, $P(x, u):=$ the only state in $X_{2}$ (a constant function).

The proofs and statements of the above facts involve algebraic-geometric notions somewhat less elementary than those used in previous sections. We shall not explain these notions in detail, but will give references to the relevant literature. We begin with a

\begin{definition}\label{def:27.2}
\textit{A decomposition of a} $k$-\textit{system} $\Sigma$ \textit{into quasi-affine varieties consists of a set} $Z_{1}, \ldots, Z_{r}$ \textit{of quasi-affine varieties and morphisms} $\varphi_{i}: Z_{i} \rightarrow X$ \textit{such that, denoting} $X_{i}:=\varphi_{i}\left(Z_{i}\right)$ \textit{and} $X_{0}:=$ \textit{union of the} $X_{i}$ :
\begin{enumerate}
\item[(a)] \textit{each} $\varphi_{i}$ \textit{is an immersion},
\item[(b)] $X_{i} \cap X_{j}$ \textit{is empty for} $i \neq j$,
\item[(c)] $P\left(X_{0} \times U\right) \subseteq X_{0}$, \textit{and}
\item[(d)] $x^{\#}$ \textit{is in} $X_{0}$.
\end{enumerate}
\end{definition}

A good reference for the algebraic-geometric concepts used above is \citet{hartshorne1977algebraic}: "morphism" means morphism of schemes, "immersion" means an isomorphism with an open subscheme of a closed subscheme of $X$ (Hartshorne, p. 120), and "quasi-affine variety" means an open subset of an affine variety (Hartshorne, p. 3). Since nonaffine varieties also appear, for the rest of this section the varieties introduced in Section~\ref{ch:2} will be called \textit{affine} varieties.

\begin{remark}\label{rem:27.3}
Given a decomposition as in Definition~\ref{def:27.2}, the (restriction to $X_{0}$ of the) transition and output maps of $\Sigma$ can be defined separately in each $X_{i}$, which is up to isomorphism a quasi-affine variety. For example, $h$ gives rise to $r$ maps $h_{i}=h \mid X_{i}$. Let $\bar{X}_{i}$ be an affine variety of which $X_{i}$ is an open subset. Since each $h_{i}$ is a morphism, it can be represented by a rational function on $\bar{X}_{i}$ which has no poles on $X_{i}$. To define $P$ explicitly, we may proceed as follows, for each $i$.

\label{op:137}%
Since, by Definition~\ref{def:27.2}(c), $P\left(X_{i} \times U\right) \subseteq X_{o}$, there is a covering of $X_{i} \times U$ by subsets $V_{i 1}, \ldots, V_{i r}$ such that $P\left(V_{i j}\right) \subseteq X_{j}$. In fact, letting
\[
V_{i j}:=P^{-1}\left(X_{j}\right) \cap\left(X_{i} \times U\right)
\]
shows that each $V_{i j}$ can be taken to be an open subscheme of a closed subscheme of $X_{i} \times U$. In terms of $X_{i} \times U$, each $V_{i j}$ can be therefore determined by a set of polynomial equalities and inequalities (the $Q_{i}(x, u)$ in \eqref{eq:27.1}), and $P$ restricted to $V_{i j}$ is given by a rational function with no poles in $V_{i j}$. Thus $h$ and $P$ can be indeed defined on $X_{o}$ by programs of the type in \eqref{eq:27.1}, and since by (27.2c, $d$) $X_{o}$ contains all reachable states, this is clearly sufficient in order to simulate $\Sigma_{f}$.
\end{remark}

The following theorem shows that we can always obtain a "stratification" as in Definition~\ref{def:27.2}. It proves a weaker version of a (still open) conjecture of M. Hazewinkel (personal communication) that decompositions always exist with $X_{0}=X_{f}$:

\begin{theorem}\label{thm:27.4}
$\Sigma_{f}$ admits a decomposition into quasi-affine varieties. Moreover, $Z_{1}$ can be taken to be a variety and $X_{1}$ a principal open subset of $X_{f}$. Further, if $T: \Sigma \rightarrow \Sigma_{f}$ is any $k$-system morphism with $\Sigma$ polynomial and if $k$ is algebraically closed, $X_{0}$ can be taken to be the image $T(X)$.
\end{theorem}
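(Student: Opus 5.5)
The construction is a stratification by noetherian induction, carried out inside a polynomial system $\Sigma$ whose response is $f$. First I replace $\Sigma$ by its quasi-reachable subsystem $\Sigma_{Q}$ (Lemma~\ref{lem:9.5}). This is again a polynomial system, since $X_{Q}$ is a closed subset of a variety. By Proposition~\ref{prop:10.4} and Lemma~\ref{lem:11.3} there is a dominating morphism $T:\Sigma\rightarrow\Sigma_{f}$. Then $X_{f}$ is an irreducible almost-variety, because $\Omega$ is irreducible and $X_{f}$ is dominated by a variety.

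\textbf{First stratum.} Lemma~\ref{lem:3.18} gives a principal open set $X_{1}=D(a)\subseteq X_{f}$ with $a^{-1}\mathcal{A}_{f}$ finitely generated. Take $Z_{1}:=X(a^{-1}\mathcal{A}_{f})$, an affine variety, and let $\varphi_{1}$ be the map \eqref{eq:3.16}, which is an open immersion onto $D(a)$. This already gives the second assertion of the theorem.

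\textbf{Remaining strata.} The complement $F:=X_{f}\setminus D(a)=V(a)$ is closed, and I decompose $T(X)\cap F$, or at least a sufficiently large part of it. The preimage $T^{-1}(F)=V(A(T)(a))$ is a closed subvariety of $X$. I split it into irreducible components $W_{j}$ (Hilbert basis theorem). For each $j$, the restriction $T|_{W_{j}}:W_{j}\rightarrow\overline{T(W_{j})}$ is a dominating map from a variety onto an irreducible almost-variety. Applying Lemma~\ref{lem:3.18} to $\overline{T(W_{j})}$ gives a principal open piece that is a variety. That piece is an open subscheme of a closed subscheme of $X_{f}$, so it is an immersed quasi-affine variety.

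\textbf{Induction and disjointness.} I remove these pieces and repeat the argument on the leftover closed subsets. The process terminates by noetherian induction on $X$, since $X$ is a variety and each step strictly shrinks the closed subsets of $X$ still to be covered. Disjointness, condition (b) of Definition~\ref{def:27.2}, is arranged by always subtracting the strata already chosen and intersecting with the constructible complement. Over algebraically closed $k$, Chevalley's theorem (Theorem~\ref{thm:3.14}(a)), applied to $T$ via the variety $X$, says that $T(X)$ is constructible. So the strata can be chosen to cover exactly $X_{0}=T(X)$.

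\textbf{Invariance conditions.} Condition (d) holds because $x_{f}^{\#}=T(x^{\#})\in T(X)$. For condition (c) I use that $T$ is a system morphism: $P_{f}(T(x),u)=T(P(x,u))\in T(X)$, so $T(X)$ is invariant. Over a general $k$ I would take $X_{0}$ to be the union of the strata constructed from images of the pieces of $X$. This set contains $T(X)\supseteq X_{R}$ and is invariant by the same computation, provided the strata exhaust $T(X)$.

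\textbf{Main obstacle.} The main difficulty is to show that each stratum is an honest quasi-affine variety immersed in $X_{f}$. Here $X_{f}$ is not a scheme of finite type, so Lemma~\ref{lem:3.18} must be applied to the reduced closed subspaces $\overline{T(W_{j})}$, each with its own coordinate algebra, a quotient of $\mathcal{A}_{f}$. One then has to check that "open in a closed subspace" gives an immersion in the sense of Hartshorne. A second, related issue is exhausting $T(X)$ when $k$ is not algebraically closed. This is why the theorem asserts the sharper equality $X_{0}=T(X)$ only under that hypothesis; in the general case it suffices to cover a subset containing $X_{R}$ that is invariant under $P_{f}$.
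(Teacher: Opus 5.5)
Your overall architecture matches the paper's. You take the first stratum from Lemma~\ref{lem:3.18} on $X_{f}$. You then induct through the irreducible components of $T^{-1}(V(a))$ and the almost-varieties $\overline{T(W_{j})}$, which is exactly the induction of Lemma~\ref{lem:27.5} (minimal primes over $sA_{1}$). Then come Chevalley and the identity $P_{f}(T(x),u)=T(P(x,u))$.

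The genuine gap is condition (c) of Definition~\ref{def:27.2}. That identity proves invariance of $T(X)$ only. It says nothing about points of $X_{0}\setminus T(X)$, which are not of the form $T(x)$. So your claim that, for general $k$, the union of strata is ``invariant by the same computation'' is a non sequitur.

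Moreover, $X_{0}\neq T(X)$ is forced when $k$ is not algebraically closed. Strata are Zariski-locally closed, while $T(X)$ need not be constructible. For example, over $k=\mathbb{R}$ the realization $x(t+1)=u(t)$, $y=x^{2}$ maps onto $[0,\infty)\subset X_{f}=\mathbb{R}$ via $T(x)=x^{2}$.

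The paper closes this by base change. View $\Sigma$ over the algebraic closure $K$. There Chevalley makes the strata cover exactly the image of the $K$-points, which is an invariant set. Chevalley must be applied to the restrictions $T^{-1}(S)\to S$ over each affine stratum $S$, as with the paper's $R_{i}=T^{-1}(F_{i}\cap D_{i})$; it cannot be applied to $T:X\to X_{f}$ directly, since $X_{f}$ is not a variety. Then take $X_{0}$ to be the set of $k$-points of that image. It is invariant because $P_{f}$ sends $k$-points and $k$-inputs to $k$-points, and it contains $T(X)$, hence the reachable set.

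The same problem affects your first stratum even over algebraically closed $k$. If you take $X_{1}=D(a)$ whole, then $X_{0}=T(X)$ fails unless $D(a)\subseteq T(X)$. Refining $D(a)$ by Chevalley would give up ``$X_{1}$ principal open''. Fix this by shrinking $a$. Since $T(X)$ is dense, $T^{-1}(D(a))\to D(a)$ is a dominating map of varieties. By Theorem~\ref{thm:3.14}(c) its image contains some $D(ab)$ with $b\in\mathcal{A}_{f}$, and $(ab)^{-1}\mathcal{A}_{f}=b^{-1}(a^{-1}\mathcal{A}_{f})$ is still finitely generated.

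Finally, for the last assertion do not replace $\Sigma$ by $\Sigma_{Q}$, since that turns $T(X)$ into $T(X_{Q})$. The replacement is unnecessary anyway. Any such $T$ is dominating, because $T(X)$ contains the reachable set of $\Sigma_{f}$, and your component-by-component step already handles reducible $X$.
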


We shall first prove a technical

\begin{lemma}\label{lem:27.5}
If $T: X_{1} \rightarrow X_{2}$ is a dominating polynomial map, with $X_{1}$ an irreducible affine variety, then there are closed sets $X_{2}=F_{1}, \ldots, F_{r}$ and principal open sets $D_{1}, \ldots, D_{r}$ such that (i)' $T\left(X_{1}\right)$ is included in the union of the $F_{i} \cap D_{i}$, and (ii) each $F_{i} \cap D_{i}$ is an affine variety, i.e., if $A\left(D_{i}\right)=A_{2}\left[s_{i}^{-1}\right]$ and $F_{i}=V\left(I_{i}\right)$, then $A\left(F_{i} \cap D_{i}\right)= \left(A_{2} / I_{i}\right)\left[\bar{s}_{i}^{-1}\right]$ is finitely generated. (Here $\bar{s}_{i}$ is the coset of $s_{i}$ in $A_{2} / I_{i}$. )
\end{lemma}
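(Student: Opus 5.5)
We will prove Lemma~\ref{lem:27.5} by Noetherian induction on the closed subset of $X_2$ that still has to be covered, using Lemma~\ref{lem:3.18} at each step. Write $A_2:=A(X_2)$ and $B:=A(X_1)$. Since $T$ is dominating, Lemma~\ref{lem:3.11}(d) lets us view $A_2\subseteq B$ via $A(T)$. Because $X_1$ is irreducible, $X_2$ is an irreducible almost-variety (Definition~\ref{def:3.17}). Lemma~\ref{lem:3.18} then gives a principal open set $D_1=D(s_1)$ with $A_2[s_1^{-1}]$ finitely generated. Set $F_1:=X_2$. This handles (ii) for $i=1$, and $T(X_1)\cap D_1$ is covered.

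Next we look at the part of $T(X_1)$ not yet covered. Let $C_1:=T^{-1}(V(s_1))=V(A(T)(s_1))$. This is a closed subset of the affine variety $X_1$. By Hilbert's basis theorem (Section~\ref{sec:3}) it is a finite union of irreducible closed subsets $W_1,\dots,W_q$, each an irreducible affine variety. For each $j$ we take $G_j:=\overline{T(W_j)}$, which is an irreducible closed subset of $X_2$. The restriction $T|W_j:W_j\to G_j$ is again dominating from an irreducible affine variety, so $G_j$ is an irreducible almost-variety. Its algebra is $A_2/I(G_j)$, a $k$-reduced algebra by Lemma~\ref{lem:2.9}. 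We now repeat the construction: we get $\bar s\in A_2/I(G_j)$ with a finitely generated localization, lift it to some $s\in A_2$, and take $F=G_j$, $D=D(s)$. Then $F\cap D$ is the affine variety with algebra $(A_2/I_j)[\bar s^{-1}]$, exactly as the lemma requires. The points of $T(W_j)$ with $s=0$ are handled by the same recursion inside $W_j\cap V(A(T)s)$.

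Termination comes from the source side. At each step the irreducible closed pieces of $X_1$ still to be handled are proper closed subsets of the previous ones. For $W_j$ and its image, $(T|W_j)^{-1}(V(\bar s))$ is a proper closed subset of $W_j$, because $\bar s\neq 0$ on a dense image. Since $X_1$ is Noetherian, with finite dimension by Lemma~\ref{lem:4.5}, chains of irreducible closed subsets have length at most $\dim X_1$. The recursion tree is finitely branching with bounded depth, so only finitely many pairs $(F_i,D_i)$ arise. By construction every point of $T(X_1)$ lies in some $F_i\cap D_i$, which gives (i).

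The main obstacle is making sure Lemma~\ref{lem:3.18} applies at each stage. We need $G_j$, with algebra $A_2/I(G_j)$, to be an almost-variety in the sense of Definition~\ref{def:3.17}. That requires $A_2/I(G_j)\to A(W_j)$ to be injective, which holds because $I(G_j)=A(T)^{-1}(I(W_j))$ by Lemma~\ref{lem:3.11}(c). We also need to check that lifting $\bar s$ to $A_2$ and forming $F\cap D$ gives the stated localization of the quotient. This holds by Lemma~\ref{lem:2.9} combined with Definition~\ref{def:3.15}. The one point needing care is that the quotient by a closed ideal commutes with localization, which is routine.
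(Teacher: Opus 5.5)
Your proof is correct and follows essentially the paper's argument: apply Lemma~\ref{lem:3.18} to obtain $s$, split the preimage of $V(s)$ in $X_1$ into irreducible pieces, and recurse on the restricted dominating maps onto the closures of their images. The differences are cosmetic. The paper uses the minimal primes of $A_1$ over $sA_1$ and inducts on $\dim A_2$. You instead use irreducible components of $k$-points, with closed ideals, so every algebra stays $k$-reduced, and you bound the recursion depth by the length of chains of irreducible closed subsets of $X_1$.
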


\label{op:138}%

\begin{proof}
Using Lemma~\ref{lem:3.18}, there is an $s_{1}:=s$ in $A_{2}$ with $A_{2}\left[s^{-1}\right]$ finitely generated. We let $D_{1}:=X\left(A_{2}\left[s_{1}^{-1}\right]\right)$, and identify $A_{2}$ with a subalgebra of $A_{1}$. Let $s A_{1}$ be the ideal generated by $s$ in $A_{1}$, and let $J_{1}, \ldots, J_{s}$ be the set of prime ideals of $A_{1}$ which are minimal over $s A_{1}$ (finitely many, because $A_{1}$ is Noetherian). Let $x: A_{1} \rightarrow k$ be a homomorphism. If $x(s) \neq 0$, then the restriction of $x$ to $A_{2}$ (i.e., $T(x)$) is in $D_{1}$. If $x(s)=0$ then the kernel of $x$ is a prime ideal containing $s A_{1}$, so it contains some $J_{i}$. Thus $x$ factors through $A_{1} / J_{i}$, and $x \mid A_{2}$ factors through $A_{2} /\left(A_{2} \cap J_{i}\right)$, i.e. $T(x)$ is in the closed subset $V\left(A_{2} \cap J_{i}\right)$ of $X_{2}$. Since $A_{1} / J_{i}$ is again finitely generated and $A_{2} /\left(A_{2} \cap J_{i}\right)$ has less dimension than $A_{2}$, we may assume by induction on $\operatorname{dim} A_{2}$ that the lemma is true for each dominating polynomial map $X\left(A_{1} / J_{i}\right) \rightarrow X\left(A_{2} /\left(A_{2} \cap J_{i}\right)\right)$. Thus for each $V\left(A_{2} \cap J_{i}\right)$ there are open and closed sets as wanted. These give rise in turn to open sets $D_{2}, \ldots, D_{r}$ and closed sets $F_{2}, \ldots, F_{r}$ of $X_{2}$, and properties (i) and (ii) are satisfied. (In fact, property (ii) is true in the sense of schemes, i.e. for the map Spec $A_{1} \rightarrow$ Spec $A_{2}$ corresponding to $T$.)
\end{proof}

\begin{proof}
of Theorem~\ref{thm:27.4}. To apply Lemma~\ref{lem:27.5}, let $T: \Sigma \rightarrow \Sigma_{f}$ be a dominating $k$-system morphism, with $\Sigma$ polynomial. Let the $D_{i}, F_{i}$ be as in Lemma~\ref{lem:27.5}. Defining if necessary new closed sets $F_{1}':=F_{1}, F_{i}':=$ intersection of $F_{i}$ with the complements of $D_{1}, \ldots, D_{i-1}$, the $F_{i} \cap D_{i}$ can be assumed disjoint (the new algebras are quotients of the former ones, so they are still finitely generated). Consider $R_{i}:=T^{-1}\left(F_{i} \cap D_{i}\right)$. These are affine subvarieties of $X$, whose union covers $X$. Thus $T(X)$ is the union of the $T\left(R_{i}\right)$. Assume now that $k$ is algebraically closed. Each $T \mid R_{i}$ maps a variety into a variety, so by Chevalley's theorem~\ref{thm:3.14}(a), each $T\left(R_{i}\right)$ is a finite disjoint union of locally closed sets $V_{i j} \cap S_{i j}$, i.e. sets obtained as intersections of an open set $V_{i j}$ and a closed set $S_{i j}$. Each of the $V_{i j} \cap S_{i j}$ is itself locally closed as a subset of $X_{f}$, since each $F_{i} \cap D_{i}$ is locally closed. Thus each defines a scheme under the induced sheaf, giving rise to the $Z_{i}$ in Definition~\ref{def:27.2} (more precisely, we are restricting to the $k$-points of the corresponding schemes). Since the $F_{i} \cap D_{i}$ are (isomorphic to) affine varieties, each $V_{i j} \cap S_{i j}$ is an
\label{op:139}open subset of the variety $S_{i j} \cap\left(F_{i} \cap D_{i}\right)$, so the $Z_{i}$ are indeed quasi-affine. Since $T$ is a $k$-system morphism, $T\left(X_{1}\right)$ satisfies (c) and (d) of Definition~\ref{def:27.2}, and (a), (b) are valid by construction. The case of non-algebraically closed $k$ follows from the algebraically closed case by consideration of the system $\Sigma$ as a system over the algebraic closure $K$ of $k$, and operating in $K$; the sets $Z_{i}$, $X_{i}$ will then consist of the restriction to the $k$-points of the corresponding sets over $K$.
\end{proof}

\subsection{Unconstrained Realizations}\label{sec:28}

When the canonical realization is polynomial, it admits by definition a representation in terms of polynomial (rather than just rational) difference equations in finitely many variables. It becomes then of interest to ask \textit{how many} equations are needed, i.e., what is the smallest possible cardinality $r=r\left(\mathcal{A}_{f}\right)$ of a set of generators for $\mathcal{A}_{f}$. A lower bound for $r$ is $\operatorname{dim} \Sigma_{f}$, which is attained precisely when $\mathcal{A}_{f}$ is a polynomial ring, i.e. when $X_{f}=k^{r}$. In general we shall call a realization with $X$ an affine space $k^{n}$ an \textit{unconstrained} realization, since no algebraic relations exist between its state variables. A result of \citet{kalman1979realization} (and independently by \citealp{pearlman1979canonical}) states that $\Sigma_{f}$ is unconstrained in the very special case of a bilinear \textit{single}-output response map $f$. We saw in Example~\ref{ex:18.9} that a rather simple $f$, however, may have $r(\mathcal{A}_{f})>\operatorname{dim} \Sigma_{f}$. Counterexamples can also be given with $f$ bilinear with \textit{two} outputs or \textit{tri}linear single-output showing that the above result cannot be extended:

\begin{example}\label{ex:28.1}
Let $f_{1}$ be the response map of the system having $m=p=2, X=k^{3}$, initial state zero and:
\begin{align}
& x_{1}(t+1)=u_{1}(t)+x_{2}(t), \quad x_{2}(t+1)=x_{1}(t), \quad x_{3}(t+1)=u_{2}(t),  \label{eq:28.2}\\
& y_{1}(t)=x_{1}(t) x_{3}(t), \quad y_{2}(t)=x_{2}(t) x_{3}(t) . \notag\end{align}
Let $f_{2}$ be the response of the system having $m=3, p=1, X=k^{4}$,
\label{op:140}initial state zero, and:
\begin{align}
x_{1}(t+1) & =u_{1}(t), \quad x_{2}(t+1)=u_{2}(t), \quad x_{3}(t+1)=u_{3}(t), \notag\\
x_{4}(t+1) & =x_{1}(t) u_{2}(t) u_{3}(t)+x_{2}(t) u_{1}(t) u_{3}(t)+x_{1}(t) x_{3}(t) u_{2}(t)  \label{eq:28.3}\\
& +x_{2}(t) x_{3}(t) u_{1}(t) \notag\\
y(t) & =x_{4}(t) \notag\end{align}
Then $f_{1}$ is bilinear and $f_{2}$ is trilinear. Both systems \eqref{eq:28.2} and \eqref{eq:28.3} are quasi-reachable, so the observation algebras can be calculated directly. They are $k\left[\eta_{1}, \eta_{2}, \eta_{1} \eta_{3}, \eta_{2} \eta_{3}\right]$ and $k\left[\eta_{1}, \eta_{2}, \eta_{1} \eta_{3}, \eta_{2} \eta_{3}, \eta_{4}\right]$ respectively. Neither of these is isomorphic to a polynomial ring. In fact, neither of them is even a UFD (unique factorization domain). Indeed, the equation $\eta_{1}\left(\eta_{2} \eta_{3}\right)=\eta_{2}\left(\eta_{1} \eta_{3}\right)$ shows that $\eta_{1} \eta_{2} \eta_{3}$ can be decomposed in two different ways into irreducibles (note that both $\eta_{1} \eta_{3}$ and $\eta_{2} \eta_{3}$ are indeed irreducible in the corresponding algebras, since $\eta_{3}$ is not there).
\end{example}

A result parallel to the one for bilinear responses was obtained by \citet{gilbert1977minimal}, who proved that in the case of $m=p=1$ and $f$ \textit{homogeneous of degree two}, there is always an unconstrained realization of dimension equal to that of $\Sigma_{f}$. This result is different from the one on bilinear maps: the following example shows that in this case $\mathcal{A}_{f}$ may not be a polynomial ring:

\begin{example}\label{ex:28.4}
Let $f_{3}$ be the response map of the system having $m=p=1, X=k^{4}$, initial state zero, and:
\begin{align}
& x_{1}(t+1)=u(t), \quad x_{2}(t+1)=x_{1}(t), \quad x_{3}(t+1)=x_{2}(t)+x_{3}(t), \notag\\
& x_{4}(t+1)=x_{2}(t) x_{3}(t)+u(t) x_{2}(t)  \label{eq:28.5}\\
& y(t)=x_{4}(t) \notag\end{align}

\label{op:141}%
In other words, $f_{3}$ corresponds to the input-output map
\[
y(t)=u(t-3)(u(t-1)+u(t-4)+u(t-5)+u(t-6)+\ldots) .
\]
(In particular, it is easy to realize $f_{3}$ as a parallel connection of two linear systems whose outputs are multiplied.) The system in \eqref{eq:28.5} is quasi-reachable, since the 4-step reachability map
\[
\left(u_{1}, u_{2}, u_{3}, u_{4}\right) \mapsto\left(u_{4}, u_{3}, u_{1}+u_{2}, u_{2}\left(u_{1}+u_{4}\right)\right)
\]
is dominating (for example, because its Jacobian has full rank at $(1,0,0,0))$. Thus the observation algebra is $k\left[\eta_{1}, \eta_{2}, \eta_{2} \eta_{3}, \eta_{1} \eta_{3}, \eta_{4}\right]$, which is not even a UFD.
\end{example}

Not only does $\mathcal{A}_{f}$ not admit in general a system of $n$ generators, $n=\operatorname{dim} \Sigma_{f}$, but $r\left(\mathcal{A}_{f}\right)$ may in fact be arbitrarily large. Constructing examples of this serves also to illustrate some technical tools of rather general interest, which we shall discuss first.

\begin{lemma}\label{lem:28.6}
Let $\Sigma$ be an algebraically observable realization of $f$, with $X_{\Sigma}=k^{n}$ and initial state zero. Assume that the quasi-reachable set can be defined by equations $Q_{i}(x)=0$ where the $Q_{i}$ have no linear term. Then $r\left(\mathcal{A}_{f}\right)=r$.
\end{lemma}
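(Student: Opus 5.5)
The plan is to identify $\mathcal{A}_f$ with the coordinate ring of the quasi-reachable closed subset of $k^{n}$, and then compare the number of algebra generators with the dimension of the cotangent space at the initial state $0$. I read the assertion as $r(\mathcal{A}_f)=n$. The upper bound $r(\mathcal{A}_f)\le n$ will come from the coordinate functions, and the lower bound from the cotangent space.

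\emph{Step 1: reduce to the quasi-reachable subsystem.} Let $\Sigma_Q$ be the quasi-reachable closed subsystem of Lemma~\ref{lem:9.5}, with state space $X_Q=\bar X_R\subseteq k^{n}$. Then $A(X_Q)=k[\eta_1,\ldots,\eta_n]/I$, where $I=I(X_Q)$ is the ideal of the closed set. Restriction $A(X)\to A(X_Q)$ is onto and carries basic observables to basic observables. Hence algebraic observability of $\Sigma$ passes to $\Sigma_Q$, so $\Sigma_Q$ is canonical. By Theorem~\ref{thm:11.5}, $\Sigma_Q\simeq\Sigma_f$, and therefore $\mathcal{A}_f\simeq A(X_Q)$. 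In particular the images $\bar\eta_1,\ldots,\bar\eta_n$ of the coordinate functions generate $\mathcal{A}_f$, which gives $r(\mathcal{A}_f)\le n$.

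\emph{Step 2: lower bound via the cotangent space at $0$.} Since $x^{\#}=0$ is reachable, $0\in X_Q$. Let $M=(\eta_1,\ldots,\eta_n)$ and let $\bar M=M/I$ be the corresponding $k$-ideal of $A:=A(X_Q)$. The hypothesis that the $Q_i$ have no linear term gives $I\subseteq M^{2}$. Consequently
\[
\bar M/\bar M^{2}\;\cong\; M/(M^{2}+I)\;=\;M/M^{2},
\]
which is an $n$-dimensional $k$-vector space. Now take any generators $a_1,\ldots,a_r$ of $A$ as a $k$-algebra, and put $b_i:=a_i-a_i(0)\in\bar M$; these still generate $A$. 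Every element of $\bar M$ is then a polynomial in the $b_i$ without constant term. Modulo $\bar M^{2}$ such a polynomial reduces to a linear combination of the $b_i$, so the images of the $b_i$ span $\bar M/\bar M^{2}$. Hence $r\ge n$, and combined with Step 1 this gives $r(\mathcal{A}_f)=n$.

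\emph{Main obstacle.} The step I expect to be delicate is justifying $I(X_Q)\subseteq M^{2}$ from the hypothesis that $X_Q$ is ``defined by'' the equations $Q_i=0$. If this means that the $Q_i$ generate the ideal $I(X_Q)$, the inclusion is immediate. If it only means that $X_Q$ is their common zero set, then $I(X_Q)$ is the $k$-radical closure of $(Q_i)$, and in principle this could contain an element with a nonzero linear part. My first attempt would be to read the hypothesis as a statement about the defining ideal, that is, $I(X_Q)=(Q_i)$. Failing that, I would try to show directly that any $a\in I(X_Q)$ must have zero linear part. For this I would use that the reachability map of $\Sigma_Q$ is dominating, so every element of $I(X_Q)$ vanishes on the image of $g_n$, and then differentiate along reachable curves through $0$.
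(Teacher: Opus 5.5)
Your proof is correct and is essentially the paper's argument in dual form. The paper reads off $T_0(V)$ from the Jacobian $J_0=0$ of the $Q_i$, gets $\dim T_0(V)=n$, and argues that an embedding of $V$ into $k^d$ with $d<n$ would force every tangent space to have dimension $<n$; this is exactly your observation that the differentials of any generating set span $\bar M/\bar M^{2}\cong M/M^{2}$, and you also make explicit the easy bound $r(\mathcal{A}_f)\le n$. The paper's proof tacitly needs $I(V)\subseteq M^{2}$ as well, which is your first reading that the $Q_i$ generate the ideal. You should drop the fallback plan, because under the purely set-theoretic reading the statement is false: $x_1(t+1)=0$, $x_2(t+1)=x_1(t)+u(t)$, $y(t)=x_2(t)$ is algebraically observable with quasi-reachable set $\{x_1=0\}=V(x_1^{2})$, yet $\mathcal{A}_f\cong k[\eta_2]$ has $r(\mathcal{A}_f)=1<2$.
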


\begin{proof}
By algebraic observability, $\mathcal{A}_{f}$ is the algebra of the quasi-reachable set $V$. Consider the tangent space $T_{0}(V)$ of $V$ at the origin (note $x^{\#}=0$ is in $V$). This has equations $J_{0} x=0$, where $\left(J_{0}\right)_{i j}=\left(\partial Q_{i} / \partial x_{j}\right)(0)$ is the Jacobian of the $Q_{i}$ at zero (see e.g. \citealp[Chapter VI]{dieudonne1974cours}, or \citealp[Chapter 3]{shafarevich1975basic}). By hypothesis, $J_{0}=0$, so $T_{0}(V)$ has dimension $r$. If $r\left(\mathcal{A}_{f}\right)$ would be less than $r$, there would exist an immersion of $V$ into a space $k^{d}, d<r$. This would imply that all points of $V$ would have tangent spaces of dimension less than $r$, a contradiction. (Note that this uses, implicitly, the invariance of tangent spaces under isomorphism).
\end{proof}

\begin{remark}\label{rem:28.7}
The utility of the above lemma depends on having a fairly
\label{op:142}simple method to find the quasi-reachable set of a polynomial system $\Sigma$. By Corollary~\ref{cor:9.4}, this is equivalent to finding the closure $X_{n}$ of the image of the $n$-step reachability map $g_{n}, n=\operatorname{dim} \Sigma$. When $X=k^{n}$, $g_{n}$ is dominating if and only if its Jacobian is nonzero at some point, as used in the previous example. In general, with $X \subseteq k^{n}, \overline{X}_{n}=V(I)$, where $I$ is the kernel of $A\left(g_{n}\right)$; see Lemma~\ref{lem:3.11}. Finding $I$ involves a classical syzygy problem. The effective decidability of this type of question has been studied; see for instance \citet{seidenberg1971length}, but no \textit{simple} method exists. A heuristic method for obtaining generators for an ideal $J$ with $V(J)=\bar{X}_{n}$ (not necessarily $J=I(X)$, but enough for finding $\overline{X}_{n}$ !), illustrated in Example~\ref{ex:28.8} below, is to find enough elements $R_{1}, \ldots, R_{t}$ in $I$ such that one will be able to prove that every point in some open dense subset of $V\left(\left\{R_{1}, \ldots, R_{t}\right\}\right)$ is in the image of $g_{n}$. This will imply that $\bar{X}_{n}=V\left(\left\{R_{1}, \ldots, R_{t}\right\}\right)$. In fact, in finding input sequences $w$ such that $g_{n}(w)$ equals a given state, it is allowable for this purpose to find inputs with values in the \textit{algebraic closure} $K$ of $k$. Indeed, if a polynomial map $T: k^{r} \rightarrow k^{s}$ has $\overline{T\left(K^{r}\right)}=V$, then $T\left(K^{r}\right) \cap k^{s}$ also has closure $V$. Otherwise, there would exist a polynomial function $Q: k^{s} \rightarrow k$ such that $Q \circ T=0$ on $k^{r}$ but not on $K^{r}$. But the field $k$ being infinite means that $Q \circ T$ can only be zero if it has as a polynomial every coefficient equal to zero, so it cannot be nonzero on $K^{r}$.
\end{remark}

\begin{example}\label{ex:28.8}
Fix $r \geq 3$ and let $f_{4}$ be the response of the system having $m=p=1, X=k^{r}$, initial state zero, and equations
\[
\begin{aligned}
& x_{i}(t+1)=x_{1}^{i-1}(t) u(t), \quad i=1, \ldots, r-1, \\
& x_{r}(t+1)=x_{1}(t) u(t)^{r-1}+x_{2}(t) u(t)^{r-2}+\ldots+x_{r-1}(t) u(t), \\
& y(t)=x_{r}(t) .
\end{aligned}
\]
This system is algebraically observable, because $x_{r}$ is in $\hat{\mathcal{L}}_{0}$ and $x_{1}, \ldots, x_{r-1}$ are in $\hat{\mathcal{L}}_{1}$, using Lemma~\ref{lem:10.6}. Its quasi-reachable set is defined by the equations
\label{op:143}\[
x_{1} x_{3}=x_{2}^{2}, \quad x_{2} x_{4}=x_{3}^{2}, \quad \ldots, \quad x_{r-3} x_{r-1}=x_{r-2}^{2}, \quad \text{and} \quad x_{1} x_{r-1}=x_{2} x_{r-2},
\]
as we shall prove below. By Lemma~\ref{lem:28.6}, $r\left(\mathcal{A}_{f_{4}}\right)=r$, but $\Sigma_{f_{4}}$ has dimension $n=3$ (see below). Thus $r\left(\mathcal{A}_{f}\right)$ \textit{may be arbitrarily larger} than $n$. Note also that $f_{4}$ is homogeneous (of degree $r$).
\end{example}

We now fill in the missing technical facts, using the method in Remark~\ref{rem:28.7}. (This rather easy example could, of course, be solved in many other ways; we shall use it to illustrate the above method, which constructs inputs explicitly.) The t-step reachability map is
\[
g_{t}\left(u_{1}, \ldots, u_{t}\right)=\left(u_{t}, u_{t-1} u_{t}, \ldots, u_{t-1}^{r-2} u_{t}, Z\right)^{\prime}
\]
where
\[
Z=u_{t-1} u_{t}^{r-1}+u_{t-2} u_{t-1} u_{t}^{r-2}+\ldots+u_{t-2}^{r-2} u_{t-1} u_{t},
\]
whenever $t \geq 3$. Thus $X_{R}=X_{3}$, so we shall work with $g_{3}$. The relations $x_{1} x_{3}=x_{2}^{2}$, etc., are easily found. Call $V$ the set of solutions of these equations. Now, given any $\left(x_{1}, \ldots, x_{r}\right)$ in $V$, \textit{if} $x_{2} \neq 0$, then also $x_{1} \neq 0$ and we may define $w:=\left(u_{1}, u_{2}, u_{3}\right)$ with $g(w)=x$ as follows: $u_{2}:=x_{2} x_{1}^{-1}, u_{3}:=x_{1}$, and $u_{1}:=$ any root $u$ of
\[
u_{2} u_{3}^{r-1}+u u_{2} u_{3}+\ldots+u^{r-2} u_{2} u_{3}=0 .
\]
(Since $u_{2} u_{3} \neq 0$, there is always a solution $u$ in the algebraic closure of $k$.) Thus $g(w)=x$, as is easily verified (e.g., $x_{1}=u_{3}$ and $x_{2}=u_{2} x_{1}=u_{2} u_{3}$ by definition of $u_{1}, u_{2}$, and $x_{1} x_{3}=x_{2}^{2}$ implies $u_{3} x_{3}= \left(u_{2} u_{3}\right)^{2}$ so $x_{3}=u_{2}^{2} u_{3}$, etc.) The case $x_{2} \neq 0$ is however generic in $V$ : we shall prove that if $Q$ is a polynomial which is zero on $X_{3}$ then $Q$ is zero on $V$. Indeed, let $Q$ be such a polynomial, and let $x$ be in $V$. If $x_{2} \neq 0, x$ is in $X_{3}$, and there is nothing to prove. If $x_{2}=0$, the above equations imply that $x_{3}=\ldots=x_{r-2}=0$ and either $x_{1}=0$ or $x_{r-1}=0$. We consider first the case $x_{1}=0$. Then $Q\left(x_{1}, \ldots, x_{r}\right)= Q_{1}\left(x_{r-1}, x_{r}\right)$, where $Q_{1}\left(T_{1}, T_{2}\right):=Q\left(0,0, \ldots, 0, T_{1}, T_{2}\right)$. But $Q_{1}$ is identically zero: it is enough to see for this that $Q_{1}$ is constant,
\label{op:144}since $Q(0, \ldots, 0)=0$. The degree of $Q_{1}$ in $T_{2}$ is zero, since $Q$ is zero on $V$ and $u_{1}$ is independent over $u_{2}, u_{3}$. The degree of $Q_{1}$ in $T_{1}$ is also zero, because otherwise $Q=0$ on $V$ would give rise to an equation $\left(u_{2}^{r-2} u_{3}\right)^{s}=$ polynomial in $u_{2}^{r-2} u_{3}$ of degree less than $s$, with coefficients which are themselves polynomials in $u_{2}^{i} u_{3}, i<r-2$, a contradiction (compare terms). For the case $x_{r-1}=0$, just note that $Q\left(u_{3}, u_{2} u_{3}, \ldots\right)=0$ implies (taking $u_{2}=0$) that $Q\left(x_{1}, 0, \ldots, 0\right)=0$. We are only left to prove that $V$ has dimension 3. This follows from the fact that $g_{3}: U^{3} \rightarrow V$ (but not $g_{2}$) is dominating. 

It is natural to ask in general if it is possible to find unconstrained \textit{minimal} realizations, i.e. realizations with $X=k^{n}, n=$ the dimension of the canonical realization. For $f$ homogeneous of degree 2, the above-mentioned result of Gilbert answers this question in a positive way. We show below that this is false in general. Construction of counterexamples is rather easy using variants of the following type of algebraic

\begin{lemma}\label{lem:28.9}
Let $A$ be a subalgebra of a polynomial ring $B=k\left[T_{1}, T_{2}, T_{3}, T_{4}, L_{1}, \ldots, L_{s}\right]$ such that (i) $A$ contains $k\left[T_{1} T_{2}, T_{1} T_{3}, T_{3} T_{4}, T_{2} T_{4}\right]$, and (ii) $A$ is a unique factorization domain. Then $A$ contains $k\left[T_{1}, \ldots, T_{4}\right]$.
\end{lemma}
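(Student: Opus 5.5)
The plan is to exploit the single relation $(T_{1}T_{2})(T_{3}T_{4})=(T_{1}T_{3})(T_{2}T_{4})$ inside $A$, together with the fact that divisibility in $A$ is controlled by divisibility in the polynomial ring $B$. Two preliminary observations come first. The units of $A$ are exactly the nonzero constants: a unit of $A$ is a unit of $B$, and the units of $B$ are $k^{*}$. Moreover, if $a=bc$ with $a,b,c\in A$, then $b$ and $c$ divide $a$ in $B$. Since $B$ is a UFD whose irreducibles include the variables, the only divisors of $T_{1}T_{2}$ in $B$ are, up to nonzero scalars, $1$, $T_{1}$, $T_{2}$ and $T_{1}T_{2}$; the same holds for the other three generators in (i).

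Next I would factor $T_{1}T_{2}$ into irreducibles of $A$, which is possible because $A$ is a UFD and $T_{1}T_{2}$ is not a unit. By the divisor description in $B$, only two cases can occur. In the first, $T_{1}T_{2}$ is itself irreducible in $A$. In the second, it splits as $(cT_{1})(c^{-1}T_{2})$ with $c\in k^{*}$, so that $T_{1},T_{2}\in A$. A factorization with three or more nonunit factors is impossible, since each factor would have to be a nonconstant divisor of $T_{1}T_{2}$ in $B$.

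The key step is to rule out the first case. Suppose $T_{1}T_{2}$ is irreducible in $A$. In a UFD irreducible elements are prime, so from
\[
T_{1}T_{2}\mid (T_{1}T_{3})(T_{2}T_{4})\quad\text{in } A
\]
(the quotient $T_{3}T_{4}$ lies in $A$ by (i)) it follows that $T_{1}T_{2}$ divides $T_{1}T_{3}$ or $T_{2}T_{4}$ in $A$. In either case the quotient would be $T_{3}/T_{2}$ or $T_{4}/T_{1}$, which are not even in $B$, a contradiction. Hence $T_{1},T_{2}\in A$.

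The same argument applied to $T_{3}T_{4}$, which divides the same product in the same way, gives $T_{3},T_{4}\in A$. Therefore $k[T_{1},\dots,T_{4}]\subseteq A$. The only delicate point is making sure the factorization of $T_{1}T_{2}$ in $A$ really has only the two shapes above, and this follows from the unit and divisor observations.
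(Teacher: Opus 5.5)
Your proof is correct. It uses the same ingredients as the paper: the relation $(T_1T_2)(T_3T_4)=(T_1T_3)(T_2T_4)$ in $A$, the fact that the units of $A$ are $k^{*}$, and the fact that the only divisors of $T_iT_j$ in $B$ are scalar multiples of $1$, $T_i$, $T_j$, $T_iT_j$. It differs from the paper in which consequence of unique factorization it uses.

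The paper uses the refinement property of a UFD. From $Q_1Q_3=Q_2Q_4$ it obtains $a,b,c,d\in A$ with $Q_1=ab$, $Q_3=cd$, $Q_2=ac$, $Q_4=bd$. Divisibility in $B$ rules out three of the four monomial possibilities for $(a,b)$, and cancellation then gives $c=T_3$ and $d=T_4$. So all four variables come out of one decomposition.

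You use only that irreducible elements of $A$ are prime. Then $T_1T_2$ is either reducible, which forces $T_1,T_2\in A$, or prime. The prime case is impossible, because $T_1T_2$ would then divide $T_1T_3$ or $T_2T_4$ inside $A\subseteq B$. You then run the same argument for $T_3T_4$.

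Your route costs a second, symmetric pass. In exchange, it rests on the most elementary property of a UFD rather than on the refinement lemma, which the paper quotes without proof. In fact your argument goes through for any subalgebra $A$ in which irreducibles are prime.
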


\begin{proof}
Consider the elements $Q_{1}:=T_{1} T_{2}, Q_{2}:=T_{1} T_{3}, Q_{3}:=T_{3} T_{4}$, and $Q_{4}:=T_{2} T_{4}$, of $A$. Then $Q_{1} Q_{3}=Q_{2} Q_{4}$. Since $A$ is a UFD, there exist elements $a, b, c, d$ of $A$ such that $Q_{1}=a b, Q_{3}=c d, Q_{2}=a c$, and $Q_{4}=b d$. In particular, there is an equation $T_{1} T_{2}=a b$ in $B$. Using that $B$ is a UFD, it is clear that both $a$ and $b$ have zero degree in the $L_{i}$, and in fact one can assume that $a$ and $b$ are both monic monomials in $T_{1}$ and $T_{2}$. Thus there are four possibilities: $a=T_{2}$ and $b=T_{1}$, or $a=T_{1} T_{2}$ and $b=1$, or $a=1$ and $b=T_{1} T_{2}$, or $a=T_{1}$ and $b=T_{2}$. If the first or the second possibility hold, then $a c$ is divisible (in $B$) by $T_{2}$, contradicting the fact that $a c= Q_{2}=T_{1} T_{3}$. Analogously, the third contradicts $b d=Q_{4}=T_{2} T_{4}$. Thus $a=T_{1}$ and $b=T_{2}$ are both in $A$. From $T_{1} c=a c=T_{1} T_{3}$ it follows that $c=T_{3}$, and from $T_{2} d=b d=T_{2} T_{4}$ it follows that $d=T_{4}$. So

\label{op:145}%
$A$ contains all the $T_{i}$, as wanted.
\end{proof}

\begin{example}\label{ex:28.10}
Let $f_{5}$ be the response of the system having $m=4$, $p=1, \quad X=k^{5}$, initial state zero, and equations
\begin{align}
x_{1}(t+1)= & u_{1}(t) u_{2}(t), \quad x_{2}(t+1)=u_{1}(t) u_{3}(t), \notag\\
x_{3}(t+1)= & u_{3}(t) u_{4}(t), \quad x_{4}(t+1)=u_{2}(t) u_{4}(t)  \label{eq:28.11}\\
x_{5}(t+1)= & u_{3}(t) u_{4}(t) x_{1}(t)+u_{2}(t) u_{4}(t) x_{2}(t)+u_{1}(t) u_{2}(t) x_{3}(t)+ \notag\\
& u_{1}(t) u_{3}(t) x_{4}(t) \notag\\
y(t)= & x_{5}(t) . \notag\end{align}
This system is algebraically observable, and $X_{Q}$ is quasi-reachable in two steps (the quasi-reachable set $X_{Q}$ is 4-dimensional, and has equations $x_{1} x_{3}=x_{2} x_{4}$). The dual of the 2-step reachability map identifies the observation algebra with the subalgebra
\[
k\left[T_{1} T_{2}, T_{1} T_{3}, T_{3} T_{4}, T_{2} T_{4}, L\right]
\]
of
\[
B=k\left[T_{1}, T_{2}, T_{3}, T_{4}, L_{1}, L_{2}, L_{3}, L_{4}\right]
\]
where
\[
L=T_{3} T_{4} L_{1}+T_{2} T_{4} L_{2}+T_{1} T_{2} L_{3}+T_{1} T_{3} L_{4} .
\]
(Here $A\left(U^{2}\right)$ is a polynomial ring in 8 variables, identified with $B$.) Assume that there \textit{would} exist an unconstrained realization $\Sigma$ of $f_{5}$ of dimension 4. By Lemma~\ref{lem:11.3}, there is a dominating $k$-system morphism $T: \Sigma \rightarrow \Sigma_{f}$. Let $g_{2}$ be the 2-step reachability map of $\Sigma$. Since $T \circ g_{2}$ is the (dominating) 2-step reachability map of the 4-dimensional system $\Sigma_{f}$, it follows that $g_{2}$ is dominating; (otherwise, $\operatorname{dim} \overline{g_{2}\left(U^{2}\right)}$ is

\label{op:146}%
3 or less, contradicting $\left.\operatorname{dim} X_{Q}=4\right)$. Thus $A\left(g_{2}\right)$ identifies $A\left(X_{\Sigma}\right)$ with a subalgebra $A$ of $B$, such that $A$ satisfies the conditions in Lemma~\ref{lem:28.9}. It follows that $A$ contains $T_{1}, T_{2}, T_{3}, T_{4}$, and $L$. Since the latter is algebraically independent over the $T_{i}, A$ would have transcendence degree at least 5. But $A$ is isomorphic to $A\left(X_{\Sigma}\right)$, (a polynomial ring in 4 variables,) contradicting this latter fact.
\end{example}

\subsection{Generalizations}\label{sec:29}

The material in previous sections can be easily generalized in various directions. In particular, we shall lift here the restriction to shift-invariant input/output maps (and the corresponding equilibrium initial-state assumption for systems), without changing the nature of the results. Similarly, the input and output-value spaces $U$ and $Y$ will be allowed to be arbitrary $k$-spaces rather than $k^{m}$ and $k^{p}$; as explained in the introduction, this permits the incorporation of various constraints into the model. We shall only sketch proofs, since these are analogous to those for the particular case already treated.

The definition of polynomial response map can be given either in terms of formal Volterra series, or simply considering the polynomial maps $\Omega \rightarrow Y$. We shall use here the latter style of definition (but: see example~\ref{ex:29.11}), for which we must first introduce a suitable input space. The motivation for the construction of $\Omega$ was the need for a "completion" of $U[z]$, the latter being obtained from the set of all sequences $U^{*}$ by identifying $\left(u_{t}, \ldots u_{1}\right)$ with $\left(0, u_{t}, \ldots, u_{1}\right)$. An arbitrary polynomial response $U^{*} \rightarrow Y$ does not necessarily factor through $U[z]$, since no shift-invariance property insures that $f_{t}\left(u_{t}, \ldots, u_{1}\right)=f_{t+1}\left(0, u_{t}, \ldots, u_{1}\right)$. We shall define now a $k$-space $\Omega^{\prime}$ as a completion of $U^{*}$ itself.

For the rest of this section, $U=X(C)$ and $Y$ will denote arbitrary but fixed $k$-spaces. We also use the notations $C_{n}:=A\left(U^{n}\right)=C \otimes \ldots \otimes C$ ($n$ times), and $C_{*}:=$ product of the $C_{n}$, for $n \geq 0$ (note that $C_{0}= X(k)$ is just a point).

\label{op:147}%
There are canonical projections $C_{*} \rightarrow C_{n}$, which give rise to homomorphisms $C_{*} \otimes C \rightarrow C_{n} \otimes C$. These induce therefore a homomorphism
\begin{equation}\label{eq:29.1}
\alpha: C_{*} \otimes C \rightarrow \prod_{n \geq 0}\left(C_{n} \otimes C\right)=\prod_{n \geq 1} C_{n} .
\end{equation}

We introduce also a sequence of subalgebras $C_{(i)}$ of $C_{*}$ defined recursively by: $C_{(0)}:=C_{*}$, and
\begin{equation}\label{eq:29.2}
C_{(n)}:=(1 \times \alpha)\left(k \times\left(C_{(n-1)} \otimes C\right)\right) ,
\end{equation}
and denote by $C_{\infty}$ the intersection of all the $C_{(n)}$. It is easy to prove then that
\begin{equation}\label{eq:29.3}
C_{\infty}=(1 \times \alpha)\left(k \times\left(C_{\infty} \otimes C\right)\right) .
\end{equation}

Thus, restricting the homomorphism $(1 \times \alpha)$ to the subalgebra $k \times\left(C_{\infty} \otimes C\right)$, we can define
\begin{equation}\label{eq:29.4}
\beta:=\mathrm{pr}_{2} \circ(1 \times \alpha)^{-1}: C_{\infty} \rightarrow C_{\infty} \otimes C .
\end{equation}
We denote
\begin{equation}\label{eq:29.5}
\delta^{\prime}:=X(\beta)
\end{equation}
and
\begin{equation}\label{eq:29.6}
\Omega^{\prime}:=X\left(C_{\infty}\right) .
\end{equation}

The projections $C_{*} \rightarrow C_{n}$ restrict to (onto) homomorphisms $\gamma_{n}: C_{\infty} \rightarrow C_{n}$, which dualize to closed immersions $U^{n} \rightarrow \Omega^{\prime}$. Identifying through these inclusions the sets of input sequences $U^{n}$ with subspaces of $\Omega^{\prime}$, it can be proved as in Lemma~\ref{lem:6.10} that
\begin{equation}\label{eq:29.7}
\delta^{\prime}: \Omega^{\prime} \times U \rightarrow \Omega^{\prime}
\end{equation}
indeed extends the concatenation maps. Further,
\label{op:148}%
\begin{equation}\label{eq:29.8}
\operatorname{ker} \gamma_{i}+\operatorname{ker} \gamma_{j}=C_{\infty}
\end{equation}
whenever $i \neq j$, since the identity ($1,1,1, \ldots$) of $C_{\infty}$ can be written as $(1,1, \ldots, 1,0,1, \ldots)$ (a zero in the i-th position, thus in $\operatorname{ker} \gamma_{i}$) added to $(0, \ldots, 0,1,0, \ldots)$ ($a$ one only in the $i$-th position, thus in ker $\gamma_{j}$). The image of $U^{i}$ in $\Omega^{\prime}$ is thus disjoint with the image of $U^{j}$, and there results a canonical inclusion
\begin{equation}\label{eq:29.9}
U^{*} \rightarrow \Omega^{\prime} .
\end{equation}

The image of this map is dense, since the intersection of all the ker $\gamma_{n}$ is zero. Thus a polynomial map with domain $\Omega^{\prime}$ is completely determined by its restriction to $U^{*}$. This motivates the

\begin{definition}\label{def:29.10}
\textit{A generalized polynomial response map is a polynomial map} $f: \Omega^{\prime} \rightarrow Y$.
\end{definition}

Thus a generalized polynomial response map is a map $f: U^{*} \rightarrow Y$ which satisfies certain additional properties (namely, those that imply the existence of an extension to $\Omega^{\prime}$). The most important of these properties (obviously implied by Definition~\ref{def:29.10}) is that the restriction $f_{t}$ to each $U^{t}$ be a polynomial map. To obtain a useful characterization, one needs to make further assumptions on the input-value set $U$. When $U$ is an affine space $k^{m}$, the only further property needed is that the degree of each $f_{t}$ in the last $r$ inputs be bounded independently of $t$, for any $r$; this is shown below for $m=1$, but basically the same proof is valid in general. When $U$ is a variety, the statement of the characterizations is somewhat more complicated (a representation in terms of actual polynomials must be chosen for each polynomial map $f_{t}$), but again it is essentially the same as in the

\begin{example}\label{ex:29.11}
Let $U=k$. Then each $C_{n}$ is a polynomial ring $k\left[\xi_{1}, \ldots, \xi_{n}\right]$, and $\alpha$ is the linear extension of
\begin{equation}\label{eq:29.12}
\alpha\left(\left\{Q_{i}\right\} \otimes Q(\xi)\right):=\left\{Q_{i}^{\prime}\right\} ,
\end{equation}
\label{op:149}where
\begin{equation}
Q_{n+1}^{\prime}\left(\xi_{1}, \ldots, \xi_{n+1}\right)=Q_{n}\left(\xi_{2}, \ldots, \xi_{n+1}\right) Q\left(\xi_{1}\right) . \label{eq:29.13}
\end{equation}
Thus $C_{(1)}$ is the set of all sequences of polynomials in $0,1,2, \ldots$ variables such that the degree in $\xi_{1}$ is bounded. Iterating, $C_{(r)}$ is the set of all sequences with the degree in $\xi_{1}, \ldots, \xi_{r}$ bounded. Thus a polynomial function $\Omega^{\prime} \rightarrow k$, i.e. an element of $A\left(\Omega^{\prime}\right)=C_{\infty}$, is a sequence of polynomial functions $f_{t}: U^{t} \rightarrow k$ such that the degree of $f_{t}$ in $\xi_{1}, \ldots, \xi_{r}$ is bounded for each $r$ (independently of $t$). So when $Y=k^{p}$, a generalized polynomial response map $f$ corresponds to a set of $p$ polynomial functions on $\Omega^{\prime}$ subject to the above restriction. It is trivial to verify that when $f$ is shift-invariant, i.e. $f(0, w)= f(w)$, this definition coincides with the one in Section~\ref{ch:3}.
\end{example}

The definition of a \textit{generalized} $k$-\textit{system} only differs from Definition~\ref{def:8.1} in that the initial state is not required to satisfy $P\left(x^{\#}, 0\right)=x^{\#}$. As before, there are $t$-step reachability maps $g_{t}: U^{t} \rightarrow X$ and a reachability map $g: U^{*} \rightarrow X$ which extends to a polynomial map $g^{\prime}: \Omega^{\prime} \rightarrow X$. The \textit{response map of} $\Sigma$ is $f_{\Sigma}:=h \circ g^{\prime}$. Defining now $k$-system morphisms as before, and canonical $:=g^{\prime}$ dominating + algebraic observability, one concludes in analogy with previous results:

\begin{theorem}\label{thm:29.14}
Any generalized polynomial response has a canonical generalized $k$-system realization, unique up to isomorphism.
\end{theorem}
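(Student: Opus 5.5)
The plan is to copy the proof of Theorem~\ref{thm:11.5}, with $\Omega'$ in place of $\Omega$ and $C_\infty$ in place of $\Psi$. The generalized free realization is $\Sigma'_{\mathrm{free}}(f):=(\Omega',\delta',f,x'^{\#})$. Its initial state $x'^{\#}$ is the image of the empty sequence $(\emptyset)\in U^0$ under the inclusion \eqref{eq:29.9}, i.e.\ the point dual to $\gamma_0:C_\infty\to C_0=k$. No equilibrium condition is needed now. By \eqref{eq:29.7}, $\delta'$ extends concatenation. So the reachability map of this system is the inclusion $U^*\to\Omega'$, and $g'$ is the identity of $\Omega'$. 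Hence the system is quasi-reachable in the new sense: $g'$ is dominating because $U^*$ is dense in $\Omega'$. By construction $h\circ g'=f$, so it realizes $f$.

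Next I would pass to the observable quotient, as in Proposition~\ref{prop:10.4}. Let $\mathcal{A}'_f\subseteq C_\infty\otimes A(Y)$-type observables be the subalgebra of $C_\infty=A(\Omega')$ generated by the basic observables $b\circ f\circ\delta'_t(\cdot,w)$, with $b\in A(Y)$ and $w\in U^t$. The invariance needed is $\beta(\mathcal{A}'_f)\subseteq \mathcal{A}'_f\otimes C$, the analogue of \eqref{eq:10.5}. This is proved through Lemma~\ref{lem:10.6} and Main Lemma~\ref{lem:10.7}. The argument works with $F:=C$, viewed as an algebra of functions on $U$, which is legitimate because $C$ is $k$-reduced. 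Write $\beta(a)=\sum a_i\otimes c_i$ with the $c_i$ linearly independent. Specializing at $u\in U$ gives an observable in $\mathcal{A}'_f$, and Lemma~\ref{lem:10.7} then puts every $a_i$ in the span of these specializations. Once invariance holds, $X(\mathcal{A}'_f)$ carries transition $X(\beta|)$, output induced from $f$, and initial state the restriction of $x'^{\#}$. The inclusion dualizes to a dominating morphism, so the quotient is still quasi-reachable and realizes $f$. It is algebraically observable by construction.

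For uniqueness I would reprove Lemma~\ref{lem:11.3}. The polynomial extension $g'$ of the reachability map exists, by the analogue of Lemma~\ref{lem:8.2}. That analogue needs each $A(g_t)(a)$ to assemble into an element of $C_\infty$. The point is that the sequence $(a\circ g_t)_t$ lies in every $C_{(n)}$: each $g_t$ factors through $P^{(n)}$ applied to $g_{t-n}$, and this matches the recursive description \eqref{eq:29.2}. I expect this to be the main technical step. Granting it, suppose two canonical realizations are given. Since $g'$ is dominating and $h^\Gamma$ is a closed immersion, the dual of Lemma~\ref{lem:1.5} gives a polynomial $T$ with $T\circ g'=\hat g'$. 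Lemma~\ref{lem:7.7} then shows that $T$ commutes with the transitions on the dense set of reachable states. By continuity it commutes everywhere, and $T$ also maps $x^{\#}$ to $\hat x^{\#}$. Applying this in both directions and using the uniqueness in Lemma~\ref{lem:1.5}, the two morphisms compose to identities, so the realizations are isomorphic.
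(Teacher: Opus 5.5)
Your proposal is correct and follows the paper's route. The paper gives no separate argument, saying only that the result follows ``in analogy with previous results,'' and your plan carries out exactly that analogy: the free realization on $\Omega'$, the observable quotient as in Proposition~\ref{prop:10.4} with invariance from Lemma~\ref{lem:10.7}, and uniqueness via Lemma~\ref{lem:1.5} as in Lemma~\ref{lem:11.3}.

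The one step worth tightening is the extension $g'$, where a one-step induction on $n$ is cleanest. Writing $A(P)(a)=\sum_i a_i\otimes c_i$ gives $(a\circ g_t)_{t\ge 0}=(1\times\alpha)\bigl(a(x^{\#}),\sum_i (a_i\circ g_s)_{s\ge 0}\otimes c_i\bigr)$. So if all such sequences lie in $C_{(n-1)}$, they lie in $C_{(n)}$. This also handles the components $t<n$, which your $P^{(n)}$ description leaves out.
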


\begin{example}\label{ex:29.15}
Consider the generalized polynomial system $\Sigma$ with $X=k^{3}, \quad U=Y=k$, initial state $(1,0,0)^{\prime}$ and equations
\begin{align}
& x_{1}(t+1)=x_{2}(t) u(t) \notag\\
& x_{2}(t+1)=x_{1}(t) u(t) \notag\\
& x_{3}(t+1)=x_{3}(t)+u(t)  \label{eq:29.16}\\
& y(t)=x_{1}(t) . \notag\end{align}

\label{op:150}%
Calculating first $\Sigma^{\text {obs }}$ results in dropping the third coordinate (since only $x_{1}, x_{2}$ are observable); the corresponding canonical realization is thus obtained by restricting to the quasi-reachable set, which consists of the union of the line $x_{1}=0$ and the line $x_{2}=0$. Note that this is a polynomial system, but the canonical state-space is \textit{not} irreducible as in the equilibrium initial state case.
\end{example}

\begin{remark}\label{rem:29.17}
The above results properly generalize those in Section~\ref{ch:3}: it is not hard to prove that if $f$ is a (nongeneralized) polynomial response map, then any abstractly observable realization (and hence in particular its generalized canonical realization) has equilibrium initial state; thus the latter coincides with $\Sigma_{f}$. We shall not pursue here extensions of the finiteness results or of those on input/output equations. It is clear that further restrictions must be placed on $U$ and $Y$ in order to render these problems meaningful. Under reasonable hypothesis (e.g., $U$, $Y$ varieties), generalizations do exist and are rather straightforward.
\end{remark}

\begin{remarks}[A]\label{rem:29.18}
As in section~\ref{ch:5}, an algorithmic, matrix-theoretic realization theory for (generalized) bounded response maps, via (generalized) state-affine systems, is easy to give. This is done in detail in \citet{sontag1979realization}. In fact, even more general (nonpolynomial, e.g. piecewise linear) response maps are treated there using essentially the same methods. (The only result that fails to generalize to non shift-invariant maps is the implication "finite realizability implies state-affine realizability". Counterexamples are given in the above reference.)
\begin{enumerate}
\item[(b)] Non \textit{strictly} causal responses (present output may depend on present input) and corresponding "Mealy-machine" realizations (output $y(t)$ is function of present state and input) can be also treated in a totally analogous way.
\item[(c)] A much less trivial extension of the present setup consists in allowing for \textit{nonaffine} schemes as input, state, and output spaces. While the practical significance of rather abstract schemes is at best
\label{op:151}doubtful, it is of interest to consider quasi-affine varieties, allowing for locally rational transitions and outputs. (In fact, quasi-affine varieties appear naturally among all varieties when abstract observability is considered: as a consequence of Zariski's Main Theorem, the state-space of an observable system, with $Y=k^{p}$, is necessarily quasi-affine.) It is interesting to remark, however, that no \textit{new} generalization of polynomial response maps appears if nonaffine state-spaces are allowed, \textit{provided that} $U$ and $Y$ remain \textit{affine}. Indeed, the "affinization" functor $X \rightarrow X^{0}$ (\citealp[9.1.21]{grothendieck1967elements}) maps any such more general realization into another one with an affine state-space, so the response of both systems must be polynomial. (The existence and uniqueness theorem for canonical realizations appears to extend with no difficulty to the case of nonaffine $U, Y$, but other results are not so straightforward.)
\end{enumerate}
\end{remarks}

\subsection{Suggestions for Further Research}\label{sec:30}

Research in a new field is bound to suggest a wealth of open questions and new directions of investigation. In attacking the realization theory of nonlinear systems, the present work is no exception to that hope.

One of the byproducts of an \textit{algebraic} study of systems is of course the development of \textit{algorithms} for system analysis and design. In the case of bounded maps and state affine systems, we use linear-algebraic techniques in constructing canonical realizations; these methods are a rather simple generalization of the classical Hankel matrix technique used so successfully in linear system theory. Finite dimensionality of the observation space is responsible for the linear-algebraic character of the study of bounded maps. This means that a nonlinear computational technique is indispensable as soon as nonlinear feedback is present in a system. An important question is, then: How effective are calculations with fields, algebras and polynomials?

From its origins until (historically) not long ago, algebra remained to a great extent a computational discipline. The development of "modern"
\label{op:152}algebra [or the modern development of algebra] has shifted the emphasis towards generality and abstraction, permitting both the solution of heretofore unsolvable problems and the understanding of deep questions which can only now be even formulated in a rigorous way. Many questions of effective calculation have thus been left aside of the mainstream of algebra; a development which is particularly unfortunate in view of the advent of the digital computer. However, there are now signs of a trend toward the effectivization of various basic algebraic constructions. Some of these constructions can be used to solve system-theoretic questions. For instance, \citet{seidenberg1971length} has worked on effective versions of Hilbert's Basis Theorem, and his results find an immediate application to questions of observability (\citealp{sontag1975discrete}). The posthumous work of \citet{robinson1975algorithms} (see also \citealp{crossley1976effective}) represents a promising approach to questions of computability in algebra, attacking such questions from the point of view of mathematical logic (model theory), but most of the detailed work remains to be done.

Of course, there is a large number of classical results, dealing with resultants and derivatives, and sometimes referred to by the label \textit{elimination theory}, which permit the effective verification of certain conditions; our Jacobian criterion for finite-dimensional realizability involves a simple application of such results. It would certainly be of interest to explicitly compute the form of similar criteria for other problems.

Many theoretical algebraic problems are also suggested by the present work. For instance, as a rule system-theoretic questions depend for their clarification upon the development of a \textit{real} (as opposed to complex) algebraic geometry. The papers of \citet{whitney1957elementary} and of \citet{dubois1970algebraic} are among the few works in this area. In fact, the study of points in more arbitrary fields (e.g., the rational numbers) is needed from our viewpoint. For instance, the question of the validity over $k \neq$ reals or $k$ not algebraically closed, of the theorem: "A realization is minimal if and only if it is weakly canonical", leads to (unsolved) problems of an arithmetic, rather than geometric, nature.

\label{op:153}%
There are many problems which probably do not require finding new algebraic results. A typical open question of this type is: If $k$ is algebraically closed, is the reachable part of every polynomial system (with equilibrium initial state) actually reachable in \textit{bounded} time?

Essentially nothing has been said about the application of linear methods to the \textit{local} study (when $k=\mathbb{R}$ or $\mathbb{C}$) of nonlinear systems. The connection with realization theory is given by the (easily proved) fact that an unconstrained realization $\Sigma$ is in $\mathrm{MD}\left(f_{\Sigma}\right)$ iff it is locally canonical at some state (i.e., a strong neighborhood of some $x$ is reachable and observable), which in turn follows from the linearized system being canonical. It is as yet unclear whether this fact can be used in the construction of minimal realizations.

Another topic we have not treated is that of giving methods for deciding if a finitely realizable $f$ with $\Sigma_{f}$ nonpolynomial admits \textit{some} polynomial realization. In a sense this problem has an easy solution, say for $k=\mathbb{R}$, as follows. For any integers $n, d$, and $t$, there is a predicate $I_{n, d, t}$ consisting of polynomial equalities and inequalities such that, given any polynomial map $F: U^{t} \rightarrow \mathbb{R}, F$ is equal to $\left(f_{\Sigma}\right)_{t}$ for some unconstrained polynomial system $\Sigma$ of dimension $n$ and \textit{degree} at most $d$ (i.e., all polynomials appearing in the definition of $\Sigma$ have degree $\leq d)$ if and only if the condition $I_{n, d, t}$ is satisfied by the coefficients of $F$. This is an easy consequence of the Tarski-Seidenberg decision method for elementary geometry (cf. Theorem~\ref{thm:3.14}), seeing the coefficients of $\Sigma$ as indeterminates. (In the linear case $d=1$, for example, the predicates correspond to the requirement that all the $n$-minors of the $t$-th Hankel matrix be zero.) On the other hand, if $\Sigma_{f}$ is determined (e.g., via standard Jacobian arguments) to have dimension $r$, then $I_{n, d, 2 n}$ (with $n \geq r$) being satisfied for $f_{2 n}$ is \textit{equivalent} to $f$ itself having a polynomial realization of dimension $n$ and degree $\leq d$. Indeed, if $\Sigma$ partially realizes $f_{2 n}$, it follows from \citet[Theorem 6.1]{sontag1975discrete} that $\Sigma$ and $\Sigma_{f}$ have the same response, i.e. $f_{\Sigma}=f$. One should note that, although the Tarski-Seidenberg methods are of impractically high computational complexity, the
\label{op:154}implementation of the above procedure really relies on an \textit{a priori} calculation of the sequence of predicates $I_{n, d, t}$, independently of the particular problem. Thus one could foresee a set of tables being published listing the $I_{n, d, t}$. The compilation of the explicit formulas for these predicates would be a worthwhile project in itself.

When $f$ is bounded there is an explicit algorithm available for realization, as explained in Section~\ref{ch:5}. We have not included any discussion on numerical questions. In fact, the algorithm as presented is numerically unstable. It appears to be not at all difficult, however, to modify this algorithm in order to obtain a numerically stable one (at the cost of needing a slightly higher number of algebraic operations). This modification should be a direct analogue of that recently introduced by \citet{dejong1978numerical} to the corresponding linear system algorithms.

Various questions can be raised, however, regarding the suitability of a state-affine realization theory in the bounded case. Although boundedness implies state-affine realizability, lower-dimensional representations will in general result when more general classes of systems are considered. A trade-off between dimensionality and complexity of the defining maps is often involved. State-affine realizations have an obvious advantage from an analysis viewpoint; from a control-theoretic standpoint, however, they don't have desirable controllability properties. It is interesting to speculate on the impact of microprocessor technology, rendering attractive the idea of a parallel multiprocessor configuration calculating each state-variable via simple functions, as with state-affine systems.

Topological questions have been almost by definition omitted. There is a great number of such questions which are however of interest in realization theory. For example, questions of genericity and approximation: what type of observation algebras appear generically?; in what sense can be a finitely realizable $f$ be approximated by another $f$ with 'nice' $\mathcal{A}_{f}$?, etc. This area is almost completely open.

It is interesting to note that with $k$ finite \textit{every} response is polynomial, in fact bounded. This suggests applying the methods in this
\label{op:155}work, (modifying "polynomial" into "polynomial function") to the state-assignment problem for automata; this hasn't been tried yet. Another generalization deals with $k$ being a ring (e.g., the integers); preliminary results applicable to the internally-bilinear case are given by \citet{fliess1974matrices} and \citet{sontag1977anneaux}. Related to this point and previous ones, the effect of finite arithmetic is totally unexplored.

Perhaps one of the most interesting open problems is that of understanding the relationships between the discrete-time theory pursued here and the continuous-time theory developed by \citet{brockett1975volterra}, \citet{sussmann1976existence}, \citet{hermann1977nonlinear}, \citet{crouch1977finite}, and others. The results in the two theories have a few superficial similarities (e.g., the finiteness properties of the Lie algebra of a system have their parallels in properties of the observation space), but the tools and results are in general very different, due mainly to the nonreversibility of difference (as opposed to differential) equations (so that semigroups appear where groups appear in the continuous-time theory), and to the different algebraic properties of difference and differential operators. For example, the recent result of \citet{crouch1977finite} that a "finite" continuous-time map has its canonical state-space unconstrained is far from being true in the present context (cf. section~\ref{sec:28}).

In so far as we have attacked the realization problem using methods not standard in system theory, there arises the possibility of applying the same methods to the study of other system-theoretic questions. Two examples of this are the results in \citet{sontag1975discrete}, and a result stating that a generic input sequence is sufficient for the identification of a family of polynomial systems, proved in \citet{sontag1979observability}. Some parallel work, of a rather different type but also applying algebra-geometric tools in system theory, has been done by various authors; for example, \citet{hazewinkel1975invariants} (see also \citealp{byrnes1978moduli}) have studied the algebraic variety formed by the isomorphism classes of \textit{linear} systems of a given dimension, while \citet{hermann1977algebro} have applied tools from algebraic geometry to obtain interesting new
\label{op:156}derivations of results in linear system theory.

Finally, the use of other methods should be investigated, even for polynomial response maps. For example, an analytic realization of such a response $f$ may have 'nicer' properties than a polynomial or $k$-system realization. On a more abstract level, the arguments in section~\ref{sec:29} are very near the type of category-theoretic models suggested by \citet{arbib1974machines} and others; since our type of response does not seem to satisfy the hypothesis of any of the general approaches in the literature, it would be interesting to study what modifications are needed in the latter in order to have them include this case also.

\label{op:157}%

\section{Addendum: Connection to Some More Recent Work}\label{sec:addendum}

The work reproduced in the preceding sections was completed in 1976,
and its bibliography reflects that date (with corrected publication
details for ``to appear'' works).
Since then, several lines of research have developed that are closely related to the ideas presented there, often without reference to them. The purpose of this addendum is to point out some of these connections. We do not attempt a complete survey; rather, we indicate representative papers in each area, with a few words about how they relate to the results in the preceding sections. The selection is necessarily incomplete, and it reflects our own reading of the literature.

Throughout, we use the notations of the preceding sections. In particular, $f$ is a polynomial response map, $\mathcal{L}_{f}$, $\mathcal{A}_{f}$, and $\mathcal{Q}_{f}$ are its observation space, algebra, and field, and $\Sigma_{f}$ is its canonical realization.

\subsection{The Koopman approach and linearization by observables}

\subsubsection{The Koopman operator}

The basic idea of the Koopman approach, introduced by Koopman in 1931,
is to replace a nonlinear dynamical system by the linear operator that
it induces on functions defined on its state space. If
$x(t+1)=P(x(t))$, the Koopman operator acts on an observable $\varphi$
by $\varphi\mapsto\varphi\circ P$. This operator is linear, although
in general it acts on an infinite-dimensional space of functions. The
approach was revived by \citet{mezic2005spectral}, who related
spectral properties of the Koopman operator to model reduction, and it
has since become one of the main tools in the data-driven analysis of
dynamical systems; see for instance \citet{budisic2012applied},
\citet{mezic2013analysis}, and the surveys by
\citet{brunton2022modern}, \citet{otto2021koopman}, and
\citet{bevanda2021koopman}. The edited volume by
\citet{mauroy2020koopman}
collects many applications to systems and control. A widely used computational method, extended dynamic mode decomposition \citep{williams2015data}, approximates the Koopman operator by its compression to the span of a finite dictionary of observables.

The present work follows the same approach, but in an algebraic rather than a functional-analytic setting. The transpose $A(P)$ of the transition map acts on polynomial observables, and the canonical state space is reconstructed from the algebra of observables $\mathcal{A}_{f}$ as its set of $k$-points (Section~\ref{ch:3}).
In the Koopman literature, the observables typically form a Banach or Hilbert space, and questions of convergence and spectral approximation are central. Here, instead, the objects of interest are the vector space $\mathcal{L}_{f}$, the algebra $\mathcal{A}_{f}$, and the field $\mathcal{Q}_{f}$ generated by the output and its shifts, and the relevant finiteness properties are finite dimensionality, finite generation, and finite transcendence degree.

\subsubsection{Systems with inputs}

For systems with inputs, several extensions of the Koopman framework have been proposed. \citet{proctor2018generalizing} generalized the Koopman operator to systems with inputs and control, and \citet{korda2018linear} used lifted linear predictors of the form $z(t+1)=Az(t)+Bu(t)$ as the basis for model predictive control; see also \citet{korda2020optimal} and \citet{kaiser2021data} on the construction of eigenfunctions suited to prediction and control, \citet{mauroy2020lifting} on system identification in the lifted space of observables, \citet{surana2016koopman} on observer design, and \citet{lusch2018deep} on learning such embeddings with neural networks.

A recurring conclusion in this literature is that, once inputs are present, an exact finite-dimensional lifted model cannot in general be linear in the input, and bilinear models appear naturally. \citet{goswami2022bilinearization} used Koopman eigenfunctions to bilinearize control-affine systems and to study reachability and optimal control; \citet{bruder2021advantages} and \citet{otto2024learning} studied the modeling advantages of bilinear lifted models and how to learn them from data; and \citet{iacob2024koopman} showed that, in the presence of inputs, exact Koopman representations are in general at least bilinear in the lifted state and the input.

These questions are closely related to the results on bounded maps and state-affine systems in Sections~\ref{ch:4} and~\ref{ch:5}.
Observe that a state-affine system
\[
x(t+1)=F(u(t))\,x(t)+G(u(t)), \qquad y(t)=Hx(t),
\]
with $F$ and $G$ polynomial in $u$, is linear in the state and polynomial in the input, and that it contains bilinear systems as the special case in which $F$ and $G$ are affine in $u$. Section~\ref{ch:5} shows that a polynomial response map admits a finite-dimensional state-affine realization if and only if it is bounded and finitely realizable, that is, if and only if its observation space $\mathcal{L}_{f}$ is finite dimensional; in that case, the state space of the span-canonical state-affine realization is in a natural duality with $\mathcal{L}_{f}$. In Koopman terms, this says that the output together with all its shifts spans a finite-dimensional space of observables that is invariant under the dynamics, and that the lifted model is then exactly state-affine. This is precisely the type of lifted model obtained by \citet{iacob2025exact}, who construct exact finite-dimensional Koopman embeddings of block-oriented polynomial systems in which the lifted dynamics are linear in the lifted state and polynomial in the input, and bilinear under additional conditions.

\subsubsection{Exact finite-dimensional linear embeddings}

The problem of deciding when a nonlinear system can be embedded exactly into a finite-dimensional linear one has a long history in control theory. In the continuous-time setting, \citet{levine1986nonlinear} studied immersions of nonlinear systems into linear ones, with applications to observers and finite-dimensional filters, and \citet{jouan2003immersion} considered immersion into linear systems modulo output injection. Carleman linearization \citep{kowalski1991nonlinear} embeds a polynomial system into an infinite linear system acting on monomials, and its finite truncations have been analyzed recently by \citet{amini2025carleman}. \citet{brunton2016koopman} discussed Koopman-invariant subspaces and finite linear representations for control, \citet{haseli2022learning} gave algorithms that identify the largest Koopman-invariant subspace contained in the span of a given dictionary, and \citet{jungers2019nonlocal} characterized, through the notion of polyflows, when finitely many successive Lie derivatives of the state span a space that is invariant under the dynamics. Obstructions to the existence of such embeddings for systems with several limit sets, and their implications for learning Koopman embeddings, were studied by \citet{liu2025properties}; see also \citet{kvalheim2026linearizability} on the linearizability of flows by embeddings.

In the discrete-time polynomial setting of this work, the analogous question is whether $\mathcal{L}_{f}$ is finite dimensional, and the answer given there is in terms of the boundedness of the response map, the rank of its behavior matrix, and the existence of affine input/output difference equations.

\subsection{Algebraic realization theory after 1979}

The algebraic approach to nonlinear realization theory taken in this work was continued in several directions. For continuous-time polynomial systems, \citet{bartosiewicz1988minimal} developed a realization theory for systems defined on algebraic varieties, including a notion of minimality and a uniqueness theorem, and \citet{bartosiewicz1987rational} studied rational systems and their observation fields. \citet{grossman1992realization} gave a realization theory for input/output maps based on bialgebras. \citet{wang1992algebraic} related the existence of rational realizations of continuous-time input/output maps to algebraic input/output differential equations, and \citet{wang1995orders} showed, under analyticity assumptions, that there cannot be input/output equations of order smaller than the minimal dimension of an observable realization, which extends a well-known fact for linear systems. The role of spaces of observables in continuous-time nonlinear control, including the extension of several of the ideas of this work to that setting, was discussed in \citet{sontag1995spaces}; see also \citet{sontag1998mathematical} for a textbook treatment of realization and observability.

A systematic realization theory for rational systems was later developed by \citet{nemcova2009realization}, who gave necessary and sufficient conditions for the existence of rational realizations, and by \citet{nemcova2010minimal}, who characterized minimal rational realizations. For discrete-time nonlinear systems described by input/output equations, \citet{kotta2012realization} gave a realization procedure based on skew polynomial rings. More recently, \citet{pavlov2022realizing} and \citet{falkensteiner2025real} studied the realization of differential-algebraic input/output equations by rational dynamical systems, including the existence of observable and of real realizations; these papers use tools from algebraic geometry, such as parametrizations of the hypersurface defined by the input/output equation, that are close in spirit to those used here.

Algebraic methods have also been applied to the analysis of observability and reachability of polynomial systems. \citet{kawano2013observability} gave necessary and sufficient conditions for global and local observability at an initial state of polynomial systems, using commutative algebra, and \citet{kawano2013reachability} and \citet{kawano2016commutative} studied reachability and controllability of discrete-time polynomial systems by similar methods. These results complement the discussion of algebraic observability and quasi-reachability in Section~\ref{ch:3}.

Finally, several results of this work are stated only for algebraically closed fields, or for $k=\mathbb{R}$ with ad hoc arguments (see the Introduction). Real algebraic geometry, for which a standard modern reference is \citet{bochnak1998real}, now provides the tools needed to revisit them over the reals.

\subsection{Rational series, switched systems, and weighted automata}

The treatment of state-affine systems in Section~\ref{ch:5} is based on the theory of rational formal power series in noncommuting variables and on the rank of the associated behavior (Hankel) matrix. A modern account of that theory is given by \citet{berstel2011noncommutative}.

The same formalism has been used extensively in the realization theory of hybrid and parameter-varying systems. \citet{petreczky2010spaces} characterized classes of nonlinear and hybrid systems whose input/output maps can be represented by recognizable formal power series, building on the state-affine results of this work. \citet{petreczky2013realization} developed a realization theory for discrete-time linear switched systems, in which existence of realizations is characterized by a finite-rank condition on a generalized Hankel matrix and minimal realizations are unique up to isomorphism, and \citet{petreczky2012affine} obtained a similar theory for affine linear parameter-varying systems by reduction to switched systems.

In machine learning, the same Hankel-matrix ideas underlie spectral methods for learning weighted automata, which include hidden Markov models as special cases. \citet{balle2014spectral} give a unified presentation of spectral learning of weighted automata based on factorizations of Hankel matrices, and \citet{balle2015canonical} introduce a canonical form for weighted automata and apply it to approximate minimization.

\subsection{Identifiability and fields of observables}

The observation field $\mathcal{Q}_{f}$, and the question of which functions of the state are determined by input/output data, also play a central role in the study of parameter identifiability, where unknown parameters are regarded as additional constant states. Differential-algebraic methods for identifiability were introduced by \citet{ljung1994global}. \citet{nemcova2010structural} studied structural identifiability of polynomial and rational systems in terms of observation algebras and fields, and \citet{hong2020global} and \citet{ovchinnikov2023parameter} gave algorithms and theory for global identifiability based on input/output equations and on fields of identifiable functions.

Finally, we note that polynomial dynamical systems over finite fields have been used to model gene regulatory networks, and that computational algebra has been applied to their reverse engineering \citep{laubenbacher2004computational,jarrah2007reverse}. This work assumes that the field $k$ is infinite, so these results are outside its scope, but the algebraic viewpoint is similar.

\subsection{Concluding remarks}

Many of the questions treated in this work reappear in the literature discussed above in different language: the lifting of nonlinear dynamics to a linear action on observables, the finite dimensionality of spaces of observables, the passage from input/output equations to state-space realizations, and the characterization of realizability by rank conditions on Hankel matrices. We hope that making this connection explicit will be useful to readers coming from these areas. A more detailed discussion of these relations, and of their continuous-time counterparts, is left for future work.

\bibliographystyle{plainnat}
\renewcommand{\refname}{References for Original Work}
\newpage
\phantomsection\addcontentsline{toc}{section}{References for Original Work}

\newpage
\phantomsection\addcontentsline{toc}{section}{References for Addendum}
\renewcommand{\refname}{References for Addendum}

\end{document}